\documentclass[runningheads]{llncs}

%%
%% \BibTeX command to typeset BibTeX logo in the docs
% \AtBeginDocument{%
%   \providecommand\BibTeX{{%
%     \normalfont B\kern-0.5em{\scshape i\kern-0.25em b}\kern-0.8em\TeX}}}

\usepackage{enumitem}
\usepackage{amssymb}
\usepackage{mathtools}
\usepackage{amsmath}

\usepackage{amsthm}
\usepackage{wasysym}
\usepackage{booktabs}
\usepackage{tikz}
\usetikzlibrary{positioning,fit,calc}
\usetikzlibrary{shapes,arrows}

\let\next\nextt

% Defines a `datastore' shape for use in DFDs.  This inherits from a
% rectangle and only draws two horizontal lines.
\makeatletter
\pgfdeclareshape{datastore}{
  \inheritsavedanchors[from=rectangle]
  \inheritanchorborder[from=rectangle]
  \inheritanchor[from=rectangle]{center}
  \inheritanchor[from=rectangle]{base}
  \inheritanchor[from=rectangle]{north}
  \inheritanchor[from=rectangle]{north east}
  \inheritanchor[from=rectangle]{east}
  \inheritanchor[from=rectangle]{south east}
  \inheritanchor[from=rectangle]{south}
  \inheritanchor[from=rectangle]{south west}
  \inheritanchor[from=rectangle]{west}
  \inheritanchor[from=rectangle]{north west}
  \backgroundpath{
    %  store lower right in xa/ya and upper right in xb/yb
    \southwest \pgf@xa=\pgf@x \pgf@ya=\pgf@y
    \northeast \pgf@xb=\pgf@x \pgf@yb=\pgf@y
    \pgfpathmoveto{\pgfpoint{\pgf@xa}{\pgf@ya}}
    \pgfpathlineto{\pgfpoint{\pgf@xb}{\pgf@ya}}
    \pgfpathmoveto{\pgfpoint{\pgf@xa}{\pgf@yb}}
    \pgfpathlineto{\pgfpoint{\pgf@xb}{\pgf@yb}}
 }
}
\makeatother

% miscellaneous definition
\newcommand{\defequals}{\hspace{0.3em}\mathrel{\overset{\makebox[0pt]{\mbox{\normalfont\tiny\sffamily def}}}{=}}\hspace{0.3em}}
\newcommand{\commentout}[1]{}
\newtheorem{algorithm}{{\bf Algorithm}}
\newtheorem{assumption}{{\bf Assumption}}
\newtheorem*{acknowledgments}{{\bf Acknowledgments}}
% \theoremstyle{acmdefinition}
% \newtheorem{algorithm}{Algorithm}
% \theoremstyle{acmdefinition}
% \newtheorem{assumption}{Assumption}

% temporal operators
\DeclareMathAlphabet{\mathcal}{OMS}{cmsy}{m}{n}
\newcommand{\until}{{\mathcal U}}
\newcommand{\release}{{\mathcal R}}
\newcommand{\eventually}{\Diamond}
\newcommand{\always}{\Box}
\newcommand{\next}{\Circle}

% coalition operators

% BDI operators
\newcommand{\I}{{\mathbb I}}
\newcommand{\G}{{\mathbb G}}
\newcommand{\B}{{\mathbb B}}
\newcommand{\C}{{\mathbb C}}
\newcommand{\T}{{\mathbb T}}
\newcommand{\YT}{{\mathbb Y}{\mathbb T}}
\newcommand{\CT}{{\mathbb C}{\mathbb T}}
\newcommand{\DT}{{\mathbb D}{\mathbb T}}
\newcommand{\ST}{{\mathbb S}{\mathbb T}}
\newcommand{\WT}{{\mathbb W}{\mathbb T}}

\newcommand{\Cognition}{\Omega}
\newcommand{\Rationality}{\Gamma}
\newcommand{\Guard}{\Lambda}
\newcommand{\Strategy}{\Sigma}
\newcommand{\CStrategy}{\Pi}

\newcommand{\trans}[1]{{\longrightarrow}_{T}^{#1}}
\newcommand{\ctrans}[1]{{\longrightarrow}_C^{#1}}

\newcommand{\dist}[1]{{\mathcal D}(#1)}
\newcommand{\powerset}[1]{{\mathcal P}(#1)}

% multiagent systems
\newcommand{\cM}{\mathcal{M}}
\newcommand{\states}{S}
\newcommand{\sinit}{S_\mathrm{init}}
\newcommand{\initdist}{\mu_0}
\newcommand{\belief}{b}
\newcommand{\beliefsinit}{\mathcal{B}_\mathrm{init}}
\newcommand{\beliefinit}{b_\mathrm{init}}
\newcommand{\astrat}[1]{\sigma_{#1}}
\newcommand{\cstrat}[1]{\pi_{#1}}
\newcommand{\probm}[1]{\mathrm{Pr}_{#1}} % probability measure
\newcommand{\probs}[2]{\mathrm{Pr}_{#1}^{#2}}
\newcommand{\probf}[2]{Prob_{#1}^{#2}} % probability of formula
\newcommand{\sspace}[1]{\Omega_{#1}} % sample space of prob space
\newcommand{\measurable}[1]{\mathcal{F}_{#1}}
\newcommand{\obseq}[1]{\sim_{#1}^{o}}
\newcommand{\tp}[1]{tp_{#1}}

\newcommand{\SMG}{SMG}
\newcommand{\SMGC}{SMG$_\Cognition$}
\newcommand{\POSMG}{POSMG}
\newcommand{\AutoSMAS}{ASMAS}

\newcommand{\Ags}{Ags}
\newcommand{\Act}[1]{Act_{#1}}
\newcommand{\LAct}[1]{LAct_{#1}}

\newcommand{\Obs}[1]{O_{#1}}
\newcommand{\obs}[1]{obs_{#1}}
\newcommand{\pathobs}[1]{OPath_{#1}}
\newcommand{\dmap}[1]{\delta_{#1}}
\newcommand{\supp}{supp}
\newcommand{\Bel}[1]{\beta_{#1}}
\newcommand{\TBel}[1]{T_{#1}^{\beta}}
\newcommand{\Goal}[1]{Goal_{#1}}

\newcommand{\lgoal}[1]{\omega^g_{#1}}
\newcommand{\goal}[1]{\pi^g_{#1}}
\newcommand{\goalGd}[1]{\lambda_{#1}^{g}}
\newcommand{\Intn}[1]{Int_{#1}}
\newcommand{\lintn}[1]{\omega^i_{#1}}
\newcommand{\intn}[1]{\pi^i_{#1}}
\newcommand{\intnGd}[1]{\lambda_{#1}^{i}}
\newcommand{\pref}[1]{p_{#1}}
\newcommand{\gpref}[2]{gp_{{#1},{#2}}}
\newcommand{\ipref}[2]{ip_{{#1},{#2}}}
\newcommand{\Exp}{E}

% in case we wanna change the names later
\newcommand{\legal}{legal}
\newcommand{\possible}{possible}

\newcommand{\rimp}{\Rightarrow}

% interpreted system

\newcommand{\apath}[1]{\mathrm{Path}_{#1}}
\newcommand{\fpath}[1]{\mathrm{FPath}_{#1}}
\newcommand{\infpath}[1]{\mathrm{IPath}_{#1}}
\newcommand{\last}{\mathrm{last}}

\makeatletter
\DeclareRobustCommand*\cal{\@fontswitch\relax\mathcal}
\makeatother
\newcommand{\nat}{{\tt N}}
\newcommand{\R}{{\cal R}}

% logics

\newcommand{\PRTLS}{PRTL$^*$}
\newcommand{\PCTLS}{PCTL$^*$}
\newcommand{\PCTL}{PCTL}

\newcommand{\CTLS}{CTL$^*$}

\newcommand{\BPRTLS}{BPRTL$^*$}
\newcommand{\PRTLSS}{PRTL$_1^*$}
\newcommand{\PQRTLSS}{PQRTL$_1^*$}

\newcommand{\PCTLSM}{PCTL$^*_\Cognition$}
\newcommand{\PCTLM}{PCTL$_\Cognition$}
\newcommand{\PRATLS}{PRATL$^*$}

% logics used in the formula
\newcommand{\PRTLSf}{PRTL^*}

% probabilistic operator
\newcommand{\prob}[2]{{\tt P}^{#1}#2}

\newcommand{\current}{{\tt c}}
\newcommand{\history}{{\tt h}}

\newcommand{\Lang}{{\cal L}}
\newcommand{\synth}[1]{{eval_{#1}}}
\newcommand{\gsynth}[1]{{eval_{#1}^g}}
\newcommand{\isynth}[1]{{eval_{#1}^i}}

% miscallenous
\newcommand{\trustgame}{{\cal G}}
\newcommand{\staghunt}{{\cal H}}
\newcommand{\gs}{gs}
\newcommand{\is}{is}

\newcommand{\sep}{\discretionary{}{}{}}
\newcommand{\indStrat}{\astrat{ind}}

\newcommand{\cylinder}{\mathrm{Cyl}}
\newcommand{\literal}[1]{\mathsf{#1}}
\newcommand{\ap}{\mathit{AP}}

% stag hunt game notation
\newcommand{\con}{1}
\newcommand{\adv}{2}

\newcommand{\co}{{\tt co}}
\newcommand{\de}{{\tt de}}

\newcommand{\dde}{\mathrm{de}}
\newcommand{\pde}{\mathrm{pde}}
\newcommand{\pdeco}{\mathrm{deco}}
\newcommand{\pco}{\mathrm{pco}}
\newcommand{\dco}{\mathrm{co}}

\newcommand{\tftde}{\mathrm{tftde}}
\newcommand{\adep}{\mathrm{adep}}
\newcommand{\tft}{\mathrm{tft}}
\newcommand{\aco}{\mathrm{aco}}
\newcommand{\tftco}{\mathrm{tftco}}
\newcommand{\acop}{\mathrm{acop}}

\newcommand{\todo}[1]{{\color{red} #1}}

\begin{document}

%\show\G

\title{Reasoning about Cognitive Trust in Stochastic Multiagent Systems}
%\titlenote{A conference version of this paper is published in AAAI2017, see~\cite{HK2017}}

\author{Xiaowei Huang \and
Marta Kwiatkowska \and
Maciej Olejnik}

\authorrunning{X. Huang, M. Kwiatkowska, M. Olejnik}

\institute{University of Oxford, Oxford, UK \\
\email{xiaowei.huang@live.com \\ marta.kwiatkowska@cs.ox.ac.uk \\ maciej.olejnik@cs.ox.ac.uk}}

\maketitle

\begin{abstract}

We consider the setting of stochastic multiagent systems modelled as stochastic multiplayer games and formulate an automated verification framework for quantifying and reasoning about agents' trust. To capture human trust, we work with a cognitive notion of trust defined as a subjective evaluation that agent $A$ makes about agent $B$'s ability to complete a task, which in turn may lead to a decision by $A$ to rely on $B$. We propose a probabilistic rational temporal logic \PRTLS, which extends the probabilistic computation tree logic \PCTLS\ with reasoning about mental attitudes (beliefs, goals and intentions), and includes novel operators that can express concepts of social trust such as competence, disposition and dependence. The logic can express, for example, that ``agent $A$ will eventually trust agent $B$ with probability at least $p$ that B will behave in a way that ensures the successful completion of a given task''.  We study the complexity of the automated verification problem and, while the general problem is undecidable, we identify restrictions on the logic and the system that result in decidable, or even tractable, subproblems. 
\end{abstract}

% TODO: maybe delete model checking
\keywords{multi-agent systems \and stochastic games \and cognitive trust \and quantitative reasoning \and probabilistic temporal logic}

\section{Introduction}
Mobile autonomous robots are rapidly entering the fabric of our society, to mention driverless cars and home assistive robots.
Since robots are expected to work with or alongside humans in our society, they need to form partnerships with humans, as well as other robots, understand the social context, and behave, and be seen to behave, according to the norms of that context. Human partnerships such as cooperation are based on \textit{trust}, which is influenced by a range of subjective factors that include subjective preferences and experience. As the degree of autonomy of mobile robots increases and the nature of partnerships becomes more complex, to mention shared autonomy, understanding and reasoning about social trust and the role it plays in decisions whether to rely on autonomous systems is of paramount importance. A pertinent example is the recent Tesla fatal car accident while on autopilot mode \cite{TeslaCrash1:BBC:2016}, which is a result of over-reliance (``overtrust'') by the driver, likely influenced through his personal motivation and preferences.

Trust is a complex notion, % (see \cite{Lahijanian:AAAI:2016} for an overview),
viewed as a belief, attitude, intention or behaviour, and is most generally understood as \textit{a subjective evaluation of a truster on a trustee about something in particular}, e.g., the completion of a task \cite{Hardin:Trust:2002}.  A classical definition from organisation theory \cite{Mayer:AMR:1995} defines trust as \textit{the willingness of a party to be vulnerable to the actions of another party based on the expectation that the other will perform a particular action important to the trustor, irrespective of the ability to monitor or control that party}. The importance of being able to correctly evaluate and calibrate trust to guide reliance on automation was recognised in \cite{Lee:HF:2004}.
Trust (and trustworthiness) have also been actively studied in many application contexts such as security~\cite{KFJ2001} and e-commerce~\cite{CTY2003}. However, in this paper we are interested in trust that governs social relationships between humans and autonomous systems, and to this end consider {\it cognitive trust} that captures the human notion of trust. By understanding how human trust in an autonomous system evolves, and being able to quantify it and reason about it, we can offer guidance for selecting an appropriate level of reliance on autonomy.

\commentout{
It is generally accepted, see e.g. \cite{LS2004,FC2001,HLHV2010}, that cognitive trust is closely related to agents'
%Autonomous systems are systems that decide for themselves what to do and when to do it~\cite{FDW2013}.
%In a multiagent system, to achieve the autonomy, agents are assumed to be rational, with respect to their
{\it mental attitudes}
 such as
 %.  It has been widely accepted that an agent's mental attitude encompasses elements such as its
 beliefs, goals, and intentions~\cite{Bratman1987}. {\it Beliefs} describe the information the agent has about the system, {\it goals} describe the agent's motivation including its long-term aims, and {\it intentions} describe the agent's deliberation, i.e., an immediate goal that it has committed to achieve. We use the term \emph{pro-attitudes} to refer to mental
attitudes concerned with actions, including goals and intentions. An agent behaving according to its mental attitudes is called a {\it rational agent}. Agent's behaviour will have an impact on the system, which in turn provides the agent with new information. By absorbing the new information, the agent changes its beliefs about the system. The change of beliefs may result in the adaptation of goals (e.g., some goal has been completed, or some goal is believed to be unachievable), which may eventually lead to the necessity of the change of intention.
 %The stochasticity of such systems arises naturally when considering the uncertainty of the environment in terms of e.g., the failure of components and the occurrence of random events. %Because an agent's belief represents its information about the system, it is  natural to assume that the belief is also stochastic. However, we assume that the goals and intentions are non-stochastic.
%
}

The goal of this paper is therefore to develop foundations for \emph{automated, quantitative reasoning} about cognitive trust between (human and robotic) agents, which can be employed to support decision making in dynamic, uncertain environments. The underlying model is that of \emph{multiagent systems}, where agents are autonomous and endowed with individual goals and preferences in the style of BDI logic~\cite{RG1991}.
To capture uncertainty, we work in the setting of \emph{stochastic} multiagent systems, represented concretely in terms of \emph{concurrent stochastic multiplayer games}, where stochasticity can be used to model, e.g., component failure or environmental uncertainty. %
This also allows us to represent agent \emph{beliefs} probabilistically.
%formulate a probabilistic logic to express trust.
Inspired by the concepts of social trust in~\cite{FC2001}, we formulate
a probabilistic rational temporal logic \PRTLS\ as an extension of the probabilistic temporal logic \PCTLS~\cite{HJ1994} with %, for each agent $i$,
%several
\emph{cognitive} aspects.
\PCTLS\ allows one to express \emph{temporal} properties pertaining to system execution, for example ``with probability $p$, agent A will eventually complete a given task''.
\PRTLS\ includes, in addition, mental attitude operators (belief, goal, intention and capability), together with
a collection of novel \emph{trust} operators (competence, disposition and dependence), in turn expressed using beliefs.
%mental attitude operators (belief $\B_i$, goal $\G_i$, intention $\I_i$, and capability $\C_i$), together with
%a collection of novel trust operators (competence $C\T_{i,j}$ and disposition $D\T_{i,j}$).
The logic \PRTLS\ is able to express properties such as ``agent A will eventually trust agent B with probability at least $p$ that B will behave in a way that ensures the successful completion of a given task'' that are informally defined in~\cite{FC2001}.
%\Marta{ General comments about the formalisation based on my reading of the various papers - can you comment on whether we do this?
%- trust evolves, updated similarly to belief
%- trust is based on opinions, so we need to model this (ie subjective probability, may be different for each agent depending on contextual factors such as prior knowledge, etc)
%- trust (in automation) is affected primarily by performance and failures
%- trust is lost/reduced more quickly than regained
%- the model needs contextual info, as well as mental attitudes (trust depends on various aspects such as belonging to the same friendship network, etc), and this needs to be emphasised
%- trust definition: we say we use [15], which is defined on p 62 as core truest (called competence, disposition[predictability] and opportunity), plus 'reliance/delegation', but then in the Def only use predictability - why? [15] sets out to improve the defn based only on predictability
%A more specific question: can we model all the above aspects of core trust and reliance? Can trust be defined in terms of Belief operator? If not why not?
%- trust needs calibration: actual trust should match perceived trust (I assume we model trust evaluation, but the value will vary depending on what is observed)
%- trust needs a prior (initial) value
%- trust can be too high or too low (mistrust, distrust) - can we model?
%}
% agent $B$ with probability at least $p$ in that it will behave in a way to ensure that a task $\phi$ can be conducted successfully.''
%
\PRTLS\ is interpreted over a stochastic multiagent system, where the cognitive reasoning processes for each agent can be modelled based on a \emph{cognitive mechanism} that describes his/her mental state (a set of goals and an intention, referred to as pro-attitudes) and subjective preferences.
%A coherent semantics of the logic is given by interpreting formulas over a stochastic multiagent system, where a mental state (a set of goals and an intention) has been incorporated within each system state for every agent, and based on a \emph{cognitive mechanism}, %and a \emph{rationality mechanism},
%which define for each agent  goal function,  an intentional function, and a belief function. \Marta{revise}

Since we wish to model dynamic evolution of beliefs and trust, the mechanisms are history-dependent, and thus the underlying semantics is an infinite branching structure, resulting in undecidability of the general model checking problem for \PRTLS. In addition, there are two types of nondeterministic choices available to the agents, those made along the \emph{temporal} or the \emph{cognitive} dimension. By convention, the temporal nondeterminism is resolved using the concept of adversaries and quantifying over them to obtain a fully probabilistic system~\cite{HT1993}, as is usual for models combining probability and nondeterminism. We use a similar approach for the cognitive dimension, instead of the classical accessibility relation employed in logics for agency, and resolve cognitive nondeterminism by \emph{preference functions}, given as probability distributions %on another agent's pro-attitude changes
that model subjective knowledge about other agents. %This is similar to quantification over adversaries~\cite{HT1993} employed for systems exhibiting both nondeterminism and probability that results in fully probabilistic systems.
Also, in contrast to defining beliefs in terms of
%definition of
knowledge~\cite{FHMVbook} and
probabilistic knowledge~\cite{HT1993} operators, which are based solely on agents' (partial) observations, we
additionally allow
agents' cognitive changes and subjective
preferences to influence their belief.

This paper makes the following original contributions.
\begin{itemize}
\item We introduce autonomous stochastic multiagent systems as an extension of stochastic multiplayer games with a cognitive mechanism for each agent (a set of goals and intentions).
\item We provide a mechanism for reasoning about agent's cognitive states based on preference functions that enables a sound formulation of probabilistic beliefs.
\item We formalise a collection of trust operators (competence, disposition and dependence) informally introduced in~\cite{FC2001} in terms of probabilistic beliefs.
\item We formulate a novel probabilistic rational temporal logic \PRTLS\ that extends the logic \PCTLS~\cite{HJ1994} with mental attitude and  trust operators.
\item We study the complexity of the automated verification problem for \PRTLS\ and, while the general problem is undecidable, we identify restrictions on the logic and the system that result in decidable, or even tractable, subproblems.
\end{itemize}

The structure of the paper is as follows. Section~\ref{sec:related} gives an overview of related work, and in Section~\ref{sec:trustidea} we discuss the concept of cognitive trust. Section~\ref{sec:smas} presents stochastic multiplayer games and strategic reasoning on them. Section~\ref{sec:smgc} introduces autonomous stochastic multiagent systems, an extension of stochastic multiplayer games with a cognitive mechanism. Section~\ref{sec:preferencefuncs} introduces preference functions and derives the semantics of the subjective, probabilistic belief operator. Section~\ref{sec:beliefsupdate} defines how beliefs vary with respect to agents' observations. In Section~\ref{sec:pref} we define trust operators and the logic \PRTLS. We consider the interactions of beliefs and pro-attitudes in Section~\ref{sec:proupdate} via pro-attitude synthesis. Section~\ref{sec:overviewcomplexity} gives the undecidable complexity result for the general \PRTLS\ model checking problem and Section~\ref{sec:decidable} presents several decidable logic fragments. %Section~\ref{sec:trustgame} presents another example on trust game.  
%xiaowei: move this section out
%In Section~\ref{sec:probobs}, we show that comparing with deterministic observation which is assumed in the paper, the more general probabilistic observation does not increase the complexity of model checking problem. 
We conclude the paper in Section~\ref{sec:concl}. 

A preliminary version of this work appeared as~\cite{HK2017}.
This extended version includes detailed derivations of the concepts, illustrative examples and full proofs of the complexity results omitted from~\cite{HK2017}. 

%We remark that a conference version of this paper is published in AAAI2017, see \cite{HK2017}.

\commentout{
First, we show that the general problem is undecidable, even for the fragment that deals with trust for a single agent. Second, we identify several decidable fragments of the problem and prove the corresponding complexity results. The first, \BPRTLS\, is a bounded fragment, whose complexity is shown to be between PSPACE and EXPTIME. The second fragment, \PRTLSS , which concerns the dynamics of trust of a single agent along the temporal dimension expressed in full LTL, is shown to be PSPACE-complete. The third fragment, \PQRTLSS , which evaluates the probability of reaching qualitative trust of a single agent, is, surprisingly, polynomial time. Note that the restriction to a single agent is not severe since the system can model multiple agents.
}

\commentout{
Mobile autonomous robots are rapidly entering the fabric of our society, to mention driverless cars and home assistive robots.
%As autonomous systems become widely adopted by the general public,
%potentially increasing the risk to society,
As the degree of their autonomy increases, so does our dependence on their decisions, made in the complex network of social relationships typically based on trust.
Trust is a subjective evaluation of a truster $A$ on a trustee $B$ about something in particular~\cite{Hardin2002}, e.g., the  completion of a task $\phi$.
While trust (and trustworthiness) have been actively studied in many application contexts such as security~\cite{KFJ2001} and e-commerce~\cite{CTY2003}, we consider {\it cognitive trust} that captures the human notion of trust.
To this end, several cognitive trust concepts have been proposed and informally discussed in~\cite{FC2001}, % and well received,
including competence, disposition, and dependence. Simply speaking, competence characterises $A$'s evaluation of $B$'s capability of completing $\phi$, disposition characterises $A$'s evaluation of $B$'s willingness of completing $\phi$, and dependence characterises the necessity of $i$ in delegating the task $\phi$ to $B$.
In this paper we aim to develop the foundations for automated \emph{quantitative} reasoning about trust between (human and robotic) agents, which can be employed to support decision making in dynamic, stochastic environments endowed with cognitive architecture.
We thus work in the setting of stochastic multiagent systems, where stochasticity can be used to model, e.g., component failure or environmental uncertainty, % known to influence trust.
and aim to formulate a probabilistic logic to express trust.

It is generally accepted, see e.g. \cite{LS2004,FC2001,HLHV2010}, that cognitive trust is closely related to agents'
%Autonomous systems are systems that decide for themselves what to do and when to do it~\cite{FDW2013}.
%In a multiagent system, to achieve the autonomy, agents are assumed to be rational, with respect to their
{\it mental attitudes}
 such as
 %.  It has been widely accepted that an agent's mental attitude encompasses elements such as its
 beliefs, goals, and intentions~\cite{Bratman1987}. {\it Beliefs} describe the information the agent has about the system, {\it goals} describe the agent's motivation including its long-term aims, and {\it intentions} describe the agent's deliberation, i.e., an immediate goal that it has committed to achieve. We use the term \emph{pro-attitudes} to refer to mental
attitudes concerned with actions, including goals and intentions. An agent behaving according to its mental attitudes is called a {\it rational agent}. Agent's behaviour will have an impact on the system, which in turn provides the agent with new information. By absorbing the new information, the agent changes its beliefs about the system. The change of beliefs may result in the adaptation of goals (e.g., some goal has been completed, or some goal is believed to be unachievable), which may eventually lead to the necessity of the change of intention.
 %The stochasticity of such systems arises naturally when considering the uncertainty of the environment in terms of e.g., the failure of components and the occurrence of random events. %Because an agent's belief represents its information about the system, it is  natural to assume that the belief is also stochastic. However, we assume that the goals and intentions are non-stochastic.
Several logics for rational agents (see Related Work below) have been proposed to express beliefs, but quantitative verification frameworks for trust have been little studied to date.
%

%The first contribution of the paper is to formulate
%\Marta{The intro will need to be substantially revised, but including a clear motivation for what trust is (giving a definition based on [15]) and why we need it (some text from Related Work), and then a clear statement of contribution, and relation to state of the art, particularly for Fisher et al }
%an automated verification  framework for
%the automated verification of autonomous stochastic multiagent systems, with a focus on
%quantifying and reasoning about agents' trust.
%To formalising these concepts,

Inspired by the concepts of social trust in~\cite{FC2001}, we formulate
a probabilistic rational temporal logic \PRTLS\ as a combination of the probabilistic temporal logic \PCTLS~\cite{HJ1994} with %, for each agent $i$,
%several
mental attitude operators (belief, goal, intention and capability), together with
a collection of novel trust operators (competence, disposition and dependence).
%mental attitude operators (belief $\B_i$, goal $\G_i$, intention $\I_i$, and capability $\C_i$), together with
%a collection of novel trust operators (competence $C\T_{i,j}$ and disposition $D\T_{i,j}$).
The logic is able to express properties informally defined in~\cite{FC2001} such as ``agent A will eventually trust agent B with probability at least $p$ that B will behave in a way that ensures the successful completion of a given task''.
%\Marta{ General comments about the formalisation based on my reading of the various papers - can you comment on whether we do this?
%- trust evolves, updated similarly to belief
%- trust is based on opinions, so we need to model this (ie subjective probability, may be different for each agent depending on contextual factors such as prior knowledge, etc)
%- trust (in automation) is affected primarily by performance and failures
%- trust is lost/reduced more quickly than regained
%- the model needs contextual info, as well as mental attitudes (trust depends on various aspects such as belonging to the same friendship network, etc), and this needs to be emphasised
%- trust definition: we say we use [15], which is defined on p 62 as core truest (called competence, disposition[predictability] and opportunity), plus 'reliance/delegation', but then in the Def only use predictability - why? [15] sets out to improve the defn based only on predictability
%A more specific question: can we model all the above aspects of core trust and reliance? Can trust be defined in terms of Belief operator? If not why not?
%- trust needs calibration: actual trust should match perceived trust (I assume we model trust evaluation, but the value will vary depending on what is observed)
%- trust needs a prior (initial) value
%- trust can be too high or too low (mistrust, distrust) - can we model?
%}
% agent $B$ with probability at least $p$ in that it will behave in a way to ensure that a task $\phi$ can be conducted successfully.''
%
A coherent semantics of the logic is given by interpreting formulas over a stochastic multiagent system, where a mental state (a set of goals and an intention) has been incorporated within each system state for every agent, and based on a \emph{mental mechanism} and a \emph{rationality mechanism}, which define for each agent  goal function,  an intentional function, and a belief function.

Since we wish to model dynamic evolution of beliefs and trust, the mechanisms are history-dependent, and thus the underlying semantics is an infinite branching structure, where we distinguish between agents' transitioning along the \emph{temporal} and \emph{mental} dimensions, as opposed to the classical accessibility relation employed in logics for agency. We resolve nondeterminism in the mental dimension by \emph{preference functions}, given as probability distributions on another agent's pro-attitude changes that model subjective knowledge. This is similar to quantification over adversaries~\cite{HT1993} employed for systems exhibiting both nondeterminism and probability that results in fully probabilistic systems.
Also, in contrast to defining beliefs in terms of
%definition of
knowledge~\cite{FHMVbook} and
probabilistic knowledge~\cite{HT1993} operators, which are based solely on the agents' (partial) observations, we
additionally allow
agents' pro-attitude changes and
preferences to influence belief changes.
%Note that, our preference functions
%\Marta{ how can the probabilities for the preferences be obtained?}
% are related to, but not the same as, the preference orderings.
%This
%is in contrast with the
%definition of
%knowledge~\cite{FHMVbook} and
%probabilistic knowledge~\cite{HT1993} operators, which are defined solely on the agents' observations.
%A nondeterministic multiagent system can have an infinite set of system executions (called runs in~\cite{FHMVbook}), and for any point, defined as a pair of (run,time), and any agent $i$, there is a set of points which $i$ can not distinguish. It has been defined that agent $i$ knows $\phi$ on a point, if $\phi$ is satisfiable on all the points that agent $i$ can not distinguish.

To ensure that our approach is amenable to automated verification via model checking, we propose how to instantiate the rationality mechanism with effectively computable methods.
%The framework includes two essential components that instantiate the rationality mechanism with computational approaches.
Firstly\footnote{This part can be found in the accompanied technical report. }, we formalise the intuition of~\cite{vdHJW2007,CP2007} that changes of an agent's beliefs and trust may result in changes to its pro-attitudes (i.e., goals and intention)
by guarding the updates of pro-attitudes with belief or trust formulas.
%, similarly to guarded commands employed in agent programming languages.
Such pro-attitude updates constrained by agents' beliefs and trust can be pre-computed before invoking model checking.
%The first is that we formalise an intuition~\cite{vdHJW2007,
%CP2007} that the changes of an agent's beliefs and trusts may result in the changes of its pro-attitudes (i.e., goals and intention),
% by guarding the changes of pro-attitudes with belief or trust formulas.
 %By assuming that the guards need to {\it be synthesised
%\Marta{ synthesis is an overloaded word, meaning different things, and should be avoided here}
% before model checking} system's property,
%
Secondly,
agents' beliefs and trust can be computed, based on their preference functions, as a recursive update over the (finite) execution history and agents' pro-attitude changes.
As a result, we can reason, in a limited yet powerful way, about interactions of beliefs and trust between agents without resorting to formulas with nested beliefs, which would significantly increase the complexity.
%It is understandable that nested beliefs will significantly increase the complexity, both conceptually and computationally, of the problem.

We study the
%computational
complexity of the proposed automated verification framework. First, we show that the general problem is undecidable, even for the fragment that deals with trust for a single agent. Second, we identify several decidable fragments of the problem and prove the corresponding complexity results. The first, \BPRTLS\, is a bounded fragment, whose complexity is shown to be between PSPACE and EXPTIME. The second fragment, \PRTLSS , which concerns the dynamics of trust of a single agent along the temporal dimension expressed in full LTL, is shown to be PSPACE-complete. The third fragment, \PQRTLSS , which evaluates the probability of reaching qualitative trust of a single agent, is, surprisingly, polynomial time. Note that the restriction to a single agent is not severe since the system can model multiple agents.

\section{Related Work}\label{sec:related}

{\bf Logics for rational agency}
The study
%\Marta{ Some of this needed for intro}
of rational agents
%in autonomous systems
has led to the development
%in the following two research directions:
of logics %for rational agency
and agent programming languages.
Since
%late
1980s,
%multiple
proposals~\cite{CL1990,RG1991,MHL1999} have been made concerning logic frameworks that incorporate the theory of human decisions~\cite{Bratman1987} for artificial agents, see \cite{MBH2014} for a recent overview. The main focus has been on studying the relationships between modalities with various axiomatic systems.
% introduced in their logics. These relationships are then formalised in various axiomatic systems.
%, whose soundness and completeness results are given.
The direct applicability of these frameworks %in real-world applications
is arguable because of a complex underlying possible world semantics. For instance, in the influential BDI logic of Rao and Georgeff~\cite{RG1991}, each possible world is a (possibly infinite) time tree, with the accessibility relations
 %for beliefs, goals and intentions
defined
%based on the comparison of such trees
via a sub-tree relation, %. To establish the accessibility relations,
and a dedicated algorithm~\cite{RG1991b} is required to transform the decision tree into a set of possible worlds. The only attempt at model checking such logics \cite{Schild2000} ignores the internal structure of the possible worlds to enable a reduction
to temporal logic model checking.

%\vspace{-5pt}

\noindent{\bf Agent programming languages}
Agent programming languages
%\Marta{ This section is too detailed, just say there is a large body of research and include only a brief summary}
 provide infrastructure to design and implement rational agents. They facilitate the manipulation of an agent's beliefs and goals and structure its intentions  and decision-making. There have been a number of proposals, including AgentSpeak,
 %~\cite{Rao1996},
Jason,
%~\cite{BHV2005},
3APL,
%~\cite{DvbRM2005},
and METATEM,
%~\cite{FH2009}.
see \cite{BDDS2005} for an overview.
%These languages are rule-based
%, goal-reduction,
%and equipped with deliberation mechanisms. With this observation,
\cite{DFWB2012}
%introduces an interpreter AIL to
translates a range of agent programming languages to an intermediate structure,
%, which includes a set of agents, and for each agent, a belief base,
%an (implicitly maintained)
%a set of goals, an intention instantiated with a plan, and a library of plans. Each agent has a reasoning mechanism to transit their state and mental attitudes, and a deliberation mechanism to decide between alternative goals and intentions. Mental attitudes are represented as sets or stacks of first-order formulas.
% and the mental state transition functions are defined explicitly over the current mental attitudes and agent's behaviour.
%The control flow of an agent is
%fully specified and
%deterministic.
%, and the nondeterminism of the system comes from the different inputs of the environment.
%The model checking method~\cite{DFWB2012} is then
on which model checking of agents' mental attitudes is reduced to deductability of a propositional formula, but
%reducing the checking of agents' mental attitudes to the checking of the deductibility of a propositional formula from the set of formulas used to represent mental attitudes, but
is not quantitative and unable to handle the nesting of mental attitudes, which is necessary to express trust.
%Restricted by the determinacy of agent programming languages in deciding agents' changes of mental and local states, the approach cannot handle the nesting of mental attitudes, which is a necessity for expressing trust.
%Because the checking of mental attitudes is done locally, the algorithm can not handle the nesting of mental attitudes, which is a necessity for expressing trust.

%\vspace{-5pt}

\noindent{\bf Trust}
Research on trust can be roughly classified into three strands: credentials-based trust, experience-based trust, and cognitive trust. Credentials-based trust serves as an alternative to traditional security technologies and works with the following question: given a policy $P$ and a set of credentials $C$, decide whether a request $q$ should be authenticated.
%, i.e. $P\cup C \models q$.
%It .
%, and is different with our concern of trust between
%autonomous
%agents.
%
Experience-based trust, including reputation-based trust in peer-to-peer and eCommerce applications, considers
the scenario
%a system of multiple agents
in which a truster  uses a trust value to predict the probability of a trustee to execute a certain action.  Many approaches exist, e.g., \cite{IJ2002,NK2007},
%,XD2012,FKPVZ2002},
to compute such (objective) trust values, based on
%various
approximate distributions (e.g., beta distribution~\cite{IJ2002} and Dirichlet distribution~\cite{NK2007}).
%
%most of which assumes a simple iterated game between the agents. At each time, the trustee has the same set of available actions and the truster has an approximate distribution (e.g., beta distribution~\cite{IJ2002} and Dirichlet distribution~\cite{NK2007}) over those actions to estimate its trust on the trustee.
However, without formal foundation it is hard to evaluate {\it how well} a specific approach behaves~\cite{KNS2008}.
%There exists no easy adaptation of these approaches for dynamic systems, in which there may be different sets of next actions on different state and time.

Cognitive trust captures the human notions of trust. In \cite{FC2001}, %which inspired this paper,
trust is defined as an individual (subjective) belief of the truster about some properties of the trustee that are relevant to the achievement of a
%given
goal. While the underlying intuition is justifiable,~\cite{FC2001} lacks rigorous semantics.
%. While the underlying intuition is justifiable, the trust notions in \cite{FC2001} are defined with expressions whose semantics is generally stated without rigorous definition.
Since then, there have been a few attempts, e.g., \cite{Josang2001,Liau2003,HLHV2010}, to define trust in terms of various logic modalities, but none consider quantitative verification in the setting of stochastic multiagent systems.

%they
%, as those of experience-based trust,
%work with the simple setting of a repeated game between agents, and do not consider dynamic systems in which there may be different sets of next actions on different state and time.
%The automated verification framework proposed in the paper can be utilised to verify agents' trust, described with formulas of the logic language \PRTLS, in autonomous stochastic multiagent systems.

%\vspace{-5pt}

\noindent{\bf Belief Change}
The study of belief change (belief revision or belief update) usually employs a belief base, represented as a closed set of formulas in some language, to which a new piece of information is inserted by dedicated operators.
% to be inserted into the belief base.
%The belief base is usually and then the insertion of new information is done by dedicated operators.
%, which generates a new belief base from the existing belief base and the new information.
Much effort has been devoted to the design of the operators that satisfy some postulates, notably the AGM postulates~\cite{AGM1985}.
%, although debates are still on which set of postulates are suitable in specific contexts.
It has been argued~\cite{FH1997} that this  needs to be enhanced with a semantics for belief, for a deeper understanding of {\it how and why} beliefs change
%\Marta{ what about Bayesian notions of belief, which are quantitative? }
in dynamic systems.
%A standard approach of defining such a semantics is by giving a preference ordering on the set of worlds that the agent considers possible. The preference ordering captures the relative likelihood of worlds. In \cite{FH1997,FH1999},
%by assuming a plausibility measure,
% over the set of worlds that agent considers possible,
%a qualitative belief is  defined as that agent $i$ believes $\phi$ if it knows that the plausibility of $\phi$ is greater than that of $\neg \phi$. Each world corresponds with an infinite execution of the system.
%
Our framework is a step in this direction, where we propose belief change that is not just based on partial observations, as done when knowledge~\cite{FHMVbook} and
probabilistic knowledge~\cite{HT1993} is employed, but also agents' pro-attitudes changes
and  preferences over pro-attitude changes of other agents.
}

\section{Related Work}\label{sec:related}

The notion of trust has been widely studied in management, psychology, philosophy and economics
%, to mention \cite{Mayer:AMR:1995,Hardin:Trust:2002,McKnight:IJEC:2001}
(see \cite{Lahijanian:AAAI:2016} for an overview).
Recently, the importance of trust in human-robot cooperation was highlighted in~\cite{Kuipers:2018:WTR:3190347.3173087}.
Trust in the context of human-technology relationships can be roughly classified into three categories: \textit{credentials-based}, \textit{experience-based}, and \textit{cognitive trust}.  Credentials-based trust is used mainly in security, where a user must supply credentials in order to gain access.
Experience-based trust, which includes reputation-based trust in peer-to-peer and e-commerce applications,
involves online evaluation of a trust value for an agent informed by experiences of interaction with that agent.
A formal foundation for quantitative reputation-based trust has been proposed in \cite{Krukow:PTRSL:2008}.
In contrast, we focus on (quantitative) cognitive trust, which captures the social (human) notion of trust and, in particular, trust-based decisions between humans and robots.  The cognitive theory of \cite{FC2001}, itself founded on organisational trust of \cite{Mayer:AMR:1995}, provides an intuitive definition of complex trust notions but lacks rigorous semantics. Several papers, e.g., \cite{Meyer:AI:1999,Josang:IJUFKS:2001,Herzig:IGPL:2010,Herzig:GC:2013}, have formalised the theory of  \cite{FC2001} using modal logic, but none are quantitative and automatic verification is not considered.
Of relevance are recent approaches \cite{Sweet:AIAA:2016,Setter:ACC:2016}  that model the evolution of trust in human-robot interactions as a dynamical system; instead, our formalism supports evolution of trust through events and agent interactions.

A number of logic frameworks have been proposed that develop the theory of human decisions~\cite{Bratman1987} for artificial agents, see \cite{MBH2014} for a recent overview. The main focus has been on studying the relationships between modalities with various axiomatic systems, but their amenability to automatic verification is arguable because of a complex underlying possible world semantics, to mention the sub-tree relation of BDI logic~\cite{RG1991}.
%For instance, in the influential BDI logic of Rao and Georgeff~\cite{RG1991}, each possible world is a (possibly infinite) time tree, with the accessibility relations
%defined
%via a sub-tree relation, %. To establish the accessibility relations,
%and a dedicated algorithm~\cite{RG1991b} is required to transform the decision tree into a set of possible worlds.
The only attempt at model checking such logics \cite{Schild2000} ignores the internal structure of the possible worlds to enable a reduction
to temporal logic model checking.

The distinctive aspects of our work is thus a \textit{quantitative} formalisation of \textit{cognitive} trust in terms of \textit{probabilistic} temporal logic, based on a probabilistic notion of \textit{belief}, together with algorithmic complexity of the corresponding \textit{model checking} problem.

\section{Cognitive Theory of Social Trust}\label{sec:trustidea}

In the context of automation, trust is understood as delegation of responsibility for actions to the autonomous system and willingness to accept risk (possible harm) and uncertainty. The decision to delegate is based on a \textit{subjective} evaluation of the system's capabilities for a particular task, informed by factors such as past experience, social norms and individual preferences. Moreover, trust is a dynamic concept which evolves over time, influenced by events and past experience. The cognitive processes underpinning trust are captured in the influential theory of social trust by \cite{FC2001}, which is particularly appropriate for human-robot relationships and serves as an inspiration for this work.

The theory of \cite{FC2001} views trust as a complex \textit{mental attitude} that is relative to a set of goals and expressed in terms of \emph{beliefs}, which in turn influence decisions about agent's future behaviour. They consider agent $A$'s trust in agent $B$ for a specific goal $\psi$ (goals may be divided into tasks), and distinguish the following core concepts: \textit{competence} trust, where $A$ believes that $B$ is able to perform $\psi$, and
\textit{disposition} trust, where $A$ believes that $B$ is willing to perform $\psi$. The decision to delegate or rely on $B$ involves a complex notion of trust called
\textit{dependence}: $A$ believes that $B$ needs, depends, or is at least better off to rely on $B$ to achieve $\psi$, which has two forms, \textit{strong} ($A$ needs or depends on $B$) and \textit{weak} (for $A$, it is better to rely than not to rely on $B$).  %\cite{Falcone:Trust:2001} also consider fulfilment belief, which arises in $A$'s mental state, where $A$ practically ``trusts'' $B$, which we do not consider.
\cite{FC2001} also identify \textit{fulfilment} belief arising in the truster's mental state, which we do not consider.

We therefore work in a stochastic setting (to represent uncertainty), aiming to \textit{quantify} belief probabilistically and express trust as a \textit{subjective, belief-weighted expectation}, informally understood as a degree of trust.

\section{Stochastic Multiagent Systems and Temporal Reasoning}\label{sec:smas}

A multiagent system $\cM$ comprises a set of agents (humans, robots, components, processes, etc.) running in an environment~\cite{FHMVbook}. To capture random events such as failure and environmental uncertainty that are known to influence trust, we work with \emph{stochastic} multiagent systems, concretely represented using \emph{concurrent stochastic multiplayer games}, where each agent corresponds to a player.
Taking such models as a starting point, in this section we gradually extend them to autonomous multiagent systems that support cognitive reasoning.

\subsection{Stochastic Multiplayer Games}\label{subsec:smg}
Given a finite set $S$, we denote by $\dist{S}$ the set of probability distributions on $S$ and by $\powerset{S}$ the power set of $S$. Given a probability distribution $\delta$ over a set $S$, we denote by $\supp(\delta)$ the support of $\delta$, i.e., the set of elements of $S$ which have positive probability. % wrt $\delta$. 
We call $\delta$ a Dirac distribution if $\delta(s) = 1$ for some $s \in S$.

We now introduce concurrent stochastic games, in which several players repeatedly make choices simultaneously to determine the next state of the game. 

\begin{definition}\label{def:smg}
A \emph{stochastic multiplayer game (\SMG)} 
% TODO: do we want intial states as a set?
% TODO: do we need L here?
is a tuple $\cM = (\Ags, \sep\states,\sep\sinit,\sep\{\Act{A}\}_{A\in\Ags}, \sep T, \sep L)$, where:
\begin{itemize}
\item
$\Ags=\{1,...,n\}$ is a finite set of players called agents, ranged over by $A$, $B$, ...
\item
%\Marta{ there are various finiteness assumptions made later but not stated here}
$\states$ is a finite set of states,
\item
$\sinit\subseteq \states$ is a set of initial states,
\item
$\Act{A}$ is a finite set of actions for the agent $A$,
\item
$T:S\times \Act{} \rightarrow \dist{S}$ is a (partial) probabilistic transition function, where $\Act{}= \times_{A\in\Ags}\Act{A}$, 
\item
$ L:S\rightarrow \powerset{AP}$ is a labelling function mapping each state to a set of atomic propositions taken from a set $AP$.
\end{itemize}
\end{definition}
We assume that each (global) state $s \in \states$ of the system includes a local state of each agent and an (optional) environment state. In every state of the game, each player $A \in \Ags$ selects a local action $a_A\in\Act{A}$ independently and the next state of the game is chosen by the environment according to the probability distribution $T(s,a)$ where $a \in \Act{}$ %=(a_1,...,a_n)$ 
is the joint action. In other words, the probability of transitioning from state $s$ to state $s'$ when action $a$ is taken is $T(s,a)(s')$.
%The environment updates the state according to agents' joint action and the random events. 
In this paper we will usually omit the environment part of the state.

In concurrent games players perform their local actions simultaneously. \emph{Turn-based} games are a restricted class of SMGs whose states are partitioned into subsets, each of which is controlled by a single agent, meaning that only that agent can perform an action in any state in the partition. %In states belonging to agent $A$, only $A$ can perform an action. 
Turn-based games can be simulated by concurrent games by requiring that, in states controlled by agent $A$, $A$ performs an action $a_A \in \Act{A}$ and the other agents perform a distinguished silent action $\bot$.

Let $a_A$ denote agent $A$'s action in the joint action $a\in\Act{}$. We let $\Act{}(s)=\{ a \in \Act{} \mid T(s,a) \mathrm{\;is\;defined} \}$ be the set of \textit{valid} joint actions in state $s$, and $\Act{A}(s)=\{a_A~|~a\in\Act{}(s)\}$ be the set of valid actions in state $s$ for agent $A$. %We refer to the transition function $T$ as the \emph{temporal} dimension. 
$T$ is called \emph{serial} (or total) if, for any state $s$ and joint action $a\in \Act{}$, $T(s,a)$ is defined. We often write $s \trans{a} s'$ for a transition from $s$ to $s'$ via action $a$, provided that $T(s,a)(s') > 0$.
%\todo{temporal}

\paragraph{\bf Paths}
A \textit{path} $\rho$ is a finite or infinite sequence of states $s_0s_1s_2...$ induced from the transition probability function $T$, i.e., satisfying $T(s_k,a)(s_{k+1}) > 0$ for some $a \in \Act{}$, for all $k \geq 0$.
Paths generated by $T$ are viewed as occurring in the \emph{temporal} dimension.
We denote the set of finite (resp. infinite) paths of $\cM$ starting in $s$ by $\fpath{}^\cM(s)$ (resp. $\infpath{}^\cM(s)$), and the set of paths starting from any state by $\fpath{}^\cM$ (resp. $\infpath{}^\cM$). We may omit $\cM$ if clear from the context. For any path $\rho$ we write $\rho(k)$ for its $(k+1)$-th state, $\rho[0..n]$ for the prefix $s_0...s_n$, and $\rho[n..\infty]$ for the suffix $s_ns_{n+1}...$ when $\rho$ is infinite. If $\rho$ is finite then we write $last(\rho)$ for its last state and $|\rho|$ for its \textit{length}, i.e., the number of states in $\rho$. 
% TODO: do we need this
%A path $\rho$ is initialised if it starts from an initial state, i.e., $\rho(0)\in \sinit$.
Given two paths $\rho=s_0...s_n$ and $\rho'=s'_0...s'_m$, we write $\rho\cdot\rho'=s_0...s_ns'_1...s'_m$ when $s_n=s'_0$ for their concatenation  by an overlapping state, and $\rho\rho'=s_0...s_ns'_0s'_1...s'_m$ for the regular concatenation.

%The stochastic multiagent systems may also be called stochastic games where every agent is a player. 

%the stochastic multiagent system can be called stochastic concurrent game.

\paragraph{\bf Strategies}
%\begin{definition}
%Let $\cM=(\Ags, \states,\sinit,\{\Act{A}\}_{A\in\Ags}, T,  L)$ be an \SMG. 
A (history-dependent and stochastic) \textit{action strategy} $\astrat{A}$ of agent $A \in \Ags$ in an \SMG\ $\cM$ is a function $\astrat{A} : \fpath{}^\cM \rightarrow \dist{\Act{A}}$, such that for all $a_A \in \Act{A}$ and finite paths $\rho$ it holds that $\astrat{A}(\rho)(a_A) > 0$ only if $a_A \in \Act{A}(\last(\rho))$. We call a strategy $\astrat{}$ \emph{pure} if $\astrat{}(\rho)$ is a Dirac distribution for any $\rho \in \fpath{}^\cM$. A \emph{strategy profile} $\astrat{D}$ for a set $D$ of agents is a vector of action strategies $\times_{A \in D}\astrat{A}$, one for each agent $A\in D$. We let $\Strategy_A$ be the set of agent $A$'s strategies, $\Strategy_D$ be the set of strategy profiles for the set of agents $D$, and $\Strategy$ be the set of strategy profiles for all agents.
%\end{definition}

% TODO: does this future path thing make sense?
%Given a set of agents $D$ and a finite path $\rho$ representing a system execution up to the current point in time, a future (infinite) path $\delta=s_0s_1...$, with $s_0 = last(\rho)$, is consistent with a strategy profile $\astrat{C}$ if for all $i\geq 0$ we have $T(s_i,a)(s_{i+1})>0$ for some $a=(a_A)_{A\in\Ags}$ such that $\astrat{A}(\rho\cdot \delta[0..i])(a_A)>0$ for all $A\in C$. 
%We write $\cM[\astrat{}]$ for system $\cM$ in which all future paths are consistent with the strategy profile $\astrat{}$ for all agents, and $\cM[\bot]$ for the original system on which no agents' strategy has been applied.

\paragraph{\bf Probability space}
In order to reason formally about a given \SMG\ $\cM$ we need to quantify the probabilities of different paths being taken. We therefore define a probability space over the set of infinite paths $\infpath{}^\cM(s_0)$ starting in a given state $s_0 \in S$, adapting the standard construction from~\cite{Kemeny1967}. Our probability measure is based on the function assigning probability to a given finite path $\rho = s_0...s_n$ under strategy $\astrat{} \in \Strategy$, defined as $\probs{\astrat{}}{}(\rho) = \prod_{i = 0}^{n-1}\sum_{a \in \Act{}} \astrat{}(\rho[0..k])(a) \cdot T(s_k,a)(s_{k+1})$. To define measurable sets, for a path $\rho$ we let $\cylinder_{\rho}$ be a basic cylinder, which is a set of all infinite paths starting with $\rho$. We then set $\measurable{s}^{\cM}$ to be the smallest $\sigma$-algebra generated by the basic cylinders $\{\cylinder_{\rho} \mid \rho \in \fpath{}^{\cM}(s)\}$ and $\probs{\astrat{}}{\cM}$ to be the unique measure on the set of infinite paths $\infpath{}^\cM(s)$ such that $\probs{\astrat{}}{\cM}(\cylinder_{\rho}) = \probs{\astrat{}}{}(\rho)$. It then follows that $(\infpath{}^{\cM}(s),\measurable{s}^{\cM},\probs{\astrat{}}{\cM})$ is a probability space~\cite{Kemeny1967}.

%In~\cite{CFKPS2013}, a language \rPATLS\ is proposed to reason about strategies in turn-based stochastic games. The semantics can be easily generalised to work with concurrent games, as we will do for this paper. Moreover, the syntax of \rPATLS\ has a tight combination of strategy and probability operators by writing formulas such as $\coalition{C}{}\prob{\bowtie q}{\psi}$. The paper works with a slight extension by relaxing this tight combination, and therefore has a slightly different presentation of the semantics. The essence of the semantics is the same as that of \rPATLS\ when working with restricted syntax. We still call the language \rPATLS.

\begin{example}\label{example:trustgame} 
We consider a simple (one shot) trust game from \cite{Kuipers2016}, in which there are two agents, Alice and Bob. At the beginning, Alice has 10 dollars and Bob has 5 dollars. If Alice does nothing, then everyone keeps what they have. If Alice invests her money with Bob, then Bob can turn the 15 dollars into 40 dollars. After having the investment yield, Bob can decide whether to share the 40 dollars with Alice. If so, each will have 20 dollars. Otherwise, Alice will lose her money and Bob gets 40 dollars.

\begin{table}
\begin{center}
\caption{Payoff of a simple trust game}
\label{tab:trust}
%\Description{Alice and Bob have two action each (\emph{invest}, \emph{withhold} and \emph{share}, \emph{keep} respectively), which results in four possible outcomes with varying payoffs}
%\setlength{\extrarowheight}{2pt}
\begin{tabular}{r|cc}
\multicolumn{1}{c}{} & \multicolumn{1}{c}{share}  & \multicolumn{1}{c}{keep} \\\cline{2-3}
invest & $(20,20)$ & $(0,40)$ \\ %\cline{2-3}
withhold & $(10,5)$ & $(10,5)$ \\ %\cline{2-3}
\end{tabular}
\end{center}
\end{table}

\commentout{
\begin{table}
\caption{Payoff of a simple trust game}
\label{tab:trust}
%\Description{Alice and Bob have two action each (\emph{invest}, \emph{withhold} and \emph{share}, \emph{keep} respectively), which results in four possible outcomes with varying payoffs}
\begin{center}
\begin{tabular}{c|cc}
%\hline
\toprule
%\backslashbox[17mm]{Alice}{Bob} & share & keep \\
 & share & keep \\
%\hline
\midrule
invest & (20,20) & (0,40)\\
%\hline
withhold & (10,5) & (10,5) \\
%\hline
\bottomrule
\end{tabular}
\end{center}
\end{table}
}

For the simple trust game, the payoffs of the agents are shown in Table~\ref{tab:trust}. The game has a Nash equilibrium of Alice withholding her money and Bob keeping the investment yield. This equilibrium discourages collaboration between agents and has not been confirmed empirically under the standard economic assumptions of pure self-interest~\cite{Berg-trust}. %We will address this issue in the next section.

To illustrate our methods, we construct a stochastic multiplayer game $\trustgame$ with $\Ags = \{Alice, Bob\}$, $S = \{s_0, s_1,..., s_4\}$ with $s_0$ being the initial state, $Act_{Alice} = \{invest, withhold, \bot\}$, $Act_{Bob} = \{share, keep, \bot\}$ and the transition function defined in an obvious way (see Figure~\ref{fig:trustgame}). 
Note that we do not represent payoffs explicitly in our modelling of the trust game, but rather capture them using atomic propositions. For example, $richer_{Alice,Bob}$ is true is state $s_1$, while $richer_{Bob,Alice}$ holds in $s_3$.
Note also that $Alice$ and $Bob$ proceed in turns, which is captured through joint actions where the other agent takes the silent action $\bot$. We represent states as pairs: 
\[
(a_{Alice},a_{Bob}),
\]
where $a_{Alice} \in Act_{Alice}$ is Alice's last action and $a_{Bob} \in Act_{Bob}$ is Bob's last action. For example, $s_0 = (\bot, \bot)$, $s_2 = (invest, \bot)$ and $s_4 = (\bot, share)$.

\setlength{\tabcolsep}{4pt}

\begin{table}
\caption{Strategies for Alice and Bob}
\label{tab:strategyTrustGame1}
%\Description{Two strategies are defined for each agent; $\astrat{passive}$ and $\astrat{active}$ for Alice (probabilistic) and $\astrat{keep}$, $\astrat{share}$ for Bob (deterministic)}
%\setlength{\extrarowheight}{2pt}
\begin{center}
\begin{tabular}{ccccc}
\toprule
Strategy & $withhold$ & $invest$ & $keep$ & $share$ \\
\midrule
$\astrat{passive}$ & 0.7 & 0.3 & & \\
$\astrat{active}$ & 0.1 & 0.9 & & \\
$\astrat{share}$ & & & 0.0 & 1.0 \\
$\astrat{keep}$ & & & 1.0 & 0.0 \\
\bottomrule
\end{tabular}
\end{center}
\end{table}

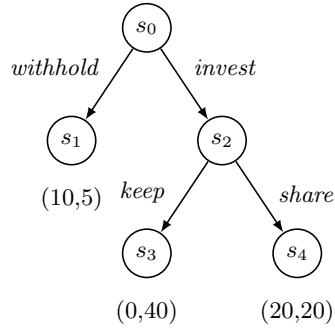
\begin{figure}
\begin{center}
\begin{tikzpicture}[-latex ,auto ,node distance =1.5 cm and 1 cm,on grid,semithick,state/.style ={ circle,draw,minimum width =0.5 cm}]
\node[state] (A) {$s_0$};
\node[state] (B) [below left=of A] {$s_1$};
\node[state] (C) [below right=of A] {$s_2$};
\node[state] (D) [below left=of C] {$s_3$};
\node[state] (E) [below right=of C] {$s_4$};
\path (A) edge [] node[swap] {\emph{withhold}} (B);
\path (A) edge [] node[] {\emph{invest}} (C);
\path (C) edge [] node[swap, pos=0.7] {\emph{keep}} (D);
\path (C) edge [] node[pos=0.7] {\emph{share}} (E);

\node (B1) [below=of B, yshift=2.2em] {(10,5)};
\node (D1) [below=of D, yshift=2.2em] {(0,40)};
\node (E1) [below=of E, yshift=2.2em] {(20,20)};
\end{tikzpicture}
\end{center}
%\Description{Our simple trust game has five states, with the initial state as a root of the tree with three leaves ($s_1$, $s_3$, $s_4$) and one internal node ($s_2$). Alice takes action at $s_0$, Bob at $s_2$.}
\caption{Simple trust game}
\label{fig:trustgame}
\end{figure}

We now equip agents with strategies. For Alice, we define $\astrat{active}$ and $\astrat{passive}$, where the former corresponds to high likelihood of Alice investing her money and the latter allocates greater probability to withholding it. For Bob, we set two pure strategies $\astrat{keep}$ and $\astrat{share}$, corresponding to him keeping and sharing the money with Alice. The strategies are summarised in Table~\ref{tab:strategyTrustGame1}. \hfill $\Box$
\end{example}

\subsection{Temporal Reasoning about \SMG s}

%As a starting point for our work, we take \PCTLS
We now recall the syntax of the Probabilistic Computation Tree Logic \PCTLS~\cite{Aziz1995,baierbook}\footnote{Note that we do not consider here the coalition operator, and therefore the logic rPATL* \cite{CFK+13b} that is commonly defined over stochastic game models.}
%, which is a standard language 
for reasoning about temporal properties in systems exhibiting nondeterministic and probabilistic choices. %in multi-agent systems. 
\PCTLS\ is based on \CTLS~\cite{clarkebook} for purely nondeterministic systems and retains its expressive power, additionally extending it with a probabilistic operator $\prob{\bowtie q}{\psi}$.
%, which intuitively means that $\psi$ holds with probability in relation $\bowtie$ with $d$.

\begin{definition}
The syntax of the logic \PCTLS\ is as follows.
$$
\begin{array}{lcl}
\phi & ::= & p ~|~\neg \phi~|~\phi\lor \phi~|~\forall\psi~|~\prob{\bowtie q}{\psi}~\\
\psi & ::= & \phi~|~\neg \psi~|~\psi\lor \psi~|~\next \psi ~|~ \psi\until \psi
\end{array}
$$
where $p$ is an atomic proposition, $\bowtie\in \{<, \leq, >, \geq\}$, and $q\in [0,1]$.
\end{definition}

In the above, $\phi$ is a \PCTLS\ (state) formula and $\psi$ an LTL (path) formula. The operator $\forall$ is the (universal) path quantifier of CTL$^*$
and $\prob{\bowtie q}{\psi}$ is the probabilistic operator of PCTL~\cite{HJ1994}, which expresses that $\psi$ holds with probability in relation $\bowtie$ with $q$.
%denotes the probability of those future infinite paths that satisfy $\psi$, evaluated in the temporal dimension. 
The remaining operators, $\next \psi$ (next) and $\psi\until \psi$ (until) 
%(i.e., the \textit{next state} operator $\next$, the \textit{until} operator $\until$ and the \textit{always} operator $\always$) 
follow their usual meaning from PCTL and CTL$^*$. The derived operators such as $\phi_1\land \phi_2$, $\eventually\psi$ (eventually), $\always \psi$ (globally), $\psi\release \psi$ (release) and $\exists\phi$ (existential path quantifier) can be obtained in the standard way.

Let $\cM$ be an SMG. Given a path $\rho s$ which has $s$ as its last state, a strategy $\astrat{}\in\Strategy$ of $\cM$, and a formula $\psi$, we write:
$$
\probf{\cM,\astrat{},\rho s}{}(\psi) \defequals  \probs{\astrat{}}{\cM}\{\delta\in \infpath{T}^{\cM}(s)~|~\cM,\rho s,\delta\models \psi\}
$$
for the probability of implementing $\psi$ on a path $\rho s$ when a strategy $\astrat{}$ applies. The relation $\cM,\rho,\delta\models \psi$ is defined below. Based on this, we define:
\begin{align*}
\probf{\cM,\rho}{min}(\psi) & \defequals \textstyle\inf_{\astrat{}\in \Strategy} \probf{\cM,\astrat{},\rho}{}(\psi), \\[0em]
\probf{\cM,\rho}{max}(\psi) & \defequals \textstyle\sup_{\astrat{}\in \Strategy} \probf{\cM,\astrat{},\rho}{}(\psi)
\end{align*}
\commentout{
$$
{
\setlength\arraycolsep{2pt}
\begin{array}{lcl}
\probs{\cM,\rho}{min}(\psi) & \defequals & \inf_{\astrat{}\in \Strategy} \probm{\cM,\astrat{},\rho}(\psi), \\
\probs{\cM,\rho}{max}(\psi) & \defequals & \sup_{\astrat{}\in \Strategy} \probm{\cM,\astrat{},\rho}(\psi) \\
\end{array}
}
$$
}
as the minimum and maximum probabilities of implementing $\psi$ on a path $\rho$ over all strategies in $\Strategy$.

We now give semantics of the logic \PCTLS\ for concurrent stochastic games. %Rather than state the satisfaction relation in the usual way for a given state, we instead define it in a history-dependent fashion, and thus for a specified path, which is needed to express trust operators.

\begin{definition}\label{def:semantics1}
Let $\cM=(\Ags,\states,\sinit,\{\Act{A}\}_{A\in\Ags},T,L)$ be an \SMG\ and $\rho \in \fpath{T}^\cM$. 
%The semantics of the logic \PCTLS\ is defined by a  relation $\cM[\astrat{}],\rho\models \phi$ inductively over the structure of a state formula $\phi$ as follows.
%The semantics of the logic \PCTLS\ is defined by a relation $\cM,\rho \models \phi$ for $\phi \in \fpath{T}^{\cM}$, inductively over the structure of the formula $\phi$ as follows
The satisfaction relation $\models$ of \PCTLS\ is defined inductively by:

\begin{itemize}

\item $\cM,\rho\models  p$ if $p\in L(last(\rho))$,

\item $\cM,\rho\models \neg \phi$ if not $\cM,\rho\models  \phi$,

\item $\cM,\rho\models \phi_1\lor \phi_2$ if  $\cM,\rho\models \phi_1$ or $\cM,\rho\models \phi_2$,

\item $\cM,\rho\models \forall\psi$ if $\cM,\rho,\delta\models \psi$ for all $\delta\in \infpath{T}^{\cM}(last(\rho))$,

\commentout{
\item $\cM,\rho\models \prob{\bowtie q}{\psi}$ if
\[
\left \{
\begin{array}{ll}
\probs{\cM,\rho}{min}(\psi) \bowtie q & \text{ when } \bowtie \in \{\geq, >\} \\
\probs{\cM,\rho}{max}(\psi) \bowtie q & \text{ when } \bowtie \in \{\leq, <\} \\
\end{array}
\right.
\]
}

\item $\cM,\rho\models \prob{\bowtie q}{\psi}$ if $\probf{\cM,\rho}{opt(\bowtie)}(\psi) \bowtie q$, where 
\[
opt(\bowtie) = \left \{
\begin{array}{ll}
\text{min} & \text{ when } \bowtie \in \{\geq, >\} \\
\text{max} & \text{ when } \bowtie \in \{\leq, <\} \\
\end{array}
\right.
\]
\end{itemize}
and for any infinite continuation $\delta \in \infpath{T}^\cM$ of $\rho$ (i.e., $\delta(0) = last(\rho)$):
\begin{itemize}
%\item $\cM,\rho\models \psi$ if $\cM,\rho,\delta\models \psi$ for all $\delta\in \infpath{T}^{\cM}(last(\rho))$.

\item $\cM,\rho,\delta\models \phi$ if $\cM,\rho\models \phi$,
\item $\cM,\rho,\delta\models \neg \psi$  if not $\cM,\rho,\delta\models \psi$,
\item $\cM,\rho,\delta\models  \psi_1\lor \psi_2$  if  $\cM,\rho,\delta\models \psi_1$ or $\cM,\rho,\delta\models \psi_2$,
\item $\cM,\rho, \delta\models \next \psi$ if $\cM, \rho\cdot \delta[0..1], \delta[1..\infty]\models \psi$,
%\item $\cM,\rho, \delta\models \always \psi$ if $\cM, \rho\cdot\delta[0..i], \delta[i..\infty]\models \psi$ for all $i\geq 1$,
\item $\cM,\rho, \delta\models \psi_1\until\psi_2$ if there exists $n\geq 0$ such that $\cM,\rho\cdot\delta[0..n], \delta[n..\infty]\models \psi_2$ and $\cM,\rho\cdot\delta[0..k], \delta[k..\infty] \models \psi_1$ for all $0\leq k < n$.
\end{itemize}
\end{definition}

We note that the semantics of state formulas is defined on finite paths (histories) rather than states, whereas the semantics of path formulas is defined on a finite path together with its infinite continuation (rather than a single infinite path). The reason for defining it in such a way is to be consistent with definitions of trust operators (which we introduce in Section~\ref{sec:pref}), whose semantics is dependent on execution history (understood as a sequence of past states of a system).

Below, we write $\overline{<}$ for $ >$, $\overline{>}$ for $<$, $\overline{\leq}$ for $\geq$ and $\overline{\geq}$ for $\leq$ (inverting the order), and write $\widehat{<}$ for $ \leq $, $\widehat{>}$ for $\geq$, $\widehat{\leq}$ for $<$, and $\widehat{\geq}$ for $>$ (strict/non-strict variants).
%Then,
\begin{proposition}
Let $\cM$ be an \SMG\ and $\rho \in \fpath{}^{\cM}$. The following equivalences hold for any formula $\psi$:
\begin{enumerate}
\item $\cM,\rho\models \neg \prob{\hspace{1pt}\bowtie q}{\psi} $ ~iff~ $\cM,\rho\models  \prob{\hspace{1pt}\widehat{\bowtie} 1- q}{\neg \psi} $
\item $\cM,\rho\models \prob{\hspace{1pt}\bowtie q}{\psi} $ iff $\cM,\rho\models  \prob{\hspace{1pt}\overline{\bowtie} 1- q}{\neg \psi} $
\end{enumerate}
\end{proposition}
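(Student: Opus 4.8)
The plan is to reduce statement~(2) to a single measure-theoretic fact and to approach statement~(1) by combining that fact with an operator-negation law. Fix $\cM$, a finite path $\rho$, and write $s = \last(\rho)$. For every strategy $\astrat{} \in \Strategy$ the sets $\{\delta\in\infpath{T}^{\cM}(s)\mid \cM,\rho,\delta\models\psi\}$ and $\{\delta\in\infpath{T}^{\cM}(s)\mid \cM,\rho,\delta\models\neg\psi\}$ partition $\infpath{T}^{\cM}(s)$ and are both measurable, so $\probf{\cM,\astrat{},\rho}{}(\psi)+\probf{\cM,\astrat{},\rho}{}(\neg\psi)=1$. Taking infima and suprema over $\astrat{}\in\Strategy$ and using $\inf_{\astrat{}}(1-x_{\astrat{}})=1-\sup_{\astrat{}}x_{\astrat{}}$ yields the two duality identities $\probf{\cM,\rho}{min}(\neg\psi)=1-\probf{\cM,\rho}{max}(\psi)$ and $\probf{\cM,\rho}{max}(\neg\psi)=1-\probf{\cM,\rho}{min}(\psi)$; these are the only analytic ingredients needed.

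For statement~(2) I would expand $\prob{\bowtie q}{\psi}$ via Definition~\ref{def:semantics1} and verify the four values of $\bowtie$ directly. The mechanism is that inverting the order ($\overline{\bowtie}$) switches $opt(\bowtie)$ between $min$ and $max$, and this switch is exactly compensated by the min/max swap introduced when passing from $\psi$ to $\neg\psi$. For example, with $\bowtie\,=\,\geq$ we have $\cM,\rho\models\prob{\geq q}{\psi}$ iff $\probf{\cM,\rho}{min}(\psi)\geq q$, while $\overline{\geq}\,=\,\leq$ gives $\cM,\rho\models\prob{\leq 1-q}{\neg\psi}$ iff $\probf{\cM,\rho}{max}(\neg\psi)=1-\probf{\cM,\rho}{min}(\psi)\leq 1-q$, i.e.\ iff $\probf{\cM,\rho}{min}(\psi)\geq q$; the remaining three cases are symmetric. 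So (2) holds unconditionally.

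For statement~(1) the natural route is to compose (2) with the propositional negation law $\neg\prob{\bowtie q}{\psi}\equiv\prob{\widehat{\overline{\bowtie}}\,q}{\psi}$: since $\overline{\widehat{\overline{\bowtie}}}=\widehat{\bowtie}$, applying (2) to $\prob{\widehat{\overline{\bowtie}}\,q}{\psi}$ would produce precisely $\prob{\widehat{\bowtie}\,1-q}{\neg\psi}$. The hard part will be this negation law, and it is where I expect the real obstacle to lie. Unlike order inversion, the strictness toggle $\widehat{\bowtie}$ keeps the order of $\bowtie$ unchanged and hence keeps $opt(\bowtie)$ unchanged, so negating $\prob{\bowtie q}{\psi}$ preserves the optimisation direction while the complementation of $\psi$ swaps $min$ and $max$. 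Concretely, $\neg\prob{\geq q}{\psi}$ asserts $\probf{\cM,\rho}{min}(\psi)<q$, whereas $\prob{\widehat{\geq}\,1-q}{\neg\psi}=\prob{>\,1-q}{\neg\psi}$ asserts $1-\probf{\cM,\rho}{max}(\psi)>1-q$, i.e.\ $\probf{\cM,\rho}{max}(\psi)<q$; these coincide only when $\probf{\cM,\rho}{min}(\psi)=\probf{\cM,\rho}{max}(\psi)$. I would therefore expect (1) to need that, along $\rho$, the value $\probf{\cM,\astrat{},\rho}{}(\psi)$ is independent of $\astrat{}$, as holds once the adversaries are fixed and the system is rendered fully probabilistic; under that reading (1) follows immediately from the single-measure identity $\probf{\cM,\astrat{},\rho}{}(\psi)=1-\probf{\cM,\astrat{},\rho}{}(\neg\psi)$ together with classical negation of the numeric comparison. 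Pinning down (or invoking) this determinacy condition is the crux of the argument.
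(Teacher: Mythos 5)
Your treatment of part (2) is correct and complete. The two ingredients are exactly right: for each fixed profile $\astrat{}\in\Strategy$ the events determined by $\psi$ and $\neg\psi$ partition $\infpath{T}^{\cM}(\last(\rho))$ and are measurable, so $\probf{\cM,\astrat{},\rho}{}(\psi)+\probf{\cM,\astrat{},\rho}{}(\neg\psi)=1$; taking $\inf$ and $\sup$ over $\Strategy$ gives $\probf{\cM,\rho}{min}(\neg\psi)=1-\probf{\cM,\rho}{max}(\psi)$ and $\probf{\cM,\rho}{max}(\neg\psi)=1-\probf{\cM,\rho}{min}(\psi)$, and the four-way case analysis closes the argument because passing from $\bowtie$ to $\overline{\bowtie}$ flips $opt(\bowtie)$ in exactly the way the complementation of $\psi$ requires. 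Note that the paper states this proposition without any proof, so there is no authorial argument to compare against; what you wrote is the proof one would expect.

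Your diagnosis of part (1) is also sound, and it identifies a defect in the statement itself rather than a gap in your reasoning. Under Definition~\ref{def:semantics1}, $\neg\prob{\geq q}{\psi}$ asserts $\probf{\cM,\rho}{min}(\psi)<q$, whereas $\prob{\widehat{\geq}\,1-q}{\neg\psi}=\prob{>1-q}{\neg\psi}$ asserts $\probf{\cM,\rho}{min}(\neg\psi)>1-q$, i.e.\ $\probf{\cM,\rho}{max}(\psi)<q$: the strictness toggle $\widehat{\bowtie}$ preserves the order and hence the optimisation direction, while complementing $\psi$ swaps $min$ and $max$, and the same mismatch arises in all four cases. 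Whenever some $\rho$ and $\psi$ admit $\probf{\cM,\rho}{min}(\psi)<\probf{\cM,\rho}{max}(\psi)$ --- say two strategies realising probabilities $0.3$ and $0.7$ --- any threshold strictly between them (e.g.\ $q=0.5$) makes the left side of (1) true and the right side false; only the right-to-left implication holds unconditionally, since $\probf{\cM,\rho}{min}(\psi)\leq\probf{\cM,\rho}{max}(\psi)$. So (1) is false for general \SMG s under the stated semantics and holds exactly under the determinacy condition you name, $\probf{\cM,\rho}{min}(\psi)=\probf{\cM,\rho}{max}(\psi)$. In the paper this condition is secured only later, for systems satisfying Assumption~\ref{assump:detbeh}, where Proposition~\ref{propn:simplified} collapses all the optimal values to the single quantity $\probm{\cM,\indStrat,\rho}(\psi)$ (equivalently, when \PCTLS\ formulas are read over the induced fully probabilistic model, cf.\ Remark~\ref{rem:pctlsEval}). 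To finish your write-up, state that hypothesis explicitly for part (1) and record the two-strategy counterexample showing it cannot be dropped; part (2) needs no hypothesis.
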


\begin{definition}
For a given \SMG\ $\cM$ and a formula $\phi$ of the language \PCTLS, the \emph{model checking problem}, written as $\cM\models \phi$, is to decide whether $\cM,s\models\phi$ for all initial states $s\in \sinit$.
\end{definition}

\begin{example}\label{example:trustgameformulas}
Here we give a few examples of \PCTLS\ formulas that we may wish to check on the trust game from Example~\ref{example:trustgame}. The formula
\[
\prob{\leq 0.9}{\next (a_{Alice} = invest)}
\]
expresses that the probability of Alice investing in the next step is no greater than 0.9. On the other hand, the formula
\[
\prob{\leq 1}{\eventually (a_{Bob} = keep)}
\]
states (the obvious fact) that the probability of Bob keeping the money in the future is no greater than 1. 
Finally, the formula
\[
\exists\eventually richer_{Alice,Bob},
\]
where $richer_{Alice,Bob}$ is an atomic proposition with obvious meaning, states that eventually a state can be reached
where Alice has more money than Bob. 

All the above formulas are true when evaluated at the state $s_0$, given in Fig~\ref{fig:trustgame}. \hfill $\Box$
\end{example}

\section{Stochastic Multiagent Systems with the Cognitive Dimension 
%and Cognitive Reasoning
}\label{sec:smgc}
%\section{Autonomous Stochastic Multi-Agent Systems and Cognitive Reasoning}\label{sec:smgc}

\newcommand{\be}{{\tt be}}

\newcommand{\reach}{reach}

In this section, we present a framework for reasoning about autonomous agents in multiagent systems. The key novelty of our model is the consideration of agents' mental attitudes to enable autonomous decision making, which we achieve by equipping agents with \emph{goals} and \emph{intentions} (also called \emph{pro-attitudes}). We enhance stochastic multiplayer games with a cognitive mechanism to represent reasoning about goals and intentions. The system then evolves along two interleaving dimensions: \emph{temporal}, representing actions of agents in the physical space, and \emph{cognitive}, which corresponds to mental reasoning processes, i.e., goal and intention changes, that determine the actions that agents take. 

\subsection{Cognitive Reasoning}\label{subsec:cognitiveReasoning}
%We propose a model called \textit{stochastic multiplayer game with cognitive dimension} (\SMGC), which extends \SMG\ with \emph{cognitive state}, \emph{cognitive mechanism} and introduces \emph{partial observation}. 
%We now define the notions of a cognitive state and a cognitive mechanism, 
%Before defining the extended model formally, we describe the new components in more detail and give their intuitive interpretations. 
We now motivate and explain the concepts of the \emph{cognitive state} of an agent and the \emph{cognitive mechanism}, and how they give rise to \emph{partial observability}. %For the remainder of this section, we assume that 

\paragraph{\bf Cognitive State}

%In an \SMGC, 
We assume that each agent has a set of \emph{goals} and a set of \emph{intentions}, referred to as \emph{pro-attitudes} and viewed as high-level concepts.
%Consensus has been reached, 
We follow existing literature, 
see e.g., Bratman~\cite{Bratman1987} and Gollwitzer~\cite{Gollwitzer1993}, etc., and identify commitment as the distinguishing factor between goals and intentions, i.e., an intention is a committed plan of achieving some immediate goal. We therefore think of goals as abstract attitudes, for example selflessness or risk-taking, whereas intentions are more concrete and directly influence agents' behaviour. Goals are usually static and independent of external factors, whereas intentions are dynamic, influenced by agent's own goals and by its beliefs about other agents' pro-attitudes. For the purposes of our framework, we assume that agents use action strategies to \emph{implement their intentions} and therefore there exists a one-to-one association between intentions and action strategies. This way, agents' behaviour in the physical space is determined by their mental state.

For an agent $A$, we use $\Goal{A}$ to denote its set of goals and $\Intn{A}$ to denote its set of intentions. At any particular time, an agent may have several goals, but can only have a single intention. Goals are not required to satisfy constraints such as consistency. 

\begin{definition}
Let $A$ be an agent %in an \SMGC\ 
and let $\Goal{A}$ and $\Intn{A}$ be its set of goals and intentions, respectively. A \emph{cognitive state} of $A$ consists of a set of goals and an intention, which can be retrieved from global states of the system using the following functions: %. The following functions map global states of the system to $A$'s cognitive state:
%can be used to retrieve agents' cognitive states from a given global state:
\begin{itemize}
\item $\gs_A:S\rightarrow \powerset{\Goal{A}}$, i.e., $\gs_A(s)$ is a set of agent $A$'s \emph{goals} in state $s$,
\item  $\is_A:S\rightarrow \Intn{A}$, i.e., $\is_A(s)$ is the agent $A$'s \emph{intention} in state $s$.
\end{itemize}
\end{definition}

%One can think of $\gs_A$ and $\is_A$ as projection operators, as they simply retrieve a certain component of a global state. Next, we introduce functions mapping each state to \legal\ sets of goals (resp. intentions) which can be taken from that state.
To illustrate the concepts, we now extend the trust game given in the previous examples to include agents' goals and intentions.

\begin{table}
\caption{Payoff of a simple trust game with trust as a decision factor}
\label{tab:updatedtrust}
%\Description{Updated payoffs punish Bob for keeping and reward him for sharing the money.}
\begin{center}
\begin{tabular}{r|cc}
\multicolumn{1}{c}{} & \multicolumn{1}{c}{share}  & \multicolumn{1}{c}{keep} \\\cline{2-3}
invest & (20,20+5) & (0,40-20) \\ %\cline{2-3}
withhold & (10,5) & (10,5) \\ %\cline{2-3}
\end{tabular}
\end{center}
\end{table}

\begin{example}\label{example:trustgamecstate}
It is argued in \cite{Kuipers2016} that the single numerical value as the payoff of the trust game introduced in Example~\ref{example:trustgame} is an over-simplification. A more realistic utility should include both the payoff and other hypotheses, including trust. An example payoff table is given in Table~\ref{tab:updatedtrust}, in which Bob's payoff will increase by 5 to denote that he will gain Alice's trust if sharing the investment yield and decrease by 20 to denote that he will lose Alice's trust if keeping the investment yield without sharing. With the updated payoffs, the new Nash equilibrium is for Alice to invest her money and Bob to share the investment yield.

The main point for the new payoffs is for the agents to make decisions not only based on the original payoffs, but also based on the trust that the other agent has. This reflects some actual situations in which one agent may want to improve, or at least maintain, the trust of the other agent.
In our modelling of such a game, we show that this can be captured by adding the cognitive dimension and assuming that Bob makes decisions by considering additionally whether Alice's trust in him reaches a certain level.
% TODO: above 

For Alice, we let $\Goal{Alice}=\{passive,active\}$ be two goals which represent her attitude towards investment. Intuitively, \emph{passive} represents the goal of keeping the cash and \emph{active} represents the goal of investing. For simplicity, we assume that Alice's intention is determined by her goals and set $\Intn{Alice} = \{passive, active\}$. We also assume that Alice uses strategy $\astrat{passive}$ to implement her \emph{passive} intention, and $\astrat{active}$ to implement her \emph{active} intention, where the strategies are defined in Example~\ref{example:trustgame}.

%As mentioned above, we associate Alice's intentions with her action strategies and set $\Intn{Alice} = \{\astrat{passive}, \astrat{active}\}$, where the strategies are defined in Example~\ref{example:trustgame}.
%Alice's intention is determined by her goals; specifically, she takes on intention $\astrat{passive}$ when her goal is \emph{passive}, and $\astrat{active}$ when her goal is \emph{active}.

Bob has a set of goals $\Goal{Bob}=\{investor, opportunist\}$, which represent the goals of being an investor pursuing long-term profits and being an opportunist after short-term profits, respectively. As for Alice, Bob's intentions are associated with action strategies, and we have already defined two such strategies: $\astrat{share}$, in which Bob shares the investment yield with Alice, and $\astrat{keep}$, in which Bob keeps all the money for himself. Hence $\Intn{Bob}=\{share, keep\}$, with the obvious association. We assume that Bob's intention will be \emph{share} when he is an investor and his belief in Alice being active is above a certain threshold, and \emph{keep} otherwise. Intuitively, when he is an \emph{investor}, Bob intends to build a good relationship with Alice (in other words, gain Alice's \emph{trust}), hoping that it will pay off in his future interactions with her.

We extend the trust game $\trustgame$ defined in Example~\ref{example:trustgame} by expanding the states to additionally include cognitive states. In particular, each state can now be represented as a tuple:
$$
(a_{Alice},a_{Bob},\gs_{Alice},\gs_{Bob},\is_{Alice},\is_{Bob}),
$$
such that 
%$clock=\{0,1,2,3,4\}$, 
$a_{Alice}$ and $a_{Bob}$ are as before and $\gs_{Alice}\subseteq \Goal{Alice}\cup \{\bot\}$, $\gs_{Bob}\subseteq \Goal{Bob}\cup\{\bot\}$, $\is_{Alice} \in \Intn{Alice} \cup \{\bot\}$, and $\is_{Bob}\in \Intn{Bob}\cup \{\bot\}$ denote the cognitive part of the state, namely Alice's and Bob's goals and intentions. \hfill $\Box$
%The variable $clock$ is used to record the current step, starting from 0. 
\end{example}

\paragraph{\bf Partial Observation}

It is common that, in real-world systems, agents are {\it not} able to fully observe the system state at any given time. In a typical scenario, every agent runs a local protocol, maintains a local state, observes part of the system state by, e.g., sensing devices, and communicates with other agents. It is impractical, and in fact undesirable, from the system designer's point of view, to assume that agents can learn the local states of other agents or learn what the other agents observe. %As an example, 
In the context of our work partial observability arises naturally through the cognitive state, which represents an \textit{internal} state of every agent that is, in general, not observable by other agents. We formalise this notion with the following definition.

\begin{definition}
A \emph{partially observable} stochastic multiplayer game (\POSMG) is a tuple $\cM = (G, \sep\{\Obs{A}\}_{A\in\Ags}, \sep\{\obs{A}\}_{A\in\Ags})$, where 
\begin{itemize}
    \item $G = (\Ags, \sep\states, \sep\sinit, \sep\{\Act{A}\}_{A\in\Ags}, \sep T, \sep L)$ is an \SMG, 
    \item $\Obs{A}$ is a finite set of observations for agent $A$, and 
    \item $\obs{A}: \states \longrightarrow \Obs{A}$ is a labelling of states with observations for agent $A$.
\end{itemize}
\end{definition}

\begin{remark}
We note that, unlike partially observable Markov processes (POMDPs), in  which observations are probability distributions, we follow the setting in~\cite{FH1997} and work with deterministic observations. In \cite{GM1998}, it is stated without proof that probabilistic observations do not increase the complexity of the problem. 
\end{remark}

%, but no proof is given. In Section~\ref{sec:probobs}, we formally show that this is the case in our framework.

\commentout{
\begin{definition}
Let $\cM$ be a multi-agent system with set of global states $S$ and set of agents $\Ags$. For agent $A \in \Ags$ we let $\Obs{A}$ be a set of their possible observations and $\obs{A}: S\rightarrow \Obs{A}$ be an observation function providing agent $A$ with one observation in every state\footnote{Unlike POMDPs, in  which observations are probabilistic distributions, we follow the setting in~\cite{FH1997} and define a deterministic observation. With a footnote, \cite{GM1998} states without proofs that probabilistic observation does not increase the complexity of the problem. In Section~\ref{sec:probobs}, we formally show that this is the case in our framework. }.
\end{definition}

We remark that the above definition applies not only to \SMG s, but to any multi-agent system we define in due course.
}

We lift the observations from states to paths in the obvious way. Formally, for a finite path $\rho=s_0...s_n$, we define $\obs{A}(\rho)=\obs{A}(s_0)...\obs{A}(s_n)$. 

\begin{remark}\label{rem:syncPerfRecall}
In our model, an agent remembers both its past observations and the number of states, known as synchronous perfect recall~\cite{FHMVbook}. This assumption is necessary for the definition of belief in Section~\ref{sec:beliefsupdate}. 
\end{remark} %We assume that
 %for the system $\cM$ that
% \begin{itemize}
% \item
%

%, and
% \item for any  finite initialised paths $\rho_1,\rho_2\in \fpath{}^M$, if $\obs{A}(\rho_1)=\obs{A}(\rho_2)$ then for all $j\in\Ags$, $\goal{B}(\rho_1)=\goal{B}(\rho_2)$ and $\intn{B}(\rho_1)=\intn{B}(\rho_2)$. That is, agent can distinguish those execution histories leading to different pro-attitudes.
% \end{itemize}
%These assumptions are a prerequisite for the uniformity assumption on the preference functions, to be given below.
%Moreover, we say that a given sequence $o_0...o_k$ of observations is possible in $\cM$ if there exists an initialised finite path $\rho$ such that $\obs{A}(\rho)=o_0...o_k$.

%In many applications, it can be more suitable to assume that agents can reason based on the execution history of the system.

%We define a belief function for the autonomous stochastic multi-agent systems.

\begin{example}\label{example:trustgameobs}
For the trust game from Example~\ref{example:trustgamecstate}, the agents' observation functions for $A\in\{Alice, Bob\}$ are as follows:
$$\obs{A}((a_{Alice},a_{Bob},\gs_{Alice},\gs_{Bob},\is_{Alice},\is_{Bob}))=(a_{Alice},a_{Bob},\gs_{A},\is_{A}),$$ denoting that they cannot observe the opponent's cognitive state but can observe their last action. The set of observations $\Obs{A}$ can be easily inferred from this definition. \hfill $\Box$
%Using the above definition we obtain, e.g., $\obs{Bob}(s_1) = \obs{Bob}(s_2)$, since Bob cannot observe Alice's cognitive state. However, $\obs{Bob}(s_3) \neq \obs{Bob}(s_4)$, because Bob can observe Alice's temporal action. 
\end{example}

We say $\cM$ is \emph{fully observable} if $s = s'$ iff $\obs{A}(s) = \obs{A}(s')$ for all $A \in \Ags$. In fully observable systems, agents make decisions based on the current state, whereas in partially observable systems decisions must be based on all past observations.  

\paragraph{\bf Cognitive Mechanism}
While agents interact with each other and with the environment by taking actions in the physical space, 
%For an agent A, while the set $\Act{A}$ represents $A$'s actions in the physical space, 
%in \SMGC, 
%we are mainly interested in 
they make decisions through cognitive processes that determine their behaviour. Thus, in addition to the temporal dimension of transitions $s \trans{a} s'$, we also distinguish a \textit{cognitive} dimension of transitions $s \ctrans{} s'$, which corresponds to mental reasoning processes. The idea is that each temporal transition is preceded by a cognitive transition, which represents an agent's reasoning that led to its decision about which action to take. While transitions in the temporal dimension conform to the transition function $T$, cognitive changes adhere to the \emph{cognitive mechanism},
which determines for an agent its legal goals and intentions. Formally, we have the following definition of a stochastic game extended with a cognitive mechanism.
%introduction of which leads to the definition of \SMGC.
%Below, we formally introduce \SMGC. a multi-agent system which extends \POSMG\ with cognitive dimension. 

\newcommand{\smgcTuple}{(\Ags, \sep\states, \sep\sinit, \sep\{\Act{A}\}_{A\in\Ags}, \sep T, \sep L, \allowbreak \sep\{\Obs{A}\}_{A\in\Ags}, \sep\{\obs{A}\}_{A\in\Ags}, \sep\{\Cognition_A\}_{A \in \Ags}, \sep\{\cstrat{A}\}_{A \in \Ags})}

\begin{definition}
A \emph{stochastic multiplayer game with the cognitive dimension} (\SMGC) is
a tuple $\cM = (G, \sep\{\Cognition_A\}_{A \in \Ags}, \sep\{\cstrat{A}\}_{A \in \Ags} )$, where 
\begin{itemize}
\item $G = (\Ags, \sep\states, \sep\sinit, \sep\{\Act{A}\}_{A\in\Ags}, \sep T, \sep L, \sep\{\Obs{A}\}_{A\in\Ags}, \sep\{\obs{A}\}_{A\in\Ags})$ is a \POSMG,
\item $\Cognition_A = \langle \lgoal{A},\sep\lintn{A} \rangle$ is the \emph{cognitive mechanism} of agent $A$, consisting of a \legal\ goal function $\lgoal{A}:S\rightarrow \powerset{\powerset{\Goal{A}}}$ and a \legal\ intention function $\lintn{A}:S\rightarrow \powerset{\Intn{A}}$, and 
\item $\cstrat{A} = \langle \goal{A}, \sep\intn{A} \rangle$ is the \emph{cognitive strategy} of agent $A$, consisting of a goal strategy $\goal{A}: \fpath{}^\cM\rightarrow \dist{\powerset{\Goal{A}}}$ and an intention strategy $\intn{A}: \fpath{}^\cM\rightarrow \dist{\Intn{A}}$.
\end{itemize}
We refer to the \SMG\ $G$ from the above definition as the \emph{induced \SMG} from $\cM$.
\end{definition}

Thus, the \SMGC\ model generalises the usual notion of multiplayer games by extending states $S$ with agents' cognitive states, %which incurs partial observability, 
and adding for each agent $A$ the cognitive mechanism $\Cognition_A$ and $A$'s cognitive strategies to enable autonomous decision making.
We sometimes refer to the set $\Cognition = \{ \Cognition_A \}_{A \in \Ags}$ of cognitive mechanisms of all agents as the cognitive mechanism of the system $\cM$.
%We also call the set of cognitive mechanisms of all agents $\Cognition = \{ \Cognition_A \}_{A \in \Ags}$ the cognitive mechanism of a system $\cM$.

The \emph{\legal\ goal} (resp. \emph{intention}) function $\lgoal{A}$ (resp. $\lintn{A}$) specifies \legal\ goal (resp. intention) changes in a given state. Intuitively, those are goal (resp. intention) changes that an agent is allowed (but might not be willing) to make. One possible use of those functions is to enforce application-specific constraints that goals or intentions must satisfy (see Example~\ref{example:trustgamectrans}). 
% not anymore
%We note a slight abuse of notation: $\Goal{A}$ (resp. $\Intn{A}$) was previously defined as a set of all possible goals (resp. intentions). We reuse this notation, however, as it is usually clear from the context whether we refer to $\Goal{A}$ (resp. $\Intn{A}$) as a set or as a function.

The \emph{cognitive strategy} $\cstrat{A}$ determines how an agent's cognitive state evolves over time. Specifically, the goal (resp. intention) strategy $\goal{A}$ (resp. $\intn{A}$) specifies the incurred goal (resp. intention) changes (along with their probabilities), which are under agent $A$'s consideration for a given execution history. We sometimes call $\goal{A}$ (resp. $\intn{A}$) a \possible\ goal (resp. intention) function and require that \possible\ goal (resp. intention) changes are also \legal. Formally, for all $\rho s\in  \fpath{}^\cM$, we must have $supp(\goal{A}(\rho s))\subseteq \Goal{A}(s)$  and $supp(\intn{A}(\rho s))\subseteq \Intn{A}(s)$. 

\begin{remark}
We remark that cognitive strategies $\goal{A}$ and $\intn{A}$ are, in general, not computable. In Section~\ref{sec:proupdate}
we propose how to realise cognitive strategies so that they can be effectively computed.
\end{remark}

\commentout{
\begin{definition}
With a slight abuse of notation, we define the cognitive mechanism $\Cognition_A$ of an agent $A$ in a multi-agent system as a tuple $\langle \Goal{A}, \Intn{A} \rangle$ of functions which specify \legal\ cognitive transitions in a given global state:
\begin{itemize}
\item $\Goal{A}:S\rightarrow \powerset{\powerset{\Goal{A}}}$ gives agent $A$'s set of \legal\ goal changes in state $s$ and
\item $\Intn{A}:S\rightarrow \powerset{\Intn{A}}$ gives agent $A$'s set of \legal\ intention changes in state $s$.
\end{itemize}
%For a multi-agent system with a set of agents $\Ags$, we define cognitive mechanism $\Cognition = \{\langle \Goal{A}, \Intn{A} \rangle\}_{A\in\Ags}$ of system $\cM$ as a set of cognitive mechanisms of all agents.
\end{definition}
}

We note the following correspondence between the cognitive and temporal dimension: $\Goal{A}$ and $\Intn{A}$ (regarded as sets) specify all the goals and intentions of a given agent $A$ similarly to the way that $\Act{A}$ specifies all the actions of $A$, while the cognitive mechanism $\Cognition$ identifies possible cognitive transitions in a similar fashion to the transition function $T$ playing the same role for the temporal dimension. Finally, a cognitive strategy gives the probability for agents' goal and intention changes, analogously to an action strategy quantifying the likelihood of actions taken by agents in the temporal dimension. 

The following standard assumption ensures that agents' temporal and cognitive transitions, as well as their cognitive strategies, are consistent with their partial observability. 
\begin{assumption}\label{assump:uni} (Uniformity Assumption)
An \SMGC\ $\cM = \smgcTuple$ satisfies the \emph{Uniformity Assumption} if the following conditions hold.
\begin{itemize}
\item Agents can distinguish states with different sets of joint actions or \legal\ cognitive changes: for any two states $s_1$ and $s_2$ and an agent $A\in\Ags$, $\obs{A}(s_1)=\obs{A}(s_2)$ implies $\Act{B}(s_1)=\Act{B}(s_2)$, $\lgoal{B}(s_1)=\lgoal{B}(s_2)$, and $ \lintn{B}(s_1)= \lintn{B}(s_2)$, for all $B\in \Ags$.
\item Agents can distinguish execution histories which give rise to different cognitive strategies: for any two finite paths $\rho_1$, $\rho_2$, $\obs{A}(\rho_1)=\obs{A}(\rho_2)$ implies $\goal{A}(\rho_1)(x) = \goal{A}(\rho_2)(x)$ and  $\intn{A}(\rho_1)(x) = \intn{A}(\rho_2)(x)$ for all $x \subseteq \Goal{A}$.
\end{itemize}
\end{assumption}
Given a state $s$ and a set of agent A's goals $x\subseteq \Goal{A}$, we write $A.g(s,x)$ for the state obtained from $s$ by substituting agent $A$'s goals with $x$.
%Note that, other components of the state may also change in response to such a change.
% which is the same as $s$ except for the agent $A$'s goal attitude, which is substituted with $x$.
Similar notation $A.i(s,x)$ is used for the intention change when $x\in \Intn{A}$.  Alternatively, we may write $s \ctrans{A.g.x} s'$ if $s'=A.g(s,x)$ contains the goal set $x$ for $A$ and $s \ctrans{A.i.x} s'$ if $s'=A.i(s,x)$ contains the intention $x$ for $A$.

\begin{example}\label{example:trustgamectrans}
Having extended states of the trust game $\trustgame$ with goals and intentions in Example~\ref{example:trustgamecstate}, we now make $\trustgame$ into an \SMGC\ by introducing cognitive transitions, defining the cognitive mechanism and specifying cognitive strategies of agents.

\commentout{
\begin{figure}
\begin{center}
\begin{tikzpicture}[-latex ,auto ,node distance =1 cm and 1 cm,on grid,state/.style ={ circle,draw,minimum size =0.3 cm,scale=0.6}]
\node[state] (A) {$s_0$};
\node[state] (B) [below left=of A, xshift=-3.5cm] {$s_1$};
\node[state] (C) [below right=of A, xshift=3.5cm] {$s_2$};
\node[state] (D) [below left=of B] {$s_3$};
\node[state] (E) [below right=of B] {$s_4$};
\node[state] (F) [below left=of C] {$s_5$};
\node[state] (G) [below right=of C] {$s_6$};
\node[state] (H) [below left=of E, xshift=-2cm, yshift=-1cm] {$s_7$};
\node[state] (I) [below left=of E, xshift=0.5cm, yshift=-1cm] {$s_8$};
\node[state] (J) [below right=of E, xshift=-0.5cm, yshift=-1cm] {$s_9$};
\node[state] (K) [below right=of E, xshift=2cm, yshift=-1cm] {$s_{10}$};
\node[state] (L) [below left=of G, xshift=-2cm, yshift=-1cm] {$s_{11}$};
\node[state] (M) [below left=of G, xshift=0.5cm, yshift=-1cm] {$s_{12}$};
\node[state] (N) [below right=of G, xshift=-0.5cm, yshift=-1cm] {$s_{13}$};
\node[state] (O) [below right=of G, xshift=2cm, yshift=-1cm] {$s_{14}$};

\node[state] (P) [below left=of H, xshift=1cm] {$s_{15}$};
\node[state] (R) [below right=of H, xshift=-1cm] {$s_{16}$};
\node[state] (S) [below left=of I, xshift=1cm] {$s_{17}$};
\node[state] (T) [below right=of I, xshift=-1cm] {$s_{18}$};
\node[state] (U) [below left=of J, xshift=1cm] {$s_{19}$};
\node[state] (V) [below right=of J, xshift=-1cm] {$s_{20}$};
\node[state] (W) [below left=of K, xshift=1cm] {$s_{21}$};
\node[state] (X) [below right=of K, xshift=-1cm] {$s_{22}$};
\node[state] (Y) [below left=of L, xshift=1cm] {$s_{23}$};
\node[state] (Z) [below right=of L, xshift=-1cm] {$s_{24}$};
\node[state] (AA) [below left=of M, xshift=1cm] {$s_{25}$};
\node[state] (AB) [below right=of M, xshift=-1cm] {$s_{26}$};
\node[state] (AC) [below left=of N, xshift=1cm] {$s_{27}$};
\node[state] (AD) [below right=of N, xshift=-1cm] {$s_{28}$};
\node[state] (AE) [below left=of O, xshift=1cm] {$s_{29}$};
\node[state] (AF) [below right=of O, xshift=-1cm] {$s_{30}$};

\tikzset{every edge/.append style={font=\scriptsize}}
\path (A) edge [dashed] node[swap] {A.g.\{passive\}} (B);
\path (A) edge [dashed] node[] {A.g.\{active\}} (C);

\path (B) edge [] node[swap] {\emph{w}: $0.7$} (D);
\path (B) edge [] node[] {\emph{i}: $0.3$} (E);
\path (C) edge [] node[swap] {\emph{w}: $0.1$} (F);
\path (C) edge [] node[] {\emph{i}: $0.9$} (G);

\path (E) edge [dashed] node[swap] {B.g.ti} (H);
\path (E) edge [dashed] node[swap, xshift=-0.1cm, yshift=-0.55cm] {B.g.to} (I);
\path (E) edge [dashed] node[xshift=0.1cm, yshift=-0.55cm] {B.g.ui} (J);
\path (E) edge [dashed] node[] {B.g.uo} (K);

\path (G) edge [dashed] node[swap] {B.g.ti} (L);
\path (G) edge [dashed] node[swap, xshift=-0.1cm, yshift=-0.55cm] {B.g.to} (M);
\path (G) edge [dashed] node[xshift=0.1cm, yshift=-0.55cm] {B.g.ui} (N);
\path (G) edge [dashed] node[] {B.g.uo} (O);

\path (H) edge [] node[swap] {\emph{k}:$0$} (P);
\path (H) edge [] node[] {\emph{s}:$1$} (R);
\path (I) edge [] node[swap] {\emph{k}:$1$} (S);
\path (I) edge [] node[] {\emph{s}:$0$} (T);
\path (J) edge [] node[swap] {\emph{k}:$1$} (U);
\path (J) edge [] node[] {\emph{s}:$0$} (V);
\path (K) edge [] node[swap] {\emph{k}:$1$} (W);
\path (K) edge [] node[] {\emph{s}:$0$} (X);
\path (L) edge [] node[swap] {\emph{k}:$0$} (Y);
\path (L) edge [] node[] {\emph{s}:$1$} (Z);
\path (M) edge [] node[swap] {\emph{k}:$1$} (AA);
\path (M) edge [] node[] {\emph{s}:$0$} (AB);
\path (N) edge [] node[swap] {\emph{k}:$1$} (AC);
\path (N) edge [] node[] {\emph{s}:$0$} (AD);
\path (O) edge [] node[swap] {\emph{k}:$1$} (AE);
\path (O) edge [] node[] {\emph{s}:$0$} (AF);

\end{tikzpicture}
\end{center}
\caption{Trust game with the cognitive dimension}
\label{fig:trustgamecogn}
\end{figure}
}

\begin{figure}
\begin{center}
% node distance is (vertical) and (horizontal)
\begin{tikzpicture}[-latex ,auto ,node distance =1.3 cm and 1 cm,on grid,state/.style ={ circle,draw,minimum size =0.7 cm,scale=0.7}]
\node[state] (0A) {$s_0$};
\newcommand{\oneShiftX}{3.4cm}
\node[state] (1A) [below left=of 0A, xshift=-\oneShiftX] {$s_1$};
\node[state] (1B) [below right=of 0A, xshift=\oneShiftX] {$s_2$};

\newcommand{\twoShiftX}{0.9cm}
\node[state] (2A) [below left=of 1A, xshift=-\twoShiftX] {$s_3$};
\node[state] (2B) [below right=of 1A, xshift=\twoShiftX] {$s_4$};
\node[state] (2C) [below left=of 1B, xshift=-\twoShiftX] {$s_5$};
\node[state] (2D) [below right=of 1B, xshift=\twoShiftX] {$s_6$};

\newcommand{\threeShiftX}{0cm}
\node[state] (3A) [below left=of 2A, xshift=-\threeShiftX, yshift=-0cm] {$s_7$};
\node[state] (3B) [below right=of 2A, xshift=\threeShiftX, yshift=-0cm] {$s_8$};
\node[state] (3C) [below left=of 2B, xshift=-\threeShiftX, yshift=-0cm] {$s_9$};
\node[state] (3D) [below right=of 2B, xshift=\threeShiftX, yshift=-0cm] {$s_{10}$};
\node[state] (3E) [below left=of 2C, xshift=-\threeShiftX, yshift=-0cm] {$s_{11}$};
\node[state] (3F) [below right=of 2C, xshift=\threeShiftX, yshift=-0cm] {$s_{12}$};
\node[state] (3G) [below left=of 2D, xshift=-\threeShiftX, yshift=-0cm] {$s_{13}$};
\node[state] (3H) [below right=of 2D, xshift=\threeShiftX, yshift=-0cm] {$s_{14}$};

\newcommand{\fourShiftX}{0.2cm}
\newcommand{\fourShiftY}{-0.2cm}
\node[state] (4A) [below left=of 3B, xshift=\fourShiftX, yshift=\fourShiftY] {$s_{15}$};
\node[state] (4B) [below right=of 3B, xshift=-\fourShiftX, yshift=\fourShiftY] {$s_{16}$};
\node[state] (4C) [below left=of 3D, xshift=\fourShiftX, yshift=\fourShiftY] {$s_{17}$};
\node[state] (4D) [below right=of 3D, xshift=-\fourShiftX, yshift=\fourShiftY] {$s_{18}$};
\node[state] (4E) [below left=of 3F, xshift=\fourShiftX, yshift=\fourShiftY] {$s_{19}$};
\node[state] (4F) [below right=of 3F, xshift=-\fourShiftX, yshift=\fourShiftY] {$s_{20}$};
\node[state] (4G) [below left=of 3H, xshift=\fourShiftX, yshift=\fourShiftY] {$s_{21}$};
\node[state] (4H) [below right=of 3H, xshift=-\fourShiftX, yshift=\fourShiftY] {$s_{22}$};

\newcommand{\fiveShiftX}{0.8cm}
\newcommand{\fiveShiftY}{0.3cm}
\node[state] (5A) [below left=of 4A, xshift=\fiveShiftX, yshift=\fiveShiftY] {$s_{23}$};
\node[state] (5B) [below right=of 4A, xshift=-\fiveShiftX, yshift=\fiveShiftY] {$s_{24}$};
\node[state] (5C) [below left=of 4B, xshift=\fiveShiftX, yshift=\fiveShiftY] {$s_{25}$};
\node[state] (5D) [below right=of 4B, xshift=-\fiveShiftX, yshift=\fiveShiftY] {$s_{26}$};
\node[state] (5E) [below left=of 4C, xshift=\fiveShiftX, yshift=\fiveShiftY] {$s_{27}$};
\node[state] (5F) [below right=of 4C, xshift=-\fiveShiftX, yshift=\fiveShiftY] {$s_{28}$};
\node[state] (5G) [below left=of 4D, xshift=\fiveShiftX, yshift=\fiveShiftY] {$s_{29}$};
\node[state] (5H) [below right=of 4D, xshift=-\fiveShiftX, yshift=\fiveShiftY] {$s_{30}$};
\node[state] (5I) [below left=of 4E, xshift=\fiveShiftX, yshift=\fiveShiftY] {$s_{31}$};
\node[state] (5J) [below right=of 4E, xshift=-\fiveShiftX, yshift=\fiveShiftY] {$s_{32}$};
\node[state] (5K) [below left=of 4F, xshift=\fiveShiftX, yshift=\fiveShiftY] {$s_{33}$};
\node[state] (5L) [below right=of 4F, xshift=-\fiveShiftX, yshift=\fiveShiftY] {$s_{34}$};
\node[state] (5M) [below left=of 4G, xshift=\fiveShiftX, yshift=\fiveShiftY] {$s_{35}$};
\node[state] (5N) [below right=of 4G, xshift=-\fiveShiftX, yshift=\fiveShiftY] {$s_{36}$};
\node[state] (5O) [below left=of 4H, xshift=\fiveShiftX, yshift=\fiveShiftY] {$s_{37}$};
\node[state] (5P) [below right=of 4H, xshift=-\fiveShiftX, yshift=\fiveShiftY] {$s_{38}$};

\tikzset{every edge/.append style={font=\scriptsize}}
\path (0A) edge [dashed] node[swap] {A.g.\{passive\}} (1A);
\path (0A) edge [dashed] node[] {A.g.\{active\}} (1B);

\path (1A) edge [dashed] node[swap] {B.g.\{investor\}} (2A);
\path (1A) edge [dashed] node[] {B.g.\{opportunist\}} (2B);
\path (1B) edge [dashed] node[swap] {B.g.\{investor\}} (2C);
\path (1B) edge [dashed] node[] {B.g.\{opportunist\}} (2D);

\path (2A) edge [] node[swap] {\emph{w}: $0.7$} (3A);
\path (2A) edge [] node[] {\emph{i}: $0.3$} (3B);
\path (2B) edge [] node[swap] {\emph{w}: $0.7$} (3C);
\path (2B) edge [] node[] {\emph{i}: $0.3$} (3D);

\path (2C) edge [] node[swap] {\emph{w}: $0.1$} (3E);
\path (2C) edge [] node[xshift=-0cm, yshift=-0cm] {\emph{i}: $0.9$} (3F);
\path (2D) edge [] node[swap] {\emph{w}: $0.1$} (3G);
\path (2D) edge [] node[xshift=-0cm, yshift=-0cm] {\emph{i}: $0.9$} (3H);

\path (3B) edge [dashed] node[swap] {B.i.$\mathrm{\astrat{share}}$} (4A);
\path (3B) edge [dashed] node[] {B.i.$\mathrm{\astrat{keep}}$} (4B);
\path (3D) edge [dashed] node[swap] {B.i.$\mathrm{\astrat{share}}$} (4C);
\path (3D) edge [dashed] node[] {B.i.$\mathrm{\astrat{keep}}$} (4D);
\path (3F) edge [dashed] node[swap] {B.i.$\mathrm{\astrat{share}}$} (4E);
\path (3F) edge [dashed] node[] {B.i.$\mathrm{\astrat{keep}}$} (4F);
\path (3H) edge [dashed] node[swap] {B.i.$\mathrm{\astrat{share}}$} (4G);
\path (3H) edge [dashed] node[] {B.i.$\mathrm{\astrat{keep}}$} (4H);

\path (4A) edge [] node[swap] {\emph{k}:$0$} (5A);
\path (4A) edge [] node[] {\emph{s}:$1$} (5B);
\path (4B) edge [] node[swap] {\emph{k}:$1$} (5C);
\path (4B) edge [] node[] {\emph{s}:$0$} (5D);
\path (4C) edge [] node[swap] {\emph{k}:$0$} (5E);
\path (4C) edge [] node[] {\emph{s}:$1$} (5F);
\path (4D) edge [] node[swap] {\emph{k}:$1$} (5G);
\path (4D) edge [] node[] {\emph{s}:$0$} (5H);
\path (4E) edge [] node[swap] {\emph{k}:$0$} (5I);
\path (4E) edge [] node[] {\emph{s}:$1$} (5J);
\path (4F) edge [] node[swap] {\emph{k}:$1$} (5K);
\path (4F) edge [] node[] {\emph{s}:$0$} (5L);
\path (4G) edge [] node[swap] {\emph{k}:$0$} (5M);
\path (4G) edge [] node[] {\emph{s}:$1$} (5N);
\path (4H) edge [] node[swap] {\emph{k}:$1$} (5O);
\path (4H) edge [] node[] {\emph{s}:$0$} (5P);

\end{tikzpicture}
\end{center}
%\Description{With cognitive dimension, trust game has 39 states}
\caption{Trust game with cognitive dimension}
\label{fig:trustgamecogn}
\end{figure}
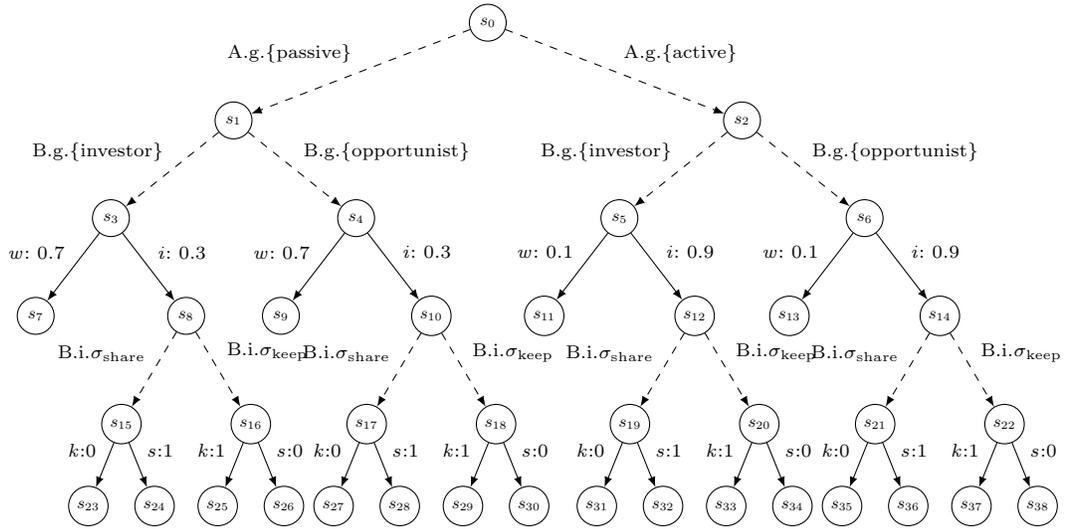

We give a graphical representation of $\trustgame$ in Figure~\ref{fig:trustgamecogn}. Note that \emph{w} and \emph{i} stand for Alice's actions \emph{withhold} and \emph{invest} respectively, whereas \emph{s} and \emph{k} denote Bob's actions \emph{share} and \emph{keep}. 
%Note that we assume in this example that agent's cognitive state determines their action strategy and hence we attach corresponding probabilities to temporal transitions. 
Cognitive transitions are represented with dashed lines. Temporal actions are annotated with probabilities which reflect the intention (i.e., an action strategy) that an agent has in a given state. 

Below we explain how we arrived at such a system. The execution of the game starts by agents choosing their goals. While it may seem unnatural (a more realistic approach would probably involve multiple initial states corresponding to agents having different goals), such a way of modelling plays well with our formalism and does not restrict the generality of our approach. 
%In reality, agents would probably have their goals before the start of the game, but for modelling purposes, we prefer to include those two (somewhat unnatural) steps. Alternatively, we could define multiple initial states, each corresponding to a different combination of agents' goals, but, as we shall see later, it is more convenient to introduce those initial cognitive transitions instead. 
Formally, we specify those cognitive transitions using \legal\ goal function for Alice and Bob as follows:
\begin{alignat*}{2}
\lgoal{Alice}(s_0) &= \{\{active\}, \{passive\}\}, \\
\lgoal{Bob}(s_k) &= \{\{investor\}, \{opportunist\}\},
\end{alignat*}
where $k \in \{1,2\}$.

Once their goals are set, agents begin interacting with one another in the physical space by taking actions. However, each action is preceded by an agent determining its intention, which represents the mental reasoning that results in action selection. 
%The intuition behind it is that actions in physical space are a result of mental reasoning performed by an agent. 
Note that we do not depict Alice's intention change in Figure~\ref{fig:trustgamecogn}, the reason being our assumption that Alice's intention is fully determined by her goals. 

As mentioned above, Alice's actions are annotated with probabilities, according to our assumption that cognitive state determines agents' behaviour. For example, in states $s_3$ and $s_4$, Alice's intention is \emph{passive}, and so the probabilities of withholding and investing the money are given by Table~\ref{tab:strategyTrustGame1}.
If Alice withholds her money, the game ends. Otherwise, Bob determines his intention (his choice depends on his goals and his belief about Alice's goals) and performs his action of keeping or sharing his profit. Bob's \legal\ intentions are given by his \legal\ intention function, defined as:
\[
\lintn{Bob}(s_k) = \{ share, keep \},
\]
where $k \in \{8,10,12,14\}$.

Finally, we define Bob's intention strategy. As mentioned above, Bob takes on intention \emph{share} only if he is an \emph{investor} and he believes that Alice is \emph{active}. The latter depends on Alice's actions in the physical space (since Bob can't observe Alice's cognitive state). In this case, Alice investing her money with Bob increases his belief that she is \emph{active}. We therefore set:
\begin{alignat*}{2}
\intn{Bob}(s_0s_1s_3s_8) &= \langle share \mapsto 1, keep \mapsto 0 \rangle \\
\intn{Bob}(s_0s_1s_4s_{10}) &= \langle share \mapsto 0, keep \mapsto 1 \rangle \\
\intn{Bob}(s_0s_2s_5s_{12}) &= \langle share \mapsto 1, keep \mapsto 0 \rangle \\
\intn{Bob}(s_0s_2s_6s_{14}) &= \langle share \mapsto 0, keep \mapsto 1 \rangle
\end{alignat*}

Note that the above strategy is pure. While the framework does not enforce it, we believe defining it in such a way more accurately resembles human cognitive processes. Also, as we will see later, pure intention strategies are more compatible with our trust formulations. 

Finally, note that we don't consider goal strategies here. The reason for that is our representation of goals in this example as static mental attitudes which agents possess or not, rather than choose dynamically. We therefore treat the first two cognitive transitions as nondeterministic for now. In Section~\ref{sec:preferencefuncs}, we introduce a mechanism which motivates choosing such representation. 
\hfill $\Box$
\end{example}

\begin{remark}
We remark that, while defining the intention strategy as in the above example is easy for simple systems, %such as the one we consider, 
for more complex games this approach does not scale. In particular, we could consider repeated version of the trust game, for which constructing Bob's intention strategy manually is impractical. However, in Section~\ref{sec:beliefsupdate}, we formalise the notion of agent's belief and, in Section~\ref{sec:proupdate}, we propose an intuitive method of constructing the intention strategy efficiently.
\end{remark}

\paragraph{\bf Induced \SMG}
For a concrete system, it is conceptually simpler to assume a certain, predefined interleaving of cognitive and temporal transitions, as we did in Example~\ref{example:trustgamectrans}. However, in general, such interleaving might be arbitrary since agents may change their mental state at any time. It is therefore often useful to think of a \SMGC\ as a collection of induced \SMG s, each corresponding to one configuration of mental states of agents. Those induced \SMG s do not differ as far as their components are concerned (i.e., states, actions and transitions are the same), but different mental states give rise to different behaviour. Execution of such an \SMGC\ can then be viewed as starting in one of the induced \SMG s, remaining there as long as agents perform temporal actions and moving to a different induced \SMG\ as soon as one agent changes its mental state. In the long run, execution alternates between different standard \SMG s, where, at any point, current \SMG\ reflects current mental states of agents, and each temporal transition preserves the current \SMG, while each cognitive transition switches to a different \SMG. The benefit of such an approach is that each induced \SMG\ can then be reasoned about using standard techniques. We note also that, for the purposes of the logic operators introduced later in the paper, we assume that both temporal and cognitive transitions are available to agents in any state.

\paragraph{\bf Paths}
In contrast to the conventional multiplayer games, where each path includes a sequence of temporal transitions, which are consistent with the transition function $T$, in an \SMGC, in view of the active participation of agents' pro-attitudes in determining their behaviour, a path can be constructed by interleaving of temporal \emph{and} cognitive transitions. Each cognitive transition represents a change of an agent's goals or intention. We now extend the definition of a path to allow cognitive transitions. % as well as temporal ones.

\begin{definition}\label{def:path}
Given a stochastic multiplayer game with the cognitive dimension (\SMGC) $\cM$, we define a finite, respectively infinite, \emph{path} $\rho$ as a sequence of states $s_0s_1s_2...$ such that, for all $k \geq 0$, one of the following conditions is satisfied:
\begin{enumerate}
\item $s_k \trans{a} s_{k+1}$ for some joint action $a$ such that $a\in \Act{}(s_k)$,
\item $s_k \ctrans{A.g.x} s_{k+1}$ for some $A\in\Ags$ and $x\subseteq \Goal{A}(s_k)$,
\item $s_k\ctrans{A.i.x} s_{k+1}$ for some $A\in\Ags$ and $x\in \Intn{A}(s_k)$.
\end{enumerate}
We reuse the notation introduced in Section~\ref{subsec:smg} for regular paths and denote the set of finite (resp. infinite) paths by $\fpath{}^\cM$ (resp. $\infpath{}^\cM$), and the set of finite (resp. infinite) paths starting in state $s$ by $\fpath{}^\cM(s)$  (resp. $\infpath{}^\cM(s)$).
%We let $\fpath{}^\cM(s)$, resp. $\infpath{}^\cM(s)$, denote the set of finite, resp. infinite, paths starting in $s$ and $\fpath{}^\cM$, resp. $\infpath{}^\cM$, denote the set of finite, resp. infinite paths starting in any state. 
\end{definition}

The first condition represents the standard temporal transition, while the other two stand for cognitive transitions -- the former is a goal change, whereas the latter is an intention change.

We often require that both the \legal\ goal function and the \legal\ intention function are serial, i.e., for any state $s$ and any subset $x$ of $\Goal{A}$, there exists a state $s'\in\states$ such that $s\ctrans{A.g.x}s'$, and for any state $s$ and any intention $x\in \Intn{A}$, there exists a state $s'\in\states$ such that $s\ctrans{A.i.x} s'$. This is a non-trivial requirement, since such states $s'$ can be unreachable via temporal transitions. 

\begin{remark}
We remark that the seriality requirement for the probabilistic transition function is usually imposed for model checking, and by no means reduces the generality of the problem, as we can introduce absorbing states to accommodate undefined temporal or cognitive transitions in an obvious way.
\end{remark}

\paragraph{\bf Deterministic Behaviour Assumption}
%Before proceeding, we formalise the correlation between pro-attitudes and action strategies. Both goals and intentions are high-level cognitive concepts. Goals represent the attitudes or desires of an agent and are fairly abstract, while intentions are more concrete and directly influence agent's behaviour. Intentions are often means of achieving a given goal. 
%On the other hand, strategy is a low-level behavioural concept, used for the agent to achieve its intention. For the purposes of our framework, we assume that agents use strategies to implements their intentions takes a strategy to implement its current intention, and there could exist none, a unique, or multiple strategies for an intention.
%

%The framework of this paper focuses on reasoning about cognitive processes, and therefore we sometimes (e.g., in the running example and the complexity results) assume that each intention is associated with a corresponding action strategy. We formulate it formally below.

Recall that intentions of agents are associated with action strategies, thereby determining agents' behaviour in a physical space. Below, we formalise that idea, in addition requiring that the associated strategies are pure, which simplifies results.

\begin{assumption} \label{assump:detbeh}
(Deterministic Behaviour Assumption)
An \SMGC\ $\cM$ satisfies the \emph{Deterministic Behaviour Assumption} if each agent's cognitive state deterministically decides its behaviour in terms of selecting its next local action. 
In other words, agent's cognitive state induces a pure action strategy that it follows. 
\end{assumption}
Hence, since global states encode agents' cognitive states, under Assumption~\ref{assump:detbeh} the transition function $T$ becomes deterministic, i.e., for every state $s$, there exists a unique joint action $a$ such that next state $s'$ is chosen with probability $T(s,a)(s')$. 
Therefore, each \SMG\ induced from $\cM$, with fixed pro-attitudes, can be regarded as a Markov chain.

\begin{remark}
We note that a more general version of Assumption~\ref{assump:detbeh} is possible, where the action strategy is not assumed to be pure, and our results can be easily adapted to that variant. 
In fact, action strategies introduced in Example~\ref{example:trustgame} are not pure and so the trust game does not satisfy the strict version of the Deterministic Behaviour Assumption as stated above. Example~\ref{example:trustgamepref} illustrates how the calculations can be adapted to handle that.
\end{remark}

\subsection{Cognitive Reasoning}\label{sec:cognReas}

We can now extend the logic \PCTLS\ with operators for reasoning about agent's cognitive states, resulting in the logic called Probabilistic Computation Tree Logic with Cognitive Operators, \PCTLSM.
\begin{definition}
The syntax of the logic \PCTLSM\ is:
$$
\begin{array}{l}
\phi  ::=  p ~|~\neg \phi~|~\phi\lor \phi~|~\forall\psi~|~\prob{\bowtie q}{\psi}~|~\G_A\phi~|~\I_A\phi~|~\C_A\phi \\
\psi  ::=  \phi~|~\neg\psi~|~\psi \lor \psi~|~\next \psi ~|~ \psi\until \psi
\end{array}
$$
where $p\in AP$, $A\in\Ags$, $\bowtie\in \{<,\leq,>,\geq\}$, and $q\in [0,1]$.
\end{definition}

Intuitively, the newly introduced \textit{cognitive} operators $\G_A\phi$ (goal), $\I_A\phi$ (intention) and $\C_A\phi$ (capability) consider the task expressed as $\phi$ and respectively quantify, in the cognitive dimension, over \emph{\possible\ changes of goals}, \emph{\possible\ intentions} and \emph{\legal\ intentions}.

The semantics for \PCTLSM\ is as follows.

\begin{definition}\label{def:semantics3}Let $\cM = \smgcTuple$ be a \SMGC, $\rho$ a finite path in $\cM$ and $s \in S$ such that $\rho s \in \fpath{}^\cM$. The semantics of previously introduced operators of \PCTLS\ remains the same in \PCTLSM. For the newly introduced cognitive operators, the satisfaction relation $\models$ is defined as follows:
\begin{itemize}

\item $\cM,\rho s \models\G_A\phi$ if 
$\forall x\in \supp(\goal{A}(\rho s)) \, \exists s' \in \states$ : $s \ctrans{A.g.x} s'$ and $\cM,\rho s s' \models \phi$,

\item $\cM,\rho s\models \I_A\phi$ if 
$\forall x\in \supp(\intn{A}(\rho s)) \, \exists s' \in \states$ : $s \ctrans{A.i.x} s'$ and $\cM,\rho s s' \models \phi$,

\item $\cM,\rho s\models \C_A\phi$ if 
$\exists x\in \lintn{A}(s) \, \exists s' \in \states$ : $s \ctrans{A.i.x} s'$ and $\cM,\rho s s' \models \phi$.

\end{itemize}
\end{definition}

Thus, $\G_A\phi$ expresses that $\phi$ holds in future regardless of agent $A$ changing its goals.
Similarly, $\I_A\phi$ states that $\phi$ holds regardless of $A$ changing its intention, whereas $\C_A\phi$ quantifies over the \legal\ intentions, and thus expresses that agent $A$ could change its intention to achieve $\phi$ (however, such a change might not be among agent's \possible\ intention changes).

\begin{remark}\label{rem:pctlsEval}
We note that, when evaluating \PCTLS\ operators, we assume that agents keep their current mental attitudes, i.e., that the future path is purely temporal. Formally, for a \SMGC\ $\cM$, \PCTLS\ operators should be interpreted over the \SMG\ induced from $\cM$. Furthermore, when evaluating \PCTLSM\ formulas, we assume agents can change their goals and intentions at any time, in line with the `induced \SMG s' interpretation presented in Section~\ref{subsec:cognitiveReasoning}. That ensures that the cognitive operators can be applied at any point of execution, as well as meaningfully chained, nested or manipulated in any other way.
\end{remark}

\begin{remark}
We comment here about our definition of the semantics of $\G_A\phi$, $\I_A\phi$ and $\C_A\phi$, whereby the changes of goals and intentions do not incur the changes of other components of the state. This can be counter-intuitive for some cases, e.g., it is reasonable to expect that the intention of the agent may change when its goals are changed. %In this paper, we aim to have a general definition for the problem, and 
We believe that such dependencies are better handled at the modelling stage. For example, a simultaneous change of goals and intention can be modelled as two consecutive cognitive transitions -- a goal change followed by an intention change.
\end{remark}

Below, we write $\overline{X}\phi$ with $X\in \{\G_A,\I_A,\C_A\}$ for $\neg X \neg \phi$. For instance, formula $\overline{\I_A}\phi$ expresses that it is possible to achieve $\phi$ by changing agent $A$'s intention. Note that it is not equivalent to $\C_A \phi$, which quantifies over \legal, rather than \possible, intentions.

\begin{example}\label{example:trustgamecognformulas}
Here we give examples of formulas that we may wish to check on the trust game from Example~\ref{example:trustgamectrans}. The formula:
\[
\G_{Alice} \prob{\leq 0.9}{\eventually (a_{Alice} = invest)}
\]
expresses that, regardless of Alice changing her goals, the probability of her investing in the future is no greater than 90\%. On the other hand, the formula: 
\[
\C_{Bob} \prob{\leq 0}{\next (a_{Bob} = keep)}
\]
states that Bob has a \legal\ intention which ensures that he will not keep the money as his next action. 
Also, the formula:
\[
\overline{\I_{Alice}} \exists\eventually richer_{Alice,Bob},
\]
where $richer_{Alice,Bob}$ is an atomic proposition with obvious meaning, states that Alice can find an intention such that eventually a state can be reached %it is possible to eventually reach a state 
where Alice has more money than Bob. Finally, the formula:
\[
\overline{\I_{Alice}} \exists\eventually \G_{Bob} \forall\eventually \neg richer_{Alice,Bob}
\]
expresses that Alice can find an intention such that it is possible to reach a state such that, for all possible Bob's goals, the game will always reach a state in which Bob is no poorer than Alice. \hfill $\Box$
\end{example}

In this paper we study the model checking problem defined as follows.

\begin{definition}
For a given \SMGC\ $\cM$ and a formula $\phi$ of the language \PCTLSM\, the \emph{model checking problem}, written as $\cM\models \phi$, is to decide whether $\cM,s\models\phi$ for all initial states $s\in \sinit$. 
\end{definition}

\begin{table}
\caption{Complexity of Markov chain model checking}
\label{tab:mcmc}
\begin{center}
\begin{tabular}{c|c}
\textbf{\PCTL} & PTIME-complete\\
\hline
\textbf{\PCTLS} & PSPACE-complete \\
\end{tabular}
\end{center}
\end{table}

The model checking problem amounts to checking whether a given finite model satisfies a formula. For logics \PCTL\ and \PCTLS\ the model checking problem over Markov chains is known to be decidable, with the complexity results summarised in Table~\ref{tab:mcmc}. These logics thus often feature in probabilistic model checkers, e.g., PRISM \cite{KNP11} and Storm \cite{Katoen-Storm}. In contrast, the satisfiability problem, i.e.,\ the whether there exists a model that satisfies a given formula, for these logics is an open problem \cite{chakraborty-katoen}. 
Therefore. the satisfiability problem for the logic introduced here is likely to be challenging.

Consider a \SMGC\ $\cM$ in which agents never change their mental attitudes. Then, thanks to Assumption~\ref{assump:detbeh}, all transitions in the system are effectively deterministic and $\cM$ can be viewed as a Markov chain. Using complexity results summarised in Table~\ref{tab:mcmc} 
%and taking \PCTLM\ to be the CTL-fragment of \PCTLSM, 
we formalise the above observation with the following theorem.

%Let \PCTLM\ be the CTL-fragment of \PCTLSM, i.e., all temporal operators are immediately preceded by a probabilistic operator or a cognitive operator. The following theorem shows that, when the cognitive strategies of all agents are constant functions, the model checking complexity does not increase with the addition of the cognitive dimension. It is well known that, for Markov chains, \PCTLS\ model checking is PSPACE-complete~\cite{CY1995} and \PCTL\ model checking is PTIME-complete~\cite{baierbook}.
\begin{theorem}
If the cognitive strategies of all agents in a system $\cM$ are constant, then the complexity of model checking \PCTLSM\ over \SMGC\ is PSPACE-complete, and model checking \PCTLM\ over \SMGC\ is PTIME-complete.
\end{theorem}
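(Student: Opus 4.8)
The plan is to reduce model checking of \PCTLSM\ (resp.\ \PCTLM) over the restricted \SMGC\ to the corresponding problem over Markov chains, for which the complexities of Table~\ref{tab:mcmc} are known, and then to argue that the cognitive operators contribute only polynomial overhead. First I would make precise the observation sketched in the text preceding the theorem: when all cognitive strategies $\cstrat{A} = \langle \goal{A}, \intn{A}\rangle$ are constant, the possible goal and intention changes are independent of the execution history, and by the Deterministic Behaviour Assumption (Assumption~\ref{assump:detbeh}) the action taken in each state is uniquely determined. Hence, holding the agents' mental attitudes fixed, the induced \SMG\ is a finite discrete-time Markov chain whose transition probabilities are $T(s,a)$ for the unique enabled joint action $a$ at $s$. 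The finitely many cognitive successors of a state $s$ — one for each element of $\supp(\goal{A}(\cdot))$, $\supp(\intn{A}(\cdot))$ and $\lintn{A}(s)$, realised via $A.g(s,x)$ and $A.i(s,x)$ — are likewise determined by $s$ alone and lie in the finite state space.

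Second I would prove a memorylessness lemma: under these hypotheses, satisfaction of any \PCTLSM\ formula at a history $\rho s$ depends only on the last state $s$. The argument is by induction on formula structure. The atomic and Boolean cases are immediate; the \PCTLS\ cases hold because, over the induced Markov chain (using the induced-\SMG\ reading of Remark~\ref{rem:pctlsEval}), the probabilistic and path-quantifier operators evaluated from $s$ depend only on $s$; and the cognitive cases $\G_A\phi$, $\I_A\phi$, $\C_A\phi$ reduce, since their quantified sets depend only on $s$ — being history-independent for $\G_A,\I_A$ by constancy of the cognitive strategies and already state-determined for $\C_A$ — to checking $\phi$ at the cognitive successors $\rho s s'$, which by the induction hypothesis depend only on $s'$. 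Consequently the task becomes standard bottom-up, state-based model checking over a finite Markov chain augmented with cognitive edges.

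Finally I would establish the two bounds. For the upper bounds, the standard labelling algorithm computes, for each subformula, the set of states satisfying it: the temporal \PCTL\ (resp.\ \PCTLS) operators are handled by the Markov-chain procedures of Table~\ref{tab:mcmc} in PTIME (resp.\ PSPACE), while each cognitive operator is evaluated at a state by inspecting its at most $|S|$ cognitive successors and testing membership of the already-computed satisfaction sets, which adds only polynomial overhead and remains within PSPACE. For the lower bounds, a \PCTLSM\ (resp.\ \PCTLM) formula containing no cognitive operators, evaluated over a \SMGC\ whose agents never change mental attitudes, is exactly \PCTLS\ (resp.\ \PCTL) model checking over the induced Markov chain, so PSPACE- and PTIME-hardness are inherited directly from Table~\ref{tab:mcmc}. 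The main obstacle is the memorylessness lemma: one must verify carefully that the history-dependent semantics — formulas interpreted over finite paths, with cognitive transitions rewriting the cognitive component of the state — genuinely collapses to a state-based semantics, and in particular that chaining or nesting cognitive operators never forces the evaluation to depend on more than the current state. This is precisely where constancy of the cognitive strategies and determinism of behaviour are essential.
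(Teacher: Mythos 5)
Your proposal is correct and follows essentially the same route as the paper, which justifies this theorem directly by the observation that constant cognitive strategies together with Assumption~\ref{assump:detbeh} collapse the system to a finite Markov chain, so that the complexities of Table~\ref{tab:mcmc} transfer; your memorylessness lemma and the state-wise treatment of $\G_A$, $\I_A$, $\C_A$ (polynomial overhead for upper bounds, embedding of plain \PCTLS/\PCTL\ instances for hardness) are exactly the details the paper leaves implicit.
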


However, it is often unrealistic to assume that agents' cognitive strategies are constant.
In Section~\ref{sec:proupdate}, we suggest a variation of our model based on agents reasoning about their beliefs and trust, all of whose components are finite, which makes it amenable for model checking.

\section{Preference Functions and Probability Spaces}\label{sec:preferencefuncs}

In this section, we develop the foundations for reasoning with \emph{probabilistic beliefs}, which we define in Section \ref{sec:beliefsupdate}.
In order to support \emph{subjective} beliefs, we utilise the concept of \textit{preference functions}, which resolve the nondeterminism arising from agents' cognitive transitions in a similar way to how action strategies resolve nondeterminism in the temporal dimension. 
This enables the definition of probability spaces to support reasoning about beliefs, which will in turn provide the basis for reasoning about trust. 
The central model of this paper, autonomous stochastic multiagent systems, is then introduced.

\iffalse
In this section, we present a concept which is crucial for defining a belief function, which will then be used to define trust. Namely, the \textit{preference functions} resolve the nondeterminism arising from agents' cognitive transitions and enable us to define probability spaces.
%They also lead to the definition of autonomous stochastic multi-agent systems. 
%Then, with the approach of Section~\ref{sec:beliefsupdate}, a  belief function can be defined on these probability spaces.
In this section, we present a framework for reasoning about autonomous agents in multiagent systems. The key novelty of our model is the consideration of agents' mental attitudes to enable autonomous decision making, which we achieve by equipping agents with \emph{goals} and \emph{intentions} (also called \emph{pro-attitudes}). We enhance stochastic multiplayer games with a cognitive mechanism to represent reasoning about goals and intentions. The system then evolves along two interleaving dimensions: \emph{temporal}, representing actions of agents in the physical space, and \emph{cognitive}, which corresponds to mental reasoning processes, i.e., goal and intention changes, that determine the actions that agents take. 
\fi

\paragraph{\bf Preference Functions}

To define a belief function, the usual approach is to employ for every agent a preference ordering, which is a measurement over the worlds. This measurement is commonly argued to naturally exist with the problem, see \cite{FH1997} for an example.
However, the actual definition of such a measurement %for a given  problem 
can be non-trivial, because the number of worlds can be very large or even infinite, e.g.\ in \cite{FH1997} and in this paper, and thus enumerating all worlds can be infeasible.

In this paper, instead of an (infinite definition of) preference ordering, we resolve the nondeterminism in the system by introducing \emph{preference functions}. The key idea for a preference function is to estimate, for an agent $A$, the possible changes of goals or intentions of another agent $B$ in a given state using a probability distribution. In other words, preference functions model \emph{probabilistic prior} knowledge of agent $A$ about pro-attitudes of another agent $B$. That knowledge may be derived from prior experience (through observations), personal preferences, social norms, etc., and will in general vary between agents. A uniformly distributed preference function can be assumed if no prior information is available, as is typical in Bayesian settings.

Resolving the nondeterminism (for temporal dimension -- by Assumption~\ref{assump:detbeh}, for cognitive dimension -- by preference functions) allows us to define a family of probability spaces for each agent. Since preference functions vary among agents, probability spaces are also different for every agent. Moreover, each agent has multiple probability spaces, corresponding to its own various cognitive states. Intuitively, every probability space represents agent's \emph{subjective} view on the relative likelihood of different infinite paths being taken in the system.

We are now ready to define the central model of this paper.

\newcommand{\asmasTuple}{\cM = (\Ags, \sep\states, \sep\sinit, \sep\{\Act{A}\}_{A\in\Ags}, \sep T, \sep L, \sep\{\Obs{A}\}_{A\in\Ags}, \sep\{\obs{A}\}_{A\in\Ags}, \sep\{\Cognition_A\}_{A \in \Ags}, \sep \{\cstrat{A}\}_{A \in \Ags}, \sep \{\pref{A}\}_{A \in \Ags})}

\begin{definition}\label{def:autosmas}
An \emph{autonomous stochastic multi-agent system} (ASMAS) is a tuple 
$\cM = (G,\{\pref{A}\}_{A\in\Ags})$, where 
\begin{itemize}
\item $G = \smgcTuple$ is an \SMGC\ and 
\item $\pref{A}$ is a family of preference functions of agent $A \in \Ags$, defined as
\[
\pref{A} \defequals \{\gpref{A}{B}, \ipref{A}{B}~|~B\in\Ags \text{ and } B \neq A\},
\]
where:
\begin{itemize}
\item $\gpref{A}{B} : \states \rightarrow \dist{\powerset{\Goal{B}}}$ is a goal preference function of $A$ over $B$ such that, for any state $s$ and $x \in \powerset{\Goal{B}}$, we have $\gpref{A}{B}(s)(x)> 0$  only if  $x\in \lgoal{B}(s)$, and

\item $\ipref{A}{B} : \states \rightarrow \dist{\Intn{B}}$ is an intention preference function of $A$ over $B$ such that, for any state $s$ and $x \in \Intn{B}$, we have $\ipref{A}{B}(s)(x)> 0$  only if $x\in \lintn{B}(s)$.
\end{itemize}
\end{itemize} 
\end{definition}

\commentout{
\begin{definition}
Let $\cM$ be an \AutoSMAS\ with set of agents $\Ags$. For every agent $A\in\Ags$, we define a set of preference functions
%$$\pref{A} \equiv \{\pref_{i,\Act{}}\}\cup\{\gpref{A}{B}, \ipref{A}{B}~|~j\in\Ags\},$$ where
\[
\pref{A} \defequals \{\gpref{A}{B}, \ipref{A}{B}~|~B\in\Ags \text{ and } B \neq A\},
\]
where
%\Marta{ notation: perhaps have $gpf_{i,j}$ and $ipf_{i,J}$ for gola pref and intensional pref} where
\begin{itemize}
\item $\gpref{A}{B} : \states \rightarrow \dist{\powerset{\Goal{B}}}$ is such that, for any state $s$ and $x \in \powerset{\Goal{B}}$, we have $\gpref{A}{B}(s)(x)> 0$  only if  $x\in \Goal{B}(s)$, and

\item $\ipref{A}{B} : \states \rightarrow \dist{\Intn{B}}$ is such that, for any state $s$ and $x \in \Intn{B}$, we have $\ipref{A}{B}(s)(x)> 0$  only if $x\in \Intn{B}(s)$.
\end{itemize}
%Here the symbol $\current$ stands for the fact that the functions are defined on the states, instead of the finite paths (i.e., histories).
%Sometimes it is notationally convenient to refer to agent's preference over themselves, i.e., their cognitive strategy. Therefore, for any $A \in \Ags$, we set $\gpref{A}{A}(\rho) = \goal{A}(\rho)$ and $\ipref{A}{A}(\rho) = \intn{A}(\rho)$.
\end{definition}
}

\begin{remark}
During system execution, preference functions may be updated as agents learn new information through interactions; we will discuss in Section~\ref{sec:proupdate} how this can be implemented in our framework.
\end{remark}

Intuitively, a preference function provides agent $A$ with a probability distribution over another agent $B$'s changes of pro-attitudes. 
%We note that agents often lack knowledge about other agents' cognitive strategies, in which case preference functions could be trivial (always returning uniform distribution) or \textit{memoryless} (the distribution depends only on the last state of a finite path). Having said that, we prefer to give a more general definition to allow for all various scenarios. Later, in Section~\ref{sec:prefupdate}, a finite definition of these functions will be given using pro-attitude synthesis where some of the cognitive transitions are disabled.
Naturally, we expect preference functions to be consistent with partial observability. We therefore extend the \textit{Uniformity Assumption} in the following way.

\begin{assumption}\label{assump:uniII} (Uniformity Assumption II)
Let $\asmasTuple$ be an \AutoSMAS.
For an agent $A \in \Ags$ and any two states $s_1, s_2 \in \states$, we assume that $\obs{A}(s_1)=\obs{A}(s_2)$ implies $\gpref{B}{A}(s_1)(x) = \gpref{B}{A}(s_2)(x)$ and $\ipref{B}{A}(s_1)(x) = \ipref{B}{A}(s_2)(x)$ for any $B \in \Ags$ such that $B \neq A$ and $x \subseteq \Goal{A}$. That is, agents' preferences over a given agent are the same on all paths which a given agent cannot distinguish.
\end{assumption}

We also mention that Assumption~\ref{assump:detbeh} (Deterministic Behaviour Assumption) extends to \AutoSMAS\ in a straightforward manner.

\paragraph{\bf Transition Type}

\newcommand{\type}{{tp}}
\newcommand{\ttrans}{{\tt tt}}
In general, in any state of the system, an agent may choose a temporal or a cognitive transition. However, it is often desirable, e.g., when constructing a probability space, to restrict the \textit{type} of transition available to an agent.
\begin{definition}
Given a path $s_0s_1s_2...$ in an \AutoSMAS\ $\cM$ satisfying Assumption~\ref{assump:detbeh} and an agent $A \in \Ags$, we use $\tp{A}(s_k,s_{k+1})$ to denote the type, as seen by agent $A$, of the transition that is taken to move from state $s_k$ to $s_{k+1}$. More specifically, we distinguish five different \emph{transition types}: 
\begin{itemize}
\item $\tp{A}(s_k,s_{k+1})=a$ if $s_k \trans{a} s_{k+1}$ for some $a\in\Act{}$,
\item $\tp{A}(s_k,s_{k+1})=A.g.x$ if $s_k\ctrans{A.g.x} s_{k+1}$ for some $x \subseteq \lgoal{A}(s_k)$, 
\item $\tp{A}(s_k,s_{k+1})=A.i.x$ if $s_k\ctrans{A.i.x} s_{k+1}$ for some $x\in \lintn{A}(s_k)$,
\item $\tp{A}(s_k,s_{k+1})=B.g$ if $s_k\ctrans{B.g.x} s_{k+1}$ for another agent $B \in \Ags$ and $x \subseteq \lgoal{B}(s_k)$, 
\item $\tp{A}(s_k,s_{k+1})=B.i$ if $s_k\ctrans{B.i.x} s_{k+1}$ for another agent $B \in \Ags$ and $x\in \lintn{B}(s_k)$.
\end{itemize}
\end{definition}

\begin{remark}
Assumption~\ref{assump:detbeh} guarantees that, if the transition form $s_k$ to $s_{k+1}$ is temporal, then the action $a$ is uniquely determined.
\end{remark}

We write $\tp{A}(\rho)=\tp{A}(\rho(0),\rho(1)) \cdot \tp{A}(\rho(1),\rho(2)) \cdot ...$ for the type of a path $\rho$. When $\rho$ is a finite path and $t$ is a type of an infinite path, we say that $\rho$ is \textit{consistent} with $t$ if there exists an infinite extension $\rho' = \rho \delta$ of $\rho$ such that $\tp{A}(\rho') = t$.

We note that the type of agent's own cognitive transitions is defined differently than the type of another agent's cognitive transitions. This stems from our implicit assumption that agents can observe their own cognitive changes, but they cannot, in general, observe another agent's cognitive transitions. Formally, we make the following assumption.
\begin{assumption}\label{assump:typeObs} (Transition Type Distinguishability Assumption)
Let $\asmasTuple$ be an \AutoSMAS. For an agent $A \in \Ags$ and any two finite paths $\rho_1$, $\rho_2$, we assume that $\obs{A}(\rho_1) = \obs{A}(\rho_2)$ implies $\tp{A}(\rho_1) = \tp{A}(\rho_2)$. That is, agents can distinguish paths of different types.
\end{assumption}

This assumption is essential to ensure that the belief function is defined over finite paths in the same probability space.

\paragraph{\bf Initial Distribution}

We replace the set of initial states with a more accurate notion of \textit{initial distribution} $\initdist \in \dist{\sinit}$, as a common prior assumption of the agents about the system, which is needed to define a belief function.

\paragraph{\bf Probability Spaces}
We now define a family of probability spaces for an arbitrary \AutoSMAS. Recall that our ultimate goal is to define the belief function. To motivate the following construction, we mention that the belief function provides a probability distribution over paths which are indistinguishable (i.e., have the same observation) to a given agent. There are two important consequences of this: (i) probability spaces will vary among agents, reflecting the difference in their partial observation, and (ii) rather than defining a single probability space, spanning all the possible paths, for a given agent, we may define many of them, as long as every pair of indistinguishable paths lies in the same probability space. 
%In a smaller probability space, there may be less nondeterminism to resolve, which makes it easier to define the probability measure. 
Hence, using the fact that agents can observe the type of transitions (by Assumption~\ref{assump:typeObs}), we parameterise probability spaces by type, so that each contains paths of unique type. 

We begin by introducing an auxiliary transition function, specific to each agent, which will be used to define the probability measure.

\begin{definition}
Let $\asmasTuple$ be an \AutoSMAS. Based on temporal transition function $T$ and preference functions $\{\pref{A}\}_{A \in \Ags}$, we define an auxiliary \emph{transition function} $T_A$ for agent $A \in \Ags$ as follows for $s,s'\in S$:
\[
\displaystyle
T_A(s,s') =
\left \{
\begin{array}{ll}
T(s,a)(s')   & \text{if } \tp{A}(s,s')=a \\[-0.1em]
\gpref{A}{B}(s)(x) & \text{if } \tp{A}(s,s')=B.g \text{ and } s \ctrans{B.g.x} s' \\[0.3em]
\ipref{A}{B}(s)(x) & \text{if } \tp{A}(s,s')=B.i \text{ and } s \ctrans{B.i.x} s' \\
1 & \text{if } \tp{A}(s,s')=A.g.x \text{ for some } x \in \lgoal{A}(s) \\
 & \text{or } \tp{A}(s,s')=A.i.x \text{ for some } x \in \lintn{A}(s)
\end{array}
\right.
\]
\end{definition}

The application of the function $T_A$ resolves the nondeterminism in both the temporal and cognitive dimensions and the resulting system is a family of probability spaces, each containing a set of paths of the same type.
While the probabilities of temporal transitions are given by the transition function $T$ and computed by all agents in the same way, each agent resolves the nondeterminism in the cognitive dimension differently. For an agent $A$, all possible cognitive transitions of another agent $B$ have the same type and hence lie in the same probability space; $A$ uses its preference functions to associate probabilities to them. On the other hand, each of $A$'s own possible pro-attitude changes has a different type and lies in a different probability space; $A$ therefore treats it as a deterministic transition and assigns probability 1 to it. 

%Agents use their preference functions to quantify the likelihood of cognitive changes of other agents (recall that all possible cognitive transitions of another agent have the same type) , but treat their own pro-attitude changes as deterministic (since each has a different type and lies in different probability space). The intuition behind it is that while others' cognitive transitions are in general not observable to a given agent and therefore need to be assigned probabilities, an agent can always distinguish its own pro-attitude changes. 
%Intuitively, while all agents compute probabilities of temporal transitions in the same way, using the transition function $T$, each resolves nondeterminism in the cognitive dimension differently. Specifically, agents use their preference functions to resolve cognitive changes of other agents. For their own pro-attitude changes

We now construct a probability space for an arbitrary \AutoSMAS\ $\cM$, agent $A$ and a type $t$ in a similar way as in Section~\ref{sec:smas}. 
%Each probability space contains paths of the same type. We now define components of such space for a fixed type $t$.
%We recall the construction of a probability space from Section~\ref{sec:smas} and proceed in a similar manner. We fix a \SMGC\ $\cM$ and a type $t$. 
The sample space consists of infinite paths of type $t$ starting in one of the initial states, i.e., $\sspace{\cM,t}=\bigcup_{s\in\states, \initdist(s)>0}\{\delta\in \infpath{}^{\cM}(s)~|~\tp{A}(\delta) = t\}$. We associate a probability to each finite path $\rho = s_0...s_n$ consistent with $t$ via function $\probm{A}(\rho) = \initdist(\rho(0))\cdot  \prod_{0\leq i\leq |\rho|-2} T_A(\rho(i),\rho(i+1))$. 
%When $\rho$ is not consistent with $t$, $\probs{A}{\astrat{},t}(\rho) = 0$. 
%For a finite path $\rho$, we define a \textit{basic cylinder} $C_{\rho}$ to be the set of all infinite paths starting with $\rho$. 
We then set $\measurable{\cM}$ to be the smallest $\sigma$-algebra generated by cylinders $\{\cylinder_{\rho} \cap \sspace{\cM,t} \mid \rho \in \fpath{}^{\cM}\}$ and $\probs{A}{\cM}$ to be the unique measure such that $\probs{A}{\cM}(\cylinder_{\rho}) = \probm{A}(\rho)$. It then follows that $(\sspace{\cM,t},\measurable{\cM},\probs{A}{\cM})$ is a probability space.

Note that, for agents $A$ and $B$ such that $A\neq B$ in a system $\cM$, agents' probability spaces will in general differ in probability measures $\probs{A}{\cM}$ and $\probs{B}{\cM}$, because their preference functions may be different. 
%When the Assumption~\ref{assump:detbeh} is relaxed, the probability spaces can be achieved by assuming one of agent $A$'s strategies.

\begin{example}\label{example:trustgamepref} 
We now define preference functions for Bob and Alice of Example \ref{example:trustgame}. For example, setting:
\[
\gpref{Bob}{Alice}(s_0) = \langle passive \mapsto 1/3, active \mapsto 2/3 \rangle
\]
indicates that Bob believes Alice is more likely to be \emph{active} than \emph{passive}. 
Similarly, we give Alice's preference functions.
We first note that 
$\obs{Bob}(s_1) = \obs{Bob}(s_2)$. Therefore, by Assumption~\ref{assump:uniII}, $\gpref{Alice}{Bob}(s_1) = \gpref{Alice}{Bob}(s_2)$. Setting:
\[
\gpref{Alice}{Bob}(s_k) = \langle investor \mapsto 1/2, opportunist \mapsto 1/2 \rangle,
\]
for $k \in \{1,2\}$, represents that Alice has no prior knowledge regarding Bob's mental attitudes.
Finally, we define Alice's intention preference over Bob. Since $\obs{Bob}(s_8) = \obs{Bob}(s_{12})$ and $\obs{Bob}(s_{10}) = \obs{Bob}(s_{14})$, Assumption~\ref{assump:uniII} implies that $\ipref{Alice}{Bob}(s_8) = \ipref{Alice}{Bob}(s_{12})$ and $\ipref{Alice}{Bob}(s_{10}) = \ipref{Alice}{Bob}(s_{14})$. We may set:
\begin{alignat*}{2}
\ipref{Alice}{Bob}(s_k) & = \langle share \mapsto 3/4, keep \mapsto 1/4 \rangle && \quad \text{ for } k \in \{8,12\}, \\
\ipref{Alice}{Bob}(s_k) & = \langle share \mapsto 0, keep \mapsto 1 \rangle && \quad \text{ for } k \in \{10,14\} 
\end{alignat*}
to indicate that Alice knows that Bob will keep the money when he is an \emph{opportunist}, but she thinks it is quite likely that he will share his profit when he is an \emph{investor}.

We now compute the probability that Alice and Bob will have the same amount of money at the end of the game. In other words, we want to find the probability that Alice invests her money with Bob and Bob shares his profit with her. As noted above, probability spaces differ between agents. We first perform the computation from Alice's point of view and consider two cases: (i) Alice being \emph{passive} and (ii) Alice being \emph{active}. For (i), letting $\rho_1 = s_0s_1s_3s_8s_{15}s_{24}$ and $\rho_2 = s_0s_1s_4s_{10}s_{17}s_{28}$, we compute:
\begin{equation*}
\begin{split}
\probm{Alice}(\rho_1) 
  & =  \gpref{Alice}{Bob}(s_1)(investor) \cdot (\astrat{passive}(s_0s_1s_3)(invest) \cdot T(s_3,invest)(s_8)) \\
  & ~\cdot~ \ipref{Alice}{Bob}(s_8)(share) 
   ~\cdot~ (\astrat{share}(s_0s_1s_3s_8s_{15})(share) \cdot T(s_{15},share)(s_{24})) \\
  & =     \frac{1}{2} \cdot (\frac{3}{10} \cdot 1) \cdot \frac{3}{4} \cdot (1 \cdot 1) = \frac{9}{80}, \\
\probm{Alice}(\rho_2) 
  & =  \gpref{Alice}{Bob}(s_1)(opportunist) \cdot (\astrat{passive}(s_0s_1s_4)(invest) \cdot T(s_4,invest)(s_{10})) \\
  & ~\cdot~ \ipref{Alice}{Bob}(s_{10})(share) 
   ~\cdot~ (\astrat{share}(s_0s_1s_4s_{10}s_{17})(share) \cdot T(s_{17},share)(s_{28})) \\
  & =     \frac{1}{2} \cdot (\frac{3}{10} \cdot 1) \cdot 0 \cdot (1 \cdot 1) = 0.
\end{split}
\end{equation*}

Similarly, in case (ii), letting $\rho_3 = s_0s_2s_5s_{12}s_{19}s_{32}$ and $\rho_3 = s_0s_2s_6s_{14}s_{21}s_{36}$, we compute:
\begin{equation*}
\begin{split}
\probm{Alice}(\rho_3) 
  & =  \gpref{Alice}{Bob}(s_2)(investor) \cdot (\astrat{active}(s_0s_2s_5)(invest) \cdot T(s_5,invest)(s_{12})) \\
  & ~\cdot~ \ipref{Alice}{Bob}(s_{12})(share) 
   ~\cdot~ (\astrat{share}(s_0s_2s_5s_{12}s_{19})(share) \cdot T(s_{19},share)(s_{32})) \\
  & =     \frac{1}{2} \cdot (\frac{9}{10} \cdot 1) \cdot \frac{3}{4} \cdot (1 \cdot 1) = \frac{27}{80}, \\
\probm{Alice}(\rho_4) 
  & =  \gpref{Alice}{Bob}(s_2)(opportunist) \cdot (\astrat{active}(s_0s_2s_6)(invest) \cdot T(s_6,invest)(s_{14})) \\
  & ~\cdot~ \ipref{Alice}{Bob}(s_{14})(share) 
   ~\cdot~ (\astrat{share}(s_0s_2s_6s_{14}s_{21})(share) \cdot T(s_{21},share)(s_{36})) \\
  & =     \frac{1}{2} \cdot (\frac{9}{10} \cdot 1) \cdot 0 \cdot (1 \cdot 1) = 0.
\end{split}
\end{equation*}

Hence, from Alice's point of view, the probability that both will have 20 dollars at the end of the game is three times higher when she is \emph{active}, and roughly equal to 1/3. This is consistent with our expectations, since the only difference between the two scenarios is the likelihood of her investing, which is three times greater when she is \emph{active}.

We now perform a similar computation for Bob. Again, we consider (i) Bob being an \emph{investor} and (ii) Bob being an \emph{opportunist}. We start with case (i), but we have to be a little more careful now. In order for the result to be meaningful, all paths in which Bob is an \emph{investor} must be in the same probability space. However, currently, this is not the case for paths corresponding to different intention changes for Bob. To fix that, we use Bob's intention strategy $\intn{Bob}$ to quantify the likelihoods of paths in different probability spaces. We may then compute:
\begin{equation*}
\begin{split}
\probm{Bob}(\rho_1) 
  & =  \gpref{Bob}{Alice}(s_0)(passive) \cdot (\astrat{passive}(s_0s_1s_3)(invest) \cdot T(s_3,invest)(s_8)) \\
  & ~\cdot~ \intn{Bob}(s_0s_1s_3s_8)(share) \\
  & ~\cdot~ (\astrat{share}(s_0s_1s_3s_8s_{15})(share) \cdot T(s_{15},share)(s_{24})) \\
  & =     \frac{1}{3} \cdot (\frac{3}{10} \cdot 1) \cdot 1 \cdot (1 \cdot 1) = \frac{1}{10}, \\
\probm{Bob}(\rho_3) 
  & =  \gpref{Bob}{Alice}(s_0)(active) \cdot (\astrat{active}(s_0s_2s_5)(invest) \cdot T(s_5,invest)(s_{12})) \\
  & ~\cdot~ \intn{Bob}(s_0s_2s_5s_{12})(share) \\
  & ~\cdot~ (\astrat{share}(s_0s_2s_5s_{12}s_{19})(share) \cdot T(s_{19},share)(s_{32})) \\
  & =     \frac{2}{3} \cdot (\frac{9}{10} \cdot 1) \cdot 1 \cdot (1 \cdot 1) = \frac{3}{5}.
\end{split}
\end{equation*}

In case (ii), Bob chooses intention $\astrat{keep}$ with probability 1 and so the probability of him sharing the profit is 0. \hfill $\Box$

\end{example}

\section{Reasoning with Probabilistic Beliefs}\label{sec:beliefsupdate}

In presence of partial observation and, resulting from it, imperfect information that agents have about system state, one resorts to using beliefs to represents agents' knowledge.
%For information attitudes, {\it beliefs} are used to capture agent's understanding of the system. Note that, unlike agent programming languages, we do not maintain a belief base for every agent. 
In this section, we define a \emph{probabilistic belief} function $\be_A$ of an agent $A$ in an autonomous stochastic multiagent system (\AutoSMAS), which models agents' uncertainty about the system state and its execution history. 

%We instantiate the function $\be_A$ with a definition that considers the changes/updates of agents' beliefs over time.
In \cite{FH1997}, a qualitative belief notion is defined based on a plausibility measure. We consider probability measure instead, and define a \emph{quantitative} belief function expressing that agent $A$ believes $\phi$ with probability in relation $\bowtie$ to $q$ if it \emph{knows} $\phi$ with probability in relation $\bowtie$ to $q$. This can  be seen as a quantitative variant of the qualitative definition in~\cite{Pearl1994}, according to which agent $A$ believes $\phi$ if the probability of $\phi$ is close to 1. It is generally accepted that trust describes agents' subjective probability~\cite{Gambetta1990} and therefore needs to be measured quantitatively.

\commentout{
\subsection{Partial Observation}\label{sec:partialobs}

First of all, it is common that in real-world systems, agents are {\it not} able to fully observe the current system state. In a typical scenario, every agent runs a local protocol, maintains a local state, observes part of the system state by e.g., sensing devices, and communicates with other agents. It is impractical, and actually undesirable from the system designer's view, to assume that agents can learn the local states of other agents or learn what the other agents observe.
To reason about such partial observation, in stochastic multiplayer games, the concept of beliefs is introduced to represent agents' understanding about the system and other agents, in particular when such understanding is also affected by its subjective views represented as e.g., preferences.
%
%The
%discussion of
%the
%beliefs
%of an agent
%naturally
%leads to
%are based on the assumption that agents
%does not have complete information over the system state, i.e., it
%are not able to fully observe the
%underlying
%\Marta{ but agents also need contextual info (such as belonging to the same friendship group) for trust, so does this need to be included in the global states?}
%can only observe part of the
%system states.
%
\begin{definition}
 Let $\Obs{A}$ be a set of possible observations of agent $A$ and $\obs{A}: S\rightarrow \Obs{A}$ be an observation function providing agent $A$ with one observation on every state\footnote{Unlike POMDPs, in  which observations are probabilistic distributions, we follow the setting in~\cite{FH1997} and define a deterministic observation. With a footnote, \cite{GM1998} states without proofs that probabilistic observation does not increase the complexity of the problem. In Section~\ref{sec:probobs}, we formally show that this is the case in our framework. }.
 \end{definition}

 For a finite path $\rho=s_0...s_n$, we define $\obs{A}(\rho)=\obs{A}(s_0)...\obs{A}(s_n)$. Intuitively, an agent remembers both its past observations and the number of states, known as synchronous perfect recall~\cite{FHMVbook}. %We assume that
 %for the system $\cM$ that
% \begin{itemize}
% \item
%
Because an agent's partial observation represents its {\it basic capability} in the system, the following uniformity assumption says that both an agent' behavioural/cognitive transitions and its preferences should be consistent with its partial observability.

\begin{assumption} (Uniformity Assumption)
For any two states $s_1$ and $s_2$ and agent $A\in\Ags$, we assume that $\obs{A}(s_1)=\obs{A}(s_2)$ implies $\Act{B}(s_1)=\Act{B}(s_2)$, $\Goal{B}(s_1)=\Goal{B}(s_2)$, and $ \Intn{B}(s_1)= \Intn{B}(s_2)$, for all $B\in \Ags$. That is, agents can distinguish states with different sets of joint actions or pro-attitude changes. Moreover, we assume that cognitive strategy and preference functions give the same distribution for paths with the same observation, that is, for finite paths $\rho_1$, $\rho_2$, if $\obs{A}(\rho_1)=\obs{A}(\rho_2)$ then $\goal{A}(\rho_1)(x) = \goal{A}(\rho_2)(x)$ and $\gpref{B}{A}(\rho_1)(x)=\gpref{B}{A}(\rho_2)(x)$ for all $B \in \Ags$, $x \subseteq \Goal{A}$, $\intn{A}(\rho_1)(x) = \intn{A}(\rho_2)(x)$ and $\ipref{B}{A}(\rho_1)(x)=\ipref{B}{A}(\rho_2)(x)$ for all $B \in \Ags$, $x\in \Intn{A}$.
\end{assumption}
%, and
% \item for any  finite initialised paths $\rho_1,\rho_2\in \fpath{}^M$, if $\obs{A}(\rho_1)=\obs{A}(\rho_2)$ then for all $j\in\Ags$, $\goal{B}(\rho_1)=\goal{B}(\rho_2)$ and $\intn{B}(\rho_1)=\intn{B}(\rho_2)$. That is, agent can distinguish those execution histories leading to different pro-attitudes.
% \end{itemize}
%These assumptions are a prerequisite for the uniformity assumption on the preference functions, to be given below.
%Moreover, we say that a given sequence $o_0...o_k$ of observations is possible in $\cM$ if there exists an initialised finite path $\rho$ such that $\obs{A}(\rho)=o_0...o_k$.

%In many applications, it can be more suitable to assume that agents can reason based on the execution history of the system.

%We define a belief function for the autonomous stochastic multi-agent systems.

We assume that the type of a transition is observable to all agents, i.e., with synchronous perfect recall agents can distinguish paths of different types. This assumption is essential
%for the measurability of probability spaces,
to ensure that the belief function is defined over finite paths in the same probability space.
Recall that, in Proposition~\ref{propn:probabilityspace}, we assume that each probability space is parameterised with a type $t$ of an infinite path.  Intuitively, the type of a path records its changes of agents' pro-attitudes, and therefore paths of the same type are paths with the same changes. % of agents' pro-attitudes.

\begin{example}\label{example:trustgameobs}
(Continue with Example~\ref{example:trustgamepref})
Agents' observations are as follows for $x\in\{Alice, Bob\}$:
$$\obs{x}((c,a_{Alice},a_{Bob},\gs_{Alice},\gs_{Bob},\is_{Alice},\is_{Bob}))=(c,a_{Alice},a_{Bob},\gs_{x},\is_{x}),$$ denoting that they cannot observe the opponent's cognitive state but can observe their last action. The set of observations $\Obs{x}$ can be easily inferred from this definition.
\end{example}
}

%\vspace{-5pt}

%\vspace{-5pt}

\subsection{Belief Function}\label{sec:beliefdefn}
Before defining the belief function, we establish an equivalence relation $\obseq{A}$ on the set of finite paths of an autonomous system $\cM$ for an agent $A$. We say that paths $\rho, \rho' \in \fpath{}^\cM$ are equivalent (from the point of view of $A$), denoted $\rho \obseq{A} \rho'$, if and only if $\obs{A}(\rho) = \obs{A}(\rho')$. Intuitively, each equivalence class contains paths which a given agent cannot differentiate from each other. Note that, by Assumption~\ref{assump:typeObs}, each equivalence class comprises paths of the same type.

\begin{example}\label{example:equivClasses}
Recall the trust game $\trustgame$ from Example~\ref{example:trustgamectrans} and the observation function defined on $\trustgame$ in Example~\ref{example:trustgameobs}. Using that definition, we obtain $\obs{Bob}(s_0s_1) = \obs{Bob}(s_0s_2)$, which is intuitively correct, since Bob cannot observe Alice's goals. These two paths form an equivalence class. Similarly, $\obs{Bob}(s_0s_1s_3s_8) = \obs{Bob}(s_0s_2s_5s_{12})$ and $\obs{Bob}(s_0s_1s_4s_{10}) = \obs{Bob}(s_0s_2s_6s_{14})$, and both pairs of paths form an equivalence class. \hfill $\Box$
\end{example}

\commentout{
ALTERNATIVE 1: \\
The belief function quantifies agent's belief about the execution history of the system.
Intuitively, at any point of execution of a system, the belief function gives a probability distribution over those paths which an agent thinks may be the current path of the system, as given by their observation function.  
Before defining the belief function formally, we recall that if $S$ is a set with equivalence relation $\sim$ defined on it, then $S/\sim$ is a \textit{quotient set} of $S$, i.e., a set of all equivalences classes of $S$. Also, $[s]_\sim$ denotes the equivalence class of $s$ for any $s \in S$.

\begin{definition}
Let $\cM = (\Ags, \sep\states, \sep\sinit, \sep\{\Act{A}\}_{A\in\Ags}, \sep T, \sep L, \sep\{\Obs{A}\}_{A\in\Ags}, \sep\{\obs{A}\}_{A\in\Ags}, \sep\{\Cognition_A\}_{A \in \Ags})$ be an \AutoSMAS\ and $A \in \Ags$. The auxiliary belief function $\be_A^\sim : \fpath{}^\cM / \obseq{A} \rightarrow \dist{\fpath{}^\cM}$ is given by
\[
\be_A^\sim([\rho]_{\obseq{A}})(\rho') = \probs{A}{\cM}(C_{\rho'}~|~\bigcup_{\rho'' \in [\rho]_{\obseq{A}}}C_{\rho''}).
\]
\end{definition}

The superscript $\sim$ in $\be_A^\sim$ denotes that the belief function is defined on equivalence classes, rather than paths -- hence we refer to it as \textit{auxiliary}. Note that the probability is well-defined due to the fact that the type of all paths in a given equivalence class is the same. Now, it is cumbersome to deal with equivalence classes when reasoning about agents' beliefs. Therefore, we define a belief function which operates on paths rather than equivalence classes, and assigns a probability to each path, evaluated relatively to its own equivalence class.   

\begin{definition}
The belief function $\be_A: \fpath{}^\cM \rightarrow [0,1]$ is given by
\[
\be_A(\rho) = \be_A^\sim([\rho]_{\obseq{A}})(\rho).
\]
\end{definition}

Intuitively, for a finite path $\rho$ in a system $\cM$, $\be_A(\rho)$ specifies the degree of belief of agent $A$ that the current path in $\cM$ is $\rho$, given that his or her current observation is $\obs{A}(\rho)$.

ALTERNATIVE 2: (sketch)\\
}

\newcommand{\eqclass}{class}

The belief function quantifies agent's belief about the execution history of the system at any given time.
Intuitively, at any point during system execution, agent's observation history uniquely identifies an equivalence class (with respect to $\obseq{A}$), consisting of all paths with that observation. The belief function gives a probability distribution on that equivalence class. We define $\pathobs{A} = \{ \obs{A}(\rho) \mid \rho \in \fpath{}^\cM \}$ to be a set of all finite observation histories (path observations), and for a given path observation $o \in \pathobs{A}$, $\eqclass(o)$ denotes the equivalence class associated with $o$. Recall that, for any finite path $\rho$, $\cylinder_{\rho}$ denotes a basic cylinder with prefix $\rho$.

\begin{definition}
Let $\cM$ be an \AutoSMAS\ and $A$ an agent in $\cM$. The \emph{belief function} $\be_A : \pathobs{A} \rightarrow \dist{\fpath{}^\cM}$ of an agent $A$ is given by:
\[
\be_A(o)(\rho) = \probs{A}{\cM}(\cylinder_{\rho}~|~\bigcup_{\rho' \in \eqclass(o)}\cylinder_{\rho'}).
\]
\end{definition}
Hence, given that agent $A$'s path observation is $o$ at some point, % of execution of the system,
their belief that the execution history at that time is $\rho$ is expressed as the \emph{conditional probability} of the execution history being $\rho$, given that the execution history belongs to the equivalence class $\eqclass(o)$. We note that sometimes, e.g., when the observation is clear from the context, we might omit it, and simply write $\be_A(\rho)$ for agent $A$'s belief that the execution history at some point is $\rho$.

\begin{example}\label{example:trustgamebelief}
We return again to Example~\ref{example:trustgamectrans} and let $\rho_1 = s_0s_1$ and $\rho_2 = s_0s_2$. Recall from Example~\ref{example:equivClasses} that $\obs{Bob}(\rho_1) = \obs{Bob}(\rho_2)$, and let $o_1$ denote that common observation. Hence, $o_1$ is a path observation associated with the equivalence class $\{\rho_1, \rho_2\}$. We compute $\be_{Bob}(o_1,\rho_1)$ and $\be_{Bob}(o_1,\rho_2)$ below. 
\begin{equation*}
\begin{split}
\be_{Bob}(o_1,\rho_1) 
   & = \probs{Bob}{\trustgame}(\cylinder_{\rho_1}~|~\bigcup_{\rho \in class(o)}\cylinder_{\rho}) \\
   & = \frac{ \probs{Bob}{\trustgame}(\cylinder_{\rho_1})}{\probs{Bob}{\trustgame}(\cylinder_{\rho_1}) + \probs{Bob}{\trustgame}(\cylinder_{\rho_2})} \\
   & = \frac{\gpref{Bob}{Alice}(s_0)(passive)}{\gpref{Bob}{Alice}(s_0)(passive) + \gpref{Bob}{Alice}(s_0)(active)} \\
   & = \frac{1}{3}.
\end{split}
\end{equation*}

Similar computation shows that:
\[
\be_{Bob}(o_1,\rho_2) = \frac{2}{3}.
\]

% TODO: Poissibly extend below
We also let $\rho_3 = s_0s_1s_3s_8$, $\rho_4 = s_0s_2s_5s_{12}$ and recall from Example~\ref{example:equivClasses} that $\obs{Bob}(\rho_3) = \obs{Bob}(\rho_4)$. We let $o_2$ denote that common observation and, performing similar computation as above, we obtain:
\begin{align*}
\be_{Bob}(o_2,\rho_3) &= \frac{1}{7}, \\
\be_{Bob}(o_2,\rho_4) &= \frac{6}{7}.
\end{align*}

Analogously, letting $\rho_5 = s_0s_1s_4s_{10}$, $\rho_6 = s_0s_2s_6s_{14}$ and $o_3 = \obs{Bob}(\rho_5) = \obs{Bob}(\rho_6)$, we compute:
\begin{align*}
\be_{Bob}(o_3,\rho_5) &= \frac{1}{7}, \\
\be_{Bob}(o_3,\rho_6) &= \frac{6}{7}.
\end{align*} \hfill $\Box$

\end{example}

The following theorem gives a recursive definition of the belief function. Intuitively, the beliefs of agent $A$ are represented as a distribution over a set of possible states which are reachable by paths consistent with the current observation history. This belief distribution will be updated in the Bayesian way whenever a new observation is taken.
\begin{theorem}
The beliefs $\be_A(\rho)$ can be computed recursively over the length of path $\rho$:
\begin{itemize}
\item For an initial state $s$,
$
\be_A(s)=\displaystyle\frac{\initdist(s)}{\sum_{s'\in S \land \obs{A}(s') = \obs{A}(s)}\initdist(s')}.
$

\item For $\rho$ a path such that $|\rho| = k+2$ and $k\geq 0$,
\[
\be_A(\rho) = \frac{\be_A(\rho[0..k])\times T_A(\rho(k),\rho(k+1))}{\sum_{\rho'\in \fpath{}^\cM \land \obs{A}(\rho') = \obs{A}(\rho[0..k])}\sum_{s\in \states \land \obs{A}(s) = \obs{A}(\rho(k+1))} \be_A(\rho')\times T_A(\rho'(k),s)}.
\]
\end{itemize}
\end{theorem}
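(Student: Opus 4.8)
The plan is to first unfold the definition of $\be_A$ into a convenient closed form and then verify the two displayed equations by induction on $|\rho|$. The starting point is that all finite paths in a single equivalence class $\eqclass(o)$ share both the same length and, by Assumption~\ref{assump:typeObs}, the same type, so they live in one probability space and their basic cylinders $\{\cylinder_{\rho'}\}_{\rho' \in \eqclass(o)}$ are pairwise disjoint, with $\cylinder_\rho \subseteq \bigcup_{\rho' \in \eqclass(o)}\cylinder_{\rho'}$ whenever $\rho \in \eqclass(o)$. Expanding the conditional probability in the definition of $\be_A$ and using $\probs{A}{\cM}(\cylinder_{\rho'}) = \probm{A}(\rho')$ then yields the closed form
\[
\be_A(\rho) = \frac{\probm{A}(\rho)}{\sum_{\rho' \obseq{A} \rho} \probm{A}(\rho')},
\]
which I will manipulate throughout (assuming, as usual, that $\rho$ is a history of positive measure so that the conditioning event is non-null). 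I abbreviate $Z_\rho = \sum_{\rho' \obseq{A} \rho} \probm{A}(\rho')$ for the normaliser of the class of $\rho$, noting that $Z_\rho = Z_{\rho'}$ whenever $\rho \obseq{A} \rho'$.

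The base case is immediate: a path of length one is a single state $s$, the empty product gives $\probm{A}(s) = \initdist(s)$, and the class of $s$ consists of the states $s'$ with $\obs{A}(s') = \obs{A}(s)$, so the closed form reduces directly to the claimed expression. For the inductive step I set $\sigma = \rho[0..k]$ and read off, straight from the definition of $\probm{A}$, the one-step factorisation $\probm{A}(\rho) = \probm{A}(\sigma)\cdot T_A(\rho(k),\rho(k+1))$, together with the analogous identity $\probm{A}(\rho' s) = \probm{A}(\rho')\cdot T_A(\rho'(k),s)$ for any one-state extension $\rho' s$ (adopting the convention $T_A(s,s') = 0$ when no transition from $s$ to $s'$ exists, so that invalid extensions contribute nothing).

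The crux is rewriting the normaliser $Z_\rho$. Because the observation map lifts pointwise to paths, a path $\tau$ of length $k+2$ satisfies $\tau \obseq{A} \rho$ if and only if its prefix $\tau[0..k]$ is equivalent to $\sigma$ and its final state $\tau(k+1)$ has observation $\obs{A}(\rho(k+1))$; hence $\tau \mapsto (\tau[0..k], \tau(k+1))$ is a bijection from the class of $\rho$ onto the set of pairs $(\rho', s)$ with $\rho' \obseq{A} \sigma$ and $\obs{A}(s) = \obs{A}(\rho(k+1))$, pairs that are not genuine paths carrying $T_A$-weight zero. Summing the factorisation over this bijection gives
\[
Z_\rho = \sum_{\rho' \obseq{A} \sigma}\ \sum_{s \,:\, \obs{A}(s) = \obs{A}(\rho(k+1))} \probm{A}(\rho')\cdot T_A(\rho'(k),s).
\]
Substituting $\probm{A}(\rho') = \be_A(\rho')\,Z_\sigma$ (valid since every such $\rho'$ shares the normaliser $Z_\sigma$) shows that the denominator of the target equation equals $Z_\rho / Z_\sigma$; the right-hand side therefore becomes $\be_A(\sigma)\,T_A(\rho(k),\rho(k+1))\,Z_\sigma / Z_\rho$, which equals $\probm{A}(\sigma)\,T_A(\rho(k),\rho(k+1))/Z_\rho = \probm{A}(\rho)/Z_\rho = \be_A(\rho)$ by the factorisation and the closed form, closing the induction. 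I expect the main obstacle to be exactly this combinatorial re-indexing of $Z_\rho$: one must confirm that the pointwise-observation characterisation of the class of $\rho$ genuinely decomposes as a product condition on prefix and last state, and that the $T_A = 0$ convention correctly discards the pairs $(\rho', s)$ for which $\rho' s$ is not a path, so that the bijection is weight-preserving.
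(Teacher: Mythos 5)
Your proof is correct: the closed form $\be_A(\rho) = \probm{A}(\rho)/\sum_{\rho' \obseq{A} \rho}\probm{A}(\rho')$, the prefix/last-state factorisation of the equivalence class (justified by the pointwise definition of $\obs{A}$ on paths and Assumption~\ref{assump:typeObs}), and the cancellation of the normalisers $Z_\sigma$, $Z_\rho$ are exactly the Bayesian-update computation the theorem asserts. The paper itself states this theorem without an explicit proof, treating it as an immediate consequence of the definitions of $\probm{A}$ and the belief function, so your argument simply supplies in full the routine derivation the paper leaves implicit.
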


%\subsection{A Single-Agent Special Case}\label{sec:oneagnentbelief}

\subsection{Belief \AutoSMAS}\label{sec:beliefasmas}
A conceptually simpler construction, defined below, can be effectively used to reason about \emph{a single agent's} understanding about the system, when there is no need for nested reasoning on agents' beliefs. 
We employ a well-known construction \cite{DBLP:journals/jcss/ChatterjeeCT16} which, for a system $\cM$ and a fixed agent $A$, induces an equivalent \textit{belief \AutoSMAS} $\Bel{A}(\cM)$ whose states are \textit{distributions} over states of $\cM$ called \textit{belief states}. Intuitively, belief states quantify agent's uncertainty about the current state of a system by specifying their belief, expressed as a probability distribution: although we may not know which of several observationally-equivalent states we
are currently in, we can determine the likelihood of being in each one. The belief \AutoSMAS\ is fully observable but its state space is possibly infinite.
%\todo{Check}
%Let $\dmap{A}(S')$, $S' \subseteq S$, be a distribution with support $S'$.
%To simplify the presentation, we assume that beliefs of agent $A$ cannot contain nested belief of any other agent, but stress that our framework is able to handle such nested beliefs, as shown in the technical report.

Formally, a belief \AutoSMAS\ for an \AutoSMAS\ $\cM$ and an agent $A$ is a tuple
$\Bel{A}(\cM) = (\Ags, \sep \dist{\states}, \sep \beliefsinit, \sep \{\Act{A}\}_{A\in\Ags}, 
 %\{\LAct{A}\}_{A\in\Ags}, 
 \sep \TBel{A}, \sep L)$, where $\beliefsinit = \{\beliefinit^\mathrm{o}~|~o\in\Obs{A}\}$ is a set of initial belief states such that:
$$
\beliefinit^\mathrm{o}(s) = \left \{
\begin{array}{ll}
\displaystyle \frac{\initdist(s)}{\sum_{s'\in \states \& \obs{A}(s')=\obs{A}(s)}\initdist(s')} & \text{ if } \obs{A}(s) = o\\
0 & \text{ otherwise. }
\end{array}
\right.
$$

%and, for any $\belief, \belief' \in \dist{\states}$ and $a \in \Act{} \cup \{A.g.x, A.i.y~|~x\subseteq \Goal{A}, y \in \Intn{A}\} \cup \{B.g.x, B.i.y~|~B\in \Ags, B \neq A,x \subseteq \Goal{B}, y \in \Intn{B}\}$, we define $\TBel{A}(\belief,a,\belief')$ as follows.

We now give the transition function $\TBel{A}$. Intuitively, from a belief state $\belief$, different belief states $\belief'$ are possible, each corresponding to a unique combination of $A$'s next observation and the type of transition taken.  Transition probabilities are then computed based on the temporal transition function $T$ or $A$'s preferences. The definition is split into three parts, reflecting the different types of transitions present in an \AutoSMAS. 

%We fix $\belief, \belief' \in \dist{\states}$. 
First, for $\belief, \belief' \in \dist{\states}$ and $a \in \Act{}$, we have

\[
\TBel{A}(\belief,a)(\belief') = \sum_{s\in S} \belief(s) \cdot (\sum_{o\in \Obs{A} \& \belief^{a,o} = \belief'} \sum_{s'\in S \& \belief'(s') >0} T(s,a)(s')),
\]
where $\belief^{a,o}$ is a belief state reached from $\belief$ by performing $a$ and observing $o$, i.e.,
\[
\displaystyle
\belief^{a,o}(s')=\left \{
\begin{array}{ll}
\displaystyle
\frac{\sum_{s\in S}\belief(s) \cdot T(s,a)(s')}{\sum_{s\in S}\belief(s) \sum_{s''\in S \& \obs{A}(s'') = o} T(s,a)(s'')} & \text{ if } \obs{A}(s') = o \\
0 & \text{ otherwise.}
\end{array}
\right.
\]
Intuitively, the belief state reached by agent $A$ from $\belief$ by performing $a$ and observing $o$ is a probability distribution over states $s$ such that $\obs{A}(s) = o$, where the probability of each state is weighted according to the probabilities of transitions from states in the support of $\belief$ to that state.

%We define a mapping $\dmap{A}$ from paths of $\cM$ to paths of $\Bel{A}(\cM)$ as follows: $\dmap{A}(s)=\belief_{init}^{\obs{A}(s)}$, and $\dmap{A}(\rho s)=\dmap{A}(\rho)\belief$ such that $\belief(s)>0$ and $\TBel{A}(\last(\dmap{A}(\rho)),a)(\belief)>0$ for some $a$. Note that the mapping $\dmap{A}$ implies that, for each finite path $\rho$, there exists a unique $\belief$ such that $\belief = \last(\dmap{A}(\rho))$, and we write $\belief_\rho$ for such a belief state.
%------------------------------------------------------------------

Second, for $x \subseteq \Goal{A}$,
%Now, given $\belief_\rho,\belief' \in \dist{\states}$ and $a\in \{A.g.x,A.i.y~|~x \subseteq \Goal{A}, y \in \Intn{A}\}$, we have
\[
\TBel{A}(\belief,A.g.x)(\belief') = \sum_{s\in S} \belief(s) \cdot (\sum_{o\in \Obs{A} \& \belief^{\mathrm{A.g.x},\mathrm{o}} = \belief'} \sum_{s'\in S \& \belief'(s') >0} 
%\goal{A}(\rho)(x) \cdot 
(s\ctrans{A.g.x} s')),
\]
where $\belief^{A.g.x,o}$ is belief reached from $\belief$ by performing $A$'s goal change into $x$ and observing $o$, i.e.,
\[
\displaystyle
\belief^{A.g.x,o}(s')=\left \{
\begin{array}{ll}
\displaystyle
\frac{\sum_{s\in S}\belief(s) \cdot 
%\goal{A}(\rho)(x) \cdot 
(s\ctrans{A.g.x} s')}{\sum_{s\in S}\belief(s) \sum_{s''\in S \& \obs{A}(s'') = o} 
%\goal{A}(\rho)(x) \cdot 
(s\ctrans{A.g.x} s'')} & \text{ if } \obs{A}(s') = o \\
0 & \text{ otherwise.}
\end{array}
\right.
\]
Note the we interpret $s \ctrans{A.g.x} s'$ as having value 0 or 1, depending on whether a transition of a specified type exists or not.
The case corresponding to an intention change of agent $A$ is defined analogously. 
%Transition probability function $\TBel{A}$ generalises to paths $\dmap{A}(\rho)$ over belief states by induction and can be viewed as a Bayesian update. 
%------------------------------------------------------------------

Third, for $B \in \Ags$ such that $B \neq A$,
%Now, given $\belief_\rho,\belief' \in \dist{\states}$ and $a\in \{B.g~|~B \in \Ags\}$, we have

\[
\TBel{A}(\belief,B.g)(\belief') = \sum_{s\in S} \belief(s) \cdot (\sum_{o\in \Obs{A} \& \belief^{B.g,o} = \belief'} \sum_{s'\in S \& \belief'(s') >0} \sum_{x \in \lgoal{A}(s)} \gpref{A}{B}(s)(x) \cdot (s\ctrans{B.g.x} s')),
\]
where $\belief^{B.g,o}$ is belief reached from $\belief$ by performing $B$'s goal change into $x$ and observing $o$, i.e.,
\[
\displaystyle
\belief^{B.g,o}(s')=\left \{
\begin{array}{ll}
\displaystyle
\frac{\sum_{s\in S}\belief(s) \sum_{x} \gpref{A}{B}(s)(x) \cdot (s\ctrans{B.g.x} s')}{\sum_{s\in S}\belief(s) \sum_{s''\in S \& \obs{A}(s'') = o} \sum_{x} \gpref{A}{B}(s)(x) \cdot (s\ctrans{B.g.x} s'')} & \text{ if } \obs{A}(s') = o \\
0 & \text{ otherwise.}
\end{array}
\right.
\]
The case of intention change can be defined in a similar manner.
%------------------------------------------------------------------

%Now, transition from belief state $\belief$ by action $a$ results in a probability distribution over belief states $\belief'$ which are beliefs reached from $\belief$ by performing $a$ and observing $o$ for any $o$. 

\begin{example}
We illustrate the above construction on the trust game $\trustgame$ from Example~\ref{example:trustgamectrans} by considering the initial few states of the belief \AutoSMAS\ $\Bel{Bob}(\trustgame)$ corresponding to $\trustgame$.
%First, we partition the states of the game with respect to the observation function $\obs{Bob}$ defined in Example~\ref{example:trustgameobs}. The result is depicted in Figure~\ref{}. 
Since the game has a single initial state, the belief \AutoSMAS\ also has a single initial belief state, namely
%The set of initial states $\beliefsinit$ is trivial, consisting of one belief state 
$\belief_0 = \beliefinit^{o_1} = \langle s_0 \mapsto 1 \rangle$. 

We now compute possible belief states after the first transition, i.e. Alice's goal change. Recall that Bob cannot distinguish states $s_1$ and $s_2$; we let $o_1 = \obs{Bob}(s_1) = \obs{Bob}(s_2)$. Then the new belief state is $\belief_1 = \belief_0^{Alice.g,o_1}$, with:
\begin{equation*}
\begin{split}
\belief_1(s_1) = \belief_0^{Alice.g,o_1}(s_1)
% & = \frac{\belief_{s_0}(s_0) \cdot T_{Bob}(s_0,Alice.g)(s_1)}{\belief_{s_0}(s_0) \cdot (T_{Bob}(s_0,Alice.g)(s_1) + T_{Bob}(s_0,Alice.g)(s_2))} \\
 & = \frac{\belief_0(s_0) \cdot \gpref{Bob}{Alice}(s_0)(\{passive\})}{\belief_0(s_0) \cdot (\gpref{Bob}{Alice}(s_0)(\{passive\}) + \gpref{Bob}{Alice}(s_0)(\{active\}))} \\
 & = \frac{\frac{2}{3}}{\frac{2}{3} + \frac{1}{3}} = \frac{2}{3},
\end{split}
\end{equation*}
where we used Bob's preference function $\gpref{Bob}{Alice}$ defined in Example~\ref{example:trustgamepref}. Similar computation shows that:
\[
\belief_1(s_2) = \frac{1}{3}.
\]

Since Bob has only one possible observation at this point of the game, we expect that $\belief_1$ is his only possible belief state after first transition, i.e. $\TBel{Bob}(\belief_0,Alice.g)(\belief_1) = 1$. Indeed,
\begin{equation*}
\begin{split}
\TBel{Bob}(\belief_0,Alice.g)(\belief_1) & = \belief_0(s_0) \cdot (\gpref{Bob}{Alice}(s_0)(\{passive\}) + \gpref{Bob}{Alice}(s_0)(\{active\})) \\
 & = 1.
\end{split}
\end{equation*}

Next, Bob sets his goal. We let $o_2 = \obs{Bob}(s_3) = \obs{Bob}(s_5)$ and $o_3 = \obs{Bob}(s_4) = \obs{Bob}(s_6)$. If Bob observes $o_2$, then his belief state is $\belief_2 = \belief_1^{Bob.g,o_2}$, such that:
\begin{equation*}
\begin{split}
\belief_2 = \langle s_3 \mapsto 2/3, s_5 \mapsto 1/3 \rangle.
%\belief_2(s_3) = \frac{2}{3},
%\belief_2(s_5) = \frac{1}{3}.
\end{split}
\end{equation*}

If he observes $o_3$ then his belief state is $\belief_3= \belief_1^{Bob.g,o_3}$, with:
\begin{equation*}
\begin{split}
\belief_3 = \langle s_4 \mapsto 2/3, s_6 \mapsto 1/3 \rangle.
%\belief_3(s_4) = \frac{2}{3},
%\belief_4(s_6) = \frac{1}{3}.
\end{split}
\end{equation*}

The transition function is as follows.
\begin{equation*}
\begin{split}
\TBel{Bob}(\belief_1,Bob.g.\{investor\})(\belief_2) & = 1, \\
\TBel{Bob}(\belief_1,Bob.g.\{opportunist\})(\belief_3) & = 1.
\end{split}
\end{equation*}

Finally, we consider how belief state changes after Alice's temporal transition. There are four possible observations for Bob at this stage, namely $o_4 = \obs{Bob}(s_7) = \obs{Bob}(s_{11})$, $o_5 = \obs{Bob}(s_8) = \obs{Bob}(s_{12})$, $o_6 = \obs{Bob}(s_9) = \obs{Bob}(s_{13})$, $o_7 = \obs{Bob}(s_{10}) = \obs{Bob}(s_{14})$. Hence there are four possible belief states, $\belief_4= \belief_2^{Bob.g,o_4}$, $\belief_5= \belief_2^{Bob.g,o_5}$, $\belief_6= \belief_3^{Bob.g,o_6}$, $\belief_7= \belief_3^{Bob.g,o_7}$, such that:
\begin{align*}
\belief_4 &= \langle s_7 \mapsto 2/3, s_{11} \mapsto 1/3 \rangle, \\
\belief_5 &= \langle s_8 \mapsto 2/3, s_{12} \mapsto 1/3 \rangle, \\
\belief_6 &= \langle s_9 \mapsto 2/3, s_{13} \mapsto 1/3 \rangle, \\
\belief_7 &= \langle s_{10} \mapsto 2/3, s_{14} \mapsto 1/3 \rangle.
\end{align*}

\commentout{
We now consider how Bob's belief state changes after his temporal transition. Let $o_{2} = \obs{Bob}(s_3) = \obs{Bob}(s_5)$, $o_{3} = \obs{Bob}(s_4) = \obs{Bob}(s_6)$ and $\belief_2$, $\belief_3$ be the belief states corresponding to $o_2$ and $o_3$, respectively. Then
\begin{align*}
\belief_2(s_3) = \belief_1^{w,o_2}(s_3)
 & = \frac{\belief_1(s_1) \cdot T(s_1,w)(s_3)}{\belief_1(s_1) \cdot T(s_1,w)(s_3) + \belief_1(s_2) \cdot T(s_2,w)(s_5)} \\
 & = \frac{\frac{2}{3}}{\frac{2}{3} + \frac{1}{3}} = \frac{2}{3}, \\ 
 \belief_2(s_5) = \belief_1^{w,o_2}(s_5)
 & = \frac{\belief_1(s_2) \cdot T(s_2,w)(s_5)}{\belief_1(s_1) \cdot T(s_1,w)(s_3) + \belief_1(s_2) \cdot T(s_2,w)(s_5)} \\
 & = \frac{\frac{1}{3}}{\frac{2}{3} + \frac{1}{3}} = \frac{1}{3},
\end{align*}
and a similar computation shows that
\begin{align*}
\belief_3(s_4) & = \belief_1^{i,o_3}(s_4) = \frac{2}{3}, \\
\belief_3(s_6) & = \belief_1^{i,o_3}(s_6) = \frac{1}{3}.
\end{align*}

The fact that the probability values for $\belief_2$ and $\belief_3$ are the same as for $\belief_1$ is not a coincidence. 
The reason for the equality is that the transition function $T$ has point distribution for actions \emph{withhold} and \emph{invest}, and therefore those actions simply transfer the probability to the next state. 

The transition probabilities are
\begin{alignat*}{4}
\TBel{Bob}(\belief_1,w)(\belief_2) & = \belief_1(s_1) \cdot T(s_1,w)(s_3) & & + \belief_1(s_2) \cdot T(s_2,w)(s_5) & & = \frac{2}{3} + \frac{1}{3} = 1, \\
\TBel{Bob}(\belief_1,i)(\belief_3) & = \belief_1(s_1) \cdot T(s_1,i)(s_4) & & + \belief_1(s_2) \cdot T(s_2,i)(s_6) & & = \frac{2}{3} + \frac{1}{3} = 1.
\end{alignat*}

%Since the temporal transitions in $\trustgame$ are nonprobabilistic, the belief states $\belief_2$ and $\belief_3$ closely reflect $\belief_1$. 
%Hence we get one belief state, where belief ranges over two states, probabilities of which are given by Bob's preference function. Things get a little more interesting if we consider the next transition. 

Finally, we consider the possible belief states after Bob's cognitive transition. There are four goal changes that Bob can perform and each of them results in a unique observation. We consider one such, the others can be computed in a similar manner. We let $o_4 = \obs{Bob}(s_7) = \obs{Bob}(s_{11})$ and $\belief_4$ be the belief state corresponding to $o_4$. Then
%For instance, if he picks goals \{\emph{trustworthy}, \emph{investor}\}, the state becomes $s_7$ or $s_{11}$
%\begin{equation*}
%\begin{split}
\begin{alignat*}{3}
\belief_4(s_7) & = \belief_3^{B.g.ti,o_4}(s_7)
 & & = \frac{\belief_3(s_4) 
 %\cdot \goal{Bob}(s_0s_1s_4)(ti)
 }{\belief_3(s_4) 
 %\cdot \goal{Bob}(s_0s_1s_4)(ti) 
 + \belief_3(s_6) 
 %\cdot \goal{Bob}(s_0s_1s_4)(ti)
 }
 %& = \frac{\frac{2}{3} \cdot \frac{4}{9}}{\frac{2}{3} \cdot \frac{4}{9} + \frac{1}{3} \cdot \frac{4}{9}} = \frac{8}{27}
  = \frac{\frac{2}{3}}{\frac{2}{3} + \frac{1}{3}} = \frac{2}{3},  \\
 \belief_4(s_{11}) & = \belief_3^{B.g.ti,o_4}(s_{11})
 & & = \frac{\belief_3(s_6) 
 %\cdot \goal{Bob}(s_0s_1s_4)(ti)
 }{\belief_3(s_4) 
 %\cdot \goal{Bob}(s_0s_1s_4)(ti)
 + \belief_3(s_6) 
 %\cdot \goal{Bob}(s_0s_1s_4)(ti)
 } 
 %& = \frac{\frac{1}{3} \cdot \frac{4}{9}}{\frac{2}{3} \cdot \frac{4}{9} + \frac{1}{3} \cdot \frac{4}{9}} = \frac{4}{27}
  = \frac{\frac{1}{3}}{\frac{2}{3} + \frac{1}{3}} = \frac{1}{3},
\end{alignat*}
and
\[
\TBel{Bob}(\belief_3,B.g.ti)(\belief_4) = \belief_3(s_4)
%\cdot \goal{Bob}(s_0s_1s_4) 
+ \belief_3(s_6) 
%\cdot \goal{Bob}(s_0s_1s_4) 
= \frac{2}{3} + \frac{1}{3} = 1.
\]
}

Figure~\ref{fig:trustgamebelief} depicts belief states of $\Bel{Bob}(\trustgame)$ which we have computed.

\begin{figure}
\begin{center}
%\Description{A tree graph with eight belief states and transitions between them along with a table specifying Bob's belief in each state.}
\begin{tikzpicture}[-latex ,auto ,node distance =1.5 cm and 1.5 cm,on grid,semithick,state/.style ={ circle,draw,minimum size =0.4 cm,scale=0.9}]
\node[state] (0A) {$\belief_0$};

\node[state] (1A) [below=of 0A] {$\belief_1$};

\node[state] (2A) [below left=of 1A] {$\belief_2$};
\node[state] (2B) [below right=of 1A] {$\belief_3$};

\newcommand{\threeShiftX}{0.7cm}
\node[state] (3A) [below left=of 2A,xshift=\threeShiftX] {$\belief_4$};
\node[state] (3B) [below right=of 2A,xshift=-\threeShiftX] {$\belief_5$};
\node[state] (3C) [below left=of 2B, xshift=\threeShiftX] {$\belief_6$};
\node[state] (3D) [below right=of 2B,xshift=-\threeShiftX] {$\belief_7$};

\tikzset{every edge/.append style={font=\footnotesize}}
\path (0A) edge [] node[] {A.g} (1A);
\path (1A) edge [] node[swap] {B.g.\{investor\}} (2A);
\path (1A) edge [] node[] {B.g.\{opportunist\}} (2B);

\path (2A) edge [] node[swap] {invest} (3A);
\path (2A) edge [] node[] {withhold} (3B);
\path (2B) edge [] node[swap] {invest} (3C);
\path (2B) edge [] node[] {withhold} (3D);

\node[draw,align=left,style={font=\footnotesize}] at (6,-2) 
{ $\belief_0 = \langle s_0 \mapsto 1 \rangle$ \\ 
 $\belief_1 = \langle s_1 \mapsto \frac{2}{3}, s_2 \mapsto \frac{1}{3} \rangle$ \\ 
 $\belief_2 = \langle s_3 \mapsto \frac{2}{3}, s_5 \mapsto \frac{1}{3} \rangle$ \\
 $\belief_3 = \langle s_4 \mapsto \frac{2}{3}, s_6 \mapsto \frac{1}{3} \rangle$ \\
 $\belief_4 = \langle s_7 \mapsto \frac{2}{3}, s_{11} \mapsto \frac{1}{3} \rangle$ \\
 $\belief_5 = \langle s_8 \mapsto \frac{2}{3}, s_{12} \mapsto \frac{1}{3} \rangle$ \\
 $\belief_6 = \langle s_9 \mapsto \frac{2}{3}, s_{13} \mapsto \frac{1}{3} \rangle$ \\
 $\belief_7 = \langle s_{10} \mapsto \frac{2}{3}, s_{14} \mapsto \frac{1}{3} \rangle$ };
\end{tikzpicture}
\end{center}
\caption{Belief \AutoSMAS\ $\Bel{Bob}(\trustgame)$}
\label{fig:trustgamebelief}
\end{figure}
\hfill $\Box$
\end{example}

\commentout{
%$$\TBel{A}(\belief,a)(\belief') = \sum_{s\in S} \belief(s) \cdot (\sum_{o\in \Obs{A} \& \belief^{a,o} = \belief'} \sum_{s'\in S \& \obs{A}(s') = o} T(s,a)(s'))
%$$
%xiaowei: update
Second, for $a= A.g.x$, we have
old
\begin{equation}
\begin{array}{ll}
\displaystyle
\TBel{A}(\belief_\rho,A.g.x)(\belief') = \sum_{s\in S} \belief_\rho(s)
\cdot (\sum_{o\in \Obs{A} \& (\belief_\rho)^{A.g.x,o} = \belief'} \sum_{s'\in S \& \belief'(s') >0} \goal{A}(\rho)(x) \cdot T_A(s,A.g)(s') \cdot (s\ctrans{A.g.x} s'))
\end{array}
\end{equation}
new
\[
\displaystyle
\TBel{A}(\belief_\rho,A.g)(\belief') = \sum_{x \in supp(\goal{A}(\rho))}\sum_{o\in \Obs{A} \& (\belief_\rho)^{A.g.x,o} = \belief'} \sum_{s\in S \& s'\in S \& \belief'(s') >0 \& s' = A.g(s,x)} \belief_\rho(s) \cdot T_A(s,A.g)(s')
\]
\[
\displaystyle
\TBel{A}(\belief_\rho,A.g)(\belief') = \sum_{s \in S}\belief_\rho(s) \cdot \sum_{o\in \Obs{A} \& (\belief_\rho)^{A.g,o} = \belief'} \sum_{s'\in S \& \belief'(s') >0 \& s' = A.g(s,x)} T_A(\rho,A.g)(s')
\]
and $(\belief_\rho)^{A.g.x,o}$ is belief reached from $\belief_\rho$ by performing $A$'s goal change into $x$ and observing $o$, i.e.,

\[
\displaystyle
\belief_\rho^{A.g,o}(s')=\left \{
\begin{array}{ll}
\displaystyle
\frac{\sum_{s\in S} \beta_\rho(s)\cdot T_A(\rho,A.g)(s')}{\sum_{s\in S}\beta_\rho(s) \cdot \sum_{s'' \in S \& \obs(s'') = o} T_A(\rho,A.g)(s'') } & \text{ if } \obs{A}(s') = o \\
0 & \text{ otherwise.}
\end{array}
\right.
\]

The case for $a= A.i.x$ can be done similarly. Transition probability function $\TBel{A}$ generalises to paths $\dmap{A}(\rho)$ over belief states by induction and can be viewed as a Bayesian update.

Third, for $a= B.g.x$ such that $B\neq A$, we have
\begin{equation}
\begin{array}{ll}
\displaystyle
\TBel{A}(\belief_\rho,B.g.x)(\belief') = \sum_{s\in S} \belief_\rho(s)
\cdot (\sum_{o\in \Obs{A} \& (\belief_\rho)^{B.g.x,o} = \belief'} \sum_{s'\in S \& \belief'(s') >0} \gpref{A}{B}(\last(\rho))(x) \cdot T_A(s,B.g)(s') \cdot (s\ctrans{B.g.x} s'))
\end{array}
\end{equation}
and $(\belief_\rho)^{B.g.x,o}$ is belief reached from $\belief_\rho$ by performing $B$'s goal change into $x$ and observing $o$, i.e.,
$$
\displaystyle
(\belief_\rho)^{B.g.x,o}(s')=\left \{
\begin{array}{ll}
\displaystyle
\frac{\sum_{s\in S} \gpref{A}{B}(\last(\rho))(x) \cdot \belief_\rho(s)\cdot T_A(s,B.g)(s')\cdot (s\ctrans{B.g.x} s')}{\sum_{s\in S}(\sum_{x' } \gpref{A}{B}(\last(\rho))(x') \cdot \belief_\rho(s) \cdot T(s,B.g)(s') \cdot (s\ctrans{B.g.x'} s')) } & \text{ if } \obs{A}(s') = o \\
0 & \text{ otherwise.}
\end{array}
\right.
$$
The case for $b= B.i.x$ such that $B\neq A$ can be done similarly.
}

As in our construction of the probability measure $\probm{s,\astrat{}}$ for \SMG s in Section~\ref{sec:smas}, a strategy profile $\astrat{}$ induces a probability measure $\probm{\belief,\astrat{}}$ on any belief state $\belief$ over infinite paths of \AutoSMAS\ $\Bel{A}(\cM)$. We note that, in general, $\Bel{A}(\cM)$ has a (continuous) infinite state space.

%Probabilistic reachability of sets of observations is preserved
%for $O_A \subseteq \Obs{A}$ in the following sense:
%$$
%\probm{}^{M}(O_A) = \probm{}^{\Bel{A}(M)}(T_{O_A})
%$$
%where $T_{O_A} = \{ \belief \in \dist{\states} \mid \forall s \in \states . (\belief(s) > 0 \implies \obs{A}(s) \in O_A) \}$.
%\xiaowei{Has $\probm{}^{\cM}$ defined? why do we need $\probm{}^{\cM}(O_A)$? }
%\Marta{it is defined, reachability not needed if probability for paths suffices}

%\section{Beliefs and Trust}\label{sec:pref}
\section{Reasoning about Beliefs and Trust}\label{sec:pref}

%The proposal on concretising  the belief functions will be given in Section~\ref{sec:beliefsupdate}.
In this section, we formalise the notions of trust, which are inspired by the social trust theory of~\cite{FC2001}. We define trust in terms of the belief function and further extend \PCTLSM to capture reasoning about beliefs and trust. %The stochastic multiplayer games with cognitive states \SMGC\ will be lifted to autonomous stochastic multi-agent systems \AutoSMAS\ with the additional preference functions and partial observations.
%
%\begin{definition}\label{def:autosmas}
%An autonomous stochastic multi-agent system (\AutoSMAS) is a tuple $\cM=(\Ags, S,\initdist,\{\Act{A}\}_{A\in\Ags},
%\{\LAct{A}\}_{A\in\Ags},
%T,  L,\Cognition, \Rationality)$ which  extends \SMGC\  with a rationality mechanism $\Rationality= \{\Obs{A},\obs{A},\pref{A}\}_{A\in\Ags}$ and replaces the set $\sinit$ of initial states with an initial distribution $\initdist$.
%\end{definition}
%
%In this definition, every agent $A\in\Ags$ has a set $\Act{A}$ of local actions, a goal strategy $\goal{A}$, an intention strategy $\intn{A}$, a partial observation function $\obs{A}$, and a set $\pref{A}=\{\gpref{A}{B},\ipref{A}{B}\}_{B\in\Ags}$ of preference functions\footnote{In~\cite{HK2017}, the cognitive mechanism is used to denote both the cognitive mechanism and the rationality mechanism in this paper}. The environment has an initial distribution $\initdist$ and a transition function $T$, to represent the dynamic of the system.
%

%\begin{definition}\label{def:autosmas1}
%Let $\pref =\{\pref{A}^\current\}_{i\in\Ags}$ be a preference setting. An autonomous stochastic multi-agent system (abbrv. \AutoSMAS)
%$\cM=(\Ags, S,PI,\{\Act{i}\}_{i\in\Ags},\{N_i\}_{i\in\Ags},T,\Cognition, \be,  L)$ is an extension over  \SMGC\  with a belief setting $\be=\{\be_A\}_{i\in\Ags}$.

%We call $\Cognition$ and $\pref$ a rationality mechanism, denoted as $\Rationality$, of a stochastic multi-agent system.
%\end{definition}
In Section~\ref{sec:beliefsupdate}, a Bayesian-style definition of the belief function is presented. It uses agents' partial observation function and probability measure, which is in turn defined in terms of preference functions introduced in Section~\ref{sec:preferencefuncs}. %Therefore, as the inputs to the model checking problem, preference functions and partial observation are taken, instead of the belief function which is defined over the infinite set of finite paths. 
In this section, we assume the existence of such a belief setting, i.e., one belief function for every agent, and define operators to reason about agents' beliefs and trusts.

%For the ingredients to define agents' belief functions, in Section~\ref{sec:preferencefuncs}, we use preference functions to resolve system's nondeterminism (i.e., agents' cognitive transitions) to obtain a set of probability spaces for every agent, and in Section~\ref{sec:beliefsupdate}, a Bayesian-style definition is presented to compute a belief function from these probability spaces, by assuming the agent's partial observation and preference functions. Therefore, as the inputs to the model checking problem, preference functions and partial observation are taken, instead of the belief function which is defined over the (infinite set) of finite paths. In this section, we assume the existence of such a belief setting, i.e., one belief function for every agent, and define operators to reason about agents' beliefs and trusts.

\paragraph{\bf Expressing Trust}

Our trust operators follow the intuition of social trust concepts from \cite{FC2001}, but the technical definitions are different and more rigorous.
We distinguish between two types of trust, \emph{competence}, meaning that agent $A$ believes that $B$ is capable of producing the expected result, and \emph{disposition}, which means that agent $A$ believes that agent $B$ is willing to do what $A$ needs. 
Moreover, 
 %with quantitative measurements, 
 we express trust with the usual \emph{probabilistic} quantification of certainty that we sometimes use in our daily life, e.g., ``I am 99\% certain that the autonomous taxi service is trustworthy'', or ``I trust the autonomous taxi service 99\%''.  Our formalisation then captures %why and 
 how the value of 99\% can be computed based on the agent's past experience and (social, economic) preferences. Indeed, \cite{FC2001} also provides a justification for quantifying trust with probabilities, e.g. in Section 5, where it is stated that the degree of trust is a function of the subjective certainty of the pertinent beliefs.
We do not consider fulfilment belief also discussed in~\cite{FC2001}.

Beliefs and trust are intimately connected. As agents continuously interact with the environment and 
%Before proceeding, we explain the relationship between beliefs, trust and the other concepts. 
%The execution of the system, which involves continuous interaction of agents with the environment and
with each other, their understanding about the system and the other agents may increase,  %understanding about the system and about other agents, and 
leading to an update of their beliefs.
This change in an agent's beliefs may also lead to an update of trust, which is reflected in the definition of the semantics of our trust operators. %according to its semantics defined later in this section. 

Changes of beliefs and trust will affect goals and intentions, which will be captured via so called pro-attitude synthesis, which is introduced in Section~\ref{sec:proupdate}. Based on its updated beliefs, an agent may modify its strategy to implement its intention.

\subsection{Probabilistic Rational Temporal Logic}

We now introduce Probabilistic Rational Temporal Logic (\PRTLS) that can express mental attitudes of agents in an \AutoSMAS, as well as beliefs and trust.
%autonomous stochastic multi-agent system.
\PRTLS\ extends the logic \PCTLSM\ with operators for reasoning about agent's beliefs and cognitive trust.
% based on~\cite{FC2001}.

\begin{definition}
The syntax of the logic \PRTLS\ is as follows:
%\Marta{ but in defn below you have $\phi$}
$$
\begin{array}{l}
\phi  ::=  p ~|~\neg \phi~|~\phi\lor \phi~|~\forall\psi~|~\prob{\bowtie q}{\psi}~
%|\B_A^{\bowtie q}(\phi~|~\phi)~
|~\G_A\phi~|~\I_A\phi~|~\C_A\phi ~|~\B_A^{\bowtie q}\psi~|~
\CT_{A,B}^{\bowtie q}\psi~|~\DT_{A,B}^{\bowtie q}\psi~\\
\psi  ::=  \phi~|~\neg\psi~|~\psi \lor \psi~|~\next \psi ~|~ \psi\until \psi 
%~|~\Circle \psi
\end{array}
$$
where $p\in AP$, $A,B\in\Ags$, $\bowtie\in \{<,\leq,>,\geq\}$, and $q\in [0,1]$.
\end{definition}

Intuitively, $\B_A^{\bowtie q}\psi$ is the {\em belief} operator, expressing that agent $A$ believes $\psi$ with probability in relation $\bowtie$ with $q$. $\CT_{A,B}^{\bowtie q}\psi$ is the {\em competence trust} operator, expressing that agent $A$ trusts agent $B$ with probability in relation $\bowtie$ with $q$ on its \emph{capability} of completing the task $\psi$, where capability is understood to be the possibility of taking one of its \legal\ intentions defined with function $\lintn{B}$. $\DT_{A,B}^{\bowtie q}\psi$ is the {\em disposition trust} operator,  expressing that agent $A$ trusts agent $B$ with probability in relation $\bowtie$ with $q$ on its \emph{willingness} to do the task $\psi$, where the state of willingness is interpreted as unavoidably taking an intention defined with function $\intn{B}$. Recall that we use function $\lintn{B}$ for the \legal\ intentions and function $\intn{B}$ for the \possible\ intentions, i.e., intentions available assuming agent's willingness. We use $\T$ to range over the two trust operators $\CT$ and $\DT$.

% TODO: do we need this?
%We can have a slight generalisation on the language to allow the writing of formulas $f(\phi_1,...,\phi_k)\bowtie q$ for $\phi_t$ being of the form $\B_A^{=?}\psi$ or $\YT_{A,B}^{=?}\psi$ and $f$ being a linear combination of the formulas. Intuitively, $\B_A^{=?}\psi$ is to compute agent $A$'s belief value over $\psi$ and $\YT_{A,B}^{=?}\psi$ is to compute agent $A$'s trust value on $B$ over the task $\psi$. The semantics of both the component formulas $\B_A^{=?}\psi$ and $\YT_{A,B}^{=?}\psi$ and the linear combination $f$ can be easily obtained and therefore is omitted in the paper. We use $\cM,\rho\models \B_A^{=?}\psi$ or $\cM,\rho\models \YT_{A,B}^{=?}\psi$  to represent the corresponding probabilistic values, where $\cM$ and $\rho$ will be instantiated later.

Before we define the semantics, we require additional notation. We write:
\begin{alignat*}{2}
\probs{\cM,A,\rho}{max,min}(\psi) \enskip & \defequals && \enskip \textstyle\sup_{\astrat{A}\in\Strategy_A} \textstyle\inf_{\astrat{\Ags\setminus \{A\}}\in \Strategy_{\Ags\setminus \{A\}}} \probm{\cM,\astrat{},\rho}(\psi), \\
\probs{\cM,A,\rho}{min,max}(\psi) \enskip & \defequals && \enskip \textstyle\inf_{\astrat{A}\in\Strategy_A} \textstyle\sup_{\astrat{\Ags\setminus \{A\}}\in \Strategy_{\Ags\setminus \{A\}}} \probm{\cM,\astrat{},\rho}(\psi)
\end{alignat*}
to denote the \emph{strategic ability} of agent $A$ in implementing $\psi$ on a finite path $\rho$. Intuitively, $\probs{\cM,A,\rho}{max,min}(\psi)$ gives a lower bound on agent $A$'s ability to maximise the probability of $\psi$, while $\probs{\cM,A,\rho}{min,max}(\psi)$ gives an upper bound on agent $A$'s ability to minimise the probability of $\psi$. 
%, parameterised over the symbol $\bowtie$. The old strategy profile $\astrat{}$ will be replaced with the new one $\astrat{}'$. 

Moreover, we extend the expression $B.i(s,x)$ to work with finite paths:  $B.i(\rho,x)$ for the path $\rho'=\rho s$ and $last(\rho)\ctrans{B.i.x}s$. Intuitively, it is the concatenation of path $\rho$ with a cognitive transition $B.i.x$. 

For a measurable function $f: \fpath{}^\cM\rightarrow [0,1]$, we denote by $\Exp_{\be_A}[f]$ the \emph{belief-weighted expectation} of $f$,  i.e., 
\[
\Exp_{\be_A}[f]=\sum_{\rho\in \fpath{}^\cM}\be_A(\rho)\cdot f(\rho).
\]

With this in mind, we aim to define the new operators so that
%$\prob{\bowtie q}{\psi}$ measures over those future infinite paths that satisfy the path formula $\psi$,
$\B_A^{\bowtie q}\psi$ normalises the probability measure of the formula $\prob{\bowtie q}{\psi}$ with agent's probabilistic belief, whereas $\T_{A,B}^{\bowtie q}\psi$ operators consider moreover the \possible\ or \legal\ intention changes of another agent $B$. The intention changes are conditioned over agent's strategic ability of implementing formula $\psi$, expressed by $\probs{\cM,A,\rho}{max,min}(\psi)$ and $\probs{\cM,A,\phi}{min,max}(\psi)$.

\begin{definition}\label{def:semantics4}
Let $\asmasTuple$ be an \AutoSMAS\ and $\rho$ a finite path in $\cM$.
The semantics of previously introduced operators of \PCTLS\ and \PCTLSM\ remains the same in \PRTLS (see Definitions~\ref{def:semantics1} and~\ref{def:semantics3}). For the newly introduced belief and trust operators, the satisfaction relation $\models$ is defined as follows:
%The semantics of the logic \PRTLS is defined by a relation $\cM,\rho\models \phi$, inductively over the structure of the formula $\phi$. Previously defined operators have the same semantics as that of Definitions~\ref{def:semantics1} and~\ref{def:semantics3}. For $\phi$ a state formula specifying belief or trust:
\begin{itemize}
\commentout{
\item $\cM,\rho\models \B_A^{\bowtie q}\phi$ if
\[
\Exp_{\be_A}[\mathrm{Sat}_{\cM,\phi}^{\bowtie}]\bowtie q,
\]
where 
%$\be_A$ is the probabilistic density function over those paths $\rho'\in\fpath{}^\cM$ such that $\obs{A}(\rho')=\obs{A}(\rho)$ and 
the function $\mathrm{Sat}_{\cM,\phi}^{\bowtie}:\fpath{}^\cM\rightarrow [0,1]$ is such that
\[
\mathrm{Sat}_{\cM,\phi}^{\bowtie}(\rho') =
\left \{
\begin{array}{ll}
1 & \text{ if } \cM,\rho'\models \phi \\
0 & \text{ otherwise } \\
\end{array}
\right.
\]
}

\item $\cM,\rho\models \B_A^{\bowtie q}\psi$ if
\[
\Exp_{\be_A}[V_{\B,\cM,\psi}^{\bowtie}]\bowtie q,
\]
where the function $V_{\B,\cM,\psi}^{\bowtie}:\fpath{}^\cM\rightarrow [0,1]$ is  such that 
$$
V_{\B,\cM,\psi}^{\bowtie}(\rho') = 
\left \{
\begin{array}{ll}
\probs{\cM,A,\rho'}{max,min}(\psi) & \text{ if } \bowtie\in \{\geq, >\}\\
\probs{\cM,A,\rho'}{min,max}(\psi) & \text{ if }  \bowtie\in \{<, \leq\}\\
\end{array}
\right.
$$

\item $\cM,\rho\models \CT_{A,B}^{\bowtie q}\psi$ if
\[
E_{\be_A}[V_{\CT,\cM,B,\psi}^{\bowtie}]\bowtie q,
\]
where the function $V_{\CT,\cM,B,\psi}^{\bowtie}:\fpath{}^\cM\rightarrow [0,1]$ is  such that 
$$
V_{\CT,\cM,B,\psi}^{\bowtie}(\rho') = 
\left \{
\begin{array}{ll}
\displaystyle \sup_{x\in\lintn{B}(last(\rho'))} \probs{\cM,A,B.i(\rho',x)}{max,min}(\psi) & \text{ if } \bowtie\in \{\geq, >\}\\
\displaystyle\inf_{x\in\lintn{B}(last(\rho'))} \probs{\cM,A,B.i(\rho',x)}{min,max}(\psi) & \text{ if }  \bowtie\in \{<, \leq\}\\
\end{array}
\right.
$$

\item $\cM,\rho\models \DT_{A,B}^{\bowtie q}\psi$ if
\[
E_{\be_A}[V_{\DT,\cM,B,\psi}^{\bowtie}] \bowtie q,
\]
where the function $V_{\DT,\cM,B,\psi}^{\bowtie}:\fpath{}^\cM\rightarrow [0,1]$ is  such that 
$$
V_{\DT,\cM,B,\psi}^{\bowtie}(\rho') = 
\left \{
\begin{array}{ll}
\displaystyle\inf_{x\in\supp(\intn{B}(\rho'))} \probs{\cM,A,B.i(\rho',x)}{max,min}(\psi) & \text{ if }  \bowtie\in \{\geq, >\}\\
\displaystyle\sup_{x\in\supp(\intn{B}(\rho'))} \probs{\cM,A,B.i(\rho',x)}{min,max}(\psi) & \text{ if } \bowtie\in \{<, \leq\}\\
\end{array}
\right.
$$

\end{itemize}
\end{definition}

We interpret formulas $\phi$ in \AutoSMAS\ $\cM$ in a state reached after executing a path $\rho$, in history-dependent fashion. Recall that this path may have interleaved cognitive and temporal transitions. However, when evaluating a given belief or trust formula, we assume that agents do not change their mental attitudes (i.e., the future path is purely temporal, contained within a single induced \SMG), which motivates using the probability measure defined in Section~\ref{sec:smas} in the above definition. The belief formula corresponds to the probability of satisfying $\phi$ in future in the original \AutoSMAS\ $\cM$ weighted by the belief distribution; in other words, it is a belief-weighted expectation of future satisfaction of $\phi$, which is subjective, as it is influenced by $A$'s partial observation and its prior knowledge about $B$ encoded in the preference function. The competence trust operator reduces to the computation of optimal probability of satisfying $\psi$ in $\cM$ over \legal\ changes of agent's intention, which is again weighted by the belief distribution and compared to the probability bound $q$. Dispositional trust, on the other hand, computes the optimal probability of satisfying $\psi$ in $\cM$ over \possible\ changes of agent's intention, which is weighted by the belief distribution and compared to the probability bound $q$.

%\Marta{ to work with min, max you are assuming finiteness (or something that implies min/max exists)}
For a relational symbol $\bowtie\in \{\geq, >\}$, the expression $V_{\CT,\cM,B,\psi}^{\bowtie}(\rho)$ computes the maximum probability of completing the task $\psi$ on path $\rho$, among all \legal\ changes of agent $B$'s intention. Therefore, when interpreting formula $\CT_{A,B}^{\bowtie q}\psi$, we assume the optimal capability of agent $B$. Note that this capability is not necessarily within $B$'s \possible\ intention changes. On the other hand, for $\bowtie\in \{\geq, >\}$, the expression $V_{\DT,\cM,B,\psi}^{\bowtie}(\rho)$ computes the minimum probability of completing the task $\psi$ on path $\rho$ that all agent $B$'s \possible\ intentional changes can achieve. Therefore, when interpreting formula $\DT_{A,B}^{\bowtie q}\psi$, we consider all possible states of agent $B$'s willingness.
%
 %To capture the competence, we can replace the intentions $\intn{B}$ with the intentions $\Intn_j$.
%
It should be noted that the $\T_{A,B}^{\bowtie q}$ operators cannot be derived from the other operators.
%The universal quantifications for the goals and intentions are the same as their usual definitions via accessibility relations.

\commentout{
\begin{remark}
We remark that, while defining competence trust in terms of \legal\ intentions should be rather uncontroversial, disposition trust operator may be expressed in an alternative way. Given that $\DT_{A,B}^{\bowtie q}$ computes $A$'s trust towards $B$ and that intentional attitudes of $B$ are in general not observable to $A$, one could argue that it is not realistic to assume that $A$ uses $B$'s intention strategy to compute its trust towards $B$. It would perhaps be more natural to use $A$'s intention preference function $\ipref{A}{B}$ instead, since it precisely represents $A$'s expectations about $B$'s mental attitudes. However, such approach also has some disadvantages and difficulties associated with it, and we outline them below. 

First of all, intention strategies are defined on paths, while preferences on states, making the former more compatible with belief-weighted expectation. Second, using intention strategy in the definition is more consistent with the cognitive operators defined in Section~\ref{sec:cognReas}, where $\CT$ and $\DT$ can be viewed as corresponding to $\C$ and $\I$, respectively (see Section~\ref{sec:surebelief}). Finally, consider the disposition trust formula being nested inside a belief formula, such as in $\B_A^{\bowtie q} \DT_{B,A}^{\bowtie q} \psi$, which expresses $A$'s belief about $B$'s trust towards $A$. Given that this formula represents $A$'s reasoning and assuming $\DT_{B,A}^{\bowtie q}$ is defined with respect to $B$'s intention preference, we again encounter the problem that realistically, $A$ doesn't know $B$'s preferences. Ideally, we would like to define $\DT_{B,A}^{\bowtie q}$ in terms of $A$'s preferences over $B$'s preferences here. Obviously, one could carry on such reasoning forever, considering ever deeper nesting of formulas and needing preferences over preferences over preferences etc. To avoid such complexity, we have to make a simplifying assumption at some point of this chain, and a natural point to do it is `at the top'. 

We therefore assume that, when evaluating its dispositional trust towards $B$, $A$ knows $B$'s intention strategy. Note, however, that we restrict $A$'s knowledge to the support of $\intn{B}$, i.e., to $B$'s possible intentions, rather than allowing $A$ to know the full strategy.
%The motivation behind it is to make this unrealistic assumption as weak as possible. 
As a consequence, $A$'s disposition trust towards $B$ differs from competence trust only when $B$'s set of \possible\ intentions is a strict subset of its \legal\ intentions. In particular, it is the case when $B$'s intention strategy is pure, such as in Example~\ref{example:trustgamectrans}. Hence, an important practical consideration to bear in mind when constructing instances of \AutoSMAS\ is to keep cognitive strategies pure, or close to pure (with few intentions in the support). In fact, Section~\ref{sec:proupdate} provides an easy way to achieve this.
\end{remark}

}

\begin{remark}
We remark that, while defining competence trust in terms of \legal\ intentions should be rather uncontroversial, the disposition trust operator may be expressed in an alternative way. Given that $\DT_{A,B}^{\bowtie q}$ computes $A$'s trust towards $B$ and that intentional attitudes of $B$ are in general not observable to $A$, one could argue that it is not realistic to assume that $A$ uses $B$'s intention strategy to compute its trust towards $B$. It would perhaps be more natural to use $A$'s intention preference function $\ipref{A}{B}$ instead, since it precisely represents $A$'s expectations about $B$'s mental attitudes. Indeed, it is possible to make such a modification, which we discuss below. 

First of all, note that, in the definition of $\DT_{A,B}^{\bowtie q}$ above, we restrict $A$'s knowledge to the support of $\intn{B}$, i.e., to $B$'s possible intentions, rather than allowing $A$ to know $B$'s full intention strategy. As a side note, the consequence of this is that $A$'s disposition trust towards $B$ differs from competence trust only when the set of \possible\ intentions of $B$ is a strict subset of $B$'s \legal\ intentions. In particular, it is the case when $B$'s intention strategy is pure, such as in Example~\ref{example:trustgamectrans}. Hence, an important practical consideration to bear in mind when constructing instances of \AutoSMAS\ is to keep cognitive strategies pure, or close to pure (with few intentions in the support). In fact, Section~\ref{sec:proupdate} provides an easy way to achieve this. 

Going back to the alternative way of defining $\DT_{A,B}^{\bowtie q}$, if we replace $B$'s intention strategy $\intn{B}$ by $A$'s intention preferences over $B$, $\ipref{A}{B}$, it makes sense to include in our calculations of trust the full probability distribution that $\ipref{A}{B}$ provides, rather than just its support, as we do for $\intn{B}$. That's because $\ipref{A}{B}$ represents knowledge that $A$ realistically possesses, so there's no reason to introduce any restrictions on it. We then end up with $\DT_{A,B}^{\bowtie q}$ defined as a \emph{belief-preference-weighted expectation} of agent's strategic ability of implementing a given formula $\psi$. Note that, with such formulation of $\DT_{A,B}^{\bowtie q}$, the last equivalence of Theorem~\ref{thm:surebeliefs} does not hold.

Finally, note that disposition trust operator of one agent may be nested inside a belief operator of another agent, which calls for nested preferences (i.e., preferences over preferences) in an idealised version of $\DT_{A,B}^{\bowtie q}$. We aim to avoid such complexities at this early stages of the development of the framework, and therefore settle for a simplified notion of disposition trust involving intention strategy. 
\end{remark}

\begin{example}\label{example:trustgametrustformulas}
%(Continue with Example~\ref{example:ishgobs})
Here we give examples of formulas that we may wish to verify on the trust game defined in Example~\ref{example:trustgamectrans}.
The formula
\[
\B_{Bob}^{\geq 0.6} \next (a_{Alice} = invest)
\]
states that Bob believes that, with probability at least 0.6, Alice will invest the money with him in the next step. On the other hand, the formula
\[
\DT_{Alice,Bob}^{\geq 0.9} \eventually (a_{Bob} = keep)
\]
states that Alice can trust Bob with probability no less than $0.9$ that he will keep the money for himself. The formula
%in a way to ensure that it is possible to eventually reach states with the combined reward more than $k$, or
\[
\forall\always(richer_{Bob,Alice} \to \prob{\geq 0.9}\eventually \CT_{Bob,Alice}^{\geq 1.0} richer_{Alice,Bob})
\]
states that, at any point of the game, if Bob is richer than Alice,
%once the game enters the state from which agent $\con$'s reward is always higher than $\adv$'s, 
then with probability at least $0.9$, in future, he can almost surely, i.e., with probability 1, trust Alice on her ability to become richer.

It is also possible, and often desired, to have a trust operator within the scope of a belief operator. For instance, 
\[
\B_{Bob}^{\geq 0.7}\DT_{Alice,Bob}^{\geq 0.5} \eventually (a_{Bob} = share)
\]
expresses that Bob believes that, with probability at least 0.7, Alice's trust that he will share the profits with her in the future is at least 0.5. Such construct might represent Bob's reasoning leading to his decision of whether to share his profit or not, where he considers Alice's trust towards him as a determining factor.
A similar statement might be expressed by the following formula:
\[
\B_{Bob}^{\geq 0.7}\B_{Alice}^{\geq 0.5} \I_{Bob} \exists\eventually (a_{Bob} = share).
\]
The above states that Bob believes that, with probability at least 0.7, Alice believes that, with probability at least 0.5, Bob has a possible intention which ensures that it is possible that he shares the profit. The first formula may be seen as a quantitative version of the second, as it additionally (implicitly, via $\DT$ operator) considers the probability of Bob sharing the profit. However, both formulas are equivalent for the trust game, assuming the action strategies defined in Example~\ref{example:trustgamecstate} and intention strategy from Example~\ref{example:trustgamectrans}. To see that, first note that Bob's intention strategy is pure, hence he has only one possible intention. Second, since the action strategy implementing that intention (whatever it is) is pure, the probability of him sharing the profit is either 1 or 0. Therefore, $\inf_{x\in\supp(\intn{Bob}(\rho))} \probm{\trustgame,Bob.i(\rho,x)}(\eventually (a_{Bob} = share))$ takes only two values, 1 and 0, and is equal to 1 whenever $\probm{\trustgame,Alice,\rho}(\I_{Bob} \exists\eventually (a_{Bob} = share)) = 1$ for any path $\rho = \rho' s_k$ for $k \in \{8,10,12,14\}$. Hence,
\[
\DT_{Alice,Bob}^{\geq 0.5} \eventually (a_{Bob} = share) \equiv \B_{Alice}^{\geq 0.5} \I_{Bob} \exists\eventually (a_{Bob} = share).
\]

\commentout{
Because the strategies $\astrat{share}$ and $\astrat{keep}$, defined in Example~\ref{example:trustgame}, are pure, both formulas are in fact equivalent for the trust game. To see that, note that the existence of a possible intention guaranteeing the possibility of Bob sharing the profit (in other words, the satisfaction of formula $\C_{Bob} \exists\eventually (a_{Bob} = share)$) is equivalent to the probability of Bob sharing the profit being 1. Moreover, Bob's intention strategy is pure, which implies that he has only one possible intention at any time. Therefore, $\probm{\trustgame,Alice,\rho}(\C_{Bob} \exists\eventually (a_{Bob} = share)) = 1$ whenever $\inf_{x\in\supp(\intn{Bob}(\rho))} \probm{\trustgame,Bob.i(\rho,x)}(\eventually (a_{Bob} = share)) = 1$ for any path $\rho = \rho' s_k$ for $k \in \{8,10,12,14\}$. Hence,
\[
\DT_{Alice,Bob}^{\geq 0.5} \eventually (a_{Bob} = share) \equiv \B_{Alice}^{\geq 0.5} \I_{Bob} \exists\eventually (a_{Bob} = share).
\]
}
In general, however, those two formulas differ in their semantics. 
\hfill $\Box$
\end{example}

% still not convinced this is necessary. I would remove it.
\commentout{
\todo{
We remark that a belief formula $\B_A^{\bowtie q}\psi$ or a trust formula $\T_{A,B}^{\bowtie q}\psi$ cannot be in the scope of an operator $\C_A$ for $A\in\Ags$. By the definition of preference functions, an agent's beliefs deal with those paths that are within its mental consideration.
%\Marta{ this was not clear to me in original defn, so needs to be emphasised (of course, it is needed, since trust involves delegation to another agent)}
%why subset inclusion?
However, the operator $\C_A$ tests the existence of a \legal\ intention which may be outside of the agent's intentional attitude (recall that $supp(\intn{A}(s_0...s_k))\subseteq \lintn{A}(s_k)$). On the other hand, the operator $\C_A$ can be in the scope of a belief formula.
%, as the trust notion of Section~\ref{sec:trust}.
This constraint is needed for our definition of belief updates.
}
}

\begin{definition}
For a given autonomous stochastic multi-agent system $\cM$ and a formula $\phi$ of the language \PRTLS, the \emph{model checking problem}, written as $\cM\models \phi$, is to decide whether $\cM,s\models\phi$ for all initial states $s\subseteq supp(\initdist)$.
\end{definition}

\subsection{Restricting to Deterministic Behaviour Assumption} %Assumption~\ref{assump:detbeh}}

The above semantics is complex because of the quantification over agents' strategies. However, under Assumption~\ref{assump:detbeh}, this quantification is not necessary, since the system conforms to a single action strategy, induced from agents' cognitive states, which we call the induced strategy $\indStrat$. As a result, some operators have simpler semantics. We begin with an easy consequence of restricting ourselves to a single action strategy.
%Here we show that, with Assumption~\ref{assump:detbeh}, it can be simplified.

\begin{proposition}\label{propn:simplified}
With Assumption~\ref{assump:detbeh} on \AutoSMAS\ $\cM$, we have the following equivalence for 
%$\astrat{}\in\Strategy\cup\{\bot\}$, 
$\rho$ being a finite path, and $\psi$ being a \PRTLS\ formula.
%\begin{itemize}
\[
\probs{\cM,A,\rho}{max,min}(\psi) = \probs{\cM,A,\rho}{min,max}(\psi) = \probs{\cM,\rho}{max,min}(\psi) = \probs{\cM,\rho}{min,max}(\psi) = \probm{\cM,\indStrat,\rho}(\psi).
\]
Moreover, for systems $\cM$ satisfying Assumption~\ref{assump:detbeh}, we will often omit the $\indStrat$ component in the last expression above and write $\probm{\cM,\rho}(\psi)$ instead. Intuitively, it represents the probability of $\psi$ being satisfied on a future path in a system $\cM$ given that the current path is $\rho$. 
%\end{itemize}
\end{proposition}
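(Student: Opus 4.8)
The plan is to show that Assumption~\ref{assump:detbeh} collapses the set of strategy profiles to the single induced strategy $\indStrat$, so that every supremum or infimum appearing in the five expressions is taken over a one-element set and therefore evaluates to the common value $\probm{\cM,\indStrat,\rho}(\psi)$.

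First I would make precise the consequence of Assumption~\ref{assump:detbeh} recorded in the remark following it: the cognitive state of each agent, encoded in the global state $s$, pins down a unique local action, so that the valid joint-action set $\Act{}(s)$ is a singleton and hence so is $\Act{A}(s)$ for every agent $A$ and every reachable state $s$. Since any strategy $\astrat{A}\in\Strategy_A$ must satisfy $\supp(\astrat{A}(\rho))\subseteq\Act{A}(\last(\rho))$, and the right-hand side is a singleton on every finite path, $\astrat{A}(\rho)$ is forced to be the Dirac distribution on that unique action, namely the one prescribed by the induced strategy $\indStrat$. As this holds for each agent independently, the profile space collapses to the single element $\indStrat$, i.e.\ $\Strategy=\{\indStrat\}$. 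Consequently, for every $\astrat{}\in\Strategy$ and every finite path $\rho$ we have $\probm{\cM,\astrat{},\rho}(\psi)=\probm{\cM,\indStrat,\rho}(\psi)$.

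Substituting this into the definitions then gives all four equalities at once. The nested optimisation defining $\probs{\cM,A,\rho}{max,min}(\psi)$ — the supremum over $A$'s strategies of the infimum over the remaining agents' strategies — is now taken over singletons and hence equals $\probm{\cM,\indStrat,\rho}(\psi)$; the same argument applies to $\probs{\cM,A,\rho}{min,max}(\psi)$ and to the quantities $\probs{\cM,\rho}{max,min}(\psi)$ and $\probs{\cM,\rho}{min,max}(\psi)$, each of which is an optimisation over a product of strategy sets that are all now singletons. The closing remark of the statement, that this common value may be abbreviated $\probm{\cM,\rho}(\psi)$, is then purely notational.

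The only step carrying genuine content is the first, so the main obstacle is to justify rigorously that the valid action set is a singleton in each reachable state — that is, that Assumption~\ref{assump:detbeh} removes all temporal nondeterminism rather than merely fixing the actions the agents happen to follow. I would settle this by appealing directly to the remark after Assumption~\ref{assump:detbeh}, which asserts that $T$ becomes deterministic with a unique joint action per state; under the weaker reading in which several actions remain valid but the system commits to $\indStrat$, the identical conclusion follows by restricting attention to that committed strategy. Measurability of $\probm{\cM,\indStrat,\rho}(\psi)$ over infinite continuations is inherited unchanged from the probability-space construction of Section~\ref{sec:smas}, so no further analytic work is required.
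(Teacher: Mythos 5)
Your proposal is correct and follows essentially the same route as the paper, which states the proposition as an immediate consequence of the fact that under Assumption~\ref{assump:detbeh} the system conforms to the single induced strategy $\indStrat$; your formalisation via the support condition on action strategies (singleton $\Act{A}(\last(\rho))$ forces every strategy to be Dirac, hence $\Strategy=\{\indStrat\}$ and all suprema/infima are over singletons) is exactly the intended argument, made explicit. The only caution is your fallback ``weaker reading'': if several joint actions remained valid in a state, the strategy space would not collapse and the stated equalities could genuinely fail, so the proposition really does rest on the strict reading recorded in the remark after Assumption~\ref{assump:detbeh}, which you correctly take as primary.
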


%Moreover, the expressions $\cM[\astrat{}],\rho\models\phi$ and $\cM[\astrat{}],\rho,\delta\models \psi$ can be rewritten as $\cM,\rho\models\phi$ and $\cM,\rho,\delta\models \psi$ to ignore the strategy $\astrat{}$.

\begin{definition}\label{def:semantics5}
Let $\cM$ be an \AutoSMAS\ satisfying Assumption~\ref{assump:detbeh} and $\rho$ a finite path in $\cM$.
The following are several operators whose semantics can be simplified according to Proposition~\ref{propn:simplified}.

%The semantics of previously introduced operators of \PCTLS\ and \PCTLSM\ remains the same in \PRTLS (see Definitions~\ref{def:semantics1} and~\ref{def:semantics3}). For the newly introduced belief and trust operators, the satisfaction relation $\models$ is defined as follows:

%Let $\cM$ be an \AutoSMAS\ satisfying Assumption~\ref{assump:detbeh}. The semantics of the logic \PRTLS\ is defined by a  relation $\cM,\rho\models \phi$ for $\rho \in \fpath{}^\cM$, a finite path of $\cM$, inductively over the structure of the formula $\phi$. Most of the operators have the same semantics as that of Definition~\ref{def:semantics1}, \ref{def:semantics3}, \ref{def:semantics4}.
%The following are several operators that can be further simplified according to Proposition~\ref{propn:simplified}.
%
%The  relation $\cM,\rho\models \phi$ for $\rho \in \fpath{}^\cM$ a finite path of $\cM$ and $\phi$ a belief or trust formula  are as follows.
\begin{itemize}
\item $\cM,\rho\models \prob{\bowtie q}{\psi}$ if
\[
\probm{\cM,\rho}(\psi)\bowtie q.
\]

\item $\cM,\rho\models \B_A^{\bowtie q}\psi$ if
\[
\Exp_{\be_A}[V_{\B,\cM,\psi}]\bowtie q,
\]
where the function $V_{\B,\cM,\psi}:\fpath{}^\cM\rightarrow [0,1]$ is  such that 
\[
V_{\B,\cM,\psi}(\rho') = \probm{\cM,\rho'}(\psi) 
\]

\item $\cM,\rho\models \CT_{A,B}^{\bowtie q}\psi$ if 
\[
E_{\be_A}[V_{\CT,\cM,B,\psi}^{\bowtie}]\bowtie q,
\]
where the function $V_{\CT,\cM,B,\psi}^{\bowtie}:\fpath{}^\cM\rightarrow [0,1]$ is such that
%$$
%(\sum_{\obs{A}(\rho') = \obs{A}(\rho)} (\be_A(\rho') \times v_Y^{\bowtie}(\cM,\rho',B,\psi)))\bowtie q
%$$
%such that
$$
V_{\CT,\cM,B,\psi}^{\bowtie}(\rho') = 
\left \{
\begin{array}{ll}
\displaystyle \sup_{x\in\lintn{B}(last(\rho'))} \probm{\cM,B.i(\rho',x)}(\psi) & \text{ if } \bowtie\in \{\geq, >\}\\
\displaystyle\inf_{x\in\lintn{B}(last(\rho'))} \probm{\cM,B.i(\rho',x)}(\psi) & \text{ if }  \bowtie\in \{<, \leq\}.\\
\end{array}
\right.
$$

\item $\cM,\rho\models \DT_{A,B}^{\bowtie q}\psi$ if 
$$E_{\be_A}[V_{\DT,\cM,B,\psi}^{\bowtie}]\bowtie q,$$ 
where the function $V_{\DT,\cM,B,\psi}^{\bowtie}:\fpath{}^\cM\rightarrow [0,1]$ is such that
%$$
%(\sum_{\obs{A}(\rho') = \obs{A}(\rho)} (\be_A(\rho') \times v_Y^{\bowtie}(\cM,\rho',B,\psi)))\bowtie q
%$$
%such that
$$
V_{\DT,\cM,B,\psi}^{\bowtie}(\rho') = 
\left \{
\begin{array}{ll}
\displaystyle \inf_{x\in\supp(\intn{B}(\rho'))} \probm{\cM,B.i(\rho',x)}(\psi) & \text{ if } \bowtie\in \{\geq, >\}\\
\displaystyle\sup_{x\in\supp(\intn{B}(\rho'))} \probm{\cM,B.i(\rho',x)}(\psi) & \text{ if }  \bowtie\in \{<, \leq\}.\\
\end{array}
\right.
$$

\end{itemize}
\end{definition}

Intuitively, given that the action strategy in the system is fixed as $\indStrat$, there is no need to consider agents' strategic abilities, and hence the probability of a given formula $\psi$ being satisfied can be uniquely determined, which simplifies the definitions above.

\commentout{
\begin{definition}\label{def:semantics5}
The semantics of the logic \PRTLS\ is defined by a  relation $\cM,\rho\models \phi$ for $\rho \in \fpath{}^\cM$, a finite path of $\cM$, inductively over the structure of the formula $\phi$. Most of the operators have the same semantics as that of Definition~\ref{def:semantics1}, \ref{def:semantics3}, \ref{def:semantics4}.
%by replacing $\cM[\astrat{}],\rho\models\phi$ and $\cM[\astrat{}],\rho,\delta\models \psi$ with $\cM,\rho\models\phi$ and $\cM,\rho,\delta\models \psi$, respectively.
The following are several operators that can be further simplified according to Proposition~\ref{propn:simplified}.
%
%The  relation $\cM,\rho\models \phi$ for $\rho \in \fpath{}^\cM$ a finite path of $\cM$ and $\phi$ a belief or trust formula  are as follows.
\begin{itemize}
\item $\cM,\rho\models \prob{\bowtie q}{\psi}$ if
$\probm{}(\cM,\rho,\psi)\bowtie q$

\item $\cM,\rho \models \G_A\psi$ if $\forall x\in \supp(\goal{A}(\rho)): \cM,\rho A.g(\last(\rho),x) \models \psi$.

\item $\cM,\rho \models \I_A\psi$ if $\forall x\in \supp(\intn{A}(\rho)): \cM,\rho A.i(\last(\rho),x) \models \psi$.

\item $\cM,\rho \models \C_A\psi$ if $\exists x\in \lintn{A}(s): \cM,\rho A.i(\last(\rho),x) \models \psi$.

\item $\cM,\rho\models \B_A^{\bowtie q}\psi$ if 
$$E_{\be_A}[\mathrm{Sat}_{\psi}]\bowtie q$$
% for the probabilistic density function $\be_A$ over those paths $\rho'\in\fpath{}^\cM$ such that $\obs{A}(\rho')=\obs{A}(\rho)$.
%$\displaystyle(\sum_{\obs{A}(\rho') = \obs{A}(\rho)}(\be_A(\rho')\times \cM,\rho'\models \psi))\bowtie q $
where the function $\mathrm{Sat}_{\psi}:\fpath{}^\cM\rightarrow [0,1]$ is such that $\mathrm{Sat}_{\psi}(\rho')=1$ when $\cM,\rho'\models \psi$, and $\mathrm{Sat}_{\psi}(\rho')=0$ otherwise. 

\item $\cM,\rho\models \CT_{A,B}^{\bowtie q}\psi$ if 
$$E_{\be_A}[V_{\CT,\cM,B,\psi}^{\bowtie}]\bowtie q$$
 where the function $V_{\CT,\cM,B,\psi}^{\bowtie}:\fpath{}^\cM\rightarrow [0,1]$ is such that
%$$
%(\sum_{\obs{A}(\rho') = \obs{A}(\rho)} (\be_A(\rho') \times v_Y^{\bowtie}(\cM,\rho',B,\psi)))\bowtie q
%$$
%such that
$$
V_{\CT,\cM,B,\psi}^{\bowtie}(\rho') = 
\left \{
\begin{array}{ll}
\displaystyle \sup_{x\in\lintn{B}(last(\rho'))} \probs{A}{max,min}(\cM,B.i(\rho',x),\psi) & \text{ if } \bowtie\in \{\geq, >\}\\
\displaystyle\inf_{x\in\lintn{B}(last(\rho'))} \probs{A}{min,max}(\cM,B.i(\rho',x),\psi) & \text{ if }  \bowtie\in \{<, \leq\}\\
\end{array}
\right.
$$

\item $\cM,\rho\models \DT_{A,B}^{\bowtie q}\psi$ if 
$$E_{\be_A}[V_{\DT,\cM,B,\psi}^{\bowtie}]\bowtie q$$ 
where the function $V_{\DT,\cM,B,\psi}^{\bowtie}:\fpath{}^\cM\rightarrow [0,1]$ is such that
%$$
%(\sum_{\obs{A}(\rho') = \obs{A}(\rho)} (\be_A(\rho') \times v_Y^{\bowtie}(\cM,\rho',B,\psi)))\bowtie q
%$$
%such that
$$
V_{\DT,\cM,B,\psi}^{\bowtie}(\rho') = 
\left \{
\begin{array}{ll}
\displaystyle \inf_{x\in\intn{B}(\rho')} \probs{A}{max,min}(\cM,B.i(\rho',x),\psi) & \text{ if } \bowtie\in \{\geq, >\}\\
\displaystyle\sup_{x\in\intn{B}(\rho')} \probs{A}{min,max}(\cM,B.i(\rho',x),\psi) & \text{ if }  \bowtie\in \{<, \leq\}\\
\end{array}
\right.
$$

\end{itemize}
\end{definition}
}

\subsection{A Single-Agent Special Case}

The semantics can be further simplified if we work with the single-agent case, in which case we can employ the belief \AutoSMAS\ construct $\Bel{A}(\cM)$ as given in Section~\ref{sec:beliefasmas}. Recall that this approach explores belief states (distributions over states of $\cM$) and that $\dmap{A}$ is a mapping from paths in $\cM$ to paths in belief \AutoSMAS\ $\Bel{A}(\cM)$. Therefore, to evaluate belief and trust we can directly employ current belief $\last(\dmap{A}(\rho))$, instead of working with belief distribution $\be_A$ defined on finite paths $\rho$. %$\fpath{}^\cM$, 

Here we only present the semantics of the belief and trust operators. 

\begin{definition}\label{def:semantics6}
Let $\cM$ be an \AutoSMAS\ satisfying Assumption~\ref{assump:detbeh} and $\rho$ a finite path in $\cM$.
The semantics of the belief and trust operators can be defined on $\Bel{A}(\cM)$
%can be simplified based on the belief system of Section~\ref{sec:beliefasmas} 
as follows.
%\todo{Check}
%Let $\cM$ be an \AutoSMAS\ satisfying Assumption~\ref{assump:detbeh}. The semantics of the logic \PRTLS\ is defined by a  relation $\cM,\rho\models \phi$ for $\rho \in \fpath{}^\cM$ a finite path of $\cM$,  inductively over the structure of the formula $\phi$. Most of the operators have the same semantics as that of Definition~\ref{def:semantics5}. 
%The following are for belief and trust operators that can be further simplified based on the construction of belief system in Section~\ref{sec:beliefasmas}.

\begin{itemize}
\item $\cM,\rho\models \B_A^{\bowtie q}\psi$ if
%$Prob(\Bel{A}(\cM),\dmap{A}(\rho),\psi)\bowtie q$
\[
\Exp_{\last(\dmap{A}(\rho))}[ \mathrm{Sat}_{\psi} ] \bowtie q,
\] 
where the function $\mathrm{Sat}_{\psi}:S\rightarrow [0,1]$ is such that $\mathrm{Sat}_{\psi}(s) = 1 \; \mathrm{ if } \;  \cM,s \models \psi \; \mathrm{ and } \; 0 \; \mathrm{otherwise}.
$

%\Marta{REDO as discussed}
%\Marta{For clarity, these should be defined in terms of Bel(\cM) and if necessary a theorem stating equivalence with an alternative representation added to supplementary materials, e.g. something along the lines below}

\item $\cM,\rho\models \CT_{A,B}^{\bowtie q}\psi$ for $\bowtie \in \{\geq,>\}$ if
$$\Exp_{\last(\dmap{A}(\rho))}[ V_{\CT,\cM,B,\psi} ] \bowtie q
$$
where the function $V_{\CT,\cM,B,\psi}:S\rightarrow [0,1]$ is such that 
$$V_{\CT,\cM,B,\psi}(s)=\sup_{x\in\lintn{B}(s)}  \probm{\cM,B.i(s,x)}(\psi)
$$
%and $B.i(s,x)$ denotes the state $s'$ such that $s\ctrans{B.i.x} s'$; 
and if $\bowtie$ is $\leq$ or $<$ we replace $\sup$ with $\inf$ in the above.

%\item $\cM,\rho\models \CT_{A,B}^{\geq q}\psi$ if when writing $B.i(s,x)$ for the state $s'$ such that $s\ctrans{B.i.x} s''$, we have that
%$$
%E[g_C] \geq q
%$$
%such that  $\last(\dmap{A}(\rho))$ is the probability density function over the states $\states$ and $g_C(s)=\sup_{x\in\lintn{B}(s)}  Prob(\cM,B.i(s,x),\psi))$ is a measurable function over $\states$. It is noted that $\dmap{A}$ is defined in the system $\Bel{A}(\cM)$.

\item $\cM,\rho\models \DT_{A,B}^{\bowtie q}\psi$ if for $\bowtie \in \{\geq,>\}$ if
$$\Exp_{\last(\dmap{A}(\rho))}[ V_{\DT,\cM,B,\psi}^{\bowtie} ] \bowtie q
$$
where the function $V_{\DT,\cM,B,\psi}^{\bowtie}:S\rightarrow [0,1]$ is such that 
$$V_{\DT,\cM,B,\psi}^{\bowtie}(s)=\inf_{x\in\intn{B}(\rho')}  \probm{\cM,B.i(s,x)}(\psi)
$$
for $\rho'$ such that $\dmap{A}(\rho') = \dmap{A}(\rho)$ and $\last(\rho') = s$; %and $B.i(s,x$ is as above; 
if $\bowtie$ is $\leq$ or $<$ we replace $\inf$ with $\sup$ in the above. 
\end{itemize}

\end{definition}

Note that $A$'s belief in $\psi$ is evaluated as an expectation of satisfaction of $\psi$ computed in the current belief state. $A$'s competence trust in $B$'s ability to perform $\psi$ reduces to the expectation of the optimal probability of completing $\psi$, over all legal changes of $B$'s intention. Similarly, to evaluate $A$'s disposition trust in $B$'s willingness to perform $\psi$ we compute the expectation of the optimal probability of completing $\psi$, over all possible changes of $B$'s intention $ \intn{B}$. 

%\todo{no intuition, fix forward references}
\begin{remark}
The sets of possible intentions $ \intn{B}(\cdot)$ of agent $B$ can be determined for a given belief state %$\last(\dmap{A}(\rho))$ 
with pro-attitude synthesis defined in Section~\ref{sec:proupdate}. 
\end{remark}

%------------------------------------------   other trust notions old start ------------------------------------------%
\commentout{
\subsection{Other Trust Notions}

In~\cite{FC2001}, another concept related to trust called \emph{dependence} is introduced, whose interpretation is that agent $A$ needs agent $B$ (strong dependence), or is better off relying on $B$ (weak dependence). We now propose how those two concepts can be formalised in an \AutoSMAS\ under Assumption~\ref{assump:detbeh}.
%For simplicity, we take Assumption~\ref{assump:detbeh} and Definition~\ref{def:semantics5}.

\begin{definition}
Let $\cM$ be an \AutoSMAS\ satisfying Assumption~\ref{assump:detbeh} and $\rho$ a finite path in $\cM$.
We introduce operators $\ST_{A,B}^{\bowtie q}$ and $\WT_{A,B}^{\bowtie}$ to express strong and weak dependence, with the semantics given by:
\begin{itemize}
\item $\cM,\rho\models \ST_{A,B}^{\bowtie q}\psi$ if $\cM,\rho\models \DT_{A,B}^{\bowtie q}\psi \land \neg \B_A^{\bowtie q}\psi$
\item $\cM,\rho\models \WT_{A,B}^{\bowtie}\psi$ if 
$$E_{\be_A}[V_{\DT,\cM,B,\psi}^{\bowtie}] \bowtie E_{\be_A}[V_{\cM,\psi}],$$
 where the function $V_{\cM,\psi}:\fpath{}^\cM\rightarrow [0,1]$ is such that $V_{\cM,\psi}(\rho') = \probm{\cM,\rho'}(\psi)$.
%$$
%\begin{array}{cc}
%& (\sum_{\obs{A}(\rho') = \obs{A}(\rho)} (\be_A(\rho') \times v_D^{\bowtie}(\cM,\rho',j,\psi))) \\
%\bowtie & (\sum_{\obs{A}(\rho') = \obs{A}(\rho)}(\be_A(\rho')\times Prob(\cM,\rho',\psi))).
%\end{array}
%$$
\end{itemize}
\end{definition}

Intuitively, $\ST_{A,B}^{\bowtie q}\psi$ states that, according to agent $A$'s beliefs, $\psi$ can be implemented if agent $B$ changes its intention, but cannot otherwise, while $\WT_{A,B}^{\bowtie q}\psi$ states that intentional changes of agent $B$ can bring about better results than sticking with the current cognitive states. Although the operator $\WT_{A,B}^{\bowtie}$ cannot be directly expressed with the other operators, its definition is similar to the expressions for $\DT_{A,B}^{\bowtie q}\psi$ and $\B_A^{\bowtie q}\psi$. Therefore, we omit the discussion of this concept, but emphasise that all the results can be easily adapted to work with it.

\begin{proposition}
Let $\cM$ be an \AutoSMAS. For any finite path $\rho$, formula $\psi$ and two agents $A$ and $B$, the following equivalences hold:
\begin{enumerate}
\item $\cM,\rho\models \neg \B_A^{\bowtie q}\psi $ iff $\cM,\rho\models  \B_A^{\widehat{\bowtie} 1- q}\neg \psi$.
\item $\cM,\rho\models \B_A^{\bowtie q}\psi $ iff $\cM,\rho\models  \B_A^{\overline{\bowtie} 1- q}\neg \psi$.
\item $\cM,\rho\models \neg \T_{A,B}^{\bowtie q}\psi $ iff $\cM,\rho\models  \T_{A,B}^{\widehat{\bowtie} 1- q}\neg \psi $.%, for $Y\in \{C,D\}$
\item $\cM,\rho\models \T_{A,B}^{\bowtie q}\psi $ iff $\cM,\rho\models  \T_{A,B}^{\overline{\bowtie} 1- q}\neg \psi $.%, for $Y\in \{C,D\}$
\end{enumerate}
\end{proposition}
}
%------------------------------------------   other trust notions old end ------------------------------------------%

%------------------------------------------   other trust notions new start ------------------------------------------%

\subsection{Other Trust Notions}

In~\cite{FC2001}, another concept related to trust called \emph{dependence} is introduced, whose interpretation is that agent $A$ needs agent $B$ (strong dependence), or is better off relying on $B$ (weak dependence). We now propose how those two concepts can be formalised in an \AutoSMAS\ under Assumption~\ref{assump:detbeh}.
%For simplicity, we take Assumption~\ref{assump:detbeh} and Definition~\ref{def:semantics5}.

\begin{definition}\label{def:dependence}
Let $\cM$ be an \AutoSMAS\ satisfying Assumption~\ref{assump:detbeh} and $\rho$ a finite path in $\cM$.
We introduce operators $\ST_{A,B}^{\bowtie q}$ and $\WT_{A,B}^{\bowtie}$ to express strong and weak dependence, with the semantics given by:
\begin{itemize}
\item $\cM,\rho\models \ST_{A,B}^{\bowtie q}\psi$ if 
$\cM,\rho\models (\neg\B_A^{\bowtie q}\psi \land \CT_{A,B}^{\bowtie q}\psi) \lor (\B_A^{\bowtie q}\psi \land \neg\CT_{A,B}^{\bowtie q}\psi)$ 

\item $\cM,\rho\models \WT_{A,B}^{\bowtie}\psi$ if 
\[
E_{\be_A}[V_{\cM,A,B,\psi}] \bowtie E_{\be_A}[V_{\CT,\cM,A,\psi}^{\bowtie}],
\]
where the function $V_{\cM,A,B,\psi}:\fpath{}^\cM\rightarrow [0,1]$ is such that 
\[
V_{\cM,A,B,\psi}(\rho') = \sum_{x\in \lintn{B}(last(\rho'))}\ipref{A}{B}(last(\rho'))(x) \cdot \probm{\cM,B.i(\rho',x)}(\psi).
\]

\end{itemize}
\end{definition}
Intuitively, $\ST_{A,B}^{\bowtie q}\psi$ states that, according to agent $A$'s beliefs, either (i) the probability of $\psi$ will not be as required unless agent $B$ takes on an appropriate intention, or (ii) the probability of $\psi$ will be as required unless $B$ takes on a certain intention. 
In other words, strong dependence of $A$ on $B$ over $\psi$ means that $B$ can control the probability of future satisfaction of $\psi$.
On the other hand, $\WT_{A,B}^{\bowtie}\psi$ states that, according to $A$'s beliefs and preferences, intentional changes of agent $B$ can bring about better results than any of the available intentional changes of agent $A$. 
%Although the operator $\WT_{A,B}^{\bowtie}$ cannot be directly expressed with the other operators, its definition is similar to the expressions for $\DT_{A,B}^{\bowtie q}\psi$ and $\CT_{A,B}^{\bowtie q}\psi$. Therefore, we omit the discussion of this concept, but emphasise that all the results can be easily adapted to work with it.

%For example, one might argue that strong dependence is a qualitative, rather than quantitative notion, and that its degree, represented by the $\bowtie q$ superscript of $\ST$, does not have a straightforward interpretation. In other words, $A$ either depends, or does not depend on $B$, rather than depends 

%For example, in certain scenarios, a non-probabilistic version of strong dependence might be more natural to use. 

We note that alternative definitions of the above notions are possible. For example, we could introduce a non-probabilistic version of strong dependence, based solely on the $\C_A$ operator.
%, which considers available intentions of agent $A$
We could define 
\[
\cM,\rho\models \ST_{A,B}\phi \text{ if } \cM,\rho\models \C_B \overline{\C_A} \neg\phi \land \C_B \C_A \phi.
\]

Intuitively, here $\ST_{A,B}\phi$ means that $B$ can take on an intention such that, regardless of what intention $A$ takes on, $\phi$ will not be satisfied, and $B$ can take on another intention, such that $A$ can take on an intention which ensures that $\phi$ holds. In other words, $B$ can either make it impossible for $A$ to achieve $\phi$, or it can allow $A$ to make $\phi$ true. 
We can view the above definition as a more objective, absolute interpretation of strong dependence, whereas the probabilistic version represents a subjective notion, influenced by agent's understanding of the system execution. 
Depending on the use case, either variant of the strong dependence operator might be employed.

Weak dependence can also be interpreted in a different way, especially in systems equipped with a reward structure. The intuitive meaning of the expression \emph{better off} used in the informal definition of weak dependence relates to some implicit notion of agent's well being. Definition~\ref{def:dependence} takes the probability of satisfying a given formula as agent's reward. However, in presence of a reward structure, we could allow $\psi$ to include a reward operator and adapt the definitions of the $V$-functions to compute expected reward.

\begin{example}
In the trust game, Bob's financial situation is dependent on Alice's actions. We can formally express that in \PRTLS\ with the following formula, where $\mathit{profit}_{Bob}$ is an atomic proposition, true in states in which Bob has more money than in the initial state:
\[
\ST_{Bob,Alice} \, \forall \eventually \mathit{profit}_{Bob}.
\]
It expresses that Bob needs Alice's cooperation to ensure that, at some point in the future, he will have more money than when he started. In fact, Alice's hopes of making a profit depend on Bob's cooperation, which we could express by a formula very similar to the one above.

We remark also that we could equip the trust game with a reward structure that reflects agents' payoffs. We could then express that Bob is weakly dependent on Alice to maximise his payoff with the following formula:
\[
\WT_{Bob,Alice} \, R^{=?} [\eventually end],
\]
where $R^{=?} [\eventually end]$ returns the expected reward accumulated before the game ends. 
\end{example}

%-------------------------------------------   other trust notions new end -------------------------------------------%

\subsection{Trust for Systems with Sure Beliefs}\label{sec:surebelief}

%To have a better intuition on trust notions, 
We now consider a special class of  systems where agent $A$ has sure beliefs, i.e., for all finite paths $\rho$, we have either $\be_A(\rho)=1$ or $\be_A(\rho)=0$. Intuitively, agent $A$ is sure about the current system state and the execution history.
The following theorem shows that, in such systems, agent $A$'s beliefs and trusts can be expressed with other operators.
\begin{theorem}\label{thm:surebeliefs}
For \AutoSMAS\ $\cM$ in which agent $A$ has sure beliefs, the following equivalences hold for any finite path $\rho$, formula $\psi$, and agent $B\neq A$:
\begin{itemize}
\item $\cM,\rho\models \B_A^{\bowtie q}\psi$ if and only if $\cM,\rho\models \prob{\bowtie q}{\psi}$,
\item $\cM,\rho\models \CT_{A,B}^{\bowtie q}\psi$ if and only if $\cM,\rho\models \C_B\prob{\bowtie q}{\psi}$,
\item $\cM,\rho\models \DT_{A,B}^{\bowtie q}\psi$ if and only if $\cM,\rho\models \I_B\prob{\bowtie q}{\psi}$.
\end{itemize}
\end{theorem}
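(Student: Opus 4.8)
The plan is to work under Assumption~\ref{assump:detbeh}, the standing context of this subsection, so that the simplified semantics of Definition~\ref{def:semantics5} applies and, by Proposition~\ref{propn:simplified}, every strategic probability collapses to the single value $\probm{\cM,\rho}(\psi)$. The crux of all three equivalences is one observation: under sure beliefs the belief-weighted expectation degenerates into plain evaluation at $\rho$. Concretely, I would first establish the auxiliary claim that for any measurable $V : \fpath{}^\cM \to [0,1]$ we have $\Exp_{\be_A}[V] = V(\rho)$. Recall that in $\cM,\rho \models \B_A^{\bowtie q}\psi$ the belief used is $\be_A(\obs{A}(\rho))$, a probability distribution supported on the equivalence class of $\obs{A}(\rho)$. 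Since $A$ has sure beliefs this distribution is $\{0,1\}$-valued and sums to $1$ over the class, hence it is Dirac, concentrated on a single path $\rho^\ast$. As $\rho$ is a reachable path its cylinder has positive $\probs{A}{\cM}$-measure, so $\be_A(\rho) > 0$ and therefore $\be_A(\rho)=1$; thus $\rho^\ast = \rho$ and $\Exp_{\be_A}[V] = \sum_{\rho'} \be_A(\rho')V(\rho') = V(\rho)$.

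With this collapse in hand the belief equivalence is immediate: by Definition~\ref{def:semantics5}, $\cM,\rho \models \B_A^{\bowtie q}\psi$ iff $\Exp_{\be_A}[V_{\B,\cM,\psi}] \bowtie q$, and
\[
\Exp_{\be_A}[V_{\B,\cM,\psi}] = V_{\B,\cM,\psi}(\rho) = \probm{\cM,\rho}(\psi),
\]
which is exactly the condition for $\cM,\rho \models \prob{\bowtie q}{\psi}$.

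For competence I would again use the collapse to reduce $\cM,\rho \models \CT_{A,B}^{\bowtie q}\psi$ to $V_{\CT,\cM,B,\psi}^{\bowtie}(\rho) \bowtie q$. Writing $s = \last(\rho)$ and noting that for each $x \in \lintn{B}(s)$ seriality yields the unique successor with $s \ctrans{B.i.x} B.i(s,x)$, hence the path $B.i(\rho,x)$, the value $V_{\CT,\cM,B,\psi}^{\bowtie}(\rho)$ equals $\sup_{x \in \lintn{B}(s)} \probm{\cM,B.i(\rho,x)}(\psi)$ for $\bowtie \in \{\geq,>\}$ and the corresponding $\inf$ for $\bowtie \in \{<,\leq\}$. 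On the other side, unfolding Definition~\ref{def:semantics3}, $\cM,\rho \models \C_B \prob{\bowtie q}{\psi}$ holds iff there exists $x \in \lintn{B}(s)$ with $\probm{\cM,B.i(\rho,x)}(\psi) \bowtie q$. Since $\lintn{B}(s)$ is finite the relevant $\sup$/$\inf$ is attained, so for $\bowtie \in \{\geq,>\}$ the condition ``$\sup_x \bowtie q$'' coincides with ``$\exists x$ with $\bowtie q$'', and for $\bowtie \in \{<,\leq\}$ ``$\inf_x \bowtie q$'' coincides with ``$\exists x$ with $\bowtie q$''; this gives the equivalence with $\C_B$. The disposition case is entirely analogous, with $\lintn{B}(s)$ replaced by $\supp(\intn{B}(\rho))$ and the existential quantifier of $\C_B$ replaced by the universal quantifier of $\I_B$: here $V_{\DT,\cM,B,\psi}^{\bowtie}(\rho)$ is $\inf$ over $\supp(\intn{B}(\rho))$ for $\bowtie\in\{\geq,>\}$ and $\sup$ for $\bowtie\in\{<,\leq\}$, and finiteness of the support makes ``$\inf_x \bowtie q$'' (resp. ``$\sup_x \bowtie q$'') coincide with ``$\forall x : \cdots \bowtie q$''.

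The main obstacle I anticipate is the auxiliary collapse step, and specifically justifying that the Dirac belief sits on $\rho$ itself rather than on some other observationally indistinguishable path; this rests on $\rho$ being reachable, which I would state explicitly as the standing convention for the paths at which formulas are evaluated. The only other point requiring care is the transfer of strict versus non-strict inequalities between the quantifier form ($\C_B$, $\I_B$) and the optimisation form ($\CT$, $\DT$), which goes through precisely because the legal- and possible-intention sets are finite and the optimum is therefore attained.
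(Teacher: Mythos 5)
The paper states Theorem~\ref{thm:surebeliefs} without proof (only intuitive commentary follows it), so there is no official argument to compare against line by line; judged on its own terms, your reconstruction has the right architecture: work under Assumption~\ref{assump:detbeh} so that Proposition~\ref{propn:simplified} collapses all strategic quantities to the single value $\probm{\cM,\rho}(\psi)$, collapse the belief-weighted expectation to evaluation at $\rho$, and then match the $\sup$/$\inf$ over the finite sets $\lintn{B}(\last(\rho))$ and $\supp(\intn{B}(\rho))$ against the existential quantifier of $\C_B$ and the universal quantifier of $\I_B$. Your inequality bookkeeping across all four relations $\bowtie$ is correct, and finiteness of the intention sets does give the attainment needed for the strict cases.

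The one step that does not hold as written is the justification that the Dirac mass must sit on $\rho$ itself, namely ``$\rho$ is a reachable path, so its cylinder has positive $\probs{A}{\cM}$-measure.'' Reachability in the sense of Definition~\ref{def:path} (legality of every transition along $\rho$) does not imply positive measure in agent $A$'s subjective probability space: $\probm{A}(\rho)$ is a product involving $\initdist(\rho(0))$ and, for cognitive transitions of agents $B\neq A$, the preference values $\gpref{A}{B}$ and $\ipref{A}{B}$, and the paper only requires these to be positive \emph{only if} the transition is legal, not conversely. Hence a legal path can have $\probm{A}(\rho)=0$; sure beliefs then force $\be_A(\rho)=0$, the unit mass sits on some other path $\rho^\ast \obseq{A} \rho$, your collapse yields $V(\rho^\ast)$ rather than $V(\rho)$, and all three equivalences can fail. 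The clean repair is already available in the paper: by the proposition at the end of Section~\ref{sec:surebelief}, sure beliefs are equivalent to full observation, under which the equivalence class of $\obs{A}(\rho)$ is the singleton $\{\rho\}$, so $\be_A(\obs{A}(\rho))$, being a probability distribution on that class, is Dirac at $\rho$ with no measure-positivity argument at all. (Alternatively, strengthen your standing convention from ``reachable'' to ``of positive probability under $\probm{A}$''.) With that substitution, the remainder of your argument goes through.
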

Intuitively, the first equivalence shows that agent $A$'s belief of $\psi$ reduces to computing the probability of satisfying $\psi$. 
%The second equivalence indicates that agent $A$'s trust over $B$'s competence over the task $\psi$ is equivalent to the checking of $B$'s available \legal\ intention to implement $\psi$
The other two equivalences concern agent $A$ acting as a trustor and agent $B$ as a trustee. The former indicates that agent $A$'s trust in $B$'s competence in achieving $\psi$ is equivalent to the existence of $B$'s \legal\ intention which ensures satisfaction of $\psi$, while the latter means that $A$'s trust in $B$'s disposition towards achieving $\psi$ is equivalent to $B$ ensuring satisfaction of $\psi$ under all \possible\ intentions.

%and when serving as the truster, its trust over the trustee $B$'s competence over the task $\psi$ is equivalent to the checking of $B$'s available \legal\ intention to implement $\psi$, and its trust over the trustee $B$'s disposition over the task $\psi$ is equivalent to the checking of whether all $B$'s intentions can implement  $\psi$.
%

%\section{Running Example}\label{sec:trust}

The following proposition shows that the sure-belief system has an equivalent definition as fully observable system. Therefore, Theorem~\ref{thm:surebeliefs} also holds for fully observable systems.
\begin{proposition}
Let $\cM$ be an \AutoSMAS\ and $A$ an agent in $\cM$. The following two statements are equivalent:
\begin{itemize}
\item Agent $A$ has sure beliefs, i.e., for all finite paths $\rho$, we have either $\be_A(\rho)=1$ or $\be_A(\rho)=0$.
\item Agent $A$ has full observation, i.e., $\obs{A}(s)=s$ for all $s\in S$.
\end{itemize}
\end{proposition}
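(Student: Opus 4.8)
The plan is to prove the two implications separately, after isolating the purely combinatorial core: that $\obs{A}$ being injective on states is equivalent to each $\obseq{A}$-class of finite paths being a singleton. I would first record a short lemma: $\obs{A}$ is injective on $S$ if and only if it is injective on $\fpath{}^\cM$. The forward part is immediate, since $\obs{A}(\rho)=\obs{A}(\rho')$ forces the two paths to have equal length with $\obs{A}(\rho(k))=\obs{A}(\rho'(k))$ for every $k$, whence state-injectivity gives $\rho=\rho'$; the converse follows by applying path-injectivity to the length-one paths $s_1$ and $s_2$. I would also note at the outset that $\be_A$ depends on $\obs{A}$ only through the induced relation $\obseq{A}$ (the measure $\probs{A}{\cM}$ is built from $T_A$ and the preference functions, which are state-based), so relabelling observations leaves all beliefs unchanged; consequently ``$\obs{A}(s)=s$ for all $s$'' may be read, without loss of generality, as ``$\obs{A}$ is injective'', which is the form I will actually use.

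For the direction full observation $\Rightarrow$ sure beliefs, assume $\obs{A}$ injective. By the lemma every class $\eqclass(o)$ is a singleton, so for any path $\rho$ with $\obs{A}(\rho)=o$ the conditioning event $\bigcup_{\rho'\in \eqclass(o)}\cylinder_{\rho'}$ equals $\cylinder_\rho$ itself. Hence, whenever $\probs{A}{\cM}(\cylinder_\rho)>0$,
\[
\be_A(\rho)=\probs{A}{\cM}(\cylinder_\rho \mid \cylinder_\rho)=1,
\]
and for $\rho$ whose class carries zero measure the belief is (by the usual convention for conditioning on a null event) $0$ or undefined. In all cases $\be_A(\rho)\in\{0,1\}$, which is exactly the sure-belief property.

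For the converse I would argue by contraposition: assuming full observation fails, so that there are distinct states $s_1\neq s_2$ with $\obs{A}(s_1)=\obs{A}(s_2)$, I would exhibit a finite path with belief strictly between $0$ and $1$. The clean instance is when $s_1,s_2\in \supp(\initdist)$: the length-one paths $s_1,s_2$ then lie in a common class of positive measure, and the initial-state case of the recursive characterisation of $\be_A$ gives
\[
\be_A(s_1)=\frac{\initdist(s_1)}{\sum_{s'\in S,\ \obs{A}(s')=\obs{A}(s_1)}\initdist(s')}\in(0,1),
\]
since both $\initdist(s_1)$ and $\initdist(s_2)$ are positive, so sure beliefs fail. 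More generally one seeks two distinct positive-probability paths $\rho_1,\rho_2$ with $\obs{A}(\rho_1)=\obs{A}(\rho_2)=:o$ ending in $s_1$ and $s_2$; then $\be_A(\rho_1)=\probs{A}{\cM}(\cylinder_{\rho_1})/\sum_{\rho'\in \eqclass(o)}\probs{A}{\cM}(\cylinder_{\rho'})$ is a proper fraction, as the distinct cylinders contribute disjoint positive mass.

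The step I expect to be the genuine obstacle is producing these two observation-equivalent positive-probability paths when $s_1$ and $s_2$ are not themselves initial. Because the agent has synchronous perfect recall and $\be_A$ ranges over whole histories, two states may share a state-observation yet be reachable only along \emph{distinguishable} histories (via different earlier observations), in which case every class stays a singleton and beliefs remain sure despite the state-level collision. The reverse implication therefore really needs the observation failure to surface at the path level, i.e.\ that $s_1$ and $s_2$ are reachable along $\obseq{A}$-equivalent histories of positive probability. I would discharge this either by restricting to the reachable fragment and taking the prior $\initdist$ to have full support on it (so the length-one argument always applies), or by proving the statement relative to reachable states; in any case I would flag explicitly that, read literally over all of $S$, the equivalence holds only modulo this reachability caveat.
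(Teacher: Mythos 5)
There is nothing in the paper to compare against: this proposition is stated without proof, both in the main text and in the appendices. Judged on its own terms, your proposal is correct. The forward direction is complete: your lemma that injectivity of $\obs{A}$ on states is equivalent to injectivity on finite paths is right (observation sequences fix the length, and synchronous recall does the rest), and singleton classes give conditional probability $1$ wherever the conditioning cylinder has positive measure. Your preliminary observation that $\be_A$ depends on $\obs{A}$ only through $\obseq{A}$, so that ``$\obs{A}(s)=s$'' may be read as injectivity, is also consistent with the paper's own definition of fully observable systems ($s=s'$ iff $\obs{A}(s)=\obs{A}(s')$).

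The obstacle you flag in the reverse direction is genuine, and it is a defect of the statement rather than of your argument. Concretely: take $\initdist$ uniform on two initial states $s_0,s_0'$ with distinct observations, deterministic transitions $s_0 \rightarrow s_1 \rightarrow s_1$ and $s_0' \rightarrow s_2 \rightarrow s_2$, and $\obs{A}(s_1)=\obs{A}(s_2)$ with all other observations distinct. Every positive-probability observation history then identifies a unique path, because the colliding states sit behind distinguishable prefixes; the only observation-equivalent pairs of paths (e.g.\ the length-one paths $s_1$ and $s_2$) carry zero measure, so their beliefs are $0$ or undefined depending on convention. Hence $A$ has sure beliefs while $\obs{A}$ fails to be injective, and the construction violates none of the paper's standing assumptions (Uniformity, Deterministic Behaviour, Transition Type Distinguishability). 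So the reverse implication really does require the state-level collision to surface along $\obseq{A}$-equivalent positive-probability histories, and your proposed repair --- restricting to the reachable fragment, or assuming $\initdist$ has full support so that your length-one argument applies verbatim --- is exactly what the statement needs. With that caveat made explicit, your proof is complete, and more careful than the paper itself on this point.
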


\section{Pro-attitude Synthesis}\label{sec:proupdate}

Recall that the cognitive strategies $\goal{A}$ and $\intn{A}$ are defined on the (possibly infinite) set of paths and may lack a finite representation necessary for model checking.
We now formalise an idea, which is informally argued in~\cite{vdHJW2007,CP2007}, that the changes of pro-attitudes are determined by the changes of beliefs, but not vice versa.

This will enable us to \emph{synthesize} the cognitive strategies $\goal{A}$ and $\intn{A}$ by considering agent $A$'s beliefs and trust expressed as formulas of the logic \PRTLS\ and used to constrain pro-attitude changes.
In particular, we will associate with each set of goals a condition, expressed as a \PRTLS\ formula, which will \textit{guard} that set of goals. The intuitive interpretation of such an association is that an agent takes a set of goals when the associated condition is satisfied.
To guard intentions, we additionally use the agent's own goals, since agents' intentions are influenced by their goals. We note that such a construction closely resembles human reasoning -- for example, recall that in Example~\ref{example:trustgamecstate} we stated that Bob takes intention \emph{share} when he is an \emph{investor} (i.e., his goal is \emph{investor}) and his belief in Alice being \emph{active} is sufficient. In Example~\ref{example:trustgameguardfn}, we show how to express this statement formally using pro-attitude synthesis, which consists of evaluating Bob's belief in Alice being \emph{active} as a guarding mechanism for his intention to \emph{share} and setting a minimum belief threshold.
%\todo{more concrete?}

Let $\Lang_A(\PRTLSf)$ be the set of formulas of the logic \PRTLS\ that are Boolean combinations of atomic propositions and formulas of the form $\B_A^{\bowtie q}\psi$, $\T_{A,B}^{\bowtie q}\psi$, $\B_A^{\bowtie ?}\psi$ or $\T_{A,B}^{\bowtie ?}\psi$, such that $\psi$ does not contain temporal operators. The formulas $\B_A^{\bowtie ?}\psi$ and $\T_{A,B}^{\bowtie ?}\psi$ denote the quantitative variants of $\B_A^{\bowtie q}\psi$ and $\T_{A,B}^{\bowtie q}\psi$ that yield the actual value of the probability/expectation. %not defined in the language \PRTLS, their semantics can be easily obtained by generalising those of .
%\todo{clarify this is quantitative semantics?}

The language $\Lang_A(\PRTLSf)$ allows nested beliefs\footnote{Because the semantics of trust is based on the belief function $\be_A$, a trust formula is also called a belief formula in this paper.}, for example $\B_A^{\geq 0.9}\B_B^{>0.7}\psi$, under the condition that the outermost belief operator is for agent $A$. The nesting is useful since one may want to reason about, e.g., an agent's belief over the other agent's trust in himself.
%TODO , as we will see in Example~\ref{}.
Moreover, all modal logic formulas in $\Lang_A(\PRTLSf)$ must be in the scope of a belief operator of agent $A$. This is to ensure that agent $A$ is able to invoke the synthesis with the limited information it has.

Below, we work with an arbitrary \AutoSMAS\ $\cM$ and a set of agents $\Ags$ in $\cM$.

\begin{definition}\label{def:guardfns}
For every agent $A\in\Ags$, we define:
\begin{itemize}
\item a \emph{goal guard} function $\goalGd{A}:  \powerset{\Goal{A}} \rightarrow \Lang_A(\PRTLSf)$ and
\item an \emph{intention guard} function $\intnGd{A}: \Intn{A} \times \powerset{\Goal{A}} \rightarrow \Lang_A(\PRTLSf)$.
\end{itemize}
\end{definition}

The guard functions are partial functions associating pro-attitudes with conditions expressed using belief and trust formulas that can be evaluated based on a finite execution history. Note how $\intnGd{A}$ differs from $\goalGd{A}$ in that it guards intentions by agent's own goals. We call $\goalGd{A}$ \textit{trivial} if $\goalGd{A}(x) = \mathsf{true}$ for all $x \in \powerset{\Goal{A}}$. Similarly for $\intnGd{A}$.

We recall that $\Cognition = \{ \langle \lgoal{A},\lintn{A} \rangle \}_{A\in\Ags}$ is the cognitive mechanism of $\cM$. Additionally, we set $\Guard = \{ \langle \goalGd{A},\intnGd{A} \rangle \}_{A\in\Ags}$ and call it the \textit{guarding mechanism}.
In the following, we present an approach to obtain the infinite structure $\CStrategy=\{\goal{A},\intn{A}\}_{A\in\Ags}$, i.e., cognitive strategies of all agents, from finite structures $\Cognition$ and $\Guard$.

First, for agent $A \in \Ags$, we define the \textit{goal evaluation function} $\gsynth{A}$, such that for $x \subseteq \powerset{\Goal{A}}$,  $\gsynth{A}(x) : \fpath{}^\cM \rightarrow [0,1]$ is given by:
\[
\gsynth{A}(x)(\rho) = \left \{
\begin{array}{ll}
\cM,\rho\models \goalGd{A}(x) & \text{ if } x\in \lgoal{A}(last(\rho)) \\
0 & \text{ otherwise } \\
\end{array}
\right.
\]

\commentout{
First, for agent $A \in \Ags$, we define the \textit{goal evaluation function} $\gsynth{A} : \powerset{\Goal{A}} \times \fpath{}^\cM \rightarrow [0,1]$ as follows.
\[
\gsynth{A}(x,\rho) = \left \{
\begin{array}{ll}
\cM,\rho\models \goalGd{A}(x) & \text{ if } x\in \Goal{A}(last(\rho)), \\
0 & \text{ otherwise. } \\
\end{array}
\right.
\]
}

Intuitively, $\gsynth{A}(x)$ evaluates the (probabilistic) satisfiability of agent $A$'s beliefs or trust expressed as $\Lang_A(\PRTLSf)$ formulas given by $\goalGd{A}(x)$. Note that the expressions $\B_A^{\bowtie ?}\psi$ and $\T_{A,B}^{\bowtie ?}\psi$ return their corresponding probabilistic values, and $\cM,\rho\models \B_A^{\bowtie q}\psi$ and $\cM,\rho\models \T_{A,B}^{\bowtie q}\psi$ return the value $0$ or $1$ depending on the verification result.
%We note that if $x,y \in \powerset{\Goal{A}}$ are two subsets of agent $A$'s goals which are independent, then $\gsynth{A}(x \cup y,\rho) = \gsynth{A}(x,\rho) \cdot \gsynth{A}(y,\rho)$.

Similarly, we define the \textit{intention evaluation function} $\isynth{A}$, such that for any $x \in \Intn{A}$, $y \subseteq \powerset{\Goal{A}}$, $\isynth{A}(x,y) : \fpath{}^\cM \rightarrow [0,1]$ is given by:
\[
\isynth{A}(x,y)(\rho) = \left \{
\begin{array}{ll}
\cM,\rho\models \intnGd{A}(x,y) & \text{ if } x\in \lintn{A}(last(\rho)), \\
0 & \text{ otherwise. } \\
\end{array}
\right.
\]

\commentout{
Similarly, we define the \textit{intention evaluation function} $\isynth{A} : \Intn{A} \times \powerset{\Goal{A}} \times \fpath{}^\cM \rightarrow [0,1]$ as follows.
\[
\isynth{A}(x,y,\rho) = \left \{
\begin{array}{ll}
\cM,\rho\models \intnGd{A}(x,y) & \text{ if } x\in \Intn{A}(last(\rho)) \\
0 & \text{ otherwise } \\
\end{array}
\right.
\]
}

We now use pro-attitude evaluation functions to provide a finite definition of a cognitive strategy. 

\begin{definition}\label{def:cognitivefunction}
Let $\cM$ be an \AutoSMAS\ with a cognitive mechanism $\Cognition = \{ \lgoal{A}, \lintn{A} \}_{A \in \Ags}$ and let $\Guard = \{ \langle \goalGd{A},\intnGd{A} \rangle \}_{A\in\Ags}$ be a guarding mechanism for $\cM$.
%\todo{Check definitions}
In a finite-memory setting, the \emph{goal strategy} $\goal{A}:\fpath{}^\cM\rightarrow \dist{\powerset{\Goal{A}}}$ can be instantiated as follows: for any $x\subseteq \Goal{A}$ and  $\rho\in \fpath{}^\cM$, we set
\[
\goal{A}(\rho)(x) =\frac{\gsynth{A}(x)(\rho)}{\sum_{x\in \lgoal{A}(\last(\rho))}\gsynth{A}(x)(\rho)}.
\]

Likewise, the \emph{intention strategy} $\intn{A}:\fpath{}^\cM\rightarrow \dist{\Intn{A}}$ can be instantiated as follows: for any $x\in \Intn{A}$ and $\rho\in \fpath{}^\cM$, we set
\[
\intn{A}(\rho)(x) = \frac{\isynth{A}(x,y)(\rho)}{\sum_{x\in \lintn{A}(\last(\rho))}\isynth{A}(x,y)(\rho)},
\]
where $y = \gs_A(last(\rho))$.
\end{definition}

This definition allows us to transform the model $\cM$ into $\synth{}(\cM)$, which, besides the cognitive mechanism $\Cognition$, contains the cognitive strategy of each agent. Once $\synth{}(\cM)$ has been precomputed, model checking a given specification can be carried out on $\synth{}(\cM)$ instead of $\cM$. We refer to this approach as \textit{pro-attitude synthesis before model checking}. Besides formalising the intuition that agent's pro-attitudes are affected by its beliefs, this approach makes the interaction of agents' beliefs and trust possible, without resorting to nested beliefs.

\begin{example}\label{example:trustgameguardfn}
%(Example~\ref{example:trustgamebelief} continued)
We show how a cognitive strategy can be constructed from the guarding mechanism for Bob in the trust game $\trustgame$. Since we assume that goals of agents are static throughout the execution of the game, we only concern ourselves with the intention guard function. We recall our informal assumption that Bob's intention will be \emph{share} when he is an investor and his belief in Alice being active is sufficient, and \emph{keep} otherwise. We formalise it as follows:
\begin{alignat*}{2}
\intnGd{Bob}(share, \{ investor \}) &= \B_{Bob}^{>0.7} active_{Alice}, \\
\intnGd{Bob}(keep, \{ investor \}) &= \neg \B_{Bob}^{>0.7} active_{Alice}, \\
\intnGd{Bob}(share, \{ opportunist \}) &= \bot, \\
\intnGd{Bob}(keep, \{ opportunist \}) &= \top,
\end{alignat*}
where $active_{Alice}$ holds in states in which Alice's goal is \emph{active} and we used a value $0.7$ to represent Bob's belief threshold.
%Note that we used a somewhat arbitrary value $0.7$ to represent Bob's belief threshold. While in this work we don't concern ourselves with providing practical methods of computing such values, we recognise it's a non-trivial...

We now synthesize Bob's intention strategy. We let $\rho_1 = s_0s_1s_3s_8$ and $\rho_2 = s_0s_2s_5s_{12}$.
%, $\rho_3 = s_0s_1s_4s_{10}$ and $\rho_4 = s_0s_2s_6s_{14}$. 
We recall from Example~\ref{example:equivClasses} that $\obs{Bob}(\rho_1) = \obs{Bob}(\rho_2)$
%and $\obs{Bob}(\rho_3) = \obs{Bob}(\rho_4)$ 
and we let $o_1$ denote that common observation. By Example~\ref{example:trustgamebelief}:
\begin{align*}
\be_{Bob}(o_1,\rho_1) &= 1/7, \\
\be_{Bob}(o_1,\rho_2) &= 6/7.
\end{align*}
Therefore, since $\trustgame, \rho_1 \models \neg active_{Alice}$ and $\trustgame, \rho_2 \models active_{Alice}$ (below and in what follows, $k \in \{1,2\}$):
\[
\trustgame, \rho_j \models \B_{Bob}^{=6/7} active_{Alice}.
\]
Hence
\begin{align*}
\isynth{Bob}(share, \{investor\})(\rho_k) &= 1, \\
\isynth{Bob}(keep, \{investor\})(\rho_k) &= 0,
\end{align*}
and so:
\begin{align*}
\intn{Bob}(\rho_k)(share) &= 1, \\
\intn{Bob}(\rho_k)(keep) &= 0.
\end{align*}
Likewise, letting $\rho_3 = s_0s_1s_4s_{10}$, $\rho_4 = s_0s_2s_6s_{14}$ and $k \in \{3,4\}$ one can show that:
\begin{align*}
\intn{Bob}(\rho_j)(share) &= 0, \\
\intn{Bob}(\rho_j)(keep) &= 1.
\end{align*}
As expected, if Bob is an \emph{opportunist}, he keeps the money for himself. However, when he is an \emph{investor} and Alice invests the money with him, his belief that Alice is \emph{active} is high enough for him to share the profits, hoping to gain Alice's trust.
\hfill $\Box$
\end{example}

Finally, we remark that, while it may not be immediately clear how powerful pro-attitude synthesis is after considering a simple example such as this, full appreciation of it can be gained by applying it to more complex systems. One such system is provided by an iterated trust game, in which Alice and Bob interact with each other repeatedly, in which beliefs and trust evolve dynamically. Rather than specifying cognitive strategies by hand for each iteration, pro-attitude synthesis enables one to encode them in a natural way via guard functions and generate them dynamically. Such an approach scales well to larger systems and closely resembles human reasoning.
%We encourage readers to explore such setting to get a better feel for our framework. 

\commentout{
\paragraph{\bf Update Preference Functions}\label{sec:prefupdate}

%We first define Markovian preference functions in system $\cM$:
%\begin{itemize}
%\item $\gpref{A}{B}^\current\!:\!S\rightarrow \dist{\powerset{\Goal{A}}}$ s.t. $\gpref{A}{B}^\current(s)(x)>0$ if $x\in \Goal{B}(s)(s)$
%\item $\ipref{A}{B}^\current\!:\! S\rightarrow \dist{\Intn{A}}$ s.t. $\ipref{A}{B}^\current(s)(x)>0$ if $x\in \Intn{B}(s)$
%\end{itemize}
%and let $\pref{A}^\current \equiv \{\gpref{A}{B}^\current, \ipref{A}{B}^\current~|~j\in\Ags\}$. As a result, the prerequisite assumption of observations for goals and intentions are updated into: if $\obs{A}(\rho_1)=\obs{A}(\rho_2)$ then $\Goal{A}(last(\rho_1))=\Goal{A}(last(\rho_2))$ and $\Intn{A}(last(\rho_1))=\Intn{A}(last(\rho_2))$. Markovian-style preference function is also studied in \cite{FH1996b}. However, in our framework, the functions need to be updated according to the result of the synthesis.

%
To ensure measurability after pro-attitude synthesis which may disable cognitive transitions, the preference functions $\gpref{A}{B}$, $\ipref{A}{B}$ need to have the same support as cognitive strategies, $\goal{B}$ and $\intn{B}$, respectively.
Therefore, we update $\gpref{A}{B}$ and $\ipref{A}{B}$ to $\gpref{A}{B}^\history$ and $\ipref{A}{B}^\history$ (where $\history$ stands for \emph{history}, representing that updated preference functions are history-dependent) with a normalisation operation over those cognitive transitions that are not excluded after the synthesis.
Specifically, we define:
\begin{itemize}
\item $\displaystyle\gpref{A}{B}^\history(\rho)(x)=\frac{\gpref{A}{B}(last(\rho))(x)\times \gsynth{B}(x)(\rho)}{\sum_{x\subseteq \Goal{A}}\gpref{A}{B}(last(\rho))(x)\times \gsynth{}(x)(\rho)}$,
\item   $\displaystyle\ipref{A}{B}^\history(\rho)(x)=\frac{\ipref{A}{B}(\rho)(x)\times \isynth{B}(x,y)(\rho)}{\sum_{x\in \Intn{A}}\ipref{A}{B}(\rho)(x)\times  \isynth{B}(x,y)(\rho)}$,
\end{itemize}
where $y = gs_B(last(\rho))$.

Intuitively, those pro-attitude transitions $x$  that are disabled on a path $\rho$, i.e., $\gsynth{B}(x)(\rho) = 0$,  will be removed from $\gpref{A}{B}(\rho)$
%, and if $\rho\notin \synth{}(x)$ for some $x\subseteq \Goal{A}$ then $\gpref{A}{B}^{*}(\rho)(x)=0$.
by letting $\gpref{A}{B}(\rho)(x)=0$.
The probability space construction from Section~\ref{sec:preferencefuncs} can be adapted to work with the updated preference functions $\pref{A}^\history$. Therefore, we have probability spaces for both pro-attitude synthesis and model checking.
}

\commentout{
\paragraph{\bf Update Preference Functions}\label{sec:prefupdate}
During the execution of the system, agents learn new information about other agents by observing their actions. As a result, their understanding of others' behaviour, motivations and cognitive state increases. A natural way of representing gaining this new knowledge by agents is to update their preference functions. Therefore, we define $\gpref{A}{B}^\history$ and $\ipref{A}{B}^\history$ (where $\history$ stands for \emph{history}, representing that updated preference functions are history-dependent) with a normalisation operation over those cognitive transitions that are not excluded after the synthesis.
Specifically, we define:

\begin{itemize}
\item $\displaystyle\gpref{A}{B}^\history(\rho)(x)=\frac{\gpref{A}{B}(last(\rho))(x)\times \gsynth{B}(x)(\rho)}{\sum_{x\subseteq \Goal{A}}\gpref{A}{B}(last(\rho))(x)\times \gsynth{}(x)(\rho)}$,
\item   $\displaystyle\ipref{A}{B}^\history(\rho)(x)=\frac{\ipref{A}{B}(\rho)(x)\times \isynth{B}(x,y)(\rho)}{\sum_{x\in \Intn{A}}\ipref{A}{B}(\rho)(x)\times  \isynth{B}(x,y)(\rho)}$,
\end{itemize}
where $y = gs_B(last(\rho))$.

Intuitively, those pro-attitude transitions $x$  that are disabled on a path $\rho$, i.e., $\gsynth{B}(x)(\rho) = 0$,  will be removed from $\gpref{A}{B}(\rho)$
%, and if $\rho\notin \synth{}(x)$ for some $x\subseteq \Goal{A}$ then $\gpref{A}{B}^{*}(\rho)(x)=0$.
by letting $\gpref{A}{B}(\rho)(x)=0$.
The probability space construction from Section~\ref{sec:preferencefuncs} can be adapted to work with the updated preference functions $\pref{A}^\history$. Therefore, we have probability spaces for both pro-attitude synthesis and model checking.

\begin{remark}
The above definition uses the $\isynth{B}$ function of agent $B$ to update preferences of agent $A$. This is not realistic, since $\isynth{B}$ represents cognitive reasoning of $B$, which in general is not observable to $A$. We employ such simplification to avoid the added complexity of nested beliefs, which a more sensible notion of preference function update would involve. 
\end{remark}
}

% -------------------------    new pref fn update begin -------------------------%

\paragraph{\bf Preference Functions Update}\label{sec:prefupdate}
During the execution of the system, agents learn new information about other agents by observing their actions. As a result, their understanding of others' behaviour, motivations and cognitive state increases. This new knowledge gained by agents is reflected in their belief function, which could be used to update agents' preference functions. To do that, we extend guarding mechanism of each agent $A$ by introducing guard functions of $A$ over $B$ for every other agent $B$.
%We present a sketch of such a construction below, but omit details due to space constraints. 
In particular, for agents $A$ and $B$, we define a goal (resp. intention) guard function $\goalGd{A,B}$ (resp. $\intnGd{A,B}$) of $A$ over $B$ in a similar way as in Definition~\ref{def:guardfns}, and then the goal (resp. intention) evaluation function $\gsynth{A,B}$ (resp. $\isynth{A,B}$) of $A$ over $B$. Intuitively, $\goalGd{A,B}$ (resp. $\intnGd{A,B}$) captures $A$'s expectation of how $B$ updates their mental attitudes. In case $A$ does not possess such information, $\goalGd{A,B}$ (resp. $\intnGd{A,B}$) are trivial. We then define $\gpref{A}{B}^\history$ and $\ipref{A}{B}^\history$ (where $\history$ stands for \emph{history}, representing that updated preference functions are history-dependent) by weighting the state-defined preferences with respect to the evaluation functions as follows:

\begin{itemize}
\item $\displaystyle\gpref{A}{B}^\history(\rho)(x)=\frac{\gpref{A}{B}(last(\rho))(x)\times \gsynth{A,B}(x)(\rho)}{\sum_{x\subseteq \Goal{A}}\gpref{A}{B}(last(\rho))(x)\times \gsynth{A,B}(x)(\rho)}$,
\item   $\displaystyle\ipref{A}{B}^\history(\rho)(x)=\frac{\ipref{A}{B}(\rho)(x)\times \isynth{A,B}(x,y)(\rho)}{\sum_{x\in \Intn{A}}\ipref{A}{B}(\rho)(x)\times  \isynth{A,B}(x,y)(\rho)}$,
\end{itemize}
where $y = gs_B(last(\rho))$.

Intuitively, the preference functions of agent $A$ over $B$ will be adjusted with respect to its current understanding of the system execution and the knowledge it has acquired throughout, represented by a $\Lang_A(\PRTLSf)$ formula in the guarding functions $\goalGd{A,B}$ and $\intnGd{A,B}$.
The probability space construction from Section~\ref{sec:preferencefuncs} can be adapted to work with the updated preference functions $\pref{A}^\history$. Therefore, we have probability spaces for both pro-attitude synthesis and model checking.

%\begin{remark}
%In a simple scenario, one might assume that $\goalGd{A,B} = \goalGd{B}$ (and $\intnGd{A,B} = \intnGd{B}$) for any agents $A$ and $B$, i.e., that agents know each others' guarding mechanisms. In that case, $\gsynth{A,B}$ (resp. $\isynth{A,B}$) may be replaced by $\gsynth{B}$ (resp. $\isynth{B}$) in the above definition. In fact, for simplicity, we make that assumption in the proofs of the complexity results in the Appendices.
%\end{remark}

% ---- ---------------------    new pref fn update end -------------------------%

\begin{example}
After investing her money with Bob, Alice might be concerned whether she can trust Bob to share his profit with her. Such trust can be formally expressed by a formula $\phi = \DT_{Alice,Bob}^{\geq ?}\psi$, where $\psi = \next (a_{Bob} = share)$ and $\geq$ indicates that Alice wants to find a lower bound on such trust value. We assume that Alice is \emph{active} and let $\rho_1 = s_0s_2s_5s_{12}$, $\rho_2 = s_0s_2s_6s_{14}$, $o_1 = \obs{Alice}(\rho_1) = \obs{Alice}(\rho_2)$ and $k \in \{1,2\}$. We use Alice's preference function defined in Example~\ref{example:trustgamepref}, which we recall to be: 
\[
\gpref{Alice}{Bob}(s_k) = \langle investor \mapsto 1/2, opportunist \mapsto 1/2 \rangle.
\]
From that, we easily compute Alice's belief upon observing $o_1$ to be:
\begin{align*}
\be_{Alice}(o_1, \rho_1) &= 1/2, \\
\be_{Alice}(o_1, \rho_2) &= 1/2.
\end{align*}

We now compute $V_{\DT,\trustgame,Bob,\psi}^{\geq}(\rho_k)$. We recall from Example~\ref{example:trustgameguardfn} that, in state $s_{10}$, Bob's only possible intention change is $Bob.i.keep$, and, in state $s_{14}$, his only possible intention change is $Bob.i.share$. Since $\probm{\trustgame,Bob.i(\rho_k,keep)}(\psi) = 0$ and $\probm{\trustgame,Bob.i(\rho_k,share)}(\psi) = 1$, we obtain:
\begin{align*}
V_{\DT,\trustgame,Bob,\psi}^{\geq}(\rho_1) &= 0, \\
V_{\DT,\trustgame,Bob,\psi}^{\geq}(\rho_2) &= 1.
\end{align*}
Therefore:
\[
\trustgame, \rho_k \models \DT_{Alice,Bob}^{\geq ?}\psi = 1/2.
\]
Hence, Alice can trust Bob with probability at least 50\% in his willingness to share his profit with her. 
%In fact, since Bob has only one \possible\ intention in states $s_{10}$ and $s_{14}$, $\DT_{Alice,Bob}^{\leq ?}\psi$ is also equal to $1/2$, which gives the unique value of dispositional trust that Alice should have towards Bob sharing his profit with her.

Note that an alternative way of formulating the trust we wish to compute is to use competence trust instead, in which case $\phi = \CT_{Alice,Bob}^{\geq ?}\psi$. Most of the computations stay the same in that case, the exception being the $V$-functions:
\begin{align*}
V_{\CT,\trustgame,Bob,\psi}^{\geq}(\rho_1) &= 1, \\
V_{\CT,\trustgame,Bob,\psi}^{\geq}(\rho_2) &= 1.
\end{align*}

Therefore:
\[
\trustgame, \rho_k \models \CT_{Alice,Bob}^{\geq 1}\psi.
\]

Hence, Alice can trust Bob with certainty on his capability to share his profit with her. We remark that the difference in the value of competence trust and disposition trust is as expected -- intuitively, Alice knows for sure that Bob is capable of sharing his profit with her, but she cannot be sure that he is willing to do that. 

Finally, we consider Alice's \emph{belief} that Bob will share money with her. We let $\phi = \B_{Alice}^{=?}\psi$ %(note that we use $=$ rather than $\bowtie$ since belief has unique value in systems satisfying Assumption~\ref{assump:detbeh})
and recall Alice's intention preference function over Bob defined in Example~\ref{example:trustgamepref}:
\begin{alignat*}{2}
\ipref{Alice}{Bob}(s_i) & = \langle share \mapsto 3/4, keep \mapsto 1/4 \rangle && \quad \text{ for } i \in \{8,12\}, \\
\ipref{Alice}{Bob}(s_i) & = \langle share \mapsto 0, keep \mapsto 1 \rangle && \quad \text{ for } i \in \{10,14\}. 
\end{alignat*}

Hence, after Bob settles his intention, Alice's belief is as follows:
\begin{align*}
\be_{Alice}(\rho_1 s_{19}) &= 3/8, \\
\be_{Alice}(\rho_1 s_{20}) &= 1/8, \\
\be_{Alice}(\rho_2 s_{21}) &= 0, \\
\be_{Alice}(\rho_2 s_{22}) &= 1/2.
\end{align*}

Therefore:
\[
\trustgame, s_0s_2 \models \B_{Alice}^{=3/8}\psi.
\]

Intuitively, due to inaccuracy of Alice's prior knowledge about Bob (encoded in preference functions), her \emph{belief} that Bob will share his profit with her differs from her \emph{trust} towards Bob sharing the money. 
%Note that an alternative way of formulating our \todo{something's missing?} 
\hfill $\Box$
\end{example}

\commentout{
\begin{example}\label{example:compProAtt}
(Continue with Example~\ref{example:proAtt} by considering pro-attitude synthesis.)
Let $\psi_1\equiv \theta_1 \in \Phi_\con^{\{pp\}}$ and $\psi_2\equiv \theta_1 \in \Phi_\con^{\{ps\}}$. Let $\rho_1=s_1^{1,2}s_2^{1,2}s_6^{1,2}$ be a  path with $\obs{\adv}(\rho_1)=(\acop,\co,\co)(\acop,\co,\co)(\acop,\co,\co)$. There are three other paths with the same observations: $\rho_2=s_1^{2,2}s_2^{2,2}s_6^{2,2}$, $\rho_3=s_1^{3,2}s_2^{3,2}s_6^{3,2}$, and $\rho_4=s_1^{4,2}s_2^{4,2}s_6^{4,2}$. We have that  $\be_\adv(\rho_1)=9/213, \be_\adv(\rho_2)=4/213, \be_\adv(\rho_3)=\be_\adv(\rho_4)=100/213$. Also, we have that $Prob(\cM,\rho_1,\psi_1)=Prob(\cM,\rho_2,\psi_1)=1$ and  $Prob(\cM,\rho_3,\psi_1)=Prob(\cM,\rho_4,\psi_1)=0$. Therefore,
$$\sum_{\obs{A}(\rho') = \obs{A}(\rho_1)}(\be_A(\rho')\times Prob(\cM,\rho',\psi_1))=13/213\leq 0.7$$
 which means that the possibility of  changing intention into $\acop$ on path $\rho_1$ remains. That is, $\synth{}(\acop)(\rho_1)=1$.

 On the other hand, we have that $Prob(\cM,\rho_1,\psi_2)=Prob(\cM,\rho_2,\psi_2)=0$ and  $Prob(\cM,\rho_3,\psi_2)=Prob(\cM,\rho_4,\psi_2)=1$. Therefore,
 $$\sum_{\obs{A}(\rho') = \obs{A}(\rho_1)}(\be_A(\rho')\times Prob(\cM,\rho',\psi_2))=200/213 > 0.7$$
  which means that agent $\adv$'s intentional change into $\tftco$ on path $\rho_1$ is disabled. That is, $\synth{}(\tftco)(\rho_1)=0$. The intuition for this result can be seen from the second observation $(\acop,\co,\co)$, which suggests that agent $\con$ takes action $\co$ in the first round. However, we note that the strategies in $ \Phi_\con^{\{pp\}}$ take action $\de$ with a high probability (0.9 and 0.8, respectively) and strategies in $ \Phi_\con^{\{ps\}}$ take action $\co$. That is, it is more likely the strategy $\theta_1$ is in $ \Phi_\con^{\{ps\}}$.

Finally, by applying the computation in Definition~\ref{def:cognitivefunction}, we have that $\intn{2}(\rho_1)(\acop)=1$ and  $\intn{2}(\rho_1)(\tftco)=0$.

%Consider the path $\rho_5=s_1^{1,1}s_4^{1,1}s_{15}^{1,1}\in \fpath{T}^\cM$ with $\obs{\adv}(\rho_5)=(\tftco,\co,\co)(\tftco,\de,\co)(\tftco,\co,\de)$. There is another path with the same observations: $\rho_6=s_1^{2,1}s_4^{2,1}s_{15}^{2,1}$. We have that  $\be_\adv(\rho_5)=81/145, \be_\adv(\rho_6)=64/145$. Also, $Prob(\cM,\rho_5,\theta_1 \in \Phi_\con^{\{pp\}})=Prob(\cM,\rho_6,\theta_1 \in \Phi_\con^{\{pp\}})=1.0$. Therefore, $\sum_{\obs{A}(\rho') = \obs{A}(\rho_5)}(\be_A(\rho')\times Prob(\cM,\rho',\theta_1 \in \Phi_\con^{\{pp\}}))=145/145 > 0.7 $, which means that the intention $\acop$ is removed on the path $\rho_5$. The intuition for this case can be seen from the second observation $(\acop,\de,\co)$ which suggests that agent $\con$ takes the action $\de$ in the first round. None of the strategies in $\Phi_\con^{\{ps\}}$ can behave in such a way. Therefore, the agent $\adv$ believes that its opponent is playing a strategy in $\Phi_\con^{\{pp\}}$. On the other hand, we will have $\sum_{\obs{A}(\rho') = \obs{A}(\rho_5)}(\be_A(\rho')\times Prob(\cM,\rho',\theta_1 \in \Phi_\con^{\{ps\}}))=0 < 0.7 $ and therefore the intention $\tftco$ remains on the path $\rho_5$.

\end{example}

\begin{example}\label{example:compTrust}
(Continue with Example~\ref{example:compProAtt} by considering model checking trust formulas.)
Assume that agent $\con$ is concerned with whether it can trust its opponent with a certain probability, e.g., greater than 0.8, on taking the action $\co$ in the next round. This can be expressed as a formula $\phi_4\equiv \DT_{\con,\adv}^{> 0.8}\psi_3$ with $\psi_3\equiv \next(a_\adv = \co)$.

Let $\rho_5=s_1^{1,1}s_3^{1,1}$ be a path with $\obs{\con}(\rho_5)=(\tft,\co,\co)(\tft,\co,\de)$. There is another path with the same observations: $\rho_6=s_1^{1,2}s_3^{1,2}$. We have  $\be_\adv(\rho_5)=1/3$ and $\be_\adv(\rho_6)=2/3$. It can be computed that the intentional change into $\tftco$ is disabled on both paths after the synthesis of function $\be_\adv^\history$. With this, we have that $\sup_{x\in\intn{\adv}(\rho)} Prob(\cM,\adv.i(\rho,x),\psi_3)=0.8$  for both $\rho\in\{\rho_5,\rho_6\}$. Therefore, we have
$$\sum_{\obs{A}(\rho') = \obs{\con}(\rho_5)} (\be_\con(\rho') \times v_D^{>}(\cM,\rho',\adv, \psi_3))=0.8 \not >  0.8.$$
On the other hand, consider the path $\rho_7=s_1^{3,1}s_3^{3,1}$ with $\obs{\con}(\rho_7)=(\tft,\co,\co)(\tft,\co,\de)$. There is another path with the same observations: $\rho_{8}=s_1^{3,2}s_3^{3,2}$. We have $\be_\adv(\rho_7)=1/3$ and $\be_\adv(\rho_{8})=2/3$. It can be established that the intentional change into $\acop$ is disabled on both paths after the synthesis of function $\be_\adv^\history$. With this, we have that $\inf_{x\in\intn{\adv}(\rho)} Prob(\cM,\adv.i(\rho,x),\psi_3)=0.9$ for both $\rho\in \{\rho_7,\rho_8\}$. Therefore, we have
$$\sum_{\obs{A}(\rho') = \obs{\con}(\rho_7)} (\be_\con(\rho') \times v_D^{>}(\cM,\rho',\adv,\psi_3))=0.9  >  0.8.$$

Based on the above, we have that $\cM\not\models \forall\always\phi_4$, i.e., agent $\con$ cannot always trust its opponent on its willingness to take action $\co$ with probability greater than 0.8. However, we have that $\cM\models \exists\eventually \phi_4$, which expresses that such trust is possible.

Moreover, we can verify that $\cM\models \exists\eventually (\neg \DT_{1,2}^{\geq 0.9}\psi_3 \land \overline{\I_\con} \DT_{1,2}^{\geq 0.9}\psi_3)$, which states that in future (e.g., on path $\rho_5$) agent $\con$ cannot trust agent $\adv$ with probability no less than 0.9 on its willingness to do $\psi_3$, but it has an intentional change (into $\tft$) that can ensure such trust.

\end{example}

\begin{example}
(Continue with Example~\ref{example:compTrust} by considering the update of preference functions.)
%In this example, we consider the definition of preference functions and their impact on the evaluation of beliefs and trusts.
Assume that $\ipref{\con}{\adv}^\current(s)(\tftco)=0.8$ and $\ipref{\con}{\adv}^\current(s)(\acop)=0.2$ for all states $s$. This reflects that agent $\con$ has some prior knowledge that its opponent is more probable, or is more rational, to pursue profits than selflessness. Consider the formula $\I_\adv \phi_4$ on path $\rho_7$. As discussed, the only intentional change can be $\tftco$. Therefore, after synthesis, we have $\ipref{\con}{\adv}(s)(\tftco)=1$ and $\ipref{\con}{\adv}(s)(\acop)=0$.
Then checking of $\I_\adv\phi_4$ on $\rho_7$ is equivalent to checking of $\phi_4$ on path $\rho_{9}=s_1^{3,1}s_3^{3,1}s_3^{3,1}$, which has another path $\rho_{10}=s_1^{3,2}s_3^{3,2}s_3^{3,1}$ with the same observations. Then, we have $\be_\adv(\rho_{9})=1/3$ and $\be_\adv(\rho_{10})=2/3$, and eventually the dispositional trust value is $0.9 > 0.8$. Therefore, $\cM,\rho_7\models \I_\adv \phi_4$.

%On the other hand, if we assume that there is no history-based intentional function $\intn{\adv}^\history$, then both intentional changes into $\tftco$ and $\acop$ are possible on the path $\rho_7$. We have $\pref_{\con,\intn{\adv}}(s)(\tftco)=\pref_{\con,\intn{\adv}}^\current(s)(\tftco)=0.8$ and $\pref_{\con,\intn{\adv}}(s)(\acop)=\pref_{\con,\intn{\adv}}^\current(s)(\acop)=0.2$. The checking of $\I_\adv\phi_4$ on $\rho_7$ is equivalent to the checking of $\phi_4$ on the paths $\rho_{9}=s_1^{3,1}s_3^{3,1}s_3^{3,1}$ and $\rho_{11}=s_1^{3,1}s_3^{3,1}s_3^{3,2}$. Both of them have the same observations with other two paths $\rho_{10}$ and $\rho_{12}=s_1^{3,2}s_3^{3,2}s_3^{3,2}$. It can be computed that $\be_\con(\rho_9)= 4/15$, $\be_\con(\rho_{10})= 8/15$, $\be_\con(\rho_{11})= 1/15$, and $\be_\con(\rho_{12})= 2/15$. Moreover, $\bigotimes_{x\in\intn{\adv}(\rho)} Prob(\cM,\rho'',\next(a_\adv = \co))=0.8$ for all $\rho\in \{\rho_9,\rho_{10},\rho_{11},\rho_{12}\}$ such that $\rho''\equiv \rho last(\rho)[x/\is_\adv]$.

\end{example}
}

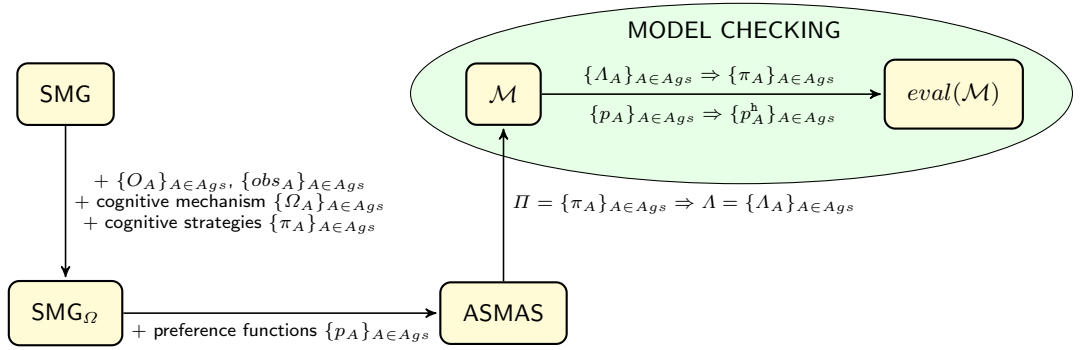
\begin{figure}
\begin{center}
%\Description{A state transition diagram with five states: $\SMG \to \SMGC \to \AutoSMAS \to \cM \to eval(\cM)$, where the last transition represents the model checking step.}
\begin{tikzpicture}[
  font=\sffamily,
  every matrix/.style={ampersand replacement=\&,column sep=4.2cm,row sep=2cm},
  source/.style={draw,thick,rounded corners,fill=yellow!20,inner sep=.3cm},
  process/.style={draw,thick,circle,fill=blue!20},
  sink/.style={source,fill=green!20},
  datastore/.style={draw,very thick,shape=datastore,inner sep=.3cm},
  dots/.style={gray,scale=2},
  to/.style={->,>=stealth',shorten >=1pt,semithick,font=\sffamily\scriptsize},
  every node/.style={align=center}]

  \draw [black,fill=green!10] (3,1.45) ellipse (4.4cm and 1.2cm);
  \node[text width=5cm] at (2.9,2.3) {MODEL CHECKING};
  % Position the nodes using a matrix layout
  \matrix{
    \node[source] (smg) {\SMG}; 
    \& \node[source] (fasmas) {$\cM$};
    \& \node[source] (evalasmas) {$eval(\cM)$}; \\
      
    \node[source] (smgc) {\SMGC};
      \& \node[source] (asmas) {\AutoSMAS}; \& \\
  };

  % Draw the arrows between the nodes and label them.
  \draw[to] (smg) -- node[midway,right] {+ $\{\Obs{A}\}_{A\in\Ags}$, $\{\obs{A}\}_{A \in \Ags}$ \\ + cognitive mechanism $\{\Cognition_A\}_{A\in\Ags}$ \\ + cognitive strategies $\{\cstrat{A}\}_{A\in\Ags}$} (smgc);
  \draw[to] (smgc) -- node[midway,below] {+ preference functions $\{\pref{A}\}_{A\in\Ags}$} (asmas);
  \draw[to] (asmas) -- node[midway,right] {$\CStrategy = \{\cstrat{A}\}_{A\in\Ags} \Rightarrow \Guard = \{\Guard_A\}_{A\in\Ags}$} (fasmas);
  \draw[to] (fasmas) -- node[midway,below] {$\{\pref{A}\}_{A\in\Ags} \Rightarrow \{\pref{A}^\history\}_{A\in\Ags}$} node[midway,above] { $\{\Guard_A\}_{A\in\Ags} \Rightarrow \{\cstrat{A}\}_{A\in\Ags}$ } (evalasmas);
  
\end{tikzpicture}
\end{center}
\caption{Model progression and model checking diagram}
\label{fig:flowdiagram}
\end{figure}

\section{Model Checking Complexity}\label{sec:overviewcomplexity}
In the next two sections we consider model checking \AutoSMAS\ against \PRTLS\ formulas. We first present the verification procedure and then analyse its complexity. 

We redefine the autonomous stochastic multi-agent system of Definition~\ref{def:autosmas} by replacing an infinite structure $\CStrategy$ with a finite structure $\Guard$, defined in Section~\ref{sec:proupdate}.

Our automated verification framework accepts as inputs  an \AutoSMAS\ $\cM$ satisfying Assumption~\ref{assump:detbeh} and a \PRTLS\ specification formula $\phi$.
The verification procedure proceeds as follows:
\begin{enumerate}
\item Perform pro-attitude synthesis on the system $\cM$ to obtain $\synth{}(\cM)$. In the new system, guarding mechanism $\Guard$ is replaced by cognitive strategies, $\CStrategy$, of agents.  
(In this phase, functions $\be_A$ are based on preference functions $\pref{A}$.)
\item Update the preference functions $\pref{A}$ into $\pref{A}^\history$ according to the results of pro-attitude synthesis, as defined in Section~\ref{sec:prefupdate}.
%After this phase, we obtain a system $\cM'$ whose rationality mechanism is $\Rationality= \{\Obs{A},\obs{A}, \pref{A}\}_{A\in\Ags}$ and cognitive mechanism is $\Cognition=\{\goal{A},\intn{A}\}_{A\in\Ags}$.
\item Model check formula $\phi$ on the system $\synth{}(\cM)$. (In this phase, the belief functions $\be_A$ are based on preference functions $\pref{A}^\history$.)
\end{enumerate}

The checking of expressions $\cM,\rho\models \phi$ in Steps 1 and 3 relies on the corresponding preference functions $\pref{A}$ and $\pref{A}^\history$, respectively.
However, for simplicity of the notations, we still write them as $\cM,\rho\models \phi$. Note that cognitive strategies synthesised in Step 1 generally lack a precise finite representation. Remark~\ref{rem:proattitudesynthesis} discusses a possible approximate solution, while Appendix~\ref{sec:bounded} presents a restriction on the system which ensures that a finite representation for $\CStrategy$ exists.
Figure~\ref{fig:flowdiagram} summarises the differences between the models introduced in this paper and outlines the model checking process.
Unfortunately, the problem is undecidable in general for both the synthesis step and the model checking step. First of all,
%There exist reductions from the emptiness problem of probabilistic automata, which is known to be undecidable.
%
\begin{theorem}
The synthesis of 
%history-based 
pro-attitude functions whose formulas are in the language  $\Lang_A(\PRTLSf)$
%and contain no nested beliefs
is undecidable.
\end{theorem}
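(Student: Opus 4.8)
The plan is to establish undecidability by reduction from the \emph{emptiness problem} for probabilistic finite automata (PFA), which asks, given a PFA $\mathcal{A}$ over an alphabet $\Sigma$ and a threshold $\lambda \in (0,1)$, whether there exists a word $w \in \Sigma^*$ accepted with probability strictly greater than $\lambda$; this is undecidable by Paz's theorem. The key observation is that the belief function $\be_A$ of Section~\ref{sec:beliefsupdate}, which aggregates probabilities over observationally-equivalent paths via its Bayesian update, is exactly the mechanism needed to reproduce the forward (Rabin) distribution that a PFA maintains over its states while reading a word.

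Concretely, I would build an \AutoSMAS\ $\cM$ with two agents, an \emph{observer} $A$ and an \emph{environment} $B$. Global states encode a control state of $\mathcal{A}$ together with the letter last supplied; $B$ supplies the successive letters of an input word $w$ through a sequence of intention changes, while the probabilistic transition function $T$ (together with the intention preference $\ipref{A}{B}$) is set to mirror the stochastic transition matrices of $\mathcal{A}$. The observation function $\obs{A}$ is chosen so that $A$ sees the letters of $w$ but \emph{not} the underlying control state of $\mathcal{A}$; hence all runs of $\mathcal{A}$ on a common prefix of $w$ collapse into a single $\obseq{A}$-equivalence class. By the recursive characterisation of $\be_A$ (the belief-update theorem of Section~\ref{sec:beliefdefn}), $A$'s belief after reading $w$ assigns to each control state exactly the probability that $\mathcal{A}$ is in that state after reading $w$. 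Labelling the accepting control states with an atomic proposition $accept$, the belief formula $\B_A^{>\lambda}\,accept \in \Lang_A(\PRTLSf)$ then holds on a history $\rho$ if and only if the word $w(\rho)$ read along $\rho$ is accepted by $\mathcal{A}$ with probability $>\lambda$. This guard contains no temporal operator and sits inside a belief operator of $A$, so it is a legal $\Lang_A(\PRTLSf)$ formula.

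I would complete the reduction by attaching this guard to a dedicated pro-attitude: set $\goalGd{A}(x) = \B_A^{>\lambda}\,accept$ for a distinguished goal set $x$, and trivial guards elsewhere. By Definition~\ref{def:cognitivefunction}, the synthesised goal strategy then satisfies $\goal{A}(\rho)(x) > 0$ iff $\cM,\rho \models \B_A^{>\lambda}\,accept$ (for $x$ legal at $last(\rho)$), so the set $\{\,\rho : \goal{A}(\rho)(x)>0\,\}$ is exactly the (undecidable) set of histories encoding $\lambda$-accepted words. Since words of unbounded length are encoded by paths of unbounded length, any \emph{effective finite representation} of $\CStrategy = \synth{}(\cM)$ --- e.g.\ a finite-state transducer from observation histories to pro-attitude distributions, which is what model checking in Section~\ref{sec:overviewcomplexity} requires --- would let us decide, by a reachability search on that finite object, whether $x$ is ever enabled, and hence decide PFA emptiness. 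Thus decidability of synthesis would entail decidability of PFA emptiness; contrapositively, synthesis is undecidable.

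The \textbf{main obstacle}, and the part requiring the most care, is verifying that the construction genuinely meets the standing assumptions so that the belief machinery is well defined: in particular that $\cM$ satisfies the Uniformity Assumptions (\ref{assump:uni} and~\ref{assump:uniII}) and the Transition Type Distinguishability Assumption~\ref{assump:typeObs}, so that each $\obseq{A}$-class consists of paths of a single type and $\be_A$ is a genuine distribution, and that the Bayesian update coincides \emph{exactly} --- not merely up to normalisation --- with the PFA forward recursion. A secondary subtlety is pinning down the reading of ``finite representation'': I would phrase decidability of synthesis as the existence of an algorithm outputting such a representation, and then argue that the unbounded-length encoding forces this object to separate $\lambda$-accepted from non-accepted words, which is impossible. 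If a self-contained halting-problem argument is preferred, the same skeleton works by encoding the configurations of a deterministic two-counter (Minsky) machine instead of a PFA, with reachability of the halting configuration playing the role of acceptance and the probabilistic threshold taken degenerate (a Dirac belief with $\lambda = 0$), at the cost of a more intricate state encoding.
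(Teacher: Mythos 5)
Your proposal is correct and takes essentially the same approach as the paper: a reduction from the (strict) emptiness problem for probabilistic automata in which the automaton's control state is hidden from the agent so that its Bayesian belief over observationally equivalent paths reproduces the PFA forward distribution, and a pro-attitude is guarded by a threshold belief formula so that any effective synthesis would decide emptiness. The remaining differences are cosmetic -- the paper uses a single agent whose letters are generated by (uniformly normalised) temporal transitions and guards an \emph{intention} change between two labelled subsystems, whereas you use a second agent supplying letters via intention changes and guard a \emph{goal} -- but the key lemma and the undecidable source problem are identical.
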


A guarding mechanism $\Guard = \{ \langle \goalGd{A},\intnGd{A} \rangle \}_{A\in\Ags}$ is trivial if, for all agents $A$, we have $\goalGd{A}(x)=true$ for all $x\in  \powerset{\Goal{A}} $ and $\intnGd{A}(x,y)=true$ for all $x \in \Intn{A}$ and $y\in \powerset{\Goal{A}}$. 
%Second, 
If 
%there are no inputs of history-based pro-attitude functions, 
the guarding mechanism is trivial then the first two steps can be skipped, and the computation proceeds directly to model checking. 
In this case, we have 

\begin{theorem}
Model checking \PRTLS\ is undecidable, even if the guarding mechanism is trivial. 
%there are no inputs of history-based pro-attitude functions.
\end{theorem}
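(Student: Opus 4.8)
The plan is to reduce from the threshold emptiness problem for probabilistic finite automata (PFA): given a PFA $\mathcal A$ over an alphabet $\Sigma$ and a rational cut-point $\lambda\in(0,1)$, decide whether some finite word $w$ is accepted with probability $P_{\mathcal A}(w)>\lambda$. This problem is classically undecidable, and its essential features---a probability value compared against a fixed threshold while only the input letters, not the internal automaton state, are visible---are exactly what the probabilistic belief operator of \PRTLS\ supplies. From an instance $(\mathcal A,\lambda)$ I would construct, in polynomial time, a single-agent \AutoSMAS\ $\cM$ satisfying Assumption~\ref{assump:detbeh}, with a trivial guarding mechanism, together with a fixed \PRTLS\ formula $\phi$, so that $\cM\models\phi$ holds iff the $>\lambda$-language of $\mathcal A$ is non-empty.

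The construction keeps the cognitive dimension inert---each agent has a single goal and a single intention, so all cognitive transitions are forced---and carries out the whole computation in the temporal dimension, which under Assumption~\ref{assump:detbeh} is a Markov chain. From a ``read'' state the unique enabled action leads, with positive probability each, to one successor per letter $a\in\Sigma$ and to a dedicated ``stop'' successor; a letter outcome simultaneously updates the hidden automaton state according to the stochastic matrix $M_a$ of $\mathcal A$. The observation function $\obs{A}$ reveals the sequence of letters produced and the stop flag, but hides the automaton state. Consequently, for the finite history $\rho$ ending just after stopping on a word $w$, the $\obseq{A}$-class of $\rho$ is precisely the set of runs of $\mathcal A$ on $w$; the letter-generation probabilities cancel under the conditioning in the definition of $\be_A$, the class has total mass $1$ because each $M_a$ is stochastic, and so the belief mass $A$ places on the accepting states equals exactly $P_{\mathcal A}(w)$. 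Hence $\B_A^{>\lambda}\,\mathsf{acc}$ (with $\mathsf{acc}$ labelling final states) holds at such a history iff $P_{\mathcal A}(w)>\lambda$, and taking
\[
\phi \;=\; \exists\eventually\big(\mathsf{end}\wedge \B_A^{>\lambda}\,\mathsf{acc}\big)
\]
lets the existential temporal path quantifier enumerate all words as positive-probability branches of the letter generation, so that $\cM\models\phi$ iff some word is accepted above the cut-point.

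Since every guard function is set to $\mathsf{true}$, the guarding mechanism is trivial and the pro-attitude synthesis and preference-update steps of the verification pipeline become vacuous; the undecidability therefore sits entirely in the evaluation of a single belief formula, which is exactly the ``even if the guarding mechanism is trivial'' strengthening. The step I expect to be the main obstacle is establishing faithfulness of the encoding, i.e.\ proving that $\be_A$ computes exactly $P_{\mathcal A}(w)$ with no spurious normalisation: one must verify that conditioning on the observed word cancels the auxiliary letter-choice probabilities, that the indistinguishability class has total mass one, and that $\mathsf{acc}$, being a state proposition, contributes $\Exp_{\be_A}$ equal to the accepting belief mass (under Assumption~\ref{assump:detbeh} the value function in the belief operator reduces to $0$ or $1$ rather than to a strategy-quantified quantity). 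This in turn requires confirming that $\cM$ meets the standing hypotheses---Assumption~\ref{assump:detbeh}, the Uniformity Assumption, Transition Type Distinguishability (trivial here, as all productive transitions are temporal), and synchronous perfect recall---so that $\be_A$ is well defined on each class. Because $\cM$ and $\phi$ are produced effectively and $\cM\models\phi$ is equivalent to non-emptiness of the $>\lambda$-language, undecidability of the PFA threshold problem transfers to \PRTLS\ model checking.
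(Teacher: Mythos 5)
Your proposal is correct and follows essentially the same route as the paper's proof: a reduction from the undecidable PFA emptiness/threshold problem to a single-agent \AutoSMAS\ whose states pair input letters with hidden automaton states, whose observation function reveals only the letters so that conditioning in $\be_A$ cancels the letter-generation probabilities and makes the belief in the final-state proposition equal the word's acceptance probability, verified by a formula of the form $\exists\eventually\B_A^{>\lambda}\literal{final}$. The only differences are cosmetic (your explicit stop/end flag versus the paper's evaluation of the belief formula at every prefix, and the paper's uniform normalisation of the letter choice), so the two arguments are essentially identical.
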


We note that both of the above problems are undecidable even for formulas concerning beliefs of a single agent. Proofs can be found in Appendix~\ref{sec:general}.

%

%In the following three sections, we will explore the fragments of the general problem. In Section~\ref{sec:bounded}, we will investigate a fragment in which the specification formula $\phi$ only explores paths within a polynomial number of steps. We show that, the fragment is decidable in EXPTIME. In Section~\ref{sec:mc}, by assuming that the synthesis has been completed, we work with a fragment of the model checking by considering the restrictions on both the system and the formula. We reduce the problem to a linear programming problem, and show that the problem is in EXPTIME.
%In Section~\ref{sec:ordereffects}, we consider the synthesis problem by utilising the order effects to do approximation.

\section{Decidable Fragments}\label{sec:decidable}

\newcommand{\depth}{{\tt d}}

To counter the undecidability results, we
%\Marta{ These fragments need to be named, and also motivated through the types of specs that can be verified, eg by referring to the running example}
explore fragments of the general problem. 
%Due to the space limit, 
In this section, we only present proof ideas for the complexity results.
The details of the proofs can be found in the Appendix.
%In the paper, we consider two fragments, both of which are able to express the trust notion.
Table~\ref{tab:fragmentsComplexity} summarises complexity results for the proposed fragments.

\begin{table}
\begin{center}
\caption{Complexity of decidable fragments}
\label{tab:fragmentsComplexity}
\begin{tabular}{c|c}
\textbf{\BPRTLS} & PSPACE-hard\\
\hline
\textbf{\PRTLSS} & PSPACE-complete \\
\hline
\textbf{\PQRTLSS} & PTIME \\
\end{tabular}
\vspace{1em}
\end{center}
\end{table}

\begin{table}
\caption{Assumptions of decidable fragments}
\label{tab:fragmentsAssumptions}
\begin{center}
\begin{tabular}{ccc}
\toprule
\textbf{\BPRTLS} & \textbf{\PRTLSS} & \textbf{\PQRTLSS} \\
\midrule
no $\until$ or $\always$ operators & $\Guard$ is trivial & $\phi = \always(\psi\Rightarrow \prob{\bowtie q}{\eventually\B_{A}^{\geq 1}\psi})$  \\
%\cline{1-2}
all $\next$ preceded by $\prob{}{}$, $\forall$ or $\exists$  & $\B_A$, $\T_A$ for single $A$ only & or $\always(\psi\Rightarrow \prob{\bowtie q}{\eventually\T_{A,B}^{\geq 1}\psi})$ \\
%\hline
constant depth of $\B$, $\T$ nesting  & no nesting of $\B$ and $\T$ & \\
%\cline{1-2}
 & no $\B$ or $\T$ in scope of $\prob{}{}$ &  \\
%\cline{2-2}
 & constant \# of $\B$ and $\T$ &  \\
%\cline{2-2}
 & ``expanded system restriction'' &  \\
%\cline{2-2}
\bottomrule
\end{tabular}
\vspace{1em}
\end{center}
\end{table}

\subsection{A Bounded Fragment} The bounded fragment, named \BPRTLS, allows formulas which 1) do not contain temporal operators $\until$ and $\always$, 2) all $\next$ operators are immediately prefixed with a probabilistic operator or a branching operator, i.e., in a combination of $\prob{\bowtie q}{\next\psi}$, $\forall\next\psi$, or $\exists\next\psi$, and 3) the nested depth of belief and trust operators is
%\Marta{why not bounded?}
constant. The last constraint is needed to ensure that the complexity is not measured over the nested depth of belief and trust operators.
For the upper bound, we have the following result.
\begin{theorem}
The complexity of the bounded fragment \BPRTLS\ is in EXPTIME.
\end{theorem}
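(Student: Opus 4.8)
\emph{Overview.} The plan is to exhibit an explicit model-checking procedure for \BPRTLS\ and bound its running time by $2^{\mathrm{poly}(|\cM|,|\phi|)}$. The whole argument rests on a \emph{bounded look-ahead} property of the fragment, which I would isolate first. Define inductively a look-ahead depth $\depth(\cdot)$ on formulas: $\depth(p)=0$ and $\depth$ commutes with the Boolean connectives, taking the maximum over operands; $\depth(\forall\psi)=\depth(\prob{\bowtie q}{\psi})=\depth(\B_A^{\bowtie q}\psi)=\depth(\psi)$; the cognitive and trust operators each add one, $\depth(\G_A\phi)=\depth(\I_A\phi)=\depth(\C_A\phi)=1+\depth(\phi)$ and $\depth(\CT_{A,B}^{\bowtie q}\psi)=\depth(\DT_{A,B}^{\bowtie q}\psi)=1+\depth(\psi)$; and on path formulas $\depth(\next\psi)=1+\depth(\psi)$. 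Because the fragment forbids $\until$ and $\always$ and forces every $\next$ to be immediately guarded by $\prob{}{}$, $\forall$ or $\exists$, a routine induction over Definitions~\ref{def:semantics4} and~\ref{def:semantics5} shows that the truth of $\phi$ at a finite history $\rho$ is determined by the continuations of $\rho$ of length at most $\depth(\phi)$ together with the beliefs evaluated along them. Since each operator contributes at most one to the depth, $\depth(\phi)\le|\phi|$, so every history consulted while evaluating $\phi$ from an initial state has length at most $1+|\phi|$.

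\emph{Bounded unfolding and evaluation.} I would build the finite tree $\mathcal{T}$ of all finite paths of $\cM$ of length at most $1+|\phi|$ starting in $\supp(\initdist)$, each node carrying its temporal and cognitive successors (Definition~\ref{def:path}). Under Assumption~\ref{assump:detbeh} the temporal branching is deterministic, so $\mathcal{T}$ contains at most $|S|^{1+|\phi|}=2^{O(|\phi|\log|S|)}$ paths and is generated within the same bound. Pro-attitude synthesis (Step~1 of the verification procedure) is run first as a pass over $\mathcal{T}$: since each guard lies in $\Lang_A(\PRTLSf)$ and therefore contains no temporal operators, the evaluation functions $\gsynth{A}$, $\isynth{A}$ can be computed along the bounded histories using $\be_A$ based on the original $\pref{A}$, which fixes the cognitive strategies $\goal{A},\intn{A}$ of every agent on exactly the finitely many histories the subsequent pass consults---so the infinite strategies are never materialised. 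The formula $\phi$ is then evaluated bottom-up over $\mathcal{T}$. Atomic propositions, Boolean connectives, the cognitive operators $\G_A,\I_A,\C_A$ and the branching/probabilistic operators follow Definitions~\ref{def:semantics3} and~\ref{def:semantics5}; under Assumption~\ref{assump:detbeh}, Proposition~\ref{propn:simplified} reduces $\prob{\bowtie q}{\psi}$ to a bounded-horizon probability $\probm{\cM,\rho}(\psi)$ obtained by propagating transition probabilities up $\mathcal{T}$ in polynomial time per node.

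\emph{Belief, trust, and complexity.} For $\B_A^{\bowtie q}\psi$ at $\rho$ I would compute the belief distribution along the histories of $\mathcal{T}$ using the recursive characterisation of $\be_A$, then form the finite sum $\Exp_{\be_A}[V^{\bowtie}_{\B,\cM,\psi}]=\sum_{\rho'\in\eqclass(\obs{A}(\rho))}\be_A(\rho')\cdot\probm{\cM,\rho'}(\psi)$; by Assumption~\ref{assump:typeObs} the class $\eqclass(\obs{A}(\rho))$ is a finite set of paths of the same length already present in $\mathcal{T}$. The operators $\CT^{\bowtie q}_{A,B}$ and $\DT^{\bowtie q}_{A,B}$ are identical except for an additional $\sup$/$\inf$ of $\probm{\cM,B.i(\rho',x)}(\psi)$ over $\lintn{B}(\last(\rho'))$, respectively $\supp(\intn{B}(\rho'))$, each a single cognitive step still inside $\mathcal{T}$. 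Since $\mathcal{T}$ has $2^{O(|\phi|\log|S|)}$ histories and $\phi$ has $O(|\phi|)$ subformulas, and since the fragment caps the nesting depth of $\B$ and $\T$ by a constant $k$ so that the recursive unwinding of nested beliefs multiplies the work by at most $(|S|^{1+|\phi|})^{k}$, the total running time is $2^{\mathrm{poly}(|\cM|,|\phi|)}$, placing the problem in EXPTIME.

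\emph{Main obstacle.} The delicate point is the bounded-look-ahead claim: one must verify that belief and trust evaluation does not covertly enlarge the horizon. Evaluating $\B_A^{\bowtie q}\psi$ at a history $\rho$ of length $\ell$ consults only paths of the \emph{same} length $\ell$---its $\obseq{A}$-equivalence class, which by Assumption~\ref{assump:typeObs} consists of paths of a single type---before looking ahead by $\depth(\psi)$, and the extra cognitive step $B.i(\rho',x)$ introduced by a trust operator is exactly the $+1$ recorded in the cognitive clauses of $\depth$. A careful simultaneous induction on formula structure is therefore needed to certify that all consulted histories stay within length $1+|\phi|$ and that the recursion through nested beliefs bottoms out at the constant depth $k$. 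A secondary subtlety is that the belief \AutoSMAS\ $\Bel{A}(\cM)$ is in general infinite-state, so the procedure must avoid constructing it and instead evaluate $\be_A$ only along the finitely many histories of $\mathcal{T}$ through the recursive belief equations.
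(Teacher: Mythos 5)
Your proposal is correct and takes essentially the same route as the paper's proof: the identical depth function bounding the look-ahead to $\depth(\phi)+1$, an exponential-size bounded unfolding (polynomial in $|\cM|$, exponential in $|\phi|$) on which guards, beliefs and trust are evaluated over observation-equivalence classes, and a bottom-up labelling pass — the paper merely packages the unfolding as an expanded system $\cM^\#$ whose states record the joint observation histories of all agents, so that $\be_A$ becomes a state-local quantity computed from reachability probabilities, whereas you keep the raw tree of bounded-length paths and compute $\be_A$ per path via its recursive characterisation, which is the same computation organised differently. The one step you omit is Step 2 of the paper's verification procedure — updating the preference functions $\pref{A}$ to $\pref{A}^\history$ after synthesis and re-basing the belief functions on them before the final labelling pass — but this is a routine normalisation over the same bounded structure and does not affect the EXPTIME bound.
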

In Appendix~\ref{sec:bounded}, we present an EXPTIME algorithm which performs the verification procedure from Section~\ref{sec:overviewcomplexity}. It uses boundedness of the logic fragment to limit the length of paths that need to be considered to evaluate a guard or the specification formula. It then constructs an expanded system whose states capture all past observations of all agents, enabling belief and trust formulas to be evaluated locally. Model checking the original system is then reduced to model checking the expanded system, which admits a standard recursive labelling procedure, similar to the one used for \PCTL\ model checking.

\commentout{
Let $\cM$ be an \AutoSMAS\ with a set of agents $\Ags=\{1,...,n\}$,
%Given a state $s\in S$, we write $\Obs{}(s)=(\obs{1}(s),...,\obs{n}(s))$ for the tuple of agents' observations on state $s$.
$\phi$ be the specification formula and let $\depth(\phi)$ denote the maximal length of path needed to verify $\phi$. Recall that $\Obs{i}$ denotes the set of observations of agent $i$.
For $0\leq k\leq \depth(\phi)-1$, we let $O^k=\{\bot\}\cup (\Obs{1}\times ...\times \Obs{n})$ be the set of possible observations at time $k$, where $\bot$ denotes that agents have not made any observations at time $k$.
%and $\depth(\phi)$ denotes the maximal length of paths needed for the specification formula $\phi$. 
Then the size of $O^k$ is $O(S^{|\Ags|})$. The algorithm proceeds by constructing an expanded system $\cM^\#$ whose state space is $S\times O^0\times ...\times O^{\depth(\phi)-1}$. Therefore, the size of the state space is $O(S^{|\Ags|*\depth(\phi)})$, i.e., polynomial with respect to the size of the system but exponential with respect to the size of the formula.
Then we have an algorithm which can complete all three steps of the verification procedure in polynomial time with respect to the size of the system $\cM^\#$.
}
For the lower bound, we have the following result.
\begin{theorem}
The complexity of the bounded fragment \BPRTLS\ is PSPACE-hard.
\end{theorem}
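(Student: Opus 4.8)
The plan is to establish PSPACE-hardness by a polynomial-time many-one reduction from the validity problem for quantified Boolean formulas (TQBF), the canonical PSPACE-complete problem. Given an instance $\Phi = Q_1 x_1 \cdots Q_n x_n\, \varphi$ with $\varphi$ a CNF over $m$ clauses and $Q_i \in \{\exists,\forall\}$, I would construct in polynomial time an \AutoSMAS\ $\cM_\Phi$ satisfying Assumption~\ref{assump:detbeh} together with a \BPRTLS\ formula $\phi_\Phi$ lying in the bounded fragment, such that $\cM_\Phi \models \phi_\Phi$ iff $\Phi$ is valid. The central observation is that, although the bounded fragment forbids $\until$ and $\always$ and caps the nesting depth of the belief and trust operators $\B$ and $\T$ by a constant, it places no restriction on the cognitive operators $\G_A, \I_A, \C_A$; and since temporal branching collapses under Assumption~\ref{assump:detbeh} (the induced system is a Markov chain, cf.\ Proposition~\ref{propn:simplified}), the genuine source of unbounded alternation available to us is the nesting of cognitive operators over intention-change transitions.

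Concretely, I would encode the quantifier prefix by a depth-$n$ chain of cognitive operators: each variable $x_i$ is assigned by the cognitive transition of a dedicated agent $B_i$ whose two \legal\ (and \possible) intentions stand for the truth values of $x_i$, so that $\exists x_i$ is realised by $\C_{B_i}$ (equivalently $\overline{\I_{B_i}}$) and $\forall x_i$ by $\I_{B_i}$. The state space of $\cM_\Phi$ records only a phase counter and the index of the variable, clause or literal currently under consideration --- polynomially many bits --- while the chosen assignment is retained implicitly in the cognitive-transition history of the path, never materialised in a single state (which would force an exponential blow-up). The matrix $\varphi$ is then checked by the standard QBF-game verification gadget appended after the assignment phase: a universal cognitive operator lets the opponent pick a clause $C_j$, an existential one lets the proponent pick a literal $\ell \in C_j$, after which it remains to test that $\ell$ is consistent with the recorded assignment of its variable. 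All $\next$ operators in $\phi_\Phi$ are guarded (or absent), no $\until/\always$ occur, and only the final consistency test uses a belief operator, so the $\B/\T$-nesting depth is constant; hence $\phi_\Phi$ indeed lies in \BPRTLS.

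The heart of the construction --- and the step I expect to be the main obstacle --- is precisely this final consistency test, because the relevant variable's value is not stored in the current state but only in the past of the execution path, and \BPRTLS\ has no past or unbounded temporal operators to reach it. This is where partial observation becomes indispensable: by equipping each variable $x_i$ with an agent $A_i$ whose observation function reveals the assignment of $x_i$ (and the identity of the chosen literal) but erases the assignments of all other variables and the quantifier/clause structure, the belief function $\be_{A_i}$ concentrates, within the observation-equivalence class of the real path, exactly on those paths that agree on $x_i$. A single belief formula $\B_{A_i}^{\bowtie q}(\cdot)$ with an appropriate threshold then recovers the bit $x_i$, and a polynomial disjunction $\bigvee_\ell (\mathsf{chosen}_\ell \wedge \mathsf{consistent}_\ell)$ over the $O(mn)$ literals --- each disjunct invoking one such belief operator --- performs the test with constant belief-nesting depth. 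The delicate points, which account for most of the technical work, are (i) defining the observation functions $\obs{A_i}$ and the preference functions so that Assumptions~\ref{assump:uni}, \ref{assump:uniII} and~\ref{assump:typeObs} hold and the conditional belief deterministically pins down $x_i$, and (ii) verifying by induction on the quantifier chain that the alternation of $\C/\I$ over cognitive transitions, together with the belief-based verification gadget, evaluates to \textbf{true} at the initial state exactly when $\Phi$ is valid. Establishing this equivalence yields PSPACE-hardness, complementing the EXPTIME upper bound.
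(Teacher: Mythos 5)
Your high-level strategy is the same family as the paper's own proof: a reduction from QBF in which the quantifier prefix is simulated by a chain of cognitive operators ($\overline{\I}$ for $\exists$, $\I$ for $\forall$) over intention-change transitions, the assignment is kept in the path history rather than in single states, and a belief test over observation-equivalent paths performs the final literal-versus-assignment consistency check. However, the step you yourself flag as the heart of the construction has a genuine gap. A belief formula $\B_{A_i}^{\bowtie q}\psi$ is evaluated as a belief-weighted expectation of the satisfaction probability of $\psi$ over the \emph{futures} of the paths in $A_i$'s observation-equivalence class (Definitions~\ref{def:semantics4} and~\ref{def:semantics5}); it can therefore depend on the value of $x_i$ only if the future states of \emph{some} path in that class carry labels that depend on $x_i$. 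But you insist the assignment is ``never materialised in a single state.'' Under that stipulation, every path in $A_i$'s class has a future that is independent of $x_i$, so $\B_{A_i}^{\bowtie q}\psi$ is constant in $x_i$ for every $\psi$ and cannot ``recover the bit''; making $\obs{A_i}$ reveal $x_i$ merely shrinks the equivalence class, it creates nothing for $\psi$ to test. The paper closes exactly this hole with its \emph{helper computations}: at the moment $x_h$ is set to $a$, the system branches with probability $0.5$ into a side computation, observation-indistinguishable from the main one for agent $\exists$, whose future state $v_{ha}$ is labelled $\{x_h,\literal{pos}\}$ or $\{x_h,\literal{neg}\}$; the test $\bigwedge_{i=1}^n\bigl((\B_\exists^{\geq 1}\forall\next(x_i\rimp \literal{pos}))\lor(\B_\exists^{\geq 1}\forall\next(x_i\rimp \literal{neg}))\bigr)$ then holds iff the sign of the chosen literal (labelled on the main branch's successor) agrees with the recorded value (labelled on the helper's successor). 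Materialising \emph{one} variable per helper state keeps the system polynomial, so your dichotomy ``materialise nothing or blow up exponentially'' is false, and dropping all materialisation breaks the test.

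A second, related problem is your use of $n$ dedicated assignment agents $B_1,\dots,B_n$: a cognitive transition $s\ctrans{B_i.i.x}s'$ yields $s'=B_i.i(s,x)$, which by definition substitutes only $B_i$'s intention and preserves every other agent's cognitive state (the paper makes this explicit in the remark following Definition~\ref{def:semantics3}). Hence a chain of $n$ such transitions --- which is exactly what a chain of cognitive operators appends, with no temporal step in between --- forces reachable states encoding all $2^n$ assignments, contradicting your claimed polynomial state space. The paper avoids this by reusing just two agents, $\exists$ and $\forall$, whose intentions are overwritten as successive variables are set; this also keeps the number of agents fixed, matching the convention under which the EXPTIME upper bound is established. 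To repair your argument you would need (i) the two-agent overwriting scheme in place of the $n$ assignment agents, and (ii) probabilistic, observation-indistinguishable recorder branches in place of the ``observation reveals $x_i$'' mechanism --- at which point you have reconstructed the paper's proof.
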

The proof is a reduction from the quantified Boolean formula problem. Given a QBF $\phi$, the reduction constructs an \AutoSMAS\ with two agents, $\forall$ and $\exists$, playing a game of two phases. In the first phase, the agents decide the truth values of the variables (atomic propositions) of $\phi$ by changing their intentions. Then, in the second phase, the agent $\exists$ attempts to show that a randomly chosen clause of $\phi$ evaluates to true by proving that one of its literals is true. If they can do that regardless of which clause gets selected, then they have a winning strategy in the game, which is equivalent to $\phi$ being true. The challenge lies in designing the system so that agents' decisions from phase one are ``remembered'' until phase two, and devising an appropriate formula to express the existence of a winning strategy for $\exists$.
The details of the proof are in Appendix~\ref{sec:bounded}.

\commentout{
The proof is a reduction from the QBF problem. In the reduced system, there are two agents, $\forall$ and $\exists$, playing a game of two phases. In the first phase, the agents decide the truth values of the variables by changing their intentions, and then, in the second phase, the agent $\exists$ justifies its choices by claiming that the game can always move into situations where it has  qualitative beliefs about some propositions. In the reduction, we cannot encode agents' decisions into the states, because that is not a PTIME reduction. Instead, we take advantage of agent $\exists$'s partial observations to work on a set of game plays which it is not able to distinguish. Because the set of game plays encodes the decisions made by both agents, agent $\exists$ can justify its choices by testing its qualitative beliefs.
The details of the proofs are in Appendix~\ref{sec:bounded}.
}

\subsection{A Fragment with $\until$ and $\always$ operators}

%While we achieve decidability in the bounded fragment, the fragment cannot express long-term properties.
The fragment of this section, named \PRTLSS, works with the $\until$ and $\always$ temporal operators in the specification formula $\phi$. With $\until$ and $\always$ operators, the specification formula can express long-run properties about agent's mental attitudes in the system, and therefore this fragment complements the bounded fragment. However, the fragment is subject to other restrictions as follows.

\begin{itemize}
\item 
%There cannot be inputs of history-based functions $\goalGd{A}$ and $\intnGd{A}$ for $A\in\Ags$. 
The guarding mechanism $\Guard $ is trivial. 
That is, the algorithm works with the model checking step without conducting pro-attitude synthesis first.
\commentout{
\footnote{The pro-attitude synthesis problem is strictly harder than model checking for long-run properties because it needs a finite representation for the set $\synth{}(x)$ of paths, and for this reason this fragment and \PRTLSS\ do not handle synthesis. In a companion paper,
%\Marta{ do you mean [38]?}
we aim to formulate an approach based on order effects~\cite{Asch1946,RNLO2007,HE1992}
%
%\cite{Asch1946,RNLO2007,HE1992,WJZ2006}
 to approximate the set $\synth{}(x)$. The focus of this paper is on approaches that can be computed precisely. }
 }
%The synthesis of belief formulas is strictly harder when considering the $\until$ and \always\ temporal operators, and we will in Section~\ref{sec:ordereffects} discuss an approximation approach based on the order effects.
%In this section, we assume that the mental functions for agents' pro-attitudes are defined with the Markovian functions $\Goal{A}$ and $\Intn{A}$, for all agents $i\in\Ags$.
%, i.e., the definitions $\goal{A}^{\history,k}$ and $\intn{A}^{\history,j}$ depending on agents' beliefs are not allowed.

\item The specification formula $\phi$ is restricted so that 1) it works with
%\Marta{but the other agents are treated as env?}
a single agent's beliefs and trust,
\commentout{\footnote{We note that, although this fragment considers beliefs and trust of a single agent, the autonomous system may contain multiple agents. }} 2) there are no nested beliefs or trust, 3) beliefs and trust cannot be in the scope of a probabilistic operator $\prob{}{}$, and 4) there is a constant number of belief or trust operators.

%\Marta{Improve description}
\item The expanded system, described below, satisfies %a restriction about agent $A$'s beliefs along its transitions.
%\Marta{ how expanded? perhaps formalise?}
%on which each state maintains a probabilistic distribution over the set of states and an evaluation for every belief or trust subformula.
the restriction
 that, along any path of the expanded system, the evaluation of belief or trust subformulas for agent $A$
are the same on any two expanded states if the support of the probabilistic distributions over the set of states are the same.
%The restriction is that along any paths of the expanded system, a probabilistic distribution over the set of states is the same on any two expanded states if the support of the probabilistic distribution is the same.

%A restriction on the system about agent $A$'s beliefs along the transitions. The restriction, whose details will be given later, is to guarantee that the probability of a belief will not change in an SCC of an expanded system.
\end{itemize}
\begin{remark}\label{rem:proattitudesynthesis}
We remark that the pro-attitude synthesis problem is strictly harder than model checking for long-run properties because it needs a finite representation for the set $\synth{}(x)$ of paths. For this reason, this fragment and \PRTLSS\ do not handle synthesis. Future work may involve formulating an approach based on order effects~\cite{Asch1946,RNLO2007,HE1992} to approximate the set $\synth{}(x)$. The focus of this paper is on approaches that can be computed precisely. 
\end{remark}

The restriction on the expanded system is essential. It results from an investigation into the undecidability, which has established that the undecidable cases arise because of the existence of non-monotonic flows of probability between states in two consecutive visits of the same expanded state. By assuming stability (as we do in this fragment) or monotonicity of the flows, decidability can be achieved.
Moreover, we note that this fragment has full expressiveness of LTL. In the following, we assume that it is the agent $A$'s beliefs that the  formula $\phi$ is to describe. %Without loss of generality, we assume that the branching time operator appearing in front of a temporal operator is $E$.
We have the following result for the upper bound.
\begin{theorem}
The complexity of the fragment \PRTLSS\ is in PSPACE.
\end{theorem}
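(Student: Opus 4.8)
The plan is to reduce the problem to two ingredients that can be composed within polynomial space: (i) probabilistic \PCTLS\ model checking over the \emph{polynomial-size} induced Markov chain, and (ii) \emph{qualitative} LTL model checking over an exponentially large but \emph{succinctly represented} expanded system. Throughout I rely on Assumption~\ref{assump:detbeh}, so that every induced \SMG\ is a Markov chain and, by Proposition~\ref{propn:simplified}, each quantity $\probm{\cM,\rho}(\psi)$ depends only on $\last(\rho)$ and is computed by ordinary Markov-chain model checking; I also use that $\Guard$ is trivial, so no pro-attitude synthesis is needed and verification reduces to the model-checking step.

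First I would treat every maximal belief/trust subformula $\chi_1,\dots,\chi_m$ (there are constantly many, and by the fragment's restrictions they are non-nested, for a single agent $A$, and never inside a $\prob{}{}$ operator) as a fresh Boolean atom $p_1,\dots,p_m$. For each $\chi_i$, whose inner path formula $\psi_i$ contains no belief/trust operators, I precompute the state-indexed value $g_i(\cdot)$: for a belief atom $g_i(s)=\probm{\cM,s}(\psi_i)$, and for the competence, resp.\ disposition, trust atoms $g_i(s)=\sup_x\,\probm{\cM,B.i(s,x)}(\psi_i)$, resp.\ $\inf_x$, ranging over the polynomially many legal, resp.\ possible, intention changes. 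Each $g_i(s)$ is a probabilistic LTL value on the induced chain and is computable in PSPACE by Table~\ref{tab:mcmc}. Likewise I precompute, as further Boolean atoms, the truth of the cognitive operators $\G_A,\I_A,\C_A$ (local to a state and its cognitive mechanism) and of every \emph{ordinary} $\prob{\bowtie q}{\cdot}$ subformula, again by \PCTLS\ model checking on the polynomial-size induced chain.

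Next I would make precise the expanded system $\cM^\#$ whose states are pairs $(s,Q)$ of a state $s$ and the support $Q\subseteq S$ of agent $A$'s current belief (the observationally indistinguishable states reachable with the same observation history), with transitions given by the belief-support update induced by $\TBel{A}$ of Section~\ref{sec:beliefasmas}. The crucial point, which is exactly what the expanded-system restriction buys, is that the truth of each $p_i$ depends only on $Q$: by the restriction, $\Exp_{\be_A}[\cdot]\bowtie q$ takes the same value for every belief distribution with support $Q$, so declaring $p_i$ true at $(s,Q)$ iff $\sum_{s'\in Q} b(s')\,g_i(s')\bowtie q$ for any (hence every) $b$ with $\supp(b)=Q$ is well defined and computable in polynomial space from $(s,Q)$ given the precomputed $g_i$. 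Thus $\cM^\#$ is a Markov chain whose states are polynomial-size, whose transitions are polynomial-time checkable, and all of whose atomic labels are PSPACE-computable; only its state \emph{count} is exponential.

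It then remains to model check the outer formula, which after this abstraction is a \PCTLS\ formula over Boolean atoms in which belief/trust atoms occur only under path quantifiers $\forall$ (equivalently $\exists$) and temporal operators, never inside a $\prob{}{}$. Every $\prob{}{}$ subformula is therefore ordinary and already resolved on the induced chain, so the only work left over $\cM^\#$ is \emph{qualitative}: deciding $\forall\Psi$ for $\Psi$ an LTL formula over Boolean atoms. This is standard LTL model checking over the underlying graph of $\cM^\#$, i.e.\ emptiness of the product of $\cM^\#$ with a nondeterministic B\"uchi automaton for $\neg\Psi$; since each product vertex is polynomial-size and its successors and labels are PSPACE-computable, an accepting lasso can be searched for nondeterministically in polynomial space, so by Savitch's theorem the check is in PSPACE, with the bounded number of nested state subformulas handled by recursion that reuses space. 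The main obstacle, and the place where all the fragment's restrictions are spent, is precisely keeping the computation polynomial in the original input despite the exponential size of $\cM^\#$: this hinges on the expanded-system restriction (making the belief/trust labels functions of the poly-size support), on beliefs/trust never lying under $\prob{}{}$ and not being nested (so that every quantitative Markov-chain computation is confined to the polynomial-size induced chain while $\cM^\#$ is queried only qualitatively), and on the succinct-graph lasso argument that turns an exponential explicit search into a polynomial-space nondeterministic one.
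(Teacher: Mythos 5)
Your overall architecture is essentially the paper's: preprocess the inner path formulas into state-indexed probability values via Markov-chain model checking (Assumption~\ref{assump:detbeh} plus Proposition~\ref{propn:simplified}), build an exponentially large expanded system that tracks the agent's belief, evaluate belief/trust atoms locally on its states, and discharge the outer temporal structure by an on-the-fly automata-theoretic search, invoking NPSPACE $=$ PSPACE. However, there is a genuine gap in the one step that carries all the weight: you define the expanded states as pairs $(s,Q)$ with $Q$ only the \emph{support} of the belief, and you declare a belief/trust atom $p_i$ true at $(s,Q)$ iff $\sum_{s'\in Q} b(s')\,g_i(s')\bowtie q$ for ``any (hence every)'' distribution $b$ with $\supp(b)=Q$. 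That well-definedness claim is not what the expanded-system restriction gives you. The restriction (see the constraints on $\cM^\dagger$ in Appendix~\ref{sec:mc}) only says that two expanded states \emph{arising in the system} and reachable from one another, whose distributions have equal support, carry equal evaluations $v_1,\dots,v_k$. It says nothing about arbitrary distributions over $Q$: the Bayesian-updated distributions that actually occur may all lie on one side of the threshold $q$ while some never-occurring $b$ with the same support lies on the other (e.g.\ $Q=\{s_1,s_2\}$, $g_i(s_1)=0.3$, $g_i(s_2)=0.9$, $q=0.5$: arising beliefs near $(0.9,0.1)$ give $0.36<q$, but $b=(0.1,0.9)$ gives $0.84>q$). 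So your labelling rule can disagree with the semantics even on systems satisfying the restriction.

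The paper avoids this by keeping the actual distribution in the state: its expanded states are tuples $(s,\probm{Q},v_1,\dots,v_k)\in S\times\dist{S}\times R\times\cdots\times R$ with fixed-precision reals, the distribution is propagated by the explicit Bayesian update rules in $T^\dagger$, and the values $v_j$ are computed from that \emph{arising} distribution. The restriction is then spent differently from how you spend it: not to make labels a function of the support, but to guarantee that (a) mutually reachable states with equal support carry equal $v_j$'s and (b) only one representative per (state, support) pair needs exploring, which is what caps the on-the-fly search at $O(2^{|\cM|})$ states and lets the nondeterministic lasso search terminate. To repair your proof you would have to do the same: carry the concrete fixed-precision distribution along the guessed path, evaluate $p_i$ from it, and use the restriction only as a cut-off criterion when a (state, support) pair repeats --- at which point your argument coincides with the paper's.
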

Given a system $\cM$ and a formula $\phi$, the algorithm proceeds by first computing those subformulas which do not have belief or trust formulas in their scope. These computations can be done by an adaptation of the usual probabilistic temporal logic model checking. Then, to compute a formula with belief and trust subformulas, we construct an expanded system $\cM_A(\phi)$ with the state space $S^\dagger=S\times \dist{S}\times R \times ... \times R$, where $R$ is the domain of real numbers with fixed precision\footnote{We assume that each number can be encoded with complexity $O(1)$.}. Although $R$ is infinite, we need only a finite number of real numbers. The restriction imposed on the system ensures that the system $\cM_A(\phi)$ is of size exponential with respect to the size of the system $\cM$. However, we do not explicitly construct the system, but instead take an automata-theoretic approach and perform on-the-fly computation  by, e.g., two-phase search. The belief and trust formulas can be directly evaluated on the states of the expanded system. The details of the construction are in Appendix~\ref{sec:mc}.

We have the following results for the lower bound. Recall that a strongly connected component (SCC) of a graph $G$ is a subgraph of $G$ in which every vertex is reachable from every other vertex.
\begin{theorem}
The complexity of the fragment \PRTLSS\ is PSPACE-hard.
\end{theorem}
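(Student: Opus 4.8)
The plan is to obtain the lower bound by a reduction from a canonical PSPACE-complete problem that exploits the fact, noted above, that \PRTLSS\ retains the \emph{full} expressiveness of LTL in the temporal dimension. None of the fragment's restrictions in Table~\ref{tab:fragmentsAssumptions} constrain the LTL path formulas or the path quantifiers $\forall$ and $\exists$; they bound only the belief and trust operators. Hence a \PRTLSS\ formula containing no belief or trust operators is an ordinary \PCTLS\ formula and may, in particular, use $\exists\psi$ with $\psi$ an arbitrary LTL formula over $\next$, $\until$ and $\always$. Such a formula vacuously satisfies every fragment restriction (zero belief/trust operators is a constant, there is no nesting, none occurs inside a probabilistic operator $\prob{\bowtie q}{\cdot}$, and the expanded-system restriction holds trivially because there are no belief or trust subformulas). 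It therefore suffices to reduce an LTL problem to \PRTLSS\ model checking.

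First I would reduce from the \emph{existential LTL model-checking} problem: given a finite transition system (Kripke structure) $K$ with initial state $s_0$ and an LTL formula $\psi$, decide whether some infinite path of $K$ from $s_0$ satisfies $\psi$; this problem is PSPACE-complete, being the dual of universal LTL model checking. From $(K,\psi)$ I would build in polynomial time an \AutoSMAS\ $\cM$ with a single, fully observable agent $A$ (so that, by the proposition of Section~\ref{sec:surebelief}, $A$ has sure beliefs): let the states of $\cM$ be those of $K$ equipped with a single dummy goal and intention, inherit the labelling $L$, and define the deterministic-behaviour transition function so that from each state the unique joint action induces a distribution whose \emph{support} is exactly the set of $K$-successors of that state (with, say, uniform positive probabilities, which are irrelevant to the $\exists$ quantifier). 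Taking $\lgoal{A}$ and $\lintn{A}$ to be singletons everywhere makes the cognitive mechanism trivial and removes all cognitive branching, and the guarding mechanism $\Guard$ is trivial as the fragment demands. The specification is $\phi = \exists\psi$.

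The correctness argument is routine: by Remark~\ref{rem:pctlsEval} the operator $\exists$ is evaluated over the purely temporal paths of the \SMG\ induced from $\cM$, and by construction these are exactly the infinite paths of $K$ from $s_0$; hence $\cM \models \phi$ if and only if $K$ has a path from $s_0$ satisfying $\psi$. Since the reduction is polynomial and existential LTL model checking is PSPACE-hard, \PRTLSS\ model checking is PSPACE-hard, matching the PSPACE upper bound of the previous theorem. An equivalent route that makes the role of strong connectivity explicit is to reduce instead from LTL \emph{satisfiability}: one builds a strongly connected ``universal'' generator structure (a single SCC) whose infinite runs enumerate all sequences of truth assignments over the propositions of $\psi$ and checks $\exists\psi$, so that satisfiability corresponds to the existence of a path through this one SCC.

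The main obstacle is not correctness but the \emph{polynomiality and well-formedness} of the construction. For the satisfiability route a naive universal structure has $2^{|AP|}$ states, so one must instead use a round-robin gadget that assembles a single assignment over $O(|AP|)$ steps, together with a matching time-stretching rewriting of $\psi$, while keeping the structure one SCC. Even for the simpler model-checking route, the care lies in verifying that $\cM$ genuinely meets all the standing assumptions of the framework (Deterministic Behaviour, Uniformity, and seriality of $T$, $\lgoal{A}$ and $\lintn{A}$ via absorbing states) and that the ``expanded-system restriction'' is indeed vacuous in the absence of belief and trust subformulas, so that the constructed instance legitimately lies within \PRTLSS.
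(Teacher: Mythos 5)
Your proof is formally correct for the theorem as stated, but it takes a genuinely different route from the paper, and the difference matters. You observe that a \PRTLSS\ instance may contain \emph{zero} belief or trust operators, so plain existential LTL model checking embeds into the fragment; since that problem is PSPACE-hard in the size of the formula, combined complexity hardness follows. The paper instead reduces from the universality problem for nondeterministic finite automata: it builds a single-agent, \emph{partially observable} system with two indistinguishable subsystems (one simulating $\Sigma^*$, one simulating the runs of the NFA) and checks the \emph{fixed} formula $\literal{test}\,\release\,(\neg\literal{test}\lor \B_A^{<1}\neg\literal{finished})$, whose key ingredient is a qualitative belief operator. What the paper's construction buys is strictly more: because the formula is constant across instances, the reduction establishes PSPACE-hardness of the \emph{program complexity} (system fixed-formula-varying is the hard direction here, i.e., hardness even with a fixed specification), and it shows the hardness genuinely resides in tracking beliefs under partial observation rather than being inherited from LTL formula size. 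Your reduction, by contrast, only yields hardness of the combined (equivalently, formula) complexity and exercises none of the belief machinery --- indeed the paper itself remarks that the formula-complexity lower bound ``comes from the LTL model checking,'' which is essentially your observation. Your care about well-formedness (deterministic behaviour via a unique action with a probabilistic successor distribution, seriality via absorbing states, vacuity of the expanded-system restriction when no belief subformulas occur) is sound and those instances do lie in the fragment; the one point to flag is that a hardness proof that never touches the distinctive operators of the logic is correspondingly weaker as evidence about where the difficulty of \PRTLSS\ lies.
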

The proof is a reduction from the language universality problem of nondeterministic finite automata. Given an NFA $A=(Q,q_0,\delta,F)$, we construct a single agent system $\cM(A)$ which, starting from an initial state $s_0$, moves into one of two subsystems $\cM_1(A)$ and $\cM_2(A)$, with a uniform distribution. The system $\cM_1(A)$ simulates the universal language $\Sigma^*$, while $\cM_2(A)$ simulates the language of $A$. In other words, every word in $\Sigma^*$ has a corresponding execution in $\cM_1(A)$ and every word in $\Sigma^*$ that produces a valid run of $A$ has a corresponding execution (one for each unique run) in $\cM_2(A)$ which ends in a designated state (one labelled $\literal{finished}$) if the corresponding run on $A$ ends in a final state. We define the observation function so that the agent cannot distinguish between paths in $\cM_1(A)$ and $\cM_2(A)$ corresponding to the same word. This then allows us to prove that the non-universality of the language of A is equivalent to the satisfaction of a carefully designed formula revolving around agent's qualitative belief that the system reaches a designated state. Finally, the reduction is completed by showing that the system $\cM(A)$ satisfies the restriction outlined above, which follows by observing that agent's belief stays constant in all SCCs of the expanded system.
%In $\cM_2(A)$, the transition relation differentiates whether the current state is a final state of $A$ or not. 
%Every word of $\Sigma^*$ can then correspond to a set of infinite paths which are indistinguishable to the single agent and pass a test state (i.e., a state of the form $(l,q^t)$ in the proof) at the same time before moving into the self-loop. If the word is moreover a word of $A$, then some of the paths will also pass designated states (i.e., states of the form $(l,q^{tf})$) for final states of $A$.
%
%Therefore, the existence of a word that is not in $\Sigma^*$ can be shown to be equivalent to the existence of a set of infinite paths which are indistinguishable to the single agent of the system, pass the test states, and are not passing any states which are designated for the final states of $A$. 
%The system $\cM(A)$ satisfies the restriction because every finite path of $\Sigma^*$ can be mapped into the reachability of an SCC and the probability distribution of any expanded state is kept the same between two visits in an execution.
The details of the proof are in Appendix~\ref{sec:mc}.

\subsection{A Polynomial Time Fragment}

One of the restrictions in the previous fragment is that beliefs and trust cannot be in the scope of a probabilistic operator. In that fragment, this restriction ensures that the complexity is in PSPACE, because it can be harder to perform quantitative verification of the probabilistic and belief operators. In this section, we partly complement this with a fragment in which the belief or trust operators can be in the scope of a probabilistic operator but need to be qualitative, i.e., almost sureness. We show that this fragment, named \PQRTLSS, is, very surprisingly, polynomial time. Note that, as before, $\T$ ranges over two trust operators $\CT$ and $\DT$.

We consider the model checking step and restrict formulas to be of the form $\always(\psi\Rightarrow \prob{\bowtie q}{\eventually\B_A^{\geq 1}\psi})$ or $\always(\psi\Rightarrow \prob{\bowtie q}{\eventually \T_{A,B}^{\geq 1}\psi})$ such that, in $\psi$, there are no  belief or trust operators and every temporal operator is immediately prefixed with a branching operator, i.e., in the style of CTL. The system $\cM$ needs to satisfy the formula $\cM\models \always (\psi\Rightarrow \always \psi)$, which means that, once $\psi$ holds, it will hold henceforth.
The following is the result.
\begin{theorem}
The complexity of the fragment \PQRTLSS\ is in PTIME.
\end{theorem}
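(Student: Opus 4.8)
The plan is to reduce the problem to a sequence of polynomial-time computations on the underlying Markov chain, exploiting two features of the fragment: the belief/trust threshold is the extremal value $1$, so that only the \emph{support} of the belief distribution matters and never its exact values, and the system restriction $\cM\models\always(\psi\Rightarrow\always\psi)$ makes $Sat(\psi)$ an absorbing set. First I would compute $Sat(\psi)$: since $\psi$ is a CTL state formula free of belief and trust operators and, by Assumption~\ref{assump:detbeh}, each induced system is a Markov chain of size polynomial in $\cM$, this is standard PTIME CTL model checking, and $\always(\psi\Rightarrow\always\psi)$ guarantees that $Sat(\psi)$ is a trap, so the actual future of any point satisfying $\psi$ stays inside $Sat(\psi)$. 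For the trust variants I would analogously precompute in PTIME the state set $Sat_{\CT}$ (resp.\ $Sat_{\DT}$) of states from which $B$ has a \legal\ (resp.\ every \possible) intention change whose induced Markov chain satisfies $\psi$ almost surely---each such check being an almost-sure reachability computation on a polynomial chain---thereby reducing $\T_{A,B}^{\geq 1}\psi$ to the belief case $\B_A^{\geq 1}(\cdot)$ applied to a precomputed state set.

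Next I would characterise the qualitative belief operator. Because the threshold is $\geq 1$, $\cM,\rho\models\B_A^{\geq 1}\psi$ holds exactly when every state $A$ considers possible satisfies $\psi$, i.e.\ when the support of $A$'s current belief lies inside $Sat(\psi)$; the precise probabilities in the (possibly infinite) belief \AutoSMAS\ are irrelevant. Using synchronous perfect recall (Remark~\ref{rem:syncPerfRecall}) together with stickiness of $\psi$ I would prove that $\B_A^{\geq 1}\psi$ is itself sticky: once every observationally-equivalent path ends in $Sat(\psi)$, the same holds after any further step. Consequently, $\neg\B_A^{\geq 1}\psi$ at a point is witnessed by an observationally-equivalent path currently in $\neg\psi$, and by K\"onig's lemma together with stickiness (every $\neg\psi$-prefix of a $\neg\psi$-phantom is again in $\neg\psi$) an actual path never attains $\B_A^{\geq 1}\psi$ iff there exists an observationally-equivalent \emph{infinite} path that stays in $\neg\psi$ forever.

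The main step, and the main obstacle, is to compute $\probm{\cM,s}(\eventually\B_A^{\geq 1}\psi)$ for each reachable $s\in Sat(\psi)$ in polynomial time, avoiding the exponential belief-support (subset) construction. I would build a synchronised two-copy product of $\cM$ with itself---one copy carrying the actual (probabilistic) state, the other an existentially-guessed $\neg\psi$-phantom---kept synchronised on observations and on transition type; this product has size $O(|S|^2)$. On it I would compute, by a greatest-fixpoint (safety) iteration, the region of pairs from which a $\neg\psi$-phantom can be sustained against every future observation emitted by the actual copy, and then obtain $\probm{\cM,s}(\eventually\B_A^{\geq 1}\psi)=1-\probm{\cM,s}(\text{the actual copy forever stays in the projection of that region})$ by standard PTIME probabilistic reachability/safety on the product Markov chain. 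The delicate point to justify is precisely that tracking a \emph{single} adversarial phantom in the polynomial product yields the same probability as tracking the full $\neg\psi$-support: this follows because ``existence of a surviving phantom'' is a cooperative/existential objective whose value along the realised path is captured by the maximal phantom resolution, and stickiness prevents $\neg\psi$-phantoms from ever being re-created, so non-emptiness of the support is decided by membership in the fixpoint region rather than by its changing cardinality.

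Finally I would evaluate the whole specification by checking $\probm{\cM,s}(\eventually\B_A^{\geq 1}\psi)\bowtie q$ (resp.\ with $\B_A^{\geq 1}$ replaced by the trust target) at every reachable $s\in Sat(\psi)$, which is again polynomial; the outer $\always$ over $\psi$-states then reduces to quantifying over this polynomially-sized set. I expect the single-phantom-versus-full-support equivalence to be where the proof must be most careful, since it is exactly the collapse that turns the naively exponential belief analysis into a polynomial one and accounts for the ``surprisingly'' tractable bound.
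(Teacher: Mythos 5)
Your overall scaffolding is sound and largely parallels the paper's setup: reducing the trust operators to the belief case via precomputed state sets, the stickiness of $\B_A^{\geq 1}\psi$ under $\always(\psi\Rightarrow\always\psi)$, the K\"onig's-lemma characterisation of $\always\B_A^{>0}\neg\psi$ as existence of an observation-equivalent infinite path staying in $\neg\psi$, and the synchronised two-copy product (the paper builds essentially the same product $\cM^\#$, with the second copy confined to states where $\psi$ has probability $0$). The gap is exactly the step you flag as delicate, and it is fatal as argued. Your greatest-fixpoint region $R$ asks for pairs from which a \emph{single} phantom survives against \emph{every} future observation; this is a letter-by-letter, non-prophetic (game) condition. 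The event you must measure is weaker: for each realised path there need only \emph{exist} an observation-equivalent $\neg\psi$-path, chosen with knowledge of the entire infinite future. These do not coincide, not even up to measure zero, and stickiness does not repair the collapse. Concretely, let the realised ($\psi$-labelled) copy emit observations from $\{a,b\}$ i.i.d., and let the $\neg\psi$-states implement a ``guess the next observation'' gadget: states $g_a,g_b$ (observations $a,b$), where from $g_x$ on observation $y$ the phantom may move to $A_y$ or to $B_y$ (both with observation $y$), and $A_y$ has successors only with observation $a$ (back to $g_a$) while $B_y$ has successors only with observation $b$ (back to $g_b$). The $\neg\psi$-support never empties on any observation sequence (resolve each guess retrospectively), so $\always\B_A^{>0}\neg\psi$ holds on every realised path and the true value of $\probm{\cM,s}(\eventually\B_A^{\geq 1}\psi)$ is $0$; yet from every pair the adversary kills any single phantom within two steps, so $R=\emptyset$ and your formula outputs $1$. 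Note also that, because $\mathrm{proj}(R)$ is closed under realised transitions, your expression $1-\probm{\cM,s}(\text{stay forever in }\mathrm{proj}(R))$ can only ever take the values $0$ or $1$, whereas the true probability is fractional in general; the gadget satisfies all of the fragment's restrictions (purely temporal transitions, $\psi$ atomic and sticky, deterministic behaviour).

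This is also where your route genuinely diverges from the paper's. The paper does not use a qualitative fixpoint at all: on the product it isolates SCCs that are (i) formula-specific, (ii) double-closed, and (iii) \emph{internal equivalent}, meaning the two sub-stochastic automata induced by the two copies (seeded with their stationary distributions) accept every observation word with \emph{equal probability}, and then compares the reachability probability of such SCCs with $1-q$. Condition (iii) is a comparison of probability masses, not of supports, and it is decidable in polynomial time by Tzeng's equivalence algorithm for stochastic automata~\cite{Tzeng1992}; the main text explicitly credits this as the source of the PTIME bound. The underlying point your proposal misses is that the probability of the survival event is governed by whether the $\neg\psi$-mass keeps pace with the total mass on each observation class (a quantitative ratio condition), not by whether a token-pushing strategy can survive; a purely structural pair fixpoint cannot express this, and the sound structural alternative---tracking the full support via the subset construction---is exponential. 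To repair your proof you would have to replace the safety game by a probability-mass comparison of the two copies, which is precisely the stochastic-automata-equivalence step in the paper.
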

We give a polynomial time algorithm in Appendix~\ref{sec:ptime}. The algorithm first constructs a system $\cM^\#$ in which two copies of system $\cM$ run by synchronising their observations and the second copy avoids the states where $\psi$ is satisfiable. The computation is then to find strongly connected components (SCCs) of $\cM^\#$  such that a) the formula $\psi$ is satisfiable on some state of the first copy, b) both copies are closed SCCs, and c) the two copies are equivalent in terms of the acceptance probability of words. We show that the checking of formulas $\prob{\bowtie q}{\eventually\B_A^{\geq 1}\psi}$ or $\prob{\bowtie q}{\eventually \T_{A,B}^{\geq 1}\psi}$ on states satisfying $\psi$ is equivalent to comparing the reachability probability of these SCCs with the value $1-q$. The PTIME complexity is due to the fact that the computation of SCCs and checking the three conditions on SCCs can be done in polynomial time; in particular, checking of equivalence of acceptance probability of all words can be done in polynomial time by~\cite{Tzeng1992}.

\commentout{
%xiaowei: most of the contents disappear, so remove it. 
\section{Stag Hunt Example}

We illustrate the concepts introduced in this paper with the help of an example based on Rousseau's stag hunt (SH) game about social cooperation, in which trust between agents arises naturally~\cite{Skyrms2004}.
See \cite{CSLdVFWC2008} for an application of the SH games and other games in semi-autonomous UV resource coordination.
Imagine two agents, $A$ and $B$, are out hunting, and each has the choice of pursuing a rabbit or a stag.
The stag is the bigger prize and can be caught if both agents choose to pursue it. The rabbit is smaller and can be captured regardless of what the other agent does. The agents cannot be sure about the other agent's behaviour. Table~\ref{tab:payoff} gives a payoff for such a game, where $\co$ stands for cooperation (i.e., pursuing a stag) and $\de$ defection (i.e., pursuing a rabbit).
An iterated stag hunt (ISH) is a repeated SH, such that two agents play SH more than once in succession, where they can see previous actions of their opponent and may change their actions accordingly.

\begin{table}
\begin{center}
\begin{tabular}{|c|c|c|}
\hline
$(\con, \adv)$ & \co & \de \\
\hline
\co & (5,5) & (0,3)\\
\hline
\de & (3,0) & (1,1) \\
\hline
\end{tabular}
\caption{Payoff of Stag Hunt}
\label{tab:payoff}
\end{center}
\end{table}

We assume that players are rational, i.e., they want to maximise their profit.
%We distinguish four mental attitudes, which affect players' actions in the game:
We set $\Goal{\con} = \Goal{\adv} = \{ \text{risk-taking}, \text{risk-averse}, \text{cooperative}, \text{selfish} \}$. We explain how those mental attitudes affect players' actions in due course. As usual, we associate intentions with strategies and define five such, presented in Table~\ref{tab:shStrategies}. Hence $\Intn{\con} = \Intn{\adv} = \{ \dde, \pde, \pdeco, \pco, \dco \}$, with the obvious association. 

\begin{table}
\begin{center}
\begin{tabular}{|c|c|c|}
\hline
strategy & \de & \co \\
\hline
$\astrat{\dde}$ & 1 & 0  \\
\hline
$\astrat{\pde}$ & 0.8 & 0.2 \\
\hline
$\astrat{\pdeco}$ & 0.5 & 0.5 \\
\hline
$\astrat{\pco}$ & 0.2 & 0.8 \\
\hline
$\astrat{\dco}$ & 0 & 1 \\
\hline
\end{tabular}
\caption{Agents' Strategies}
\label{tab:shStrategies}
\end{center}
\end{table}

We construct an \AutoSMAS\ $\staghunt$ whose states are tuples of the form
\[
(u_\con, u_\adv, a_\con, a_\adv,\gs_\con,\gs_\adv,\is_\con,\is_\adv)
\]
where, for agent $i\in \{\con,\adv\}$, $u_i$ represents its accumulated rewards, $a_i\in \{\co,\de\}$  its last action, $\gs_i\subseteq \Goal{x}$  its goals, and $\is_i\in \Intn{x}$ its current intention.

We assume that agents initially settle their goals

The set of initial states $\sinit$ are $\{(\theta_\con, \theta_\adv, 0,0,\co,\co,\gs_1,\{pps\},\theta_1,\theta_2)~|~\theta_\con\in \Phi_\con,\theta_\adv\in \Phi_\adv,\gs_1\in \{\{pp\},\{ps\}\}\}$. The transition function $T$ can be obtained easily from the above explanations. The atomic propositions $Prop$ and labelling function $ L$ have the obvious definitions.

Assume that the SH game is repeated for a fixed number $k$ of rounds. Figure~\ref{fig:ipd11}-\ref{fig:ipd42} in Appendix~\ref{sec:ISHG} give the unfolding of the temporal transition function for up to 2 steps. Each diagram corresponds to a combination of agents' strategies.  The states are of the form $s_k^{x,y}$, representing that in state $s_k$ agent $\con$ plays the $x$th strategy in $\Phi_\con$ and agent $\adv$ plays the $y$th strategy in $\Phi_\adv$. For example, Figure~\ref{fig:ipd31} shows the temporal transitions when agent $\con$ follows strategy $\tft$  and agent $\adv$ follows strategy $\tftco$, respectively. The states are labeled with $(a_1,a_2)$, representing the last actions of the agents.
The arrows between states are labelled with probabilities of taking actions. In Figure~\ref{fig:ipd31}-\ref{fig:ipd42}, there are transitions of probability 0 and transitions without associated probability values. These are transitions that cannot occur. They are given in the diagrams to ease the comparison of combined strategies.

}

\commentout{
\section{Trust Game: Trust as a Decision Factor}\label{sec:trustgame}

We consider a simple trust game from \cite{Kuipers2016}, in which there are two agents, Alice and Bob. At the beginning, Alice has 10 dollars and Bob has 5 dollars. If Alice does nothing, then everyone keeps what they have. If Alice invests her money with Bob, then Bob can turn the 15 dollars into 40 dollars. After having the investment yield, Bob can decide whether to share the 40 dollars with Alice. If so, each will have 20 dollars. Otherwise, Alice will lose her money and Bob gets 40 dollars.

\begin{table}
\begin{center}
\begin{tabular}{|c|c|c|}
\hline
$(Alice, Bob)$ & Share & Keep \\
\hline
Invest & (20,20) & (0,40)\\
\hline
Withhold & (10,5) & (10,5) \\
\hline
\end{tabular}
\caption{Payoff of a Simple Trust Game}
\label{tab:trust}
\end{center}
\end{table}

In such a game, the payoffs of the agents are shown in Table~\ref{tab:trust}. The game has a Nash equilibrium of Alice withholding her money and Bob keeping the investment yield. This equilibrium discourages collaboration between agents and has not been confirmed empirically under the standard economic assumptions of pure self-interest.
It is argued in \cite{Kuipers2016} that the single numerical value as the payoff is an over-simplification.  A more realistic utility should include both the payoff and other hypotheses, including trust. An example payoff table is given in Table~\ref{tab:updatedtrust}, in which Bob's payoff will increase by 5 to denote that he will gain Alice's trust if sharing the investment yield and decrease by 20 to denote that he will lose Alice's trust if keeping the investment yield without sharing. With the updated payoffs, the new Nash equilibrium is for Alice to invest her money and Bob to share the investment yield.

\begin{table}
\begin{center}
\begin{tabular}{|c|c|c|}
\hline
$(Alice, Bob)$ & Share & Keep \\
\hline
Invest & (20,20+5) & (0,40-20)\\
\hline
Withhold & (10,5) & (10,5) \\
\hline
\end{tabular}
\caption{Payoff of a Simple Trust Game With Trust as a Decision Factor}
\label{tab:updatedtrust}
\end{center}
\end{table}

The main point for the new payoffs is for the agents to make decisions not only based on the original payoffs, but also based on the trust that the other agent has. This reflects some actual situations in which one agent may want to improve, or at least maintain, the trust of the other agent.
In our modelling of such game, we show that this can be captured by adding a cognitive mechanism and assuming that Bob makes decisions by considering additionally whether Alice's trust in him reaches a certain level.

For Alice, we let $\Goal{Alice}=\{passive,active\}$ be two goals, which represent her attitude towards investment. Intuitively, $passive$ represents the goal of keeping the cash and $active$ represents the goal of investing. Each goal has a corresponding simple intention, which has a corresponding strategy. The strategies for Alice are given as the first two lines in Table~\ref{tab:strategyTrustGame}.  Without loss of generality, we name the strategies and the intentions with the same names as the goals.

Bob has the goals $\Goal{Bob}=\{investor, opportunist\}$, which represent the goals of being an investor pursuing long-term profits and being an opportunist after short-term profits, respectively. For both goals, there may be two intentions (or strategies), i.e., $\Intn{Bob}=\{trustworthy,untrustworthy\}$. The intention $trustworthy$, when under the goal $\{investor\}$, is implemented by a strategy named $share$, by which Bob shares the investment yield with Alice, and the intention $untrustworthy$, when under the goal $\{investor\}$, is implemented by a strategy named $keep$, by which Bob keeps all the investment yield. When under the goal $\{opportunist\}$, both intentions are implemented with the strategy $keep$. Intuitively, only when Bob is an investor, he may decide whether to share based on his intention. Bob will always keep the investment yield no matter what the intention is, when he is an opportunist.
The strategies for Bob are given as the last two lines in Table~\ref{tab:strategyTrustGame}.

\begin{table}
\begin{center}
\begin{tabular}{|c|c|c|c|c|}
\hline
strategy & Withhold & Invest & Keep & Share \\
\hline
passive & 0.7 & 0.3 & & \\
\hline
active & 0.1 & 0.9 & & \\
\hline
share & &   & 0.0 & 1.0 \\
\hline
keep & &   & 1.0 & 0.0 \\
\hline
\end{tabular}
\caption{Strategies for Alice and Bob}
\label{tab:strategyTrustGame}
\end{center}
\end{table}

\begin{figure}
\centering
\includegraphics[width=12cm,height=10cm]{trustGame.pdf}
\caption{States of the Trust Game}
\label{fig:trustGame}
\end{figure}

We construct an autonomous stochastic multi-agent system $\cM$ with a set of states $\{s_0,s_1,...,s_{23},t_1,...,t_{23},u_1,...,u_{23},v_1,...,v_{23}\}$, where
each state can be represented as a tuple
$$
(clock,a_{Alice},a_{Bob},\gs_{Alice},\gs_{Bob},\is_{Alice},\is_{Bob})
$$
such that $clock=\{0,1,2,3,4,5\}$, $a_{Alice}\in \{\bot,Withhold,Invest\}$, $a_{Bob}\in \{\bot,Share,Keep\}$ $\gs_{Alice}=\Goal{Alice}\cup \{\bot\}$, $\gs_{Bob}=\Goal{Bob}\cup\{\bot\}$, $\is_{Alice}=\gs_{Alice}$, and $\is_{Alice}=\Intn{Alice}\cup \{\bot\}$. 
Note that $Alice$ and $Bob$ proceed in turns, which is captured through joint actions where the other agent takes a silent action $\bot$.
Intuitively, the variable $clock$ is used to record the current step, starting from 0, and $a_{Alice}$ and $a_{Bob}$ are used to record agents' actions taken. The state space is shown in Figure~\ref{fig:trustGame}, in which $s_0=(0,\bot,\bot,\bot,\bot,\bot,\bot)$ is the initial state.

The game starts by agents making goal changes independently, without disclosing their choice to the other agent. We use $\goal{A}:x$ to denote that it is a cognitive transition in which agent $A$ chooses goal $\{x\}$. Therefore, we have the following several example states after the cognitive transition: $\goal{Alice}:passive$:
\begin{itemize}
\item $s_1=\{1,\bot,\bot,passive,\bot,passive,\bot\}$,
\item $s_2=\{2,\bot,\bot,passive,investor,passive,\bot\}$,
\item $s_3=\{2,\bot,\bot,passive,opportunist,passive,\bot\}$.
\end{itemize}

After agents have choosen their goals, it is for Alice to act first based on her chosen goal, which has been associated with a corresponding strategy. We use $action:probability$ to denote the action and its associated probability according to the strategy (intention) played. For example, $w:0.7$ means that Alice plays $Withhold$ with probability $0.7$, and $i:0.3$ means that Alice plays $Invest$ with probability $0.3$. For example, from state $s_2$, it will lead to the following two states:
\begin{itemize}
\item $s_4=\{3,Withhold,\bot,passive,\bot,passive,\bot\}$,
\item $s_5=\{3,Invest,\bot,passive,investor,passive,\bot\}$.
\end{itemize}

%The reasoning will base on agents' preference functions on this uncertainty. In which

%
Below, we analyse only the states in the top diagram of Figure~\ref{fig:trustGame}, in which $Alice$ chooses her goal before $Bob$. The states in the bottom diagram of Figure~\ref{fig:trustGame}, in which $Bob$ chooses his goal before $Alice$, can be analysed similarly.
From states $s_4,s_5,s_6,s_7$ and $t_4,t_5,t_6,t_7$, Bob's intentional changes are guarded by his beliefs about Alice's trust. In Figure~\ref{fig:trustGame}, these are denoted as the transitions labelled with $t:?$ and $u:?$, in which $t$ and $u$ are abbreviations for $trustworthy$ and $untrustworthy$, respectively, and $?$s are Bob's belief values defined using formulas of \PRTLS\ expressed quantitatively (i.e. without the probability bound) as follows.
\begin{itemize}
\item $\intn{Bob}^\history(trustworthy) = \B_{Bob}^{=?} \DT_{Alice,Bob}^{\geq 0.7} investor_{Bob}$
\item $\intn{Bob}^\history(untrustworthy) = \B_{Bob}^{=?} \neg \DT_{Alice,Bob}^{\geq 0.7} investor_{Bob}$
\end{itemize}
where we use an atomic proposition $investor_{Bob}$ to denote that Bob chooses the goal $investor$, i.e., $\gs_{Bob}=investor$.
More precisely, the probability of Bob following the intention $trustworthy$ is equivalent to his belief about the fact that Alice  trust in him being an investor is no less than $0.7$, and the probability of Bob following the intention $untrustworthy$ is equivalent to  his belief that such trust cannot be reached. Moreover, depending on his goal, such intentions may be implemented with different strategies.
The observation function is defined as follows for $s=(clock,a_{Alice},a_{Bob},\gs_{Alice},\gs_{Bob},\is_{Alice},\is_{Bob})$
\begin{itemize}
\item $\obs{Alice}(s) = (clock,a_{Alice},a_{Bob},\gs_{Alice},\is_{Alice})$, and
\item $\obs{Bob}(s) = (clock,a_{Alice},a_{Bob},\gs_{Bob},\is_{Bob})$.
\end{itemize}

Next, we explain how the intentional changes can be instantiated, i.e., how the symbols $?$ in Figure~\ref{fig:trustGame} are replaced with probability values.
Assume that we are on a path $s_0s_1s_2s_{5}$. Because Bob cannot observe which goal, $passive$ or $active$, has been chosen by Alice, he believes that both $s_0s_1s_2s_{5}$ and $s_0t_1t_2t_5$ are possible. We assume Bob's goal preference function on state $s_0$ is as follows:
\begin{itemize}
\item $\gpref{Bob}{Alice}^\current(s_0)(passive)=0.2$ and
\item $\gpref{Bob}{Alice}^\current(s_0)(active)=0.8$.
\end{itemize}
That is, according to Bob's prior knowledge, Alice prefers to invest rather than withhold the cash. With this preference function, we have that
$$\displaystyle\be_{Bob}(s_0s_1s_2s_5)= \frac{\be_{Bob}(s_0s_1s_2s_{5})}{\be_{Bob}(s_0s_1s_2s_{5})+\be_{Bob}(s_0t_1t_2t_{5})} = \frac{0.2*0.3}{0.2*0.3+0.8*0.9}= 1/13$$
and $\be_{Bob}(s_0t_1t_2t_5)=12/13$.

Now we consider Alice's trust in Bob being an investor on these two paths.
First, for path $s_0s_1s_2s_5$, Alice cannot differentiate it with $s_0s_1s_3s_7$, because she is not able to observe Bob's chosen goal. We assume Alice's goal preference function on state $s_1$ as follows:
\begin{itemize}
\item $\gpref{Alice}{Bob}^\current(s_1)(investor)=0.3$ and
\item $\gpref{Alice}{Bob}^\current(s_1)(opportunist)=0.7$.
\end{itemize}
which intuitively means that, when Alice chooses to keep the cash, she is inclined to believe that Bob is an opportunist instead of an investor.
Then we have that  $\be_{Alice}(s_0s_1s_2s_5)=0.3$ and $\be_{Alice}(s_0s_1s_3s_7)=0.7$. Therefore, we have that
$$\cM,s_0s_1s_2s_5\not\models \DT_{Alice,Bob}^{\geq 0.7}investor_{Bob}$$

Second, for path $s_0t_1t_2t_5$, Alice cannot differentiate it with $s_0t_1t_3t_7$. We assume Alice's goal preference function on state $t_1$ as follows:
\begin{itemize}
\item $\gpref{Alice}{Bob}^\current(t_1)(investor)=0.7$ and
\item $\gpref{Alice}{Bob}^\current(t_1)(opportunist)=0.3$.
\end{itemize}
which intuitively means that, when Alice chooses to invest, she is inclined to believe that Bob is an investor instead of an opportunist.
Then we have that  $\be_{Alice}(s_0t_1t_2t_5)=0.7$ and $\be_{Alice}(s_0t_1t_3t_7)=0.3$. Therefore, we have that
$$\cM,s_0t_1t_2t_5\models \DT_{Alice,Bob}^{\geq 0.7}investor_{Bob}$$

Finally, we have that $\cM,s_0s_1s_2s_5\models \B_{Bob}^{= 12/13}\DT_{Alice,Bob}^{\geq 0.7}BobInvestor$, which means that on paths $s_0s_1s_2s_5$ (and $s_0t_1t_2t_5$), with probability $12/13$, Bob will take the intention $trustworthy$ and play the strategy to share the investment yield with Alice. Intuitively, it says that, with a relatively high probability (i.e., 12/13), after Alice invests her money, Bob will share with her the investment yield, if he is an investor. Therefore, we have the labelling of $t:12/13$ and $u:1/13$ on the intentional transitions.

It is noted that Bob's such belief relies on Alice's trust, which in turn relies on Alice's preference functions. If Alice's goal preference function on state $t_1$ is as follows:
\begin{itemize}
\item $\gpref{Alice}{Bob}^\current(t_1)(investor)=0.6$ and
\item $\gpref{Alice}{Bob}^\current(t_1)(opportunist)=0.4$.
\end{itemize}
then we have $\cM,s_0t_1t_2t_5\not\models \DT_{Alice,Bob}^{\geq 0.7}investor_{Bob}$, and therefore we have $\cM,s_0s_1s_2s_5\models \B_{Bob}^{= 0}\DT_{Alice,Bob}^{\geq 0.7}investor_{Bob}$. Therefore, the labelling on the intentional transitions will be $t:0$ and $u:1$, respectively. Intuitively, this means that Bob, considering that Alice's trust is not high enough, will choose to keep the investment yield, even if he is an investor.

It is noted that the trust value is not directly converted into payoffs, since they do not necessarily have the same units. We can, however, define a conversion between them by letting e.g., the trust value of no less than $0.7$ to be equivalent to the payoff $+5$ when sharing and $-20$ when keeping. In this way, we can simulate the scenario in Table~\ref{tab:updatedtrust}.

%We remark that, in the above one-shot game, the trust of $Alice$ on $Bob$ heavily relies on Alice's preference function, and such trust value is the same on all those states Bob believes possible.
}

\commentout{

% xiaowei: a complete proof needs moreover a definition of the semantics based on probabilistic observation. 

\section{Reduction of Probabilistic Observations to Deterministic Observations}\label{sec:probobs}

The definition of observation function $\obs{A}$, following the setting in~\cite{FH1997}, gives agent $A$ a deterministic observation on each state. This is without loss of generality, and simply to avoid the technicalities that will be needed for considering probabilistic observation. A simple reduction can be done by transforming systems with probabilistic observations to systems with deterministic observations. Assume that every agent $A$ has a probabilistic observation function $p\obs{A} : S \rightarrow \dist{\Obs{A}}$, which provides on every state a distribution over possible observations. Then the belief function $\be_A : S^* \rightarrow \dist{S^*}$ can be defined as follows:
$$
\displaystyle
\be_A(\rho)(\rho')=\left \{
\begin{array}{ll}
\sum_{h\in (\Obs{A})^{|\rho|}} p\obs{A}(h~|~\rho)\times p\obs{A}(\rho'~|~h) & \text{ if } \type(\rho') =\type (\rho) \\
0 & \text{ otherwise}
\end{array}
\right.
$$
where we write  $h\in (\Obs{A})^{|\rho|}$ to denote that $h$ is an observation history of length $|\rho|$, and
\begin{itemize}
\item $p\obs{A} (h ~|~ \rho) = \prod_{i=0}^{|\rho|-1} p\obs{A} (\rho(i), h(i))$, which expresses the probability of observation history $h$ when the path is $\rho$;
\item $p\obs{A}(\rho' ~|~ h) = \probm{A}(F_{\rho'} |  \bigcup_{\rho'': p\obs{A}(h ~|~ \rho'')>0} F_{\rho''} )$, which expresses the conditional probability of $\rho'$ over those paths $\rho''$ which are possible to have the observation history $h$.
\end{itemize}

We construct a system $\cM' = (\Ags,\ap,S',PI',\{\Act{i}\}_{i\in \Ags},
%\{\LAct{A}'\}_{A\in \Ags},
T',\Cognition',\Rationality', L')$ such that
\begin{itemize}
\item state space $S' =S \times \Obs{1} \times ... \times \Obs{n}$,
\item initial distribution $PI'((s, o_1, ..., o_n)) = PI(s) \times p\obs{1}(s, o_1) \times ... \times p\obs{n}(s, o_n)$,
%\item \legal\ actions $\LAct{A}'((s, o_1, ..., o_n)) = \LAct{A}(s)$,
\item transition function $T' :S' \times \Act{} \times S' \rightarrow [0,1]$ such that
$$T'(( s, o_1 , ..., o_n ), a, ( s', o_1' , ..., o_n' )) = T ( s, a, s' ) \times p\obs{1} ( s', o_1' ) \times  ... \times p\obs{n} ( s' , o_n' ),$$
\item cognitive mechanism $\Cognition'$ is defined as follows for $A\in\Ags$:
\begin{itemize}
\item $(\Goal{A})' ((s, o_1, ..., o_n)) = \Goal{A}(s)$, $(\goalGd{A})' ((s, o_1, ..., o_n)) = \goalGd{A}(s)$,
\item $(\Intn{A})' ((s, o_1, ..., o_n)) = \Intn{A}(s)$,  $(\intnGd{A})' ((s, o_1, ..., o_n)) = \intnGd{A}(s)$,
\end{itemize}
\item rationality mechanism $\Rationality'$ is defined as follows for $A,B\in\Ags$:
\begin{itemize}
\item $(\gpref{A}{B}^\current)'((s, o_1, ..., o_n)) = \gpref{A}{B}^\current(s)$,
\item $(\ipref{A}{B}^\current)'((s, o_1, ..., o_n)) = \ipref{A}{B}^\current(s)$, and
\item $\obs{A}'((s, o_1, ..., o_n)) = o_A$.
\end{itemize}
where the mapping function $ L_A$ is defined as follows: $ L_A((s, o_1, ..., o_n)) = o_A$ and $ L_A(\rho(s, o_1, ..., o_n)) =  L_A(\rho)o_A$, i.e., $ L_A(\rho)$ retrieves the observation history of agent $A$ from an expanded path $\rho$.

\item labeling function $ L'((s, o_1, ..., o_n)) =  L(s)$.
\end{itemize}
Note that $\obs{A}'$ is a deterministic observation function. We have the following proposition.
\begin{proposition}
Let M be a stochastic multi-agent system and $PO = \{p\obs{A}\}_{i\in \Ags}$ be the probabilistic observation functions for the agents. Then for all formulas $\phi$ of the language \PRATLS, we have that $ \cM \models \phi$ if and only if $ \cM' \models \phi$ .
\end{proposition}
Note that $\cM'$ is of size polynomial with respect to $\cM$ when the number of agents is fixed. Therefore, we need only consider deterministic observations.

}

\begin{table}
\caption{Assumptions of the framework}
\label{tab:assumptions}
\begin{center}
\begin{tabular}{l}
% \hline
\toprule
1. Deterministic Behaviour Assumption (Assumption~\ref{assump:detbeh})\\
% \hline
2. Uniformity Assumptions (Assumptions~\ref{assump:uni},~\ref{assump:uniII}) \\
% \hline
3. Transition Type Distinguishability Assumption (Assumption~\ref{assump:typeObs}) \\
% \hline
4. Synchronous perfect recall (Remark~\ref{rem:syncPerfRecall}) \\
% \hline
5. \PCTLS\ formulas are evaluated in induced SMGs (Remark~\ref{rem:pctlsEval}) \\
% \hline
6. Constraints on the language $\Lang_A(\PRTLSf)$ \\
% \hline
7. Seriality of \legal\ goal and intention functions \\
% \hline
\bottomrule
\end{tabular}
\end{center}
\end{table}

\section{Conclusions}\label{sec:concl}

The paper proposes an
%\Marta{ we focused really on a decision problem (is the trues value below/above the threshold), rather than algorithm for computing trust values - the latter is future work}
automated verification framework for autonomous stochastic multi-agent systems and specifications given in probabilistic rational temporal logic \PRTLS, which includes novel modalities for quantifying and reasoning about agents' cognitive trust. Our model is an extension of stochastic multiplayer games with cognitive reasoning, which specifies how agents' goals and intentions change during system execution, and admits probabilistic beliefs on which the trust concepts are founded. We study computational complexity of the decision problems and show that, although the general problem is undecidable, there are decidable, even tractable, fragments. Their existence is made possible by numerous assumptions and restrictions that we place on our system, which are summarised in Table~\ref{tab:assumptions}.

As can be seen from the illustrative running example in this paper, the framework is applicable to a wide range of scenarios of human-robot interactions. This includes competitive settings such as the trust game, as well as cooperative scenarios, which are more commonly considered in robotics community, such as a table clearing task \cite{Chen18TrustAwarePlanning}. Furthermore, the development of trust sensors (see \cite{Akash18TrustSense}) complements our framework very well; they could serve as a source of agent preferences and for validation purposes.

A natural next step following the development of the framework is implementing its techniques in the form of a model checker. To overcome undecidability, a subset of the problem will be considered, in particular, the bounded fragment \BPRTLS. Since a finite set of finite paths is sufficient to model check a bounded specification formula, pro-attitude synthesis only needs to consider those finite paths and the resulting cognitive strategies have a finite representation. Preference function update also only needs to be performed on the relevant finite paths. Finally, computing the satisfaction of a specification formula is a recursive procedure that keeps track of the execution history to evaluate belief and trust formulas. Other decidable fragments also provide a basis for implementation, but their strong restrictions make them less practical.

Another interesting direction for future works involves investigating how memory decay can be introduced into the framework. Intuitively, humans tend to remember more recent experiences better and this should be reflected in the semantics of trust. As a side effect, decidability will be achieved if appropriately old memory is discarded. With that, a Bellman operator may be defined which will allow more efficient evaluation of trust.

Finally, it would be interesting to study an axiomatisation for our logic, as well as the satisfiability problem, which we believe to be challenging.

%\newtext{Finally, note that we don't provide an axiomatisation of the presented logic, nor do we attempt to prove its soundness or completeness, or investigate problems such as satisfiability. The simple reason is that none of these is necessary to perform model checking, which is our main concern in this paper. Having said that, future work may involve looking at the above aspects in some detail. }
%Future work will include the definition of a Bellman operator to evaluate trust, integration with cognitive reasoning frameworks, and an implementation of the techniques.
%\acks

\begin{acknowledgments}
\emph{The authors are supported by ERC Advanced Grant VERIWARE and EPSRC Mobile Autonomy Programme Grant EP/M019918/1.}
\end{acknowledgments}

\bibliographystyle{splncs04}
\bibliography{trust}

\begin{thebibliography}{10}
\providecommand{\url}[1]{\texttt{#1}}
\providecommand{\urlprefix}{URL }
\providecommand{\doi}[1]{https://doi.org/#1}

\bibitem{Akash18TrustSense}
Akash, K., Hu, W., Jain, N., Reid, T.: A classification model for sensing human
  trust in machines using {EEG} and {GSR}. CoRR  \textbf{abs/1803.09861}
  (2018), \url{http://arxiv.org/abs/1803.09861}

\bibitem{Asch1946}
Asch, S.: Forming impression of personality. The Journal of Abnormal and Social
  Psychology  \textbf{40}(3),  258--290 (1946)

\bibitem{Aziz1995}
Aziz, A., Singhal, V., Balarin, F., Brayton, R., Sangiovanni-Vincentelli, A.:
  It usually works: The temporal logic of stochastic systems. In: Computer
  Aided Verification. pp. 155--165. Springer Berlin/Heidelberg (1995)

\bibitem{baierbook}
Baier, C., Katoen, J.P., Larsen, K.G.: Principles of Model Checking. The MIT
  Press (2008)

\bibitem{Berg-trust}
Berg, J., Dickhaut, J., McCabe, K.: Trust, reciprocity, and social history.
  {G}ames and {E}conomic {B}ehavior  \textbf{10}(1) (1995)

\bibitem{Bratman1987}
Bratman, M.: Intentions, Plans, and Practical Reason. Harvard University Press,
  Massachusetts (1987)

\bibitem{CP2007}
Castelfranchi, C., Paglieri, F.: The role of beliefs in goal dynamics:
  prolegomena to a constructive theory of intentions. Synthese  \textbf{155},
  237--263 (2007)

\bibitem{chakraborty-katoen}
Chakraborty, S., Katoen, J.P.: On the satisfiability of some simple
  probabilistic logics. In: Proceedings of the 31st Annual ACM/IEEE Symposium
  on Logic in Computer Science. pp. 56--65. LICS '16, ACM, New York, NY, USA
  (2016). \doi{10.1145/2933575.2934526},
  \url{http://doi.acm.org/10.1145/2933575.2934526}

\bibitem{CKS1980}
Chandra, A.K., Kozen, D.C., Stockmeyer, L.J.: Alternation. Journal of the ACM
  \textbf{28}(1),  114--133 (1980)

\bibitem{DBLP:journals/jcss/ChatterjeeCT16}
Chatterjee, K., Chmelik, M., Tracol, M.: What is decidable about partially
  observable markov decision processes with {\(\omega\)}-regular objectives. J.
  Comput. Syst. Sci.  \textbf{82}(5),  878--911 (2016).
  \doi{10.1016/j.jcss.2016.02.009},
  \url{http://dx.doi.org/10.1016/j.jcss.2016.02.009}

\bibitem{Chen18TrustAwarePlanning}
Chen, M., Nikolaidis, S., Soh, H., Hsu, D., Srinivasa, S.: Planning with trust
  for human-robot collaboration. CoRR  \textbf{abs/1801.04099} (2018),
  \url{http://arxiv.org/abs/1801.04099}

\bibitem{CFK+13b}
Chen, T., Forejt, V., Kwiatkowska, M., Parker, D., Simaitis, A.: Automatic
  verification of competitive stochastic systems. FMSD  \textbf{43}(1) (2013)

\bibitem{clarkebook}
Clarke, E.M., Grumberg, O., Peled, D.: Model Checking. The MIT Press (1999)

\bibitem{CTY2003}
Corbitt, B.J., Thanasankit, T., Yi, H.: Trust and e-commerce: a study of
  consumer perceptions. Electronic Commerce Research and Applications
  \textbf{2}(3),  203--215 (2003)

\bibitem{Katoen-Storm}
Dehnert, C., Junges, S., Katoen, J., Volk, M.: A storm is coming: {A} modern
  probabilistic model checker. CoRR  \textbf{abs/1702.04311} (2017),
  \url{http://arxiv.org/abs/1702.04311}

\bibitem{FHMVbook}
Fagin, R., Halpern, J., Moses, Y., Vardi, M.: Reasoning About Knowledge. MIT
  Press (1995)

\bibitem{FC2001}
Falcone, R., Castelfranchi, C.: Social trust: A cognitive approach. In: Trust
  and Deception in Virtual Societies, pp. 55--90. Kluwer (2001)

\bibitem{FH1997}
Friedman, N., Halpern, J.Y.: Modeling belief in dynamic systems, part i:
  Foundations. Artificial Intelligence  \textbf{95}(2),  257--316 (1997)

\bibitem{Gambetta1990}
Gambetta, D. (ed.): Trust. Basil Blackwell, Oxford (1990)

\bibitem{GM1998}
Goldsmith, J., Mundhenk, M.: Complexity issues in markov decision processes.
  In: the Thirteenth Annual IEEE Conference on Computational Complexity (1998)

\bibitem{Gollwitzer1993}
Gollwitzer, P.M.: Goal achievement: The role of intentions. European Review of
  Social Psychology  \textbf{4},  141--185 (1993)

\bibitem{HT1993}
Halpern, J.Y., Tuttle, M.R.: Knowledge, probability, and adversaries. Journal
  of the ACM  \textbf{40}(3),  917--962 (1993)

\bibitem{HJ1994}
Hansson, H., Jonsson, B.: A logic for reasoning about time and reliability.
  Formal aspects of computing  \textbf{6}(5),  512--535 (1994)

\bibitem{Hardin:Trust:2002}
Hardin, R.: Trust and trustworthiness. Russell Sage Foundation (2002)

\bibitem{Herzig:IGPL:2010}
Herzig, A., Lorini, E., H{\"u}bner, J.F., Vercouter, L.: A logic of trust and
  reputation. Logic Journal of IGPL  \textbf{18}(1),  214--244 (2010)

\bibitem{Herzig:GC:2013}
Herzig, A., Lorini, E., Moisan, F.: A simple logic of trust based on
  propositional assignments. The Goals of Cognition. Essays in Honor of
  Cristiano Castelfranchi pp. 407--419 (2013)

\bibitem{vdHJW2007}
van~der Hoek, W., Jamroga, W., Wooldridge, M.: Towards a theory of intention
  revision. Synthese  \textbf{155}(2),  265--290 (2007)

\bibitem{HE1992}
Hogarth, R.M., Einhorn, H.J.: Order effects in belief updating: The
  belief-adjustment model. Cognitive Psychology  \textbf{24}(1),  1--55 (1992)

\bibitem{HK2017}
Huang, X., Kwiatkowska, M.: Reasoning about cognitive trust in stochastic
  multiagent systems. In: Thirty-First AAAI Conference on Artificial
  Intelligence (AAAI-17) (2017)

\bibitem{Josang:IJUFKS:2001}
J{\o}sang, A.: A logic for uncertain probabilities. International Journal of
  Uncertainty, Fuzziness and Knowledge-Based Systems  \textbf{9}(03),  279--311
  (2001)

\bibitem{KFJ2001}
Kagal, L., Finin, T., Joshi, A.: Trust-based security in pervasive computing
  environments. Computer  \textbf{34}(12),  154 -- 157 (2001)

\bibitem{Kemeny1967}
Kemeny, J.G., Snell, J.L., Knapp, A.W.: Denumerable Markov Chains.
  Springer-Verlag (1967)

\bibitem{Krukow:PTRSL:2008}
Krukow, K., Nielsen, M., Sassone, V.: Trust models in ubiquitous computing.
  Philosophical Transactions of the Royal Society of London A: Mathematical,
  Physical and Engineering Sciences  \textbf{366}(1881),  3781--3793 (2008)

\bibitem{Kuipers2016}
Kuipers, B.: What is trust and how can my robot get some? In: a Talk at Social
  Trust in Autonomous Robots, a workshop in Robotics: Science and Systems 2016
  (2016),
  \url{http://qav.comlab.ox.ac.uk/trust\_in\_autonomy/img/KuipersTrustWorkshop16.pdf}

\bibitem{Kuipers:2018:WTR:3190347.3173087}
Kuipers, B.: How can we trust a robot? Commun. ACM  \textbf{61}(3),  86--95
  (Feb 2018). \doi{10.1145/3173087}, \url{http://doi.acm.org/10.1145/3173087}

\bibitem{KNP11}
Kwiatkowska, M., Norman, G., Parker, D.: {PRISM} 4.0: Verification of
  probabilistic real-time systems. In: Gopalakrishnan, G., Qadeer, S. (eds.)
  Proc. 23rd International Conference on Computer Aided Verification (CAV'11).
  LNCS, vol.~6806, pp. 585--591. Springer (2011)

\bibitem{Lahijanian:AAAI:2016}
Lahijanian, M., Kwiatkowska, M.: Social trust: a major challenge for the future
  of autonomous systems. In: AAAI Fall Symposium on Cross-Disciplinary
  Challenges for Autonomous Systems. AAAI Fall Symposium, AAAI, AAAI Press
  (2016)

\bibitem{TeslaCrash1:BBC:2016}
Lee, D.: {US} opens investigation into tesla after fatal crash. British
  Broadcasting Corporation (BBC) News  (Jul 2016),
  \url{http://www.bbc.co.uk/news/technology-36680043}, [Online; posted
  1-July-2016; http://www.bbc.co.uk/news/technology-36680043]

\bibitem{Lee:HF:2004}
Lee, J.D., See, K.A.: Trust in automation: Designing for appropriate reliance.
  Human Factors: The Journal of the Human Factors and Ergonomics Society
  \textbf{46}(1),  50--80 (2004)

\bibitem{Mayer:AMR:1995}
Mayer, R.C., Davis, J.H., Schoorman, F.D.: An integrative model of
  organizational trust. Academy of management review  \textbf{20}(3),  709--734
  (1995)

\bibitem{Meyer:AI:1999}
Meyer, J.J.C., van~der Hoek, W., van Linder, B.: A logical approach to the
  dynamics of commitments. Artificial Intelligence  \textbf{113}(1),  1--40
  (1999)

\bibitem{MBH2014}
Meyer, J.J.C., Broersen, J., Herzig, A.: {BDI} logics. In: Handbook of
  Epistemic Logic. College Publications (2014)

\bibitem{Paz71}
Paz, A.: Introduction to probabilistic automata (Computer science and applied
  mathematics). Academic Press (1971)

\bibitem{Pearl1994}
Pearl, J.: Probabilistic semantics for nonmonotonic reasoning: a survey. In: KR
  1994 (1994)

\bibitem{RG1991}
Rao, A.S., Georgeff, M.P.: Modeling rational agents within a
  {BDI}-architecture. In: KR 1991 (1991)

\bibitem{RNLO2007}
Ritter, F.E., Nerb, J., Lehtinen, E., , O'Shea, T.: In Order to Learn: How the
  sequence of topics influences learning. Oxford University Press (2007)

\bibitem{Schild2000}
Schild, K.: On the relationship between bdi logics and standard logics of
  concurrency. Autonomous Agents and Multi-Agent Systems pp. 259 -- 283 (2000)

\bibitem{Setter:ACC:2016}
Setter, T., Gasparri, A., Egerstedt, M.: Trust-based interactions in teams of
  mobile agents. In: American Control Conference. pp. 6158--6163 (Jun 2016)

\bibitem{Sweet:AIAA:2016}
Sweet, N., Ahmed, N.R., Kuter, U., Miller, C.: Towards self-confidence in
  autonomous systems. In: AIAA Infotech{@} Aerospace, pp. 1651--1652 (2016)

\bibitem{Tzeng1992}
Tzeng, W.G.: A polynomial-time algorithm for the equivalence of probabilistic
  automata. SIAM Journal on Computing  \textbf{21}(2),  216 -- 227 (1992)

\end{thebibliography}

\clearpage
\onecolumn

\newpage

\appendix

\section{Undecidability of the General Problem}\label{sec:general}

The undecidability proof
%proof of Theorem~\ref{thm:single-agent} presented
%here
is by a reduction from the emptiness problem and strict emptiness
problem of probabilistic automata, both known as undecidable
problems~\cite{Paz71}.
%\cite{Rabin63,Paz71,Madani03}.
%Here we follow the notations
%of Gimbert et al.~\cite{Gimbert10}.
A probabilistic automaton $PA$ is a tuple $(Q, A, (M_a)_{a\in A},
q_0, F)$,~where\begin{itemize}
\item $Q$ is a finite set of states,
\item $q_0$ is the initial state,
\item $F\subseteq Q$ is a set of accepting states,
\item $A$ is the finite input alphabet, and
\item $(M_a)_{a\in A}$ is the set of transition matrix.
\end{itemize}
For each $a\in A$, $M_a\in[0,1]^{Q\times Q}$ defines transition probabilities, such that given $q, q'\in Q$, $M_a(q,q')$ is the probability that $q$ makes a transition to $q'$ when $a$ is the input. For every $q\in Q$ and $a\in A$, we have $\sum_{q'\in Q}M_a(q,q')=1$. Plainly, given a state $q$, an input $a$ makes a transition to a distribution on $Q$, and we further extend $M_a$ to be a transformer from distributions to distributions. Let $\dist{Q}$ be the set of all probabilistic distributions on the set $Q$. Given $\Delta\in\dist{Q}$, we write $M_a(\Delta)$ for the distribution transformed from $\Delta$ by $a$, such that for all $q'\in Q$, $M_a(\Delta)(q')=\sum_{q\in supp(\Delta)}\Delta(q)\cdot M_a(q,q')$. Given $w=a_1\cdot a_2\cdot\ldots \cdot a_n\in A^*$, we write $M_w$ for the function $M_{a_n}\circ M_{a_{n-1}}\circ \dots\circ M_{a_1}$ (we assume function application is right associative).

The emptiness problem of a probabilistic automata is defined as follows: 
%The following problem is undecidable.
Given a probabilistic automaton $PA=(Q, \sep A, \sep (M_a)_{a\in A}, \sep q_0, \sep F)$ and $\epsilon\in[0,1]$, decide whether there exists a word $w$ such that $M_w(\Delta_0)(F)\geq\epsilon$, where $\Delta_0(q_0)=1$ and $\Delta_0(q)=0$ for $q\in Q\setminus\{q_0\}$. 
%This is known as the emptiness problem of a probabilistic automaton, where $\epsilon$ is called a cut-point.
Replacing `$\geq$' by a strict inequality `$>$' yields the strict emptiness problem. Both problems are known to be undecidable.

\paragraph{Pro-attitude Synthesis is Undecidable}
First of all, we show that the synthesis of pro-attitude functions in which the formulas are in the language  $\Lang_A(\PRTLSf)$  is undecidable.
%
% (\Ags, \sep\states, \sep\sinit, \sep\{\Act{A}\}_{A\in\Ags}, \sep T, \sep L, \sep\{\Obs{A}\}_{A\in\Ags}, \sep\{\obs{A}\}_{A\in\Ags}, \sep\{\Cognition_A\}_{A \in \Ags}, \sep \{\cstrat{A}\}_{A \in \Ags}, \sep \{\pref{A}\}_{A \in \Ags})
%
Formally, let $PA=(Q, \sep A, \sep (M_a)_{a\in A}, \sep q_0, \sep F)$ be a probabilistic automaton and $AP=\{\literal{final},\literal{found}\}$ be a set of atomic propositions. We can then construct an ASMAS $\cM(PA)=(\Ags, \states, \sinit, \Act{1},
%\LAct{1},
T,  L, \Obs{1}, \obs{1}, \Cognition_1, \cstrat{1}, \pref{1})$  such that
\begin{itemize}
\item $\Ags=\{1\}$, i.e., this is a single-agent system,
\item $\states=A\times Q\times \{1,2\}$,
\item $\sinit =\{(a,q_0,1)\}$ for some $a\in A$,
\item $\Act{1}=\{\tau\}$,
%\item $\LAct{1}(s)=\{\tau\}$ for all $s\in S$,
\item the transition relation is as follows for $k\in\{1,2\}$:
$$T((a,q,k),\tau,(a',q',k))=\frac{M_{a'}(q,q')}{\sum_{a'\in A}\sum_{q'\in Q}M_{a'}(q,q')}$$
\item $\literal{found} \in  L((a,q,2))$ for $a\in A$ and $q\in Q$, and $\literal{final} \in  L((a,q,k))$ for $a\in A$, $q\in F$, and $k\in\{1,2\}$.
\end{itemize}
Intuitively, the system consists of two subsystems, indexed with the third component of the states (i.e., $1$ or $2$),  running in parallel without any interaction. The observation is defined as follows: 
\begin{itemize}
\item $\Obs{1} = A$, and 
\item $\obs{1}((a,q,k))=a$ for all $(a,q,k)\in \states$. 
\end{itemize}
The moving from the first subsystem to the second subsystem is done by agent $1$'s intentional changes, which are guarded with a testing on agent $1$'s beliefs.
We only define relevant intentional attitudes in $\Cognition_1$ as follows.
\begin{itemize}
\item $\Intn{1}=\{x_1,x_2\}$, 
%\item $\is_1((a,q,k))=x_k$ for $(a,q,k)\in S$,
%\item $\intn{1}^\current(s)=\{x_1,x_2\}$ for all $s\in S$, and
%temporary hack here
%\item $i_{1}^\current(s)=\{x_1,x_2\}$ for all $s\in S$, and
\end{itemize}
As defined in Section~\ref{sec:proupdate}, the cognitive strategy $\cstrat{1}$ can be obtained from $\Cognition_1$ and $\Guard$, with the latter defined as follows for the intentional strategy. 
\begin{itemize}
\item $\intnGd{1}(x_2)=\B_1^{\geq (>) \epsilon} final$ for some $\epsilon \in [0,1]$ and $\intnGd{1}(x_1)=true$ 
\end{itemize}
%The rationality mechanism $\Rationality$ is as follows.
The preference functions $\pref{1}$, the goal attitude $\Goal{1}$ in $\Cognition_1$, and the goal guards $\goalGd{1}$  are not defined, because they are not used in this reduction. The initial distribution $\initdist$ is a Dirac distribution over the initial states $\sinit$.
Therefore, given a $PA$ and a number $\epsilon\in [0,1]$, the (strict) emptiness of the problem $M_w(\Delta_0)(F)\geq (>) \epsilon$ is  equivalent to checking whether $\cM(PA)\models \exists\eventually ~\overline{\I_1} \literal{found}$.

\paragraph{Model Checking \PRTLS\ without Pro-Attitudes is Undecidable}
In the following, we show that model checking \PRTLS\ is also undecidable, for systems where the guarding mechanism $\Guard$ is trivial. 
% pro-attitude functions.
%
Formally, let $PA$ be as before and $AP=\{\literal{final}\}$ be a set of atomic propositions. We then construct an \AutoSMAS\ $\cM(PA)=(\Ags, \states, \sinit, \Act{1}, 
%\LAct{1},
T,  L, \Obs{1}, \obs{1}, \Cognition_1, \cstrat{1}, \pref{1})$ such that
\begin{itemize}
\item $\Ags=\{1\}$, i.e., this is a single-agent system, 
\item $\states=A\times Q$,
\item $\sinit = \{(a,q_0)\}$ for some $a\in A$, with the initial distribution $\initdist$ being a Dirac distribution over $\sinit$,
\item $\Act{1}=\{\tau\}$,
%\item $\LAct{1}(s)=\{\tau\}$ for all $s\in S$,
\item the transition relation $T$ is as follows:
$$T((a,q),\tau,(a',q'))=\frac{M_{a'}(q,q')}{\sum_{a'\in A}\sum_{q'\in Q}M_{a'}(q,q')}$$
\item $\literal{final} \in  L((a,q))$ for $a\in A$ and $q\in F$.
\end{itemize}
Agent $1$'s observation is defined as 
\begin{itemize}
\item $\Obs{1} = A$, and 
\item $\obs{1}((a,q))=a$ for all $(a,q)\in \states$. 
\end{itemize}
We do not need the functions $\Cognition_1, \cstrat{1}, \pref{1}$ in this reduction.
Given a $PA$ and a number $\epsilon\in [0,1]$, the (strict) emptiness of the problem $\cM_w(\Delta_0)(F)\geq (>) \epsilon$ is  equivalent to checking whether  $\cM(PA)\models \exists\eventually \B_1^{\geq (>) \epsilon} \literal{final}$.

\section{A Decidable Fragment of Bounded Length}\label{sec:bounded}

As shown in Section~\ref{sec:general}, the automated  verification problem defined in Section~\ref{sec:overviewcomplexity} is undecidable in general.
% when considering belief changes, with a reduction from the (strict) emptiness problem of probabilistic automata~\cite{Paz71}.
In this section, we present a fragment
\commentout{
\footnote{In the following discussions on decidable fragments (i.e., Appendix~\ref{sec:bounded}, \ref{sec:mc}, and \ref{sec:ptime}), we will only handle dispositional trust formula $\DT_{A,B}^{\bowtie q}\psi$, and write it directly as $\YT_{A,B}^{\bowtie q}\psi$. The competence trust formula $\CT_{A,B}^{\bowtie q}\psi$ can be handled in a similar way.} 
}
of the problem whose computational complexity falls between PSPACE and EXPTIME. Note that in the following discussions on decidable fragments (i.e., Appendix~\ref{sec:bounded}, \ref{sec:mc}, and \ref{sec:ptime}), we only consider dispositional trust formula $\DT_{A,B}^{\bowtie q}\psi$. The competence trust formula $\CT_{A,B}^{\bowtie q}\psi$ can be handled in a similar way.

\paragraph{Bounded Fragment}

%The algorithm  needs the following inputs:
%\begin{itemize}
%\item an autonomous stochastic multi-agent system $\cM$, including a rationality mechanism:  goals and intentions functions $\Goal{A}$, $\goal{A}^{\history,k}$, $\Intn{A}$, and $\intn{A}^{\history,k}$, such that functions $\goal{A}^{\history,k}$ and $\intn{A}^{\history,k}$ are defined over  formulas in the language $\Lang_A(\PRTLSf)$,
%\item for every agent $A\in\Ags$, a set of Markovian preference functions in $\pref{A}^\unif$, and
%\item a specification formula $\phi$ such that the formula does not contain temporal operators $\until$ and $\always$,  all $\next$ operators are immediately prefixed with a probabilistic operator or a branching operator, i.e., in a combination of $\prob{\bowtie q}{\next\psi}$, $AX\psi$, or $EX\psi$, and the nested depth of belief operators is constant.
%\end{itemize}

The bounded fragment works with specification formulas $\phi$ which do not contain temporal operators $\until$ and $\always$,  all $\next$ operators are immediately prefixed with a probabilistic operator or a branching operator, i.e., in a combination of $\prob{\bowtie q}{\next\psi}$, $\forall \next \psi$, or $\exists \next \psi$, and the nested depth of belief and trust operators is constant.
We remark that the specification formula $\phi$ can be extended to include subformulas of the form $\prob{\bowtie q}{(\psi_1\until^{\leq k}\psi_2)}$, $\forall(\psi_1\until^{\leq k}\psi_2)$, or $\exists(\psi_1\until^{\leq k}\psi_2)$.
Moreover, the restriction on nested temporal operators can be relaxed by taking
%usual automata-theoretical approaches~\cite{VW1986,CY1995} or
the bounded semantics for LTL.
%~\cite{BHJLS2006}.
We focus on the simpler syntax to ease the notation.
%Moreover, the CTL-style combination of probabilistic/branching operator with $\next$ operator  will enable a model checking algorithm of labelling subformulas on the states.

Let $\depth(\phi)$ be the maximal length of the paths that are needed for  the specification formula $\phi$. Specifically:
\begin{itemize}
\item $\depth(p)=0$,
\item $\depth(\phi_1\lor \phi_2)=\max \{\depth(\phi_1), \depth(\phi_2)\}$,
\item $\depth(\neg \psi)=\depth(\forall\psi)=\depth(\prob{\bowtie q}{\psi})=\depth(\B_A^{\bowtie q}\psi)=\depth(\psi)$, and
\item $\depth(\next\psi) =  \depth(\G_A\psi)=\depth(\I_A\psi) = \depth(\C_A\psi)=\depth(\DT_{A,B}^{\bowtie q}\psi)= \depth(\psi) +  1$.
\end{itemize}\

\subsection{Upper  Bound}

The algorithm proceeds in three steps according to the verification procedure in Section~\ref{sec:overviewcomplexity}.  

\subsubsection{Pro-attitude Synthesis}

\newcommand{\clk}{{\tt clk}}
\newcommand{\rP}{{\tt rP}}
\newcommand{\orig}{{\tt M}}

The purpose of the synthesis is to realise the guarding mechanism, i.e., for every $x\in \powerset{\Goal{A}}$ and $y\in \Intn{A}$, compute an equivalent representation for the formulas $\goalGd{A}(x)$ and $\intnGd{A}(x,y)$ for all agents $A$. Here we only consider formulas  of the forms $\B_A^{\bowtie q}\phi$ and $\DT_{A,B}^{\bowtie q}\phi$, and claim that formulas of the other forms $\B_A^{=?}\phi$ and $\DT_{A,B}^{=?}\phi$ can be handled by adapting the technique slightly.  Without loss of generality, we let $x\in \powerset{\Goal{A}}$. By the constraint of the language $\Lang_A(\PRTLSf)$ that no temporal operator can be in the scope of a belief operator, the satisfiability of $M,\rho\models \goalGd{A}(x)$ depends only on those paths of length $|\rho|$. Moreover, by the semantics of the language \PRTLS, model checking the bounded specification formula $\phi$ requires only those paths of length not greater than $\depth(\phi)+1$. Therefore, for this fragment, the synthesis is equivalent to finding a set of paths $\synth{}(x,\depth(\phi))$, where
%$\rho$
%whose lengths are not greater than $\depth(\phi)$,
%such that
\[
\synth{}(x,k) = \{\rho\in \fpath{}^\cM~|~|\rho|\leq k, x\in\Goal{A}(last(\rho)), \cM,\rho\models \goalGd{A}(x)\}
\]
for
%$x\in \powerset{\Goal{A}}$ and
$1\leq k\leq \depth(\phi)$.
%By the definition of the paths in $\cM$, the validity of a path $\rho \delta$ of length $k+1$   may depend on whether its prefix $\rho$ of length $k$ can satisfy some belief formula, i.e., $\cM,\rho\models \phi'$.

%So the synthesis algorithm will proceed recursively by computing such satisfiability relation from paths of length $0$ until paths of length $\depth(\phi)$.
%

Given a state $s\in S$, we write $\obs{}(s)=(\obs{1}(s),...,\obs{n}(s))$ for the tuple of agents' observations at state $s$.
For $0\leq k\leq \depth(\phi)-1$, we let $O^k=\{\bot\}\cup (\Obs{1}\times ...\times \Obs{n})$ be the set of possible observations at time $k$, where $\bot$ denotes that agents have not made any observations at time $k$.  The state space of the following expanded system $\cM^\#$ is $S\times O^0\times ...\times O^{\depth(\phi)-1}$. For a state $s^{\#}=(s,o^0,...,o^{\depth(\phi)-1})$ of $\cM^\#$, we write $ L_{\orig}(s^{\#})$ for original state $s$ and $ L^k(s^\#)$ with $0\leq k\leq \depth(\phi)-1$  for the observations at time $k$, i.e., $ L^k(s^\#)=o^k\in O^k$. Moreover, for a tuple of observations $ L^k(s^\#)$, we write $ L_A^k(s^\#)$ for agent $A$'s observation. These notations can be extended to work with a sequence of states whenever reasonable, e.g., $ L_{\orig}(s^{\#}_0s^{\#}_1...)= L_{\orig}(s^{\#}_0) L_{\orig}(s^{\#}_1)...$, etc.

Given  $\asmasTuple$,
%=(\Ags, \ap, S,I,\{\Act{i}\}_{i\in\Ags},\{\LAct{A}\}_{i\in\Ags},T,\Rationality, L)$,
we construct a system $\cM^\#=(\Ags, \states^\#,\sinit^\#,\sep\{\Act{A}\}_{A\in\Ags},
%\{\LAct{A}^\#\}_{A\in\Ags},
\sep\{T_k^\#\}_{0\leq k\leq \depth(\phi)-1},\sep L^\#,\sep\{\Obs{A}^\#\}_{A\in\Ags}, \sep\{\obs{A}^\#\}_{A\in\Ags},\sep\{\Cognition_A^\#\}_{A\in\Ags}, \sep\{\cstrat{A}\}_{A\in\Ags}, \{\pref{A}^\#\}_{A\in\Ags})$ such that
\begin{itemize}
\item $S^\#=S\times O^0\times ...\times O^{\depth(\phi)-1}$,
\item for $s^\# \in \states^\#$, $s^\#\in \sinit^\#$  whenever $\initdist^\#(s^\#)>0$, where $\initdist^\#(s^\#) = \initdist( L_{\orig}(s^\#))$ if $ L^0(s^\#)=\obs{}( L_{\orig}(s^\#))$ and $ L^k(s^\#)=\bot$ for $1\leq k\leq \depth(\phi)-1$, and $\initdist^\#(s^\#)=0$ otherwise,
%\item $\LAct{A}^\#(s^\#)=\LAct{A}( L_{\orig}(s^\#))$,
\item $T_k^\#(s_1^\#,a,s_2^\#)=T( L_{\orig}(s_1^\#),a, L_{\orig}(s_2^\#))$ if
\begin{itemize}
\item $ L^{j}(s_2^\#)= L^j(s_1^\#)$ for all $0\leq j\leq k$,
\item $ L^{k+1}(s_1^\#)=...= L^{\depth(\phi)-1}(s_1^\#)= L^{k+2}(s_2^\#)=...= L^{\depth(\phi)-1}(s_2^\#)=\bot$, and
\item $ L^{k+1}(s_2^\#)=\obs{}( L_{\orig}(s_2^\#))$.
\end{itemize}
and $T_k^\#(s_1^\#,a,s_2^\#)=0$ otherwise,
\item $ L^\#(s^\#)= L( L_{\orig}(s^\#))$,
\item $\Obs{A}^\#$ and  $\obs{A}^\#$ are to be defined later, 
\item for $\Cognition_A^\#$ with  $A\in\Ags$, we have $\Goal{A}^\#(s^\#)=\Goal{A}(L_{\orig}(s^\#))$ and $\Intn{A}^\#(s^\#)=\Intn{A}(L_{\orig}(s^\#))$ for all $s^\# \in \states^\#$, and 
\item for $\pref{A}^\#$, we have $\gpref{A}{B}^\#(s^\#)=\gpref{A}{B}(L_{\orig}(s^\#))$ and $\gpref{A}{B}^\#(s^\#)=\gpref{A}{B}(L_{\orig}(s^\#))$ for all $A,B\in\Ags$.  
\end{itemize}
Intuitively, in the new system $\cM^\#$, agents' observation history  are remembered in the state, and for every time $0\leq k\leq \depth(\phi)-1$, a separate transition relation $T_k^\#$ is constructed. The transition relation  $T_k^\#$ maintains the previous observation history up to time $k$ and adds a new observation $ L^{k+1}(s_2^\#)$ to the next state $s_2^\#$.

Before evaluating $\Lang_A(\PRTLSf)$ formulas, we define a new belief function $\be_A^\#$ on the constructed system $\cM^\#$. With this function, we can evaluate belief formulas \emph{locally} instead of resorting to the observation history as we did for the original system $\cM$. 
A sequence $\rho^\#=s_0^\#...s_{\depth(\phi)-1}^\#$ of states in the expanded system $\cM^\#$ is a path if, for all $0\leq k\leq \depth(\phi)-1$, one of the following conditions holds:
\begin{itemize}
\item $T_k^\#(s_k^\#,a,s_{k+1}^\#)>0$ for some joint action $a\in\Act{}$.
\item there exist  $B\in\Ags$ and  $x\in \Goal{B}( L_{\orig}(s_k^\#))$ such that
\begin{itemize}
\item $L_{\orig}(s_k^\#)\ctrans{B.g.x}  L_{\orig}(s_{k+1}^\#)$
%$ L_{\orig}(s_{k+1}^\#)= L_{\orig}(s_k^\#)[x/\gs_j]$,
\item $ L^{j}(s_{k+1}^\#)= L^j(s_k^\#)$ for all $0\leq j\leq k$,
\item $ L^{k+1}(s_k^\#)=...= L^{\depth(\phi)-1}(s_k^\#)= L^{k+2}(s_{k+1}^\#)=...= L^{\depth(\phi)-1}(s_{k+1}^\#)=\bot$, and
\item $ L^{k+1}(s_{k+1}^\#)=\obs{}( L_{\orig}(s_{k+1}^\#))$.
%\item $\cM, L_{\orig}(s_0^\#...s_k^\#)\models gc(x)$.
\end{itemize}
\item Similar for a transition on which some agent $B\in\Ags$ changes its intention.
%there exist  $j\in\Ags$ and  $x\in \Intn{B}( L_{\orig}(s_k^\#))$ such that $ L_{\orig}(s_{k+1}^\#)= L_{\orig}(s_k^\#)[x/\intn{B}]$.
\end{itemize}
The following proposition states that there is a 1-1 correspondence relation between paths in $\cM$ and paths in $\cM^\#$.
\begin{proposition}
For every path $\rho^\#=s_0^\#...s_{\depth(\phi)-1}^\#$ in $\cM^\#$ such that $\initdist^\#(s_0^\#)>0$, we have that $ L_{\orig}(\rho^\#)$ is an initialised path of $\cM$. On the other hand, for every initialised path $\rho$ in $\cM$ whose length is no more than $\depth(\phi)+1$, there exists 
exactly one
%and only exists a 
path $\rho^\#$ in $\cM^\#$ such that $ L_{\orig}(\rho^\#)=\rho$.
\end{proposition}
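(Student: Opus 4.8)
The plan is to exploit the fact that, in the expanded system $\cM^\#$, the observation-history components $ L^0,\dots, L^{\depth(\phi)-1}$ of a state evolve \emph{deterministically} once the underlying sequence of original states is fixed: every transition (temporal, via some $T_k^\#$, or cognitive) is defined so as to copy the observation records at times $0,\dots,k$, write the observation $\obs{}( L_{\orig}(s_{k+1}^\#))$ of the freshly reached state into slot $k+1$, and leave all later slots equal to $\bot$. Consequently $ L_{\orig}$ ought to be a length-preserving bijection between paths of $\cM^\#$ and sufficiently short initialised paths of $\cM$, and I would prove the two directions separately, in each case simply unwinding the defining clauses.

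For the first direction I would take a path $\rho^\#=s_0^\#\dots s_{m}^\#$ of $\cM^\#$ with $\initdist^\#(s_0^\#)>0$ and verify that $ L_{\orig}(\rho^\#)$ meets the defining conditions of an initialised path of $\cM$. That $ L_{\orig}(s_0^\#)$ is initial is immediate from the definition of $\initdist^\#$, which sets $\initdist^\#(s_0^\#)=\initdist( L_{\orig}(s_0^\#))$ exactly when the observation slots have the prescribed shape, so $\initdist( L_{\orig}(s_0^\#))>0$. For each consecutive pair I would split on the transition kind: if the step is a $T_k^\#$-transition, its defining clause gives $T_k^\#(s_k^\#,a,s_{k+1}^\#)=T( L_{\orig}(s_k^\#),a, L_{\orig}(s_{k+1}^\#))$, whence $ L_{\orig}(s_k^\#)\trans{a} L_{\orig}(s_{k+1}^\#)$; if the step is cognitive, the path condition directly requires $ L_{\orig}(s_k^\#)\ctrans{B.g.x} L_{\orig}(s_{k+1}^\#)$ (or the analogous intention clause). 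Thus every projected step is a legal temporal or cognitive transition of $\cM$, so $ L_{\orig}(\rho^\#)$ is an initialised path.

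For the second direction I would, given an initialised path $\rho=t_0\dots t_m$ of $\cM$ of length at most $\depth(\phi)+1$, construct its lift explicitly by setting $ L_{\orig}(s_k^\#)=t_k$, $ L^{j}(s_k^\#)=\obs{}(t_j)$ for $0\le j\le k$, and $ L^{j}(s_k^\#)=\bot$ for the remaining slots, and then check that the resulting sequence is a path of $\cM^\#$ with $\initdist^\#(s_0^\#)>0$; both facts are immediate, since the slots of the constructed states match exactly the shape demanded by $\initdist^\#$ and by the transition and path conditions at index $k$. Uniqueness is the part requiring genuine care and is the main obstacle: I would argue by induction on $k$ that any $\rho'^\#$ with $ L_{\orig}(\rho'^\#)=\rho$ coincides with the constructed lift. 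The base case is forced because $\initdist^\#(s_0'^\#)>0$ pins down $ L^0(s_0'^\#)=\obs{}(t_0)$ and $ L^{j}(s_0'^\#)=\bot$ for $j\ge 1$. In the inductive step the key point is that the $\bot$-pattern of $s_k'^\#$ (filled up to time $k$, empty thereafter) forces the transition index to be exactly $k$, so the applicable clause copies $ L^{j}$ for $j\le k$, writes $ L^{k+1}(s_{k+1}'^\#)=\obs{}( L_{\orig}(s_{k+1}'^\#))=\obs{}(t_{k+1})$, and sets the later slots to $\bot$, thereby determining $s_{k+1}'^\#$ completely. Checking that this index-forcing is airtight for both the temporal family $\{T_k^\#\}$ and the cognitive path clauses — that no $T_{k'}^\#$ with $k'\neq k$ and no differently-timed cognitive step is compatible with the $\bot$-structure — is the one spot where I would grind through the bookkeeping; everything else is routine unfolding of the definitions.
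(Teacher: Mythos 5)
Your proof is correct. The paper in fact states this proposition \emph{without} proof, treating it as a routine consequence of the construction of $\cM^\#$; your argument—project and check the defining clauses for one direction, lift explicitly and prove uniqueness by induction on the position for the other—is exactly the argument the paper implicitly relies on. Two remarks. First, your base case silently reads the uniqueness claim as ranging over \emph{initialised} paths of $\cM^\#$, i.e.\ those with $\initdist^\#(s_0^\#)>0$. This reading is not optional but necessary: the path conditions of $\cM^\#$ never constrain slot $0$ of the first state (they only copy it forward), so a competitor whose first state carries a wrong, or even $\bot$, entry in slot $0$ would also project onto $\rho$. Your appeal to $\initdist^\#(s_0'^\#)>0$ is therefore the load-bearing step of the uniqueness argument, and it matches the only reading under which the proposition is true. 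Second, the ``index-forcing'' bookkeeping you flagged as the delicate spot is actually vacuous: the definition of a path in $\cM^\#$ ties the transition index to the position in the sequence (the clause imposed at position $k$ is stated with $T_k^\#$ and with the index-$k$ cognitive conditions), so a transition of a different index simply cannot occur at position $k$; your $\bot$-pattern argument is a correct but redundant second line of defence.
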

%
%We use $Path_T^F(\cM^\#,s^\#)$ for the finite temporal paths that start at $s^{\#}$ and $Path^F(\cM^\#,s^\#)$ for the finite (either temporal or cognitive) paths that start at $s^{\#}$.
Note that the transition relation of the constructed system $\cM^\#$ is acyclic, that is, for every state $s^{\#}$, there does not exist a cyclic path $\rho^\#\in \fpath{}^\cM(s^\#)$ such that $s^\#=\rho^\#(m)$ for some $m\geq 1$. Therefore, we can define $\clk(s^\#)$ as the time in which the state $s^\#$ appears, or formally, $\clk(s^\#)=k-1$ for $k$ the greatest number such that $ L^k(s^\#)\neq \bot$.

We can now define a probabilistic function $T_{k,A}^\#$ for the system $\cM^\#$ and agent $A\in\Ags$ as follows, by considering the preference functions of agent $A$, where $\clk(s_1^\#)=k$ and we write $\type(s_1^\#,s_2^\#)$ for $\type( L_{\orig}(s_1^\#), L_{\orig}(s_2^\#))$. We note that $s_1^\#s_2^\#$ needs to be on some path of $\cM^\#$.
$$
\displaystyle
T_{k,A}^\#(s_1^\#,s_2^\#) = \left \{
\begin{array}{cl}
%\sum_{a\in\Act{}} \pref_{i,\Act{}}^\unif( L_{\orig}(s_1^\#),a)\times T_k^\#( L_{\orig}(s_1^\#),a, L_{\orig}(s_2^\#)) & \text{ if } \type(s_1^\#,s_2^\#)=\Act{} \\
T_k^\#( s_1^\#,a, s_2^\#) & \text{ if } \type(s_1^\#,s_2^\#)=a  \\
%\sum_{x\subseteq \Goal{B}}\gpref{A}{B}^\unif( L_{\orig}(s_1^\#),x)\times ( L_{\orig}(s_2^\#)= L_{\orig}(s_1^\#)[x/\gs_j]) & \text{ if } \type(s_1^\#,s_2^\#)=\goal{B}\\
\gpref{A}{B}( L_{\orig}(s_1^\#))(x) & \text{ if } \type(s_1^\#,s_2^\#)=B.g \text{ and } L_{\orig}(s_1^\#) \ctrans{B.g.x} L_{\orig}(s_2^\#)\\

%\sum_{x\in \Intn{B}}\ipref{A}{B}^\unif( L_{\orig}(s_1^\#),x)\times ( L_{\orig}(s_2^\#)= L_{\orig}(s_1^\#)[x/\is_B]) & \text{ if } \type(s_1^\#,s_2^\#)=\intn{B}\\
\ipref{A}{B}( L_{\orig}(s_1^\#))(x) & \text{ if } \type(s_1^\#,s_2^\#)=B.i \text{ and } L_{\orig}(s_1^\#) \ctrans{B.i.x} L_{\orig}(s_2^\#)\\ 

1 & \text{ if }  \type(s_1^\#,s_2^\#)=A.g.x \text{ for some } x \in \Goal{A}(s_1^\#) \\  
& \text{ or }  \type(s_1^\#,s_2^\#)=A.i.x \text{ for some } x \in \Intn{A}(s_1^\#) \\  

\end{array}
\right.
$$

%We say that a state $s_2^\#$ is one-step reachable from $s_1^\#$ with respect to a path type $t$, if $\clk(s_2^\#)=\clk(s_1^\#)+1$, $\type(s_1^\#,s_2^\#)=t(\clk(s_1^\#))$, and $T_A^\#(s_1^\#,s_2^\#) >0$.
%
With the non-acyclic property, we can define for every state $s^\#$ a reachability probability $\rP_A(s^\#)$ as follows with respect to a type $t$ of paths.
\begin{itemize}
\item $\rP_A(s^\#)=\initdist^\#(s^\#)$ if $\clk(s^\#)=0$
\item $\rP_A(s^\#) = \sum_{s_1^\#\in S^\#, \clk(s^\#)=\clk(s_1^\#)+1} \rP_A(s_1^\#)\times T_{k,A}^\#(s_1^\#,s^\#)\times (\type(s_1^\#,s^\#)=t(\clk(s_1^\#)))$, if $\clk(s^\#)>0$.
\end{itemize}

The observation function $\obs{A}^\#$ for agent $A$ is defined as follows: 
\begin{itemize}
\item $\obs{A}^\#(s_1^\#)=\obs{A}^\#(s_2^\#)$ if $\obs{A}( L_{\orig}(s_1^\#))=\obs{A}( L_{\orig}(s_2^\#))$, $\clk(s_1^\#)=\clk(s_2^\#)$, and for all $0\leq k\leq \clk(s_1^\#)$, $ L^k_A(s_1^\#)= L^k_A(s_2^\#)$. 
\end{itemize}
Based on this, $\Obs{A}^\#$ contains all possible observations of $A$. 
Moreover, we let $\obs{A}^\#(s^\#)$ to be the set of states that are indistinguishable to agent $A$ at state $s^\#$.
We define a belief function $\be_A^\#:S^\#\rightarrow [0,1]$ as follows for agent $A\in\Ags$:
$$
\be_A^\#(s^\#) = \frac{\rP_A(s^\#)}{\sum_{s_1^\#\in \obs{A}^\#(s^\#)}\rP_A(s_1^\#)}
$$
For every path $\rho\in \fpath{}^\cM$ in the original system $\cM$, we construct the following state $e^\#(\rho)$ in the expanded system $\cM^\#$:
$$
e^\#(\rho)=(last(\rho),\obs{}(\rho(0)),...,\obs{}(\rho(|\rho|-1)),\bot,...,\bot)
$$
We have the following proposition to reduce the probability of $\probm{A}$ in $\cM$ into $\rP_A$ in $\cM^\#$.
\begin{proposition}
For any path $\rho$ of $\cM$ such that $|\rho|\leq \depth(\phi)+1$, we have that $\probm{A}(\cylinder_\rho)=\rP_A(e^\#(\rho))$, where $\cylinder_\rho$ is the basic cylinder with prefix $\rho$.
\end{proposition}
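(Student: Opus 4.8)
The plan is to prove the identity by induction on the length $|\rho|$, exploiting the fact that the expanded transition function $T_{k,A}^\#$ was defined precisely to mirror the auxiliary transition function $T_A$ of $\cM$ under the projection $L_{\orig}$. Throughout I would fix the type $t = \tp{A}(\rho)$, which by the Transition Type Distinguishability Assumption (Assumption~\ref{assump:typeObs}) is already determined by the joint observation history recorded in $e^\#(\rho)$, and with respect to which $\rP_A$ is evaluated. The target recursion on the $\cM$ side is the multiplicative form $\probm{A}(\cylinder_\rho) = \initdist(\rho(0))\cdot\prod_{0\le i\le|\rho|-2} T_A(\rho(i),\rho(i+1))$, so that $\probm{A}(\cylinder_{\rho'}) \cdot T_A(\last(\rho'),\last(\rho)) = \probm{A}(\cylinder_{\rho})$ for $\rho=\rho' s$; the whole proof amounts to showing $\rP_A$ obeys the same recursion under $e^\#$.

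For the base case $|\rho|=1$ we have $\rho = s_0$ for an initial state, and $e^\#(\rho) = (s_0, \obs{}(s_0), \bot, \ldots, \bot)$ is an initial expanded state (clock $0$). Hence $\rP_A(e^\#(\rho)) = \initdist^\#(e^\#(\rho))$; since $L^0(e^\#(\rho)) = \obs{}(L_{\orig}(e^\#(\rho)))$ and every later observation slot is $\bot$, the definition of $\initdist^\#$ gives $\initdist^\#(e^\#(\rho)) = \initdist(s_0) = \probm{A}(\cylinder_{s_0})$.

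For the inductive step, write $\rho = \rho' s$ with $s' = \last(\rho') = \rho(k)$ and $s = \rho(k+1)$, so that $e^\#(\rho)$ sits one clock tick above $e^\#(\rho')$. Expanding the recursive definition of $\rP_A(e^\#(\rho))$ over predecessors at the previous clock, the decisive claim is that the only predecessor $s_1^\#$ with $\rP_A(s_1^\#)>0$, whose observation history agrees with $e^\#(\rho)$ on slots $0,\ldots,k$, and with $\type(s_1^\#, e^\#(\rho)) = t(k)$, is exactly $e^\#(\rho')$. Granting this, the case-by-case definitions of $T_{k,A}^\#$ and $T_A$ coincide under $L_{\orig}$: a temporal transition reuses the same $T$-value, a cognitive transition of another agent $B$ reuses the same preference value $\gpref{A}{B}$ or $\ipref{A}{B}$ evaluated at $L_{\orig}(s_1^\#)=s'$, and one of $A$'s own cognitive transitions contributes $1$ on both sides; thus $T_{k,A}^\#(e^\#(\rho'), e^\#(\rho)) = T_A(s', s)$. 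The type indicator is $1$ because $t(k) = \tp{A}(s',s) = \type(e^\#(\rho'), e^\#(\rho))$, and combining with the induction hypothesis $\rP_A(e^\#(\rho')) = \probm{A}(\cylinder_{\rho'})$ yields $\rP_A(e^\#(\rho)) = \probm{A}(\cylinder_{\rho'})\cdot T_A(s',s) = \probm{A}(\cylinder_{\rho})$.

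The main obstacle is precisely the uniqueness-of-predecessor claim, which is where the construction of $\cM^\#$ has to be used in full. The expanded transition relation forces any contributing predecessor $s_1^\#$ to carry the same recorded joint observation history as $e^\#(\rho')$ and to satisfy $\obs{}(L_{\orig}(s_1^\#)) = \obs{}(s')$; what remains is to rule out a distinct reachable predecessor with some original state $L_{\orig}(s_1^\#) \neq s'$ sharing that same joint observation. I would discharge this by invoking the preceding path-correspondence proposition together with the fact that $e^\#$ logs the \emph{joint} observation $\obs{}(\cdot)=(\obs{1}(\cdot),\ldots,\obs{n}(\cdot))$ of all agents at every step (so each agent's own cognitive component is recorded) as well as the actual final state $\last(\rho)$, which together pin down the original state reachable along the recorded history. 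Once the $\rP_A$ recursion collapses to this single term, the induction closes and the equality $\probm{A}(\cylinder_\rho)=\rP_A(e^\#(\rho))$ follows for all $\rho$ with $|\rho|\le \depth(\phi)+1$.
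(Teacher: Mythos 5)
The paper states this proposition without any proof, so there is nothing of the authors' to compare against; your induction skeleton (base case at clock $0$, one-step matching of $T_{k,A}^\#$ against $T_A$ under $L_{\orig}$, the type indicator handled via Assumption~\ref{assump:typeObs}) is the natural one, and those parts are sound. The genuine gap is exactly the step you flagged, and the way you propose to discharge it does not work. The path-correspondence proposition is a statement about \emph{paths} of $\cM^\#$: it says $L_{\orig}$ is a bijection between initialised expanded paths and initialised original paths. It says nothing about an expanded \emph{state} having a unique predecessor contributing to the $\rP_A$ recursion. Likewise, recording the joint observation $\obs{}(\cdot)$ at every step does not pin down the original states along the history: an \AutoSMAS\ is only partially observable, a global state may contain components (e.g., the environment state) observed by no agent, and none of the paper's assumptions (uniformity, deterministic behaviour, type distinguishability) force distinct states to have distinct joint observations. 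Concretely, take a single agent with states $s_0,u,v,s$, where $\obs{1}(u)=\obs{1}(v)$, $T(s_0,a)(u)=T(s_0,a)(v)=1/2$, and both $u$ and $v$ move to $s$ with probability $1$ under the same action. Then $\rho_1=s_0us$ and $\rho_2=s_0vs$ satisfy $e^\#(\rho_1)=e^\#(\rho_2)$, the two expanded states with original components $u$ and $v$ are \emph{both} predecessors with $\rP_A$ equal to $1/2$ and matching type, and the recursion yields $\rP_A(e^\#(\rho_1))=1$ while $\probm{A}(\cylinder_{\rho_1})=1/2$. So the uniqueness-of-predecessor claim is not merely unproved; it is false in general.

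What your induction proves, once the uniqueness step is dropped and you instead sum over all predecessors, is the aggregated identity $\rP_A(e^\#(\rho))=\sum_{\rho''}\probm{A}(\cylinder_{\rho''})$, the sum ranging over type-consistent initialised paths $\rho''$ with $e^\#(\rho'')=e^\#(\rho)$, i.e., with the same joint observation history and the same final state as $\rho$. The proposition as literally stated is therefore true exactly when $e^\#$ is injective on positive-probability paths of the given type, for instance when any two distinct states are distinguished by at least one agent's observation (this holds in the paper's running trust-game example, where the joint observation recovers the entire state, but not for \AutoSMAS\ in general). To close your proof you must either add this joint-distinguishability hypothesis explicitly and derive the uniqueness of the predecessor from it, or prove the aggregated identity instead; the latter is in fact the form needed downstream, since it gives $\be_A^\#(e^\#(\rho))=\sum_{\rho''\colon e^\#(\rho'')=e^\#(\rho)}\be_A(\rho'')$, which is what the reduction of model checking to the expanded system actually relies on.
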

With this proposition, it is not hard to see that $\be_A(\rho)=\be_A^\#(e^\#(\rho))$.
In the following, we inductively define a satisfiability relation $\cM^\#,s^\#\models \phi$ as follows for $\phi$ a boolean combination of atomic propositions and belief formulas of the form $\B_A^{\bowtie q}\varphi$ or $\DT_{A,B}^{\bowtie q}\varphi$ such that $\varphi$ is an atemporal formula.
\begin{itemize}
\item $\cM^\#,s^\#\models p$ for $p\in\ap$ if $p\in L^\#(s^\#)$
\item $\cM^\#,s^\#\models \neg \phi$ if not $\cM^\#,s^\#\models  \phi$
\item $\cM^\#,s^\#\models \phi_1\lor \phi_2$ if  $\cM^\#,s^\#\models  \phi_1$ or $\cM^\#,s^\#\models  \phi_2$
\item $\cM^\#,s^\#\models \B_A^{\bowtie q}\psi$ if
$$
\left (\sum_{\obs{A}^\#(s_1^\#) = \obs{A}^\#(s^\#)} (\cM^\#,s_1^\#\models \psi) \times \be_A^\#(s_1^\#) \right )\bowtie q
$$
\item $\cM^\#,s^\#\models \DT_{A,B}^{\bowtie q}\psi$ if
$$
\left (\sum_{\obs{A}^\#(s_1^\#) = \obs{A}^\#(s^\#)} \be_A^\#(s_1^\#)\times \bigotimes_{s_2^\#\in \intn{B}(s_1^\#)}(T_{k,A}^\#(s_1^\#,s_2^\#)\times \cM^\#,s_2^\#\models \psi) \right )\bowtie q
$$
where $\otimes\equiv \inf$ if $\bowtie\in \{\geq, >\}$, $\otimes\equiv \sup$ if $\bowtie\in \{\leq, <\}$, and the set $ \intn{B}(s_1^\#)$ contains states $s_2^\#$ such that $s_1^\#s_2^\#$ is on some path of $\cM^\#$ and $\type(s_1^\#,s_2^\#)=B.i$.

\item The case of $\CT_{A,B}^{\bowtie q}\psi$ can be done similarly as $\DT_{A,B}^{\bowtie q}\psi$.
\end{itemize}

Now we have the following theorem to reduce the problem of model checking on the original system $\cM$ to the model checking on the expanded system $\cM^\#$. \begin{theorem}\label{thm:expand}
$\cM,\rho\models \phi$ if and only if $\cM^\#,e^\#(\rho)\models \phi$, for $\phi$ a formula in the language $\Lang_A(\PRTLSf)$.
\end{theorem}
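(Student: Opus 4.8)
The plan is to prove the equivalence by structural induction on $\phi\in\Lang_A(\PRTLSf)$, carried out uniformly over all initialised paths $\rho$ whose evaluation stays within the bounded horizon $\depth(\phi)$ recorded in the states of $\cM^\#$. Working under Assumption~\ref{assump:detbeh}, so that the simplified semantics of Definition~\ref{def:semantics5} applies, I would first record three facts about the map $e^\#$. First, $L_{\orig}(e^\#(\rho))=last(\rho)$, whence $L^\#(e^\#(\rho))=L(last(\rho))$, directly from the definitions of $\cM^\#$ and $e^\#$. Second, using the established $1$--$1$ correspondence between initialised paths of $\cM$ of length at most $\depth(\phi)+1$ and paths of $\cM^\#$, together with the definition of $\obs{A}^\#$ (which compares the stored histories $L_A^k$ up to the clock value), I would show that $\obs{A}^\#(e^\#(\rho'))=\obs{A}^\#(e^\#(\rho))$ holds exactly when $\obs{A}(\rho')=\obs{A}(\rho)$, so that $e^\#$ restricts to a bijection between the $\obseq{A}$-class of $\rho$ in $\cM$ and the set of states $s_1^\#$ with $\obs{A}^\#(s_1^\#)=\obs{A}^\#(e^\#(\rho))$ and $\rP_A(s_1^\#)>0$. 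Third, I would invoke the already-proven identity $\probm{A}(\cylinder_{\rho})=\rP_A(e^\#(\rho))$ and the fact that both $\be_A$ and $\be_A^\#$ are defined as the \emph{same} normalised conditional probabilities over their respective classes, to conclude $\be_A(\rho')=\be_A^\#(e^\#(\rho'))$ for every $\rho'$ in the class. These three facts form the backbone of the reduction.

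The base case $\phi=p$ is immediate from the first fact, and the Boolean cases for $\neg$ and $\lor$ follow at once from the induction hypothesis, since the corresponding satisfaction clauses coincide in $\cM$ and $\cM^\#$. For the belief operator $\phi=\B_A^{\bowtie q}\psi$, I would use that $\psi$ is atemporal, so that $V_{\B,\cM,\psi}(\rho')=\probm{\cM,\rho'}(\psi)$ is the indicator of $\cM,\rho'\models\psi$, and the belief-weighted expectation collapses to the finite sum $\sum_{\rho'\obseq{A}\rho}\be_A(\rho')\cdot(\cM,\rho'\models\psi)$ over the $\obseq{A}$-class of $\rho$ (all other paths carry belief $0$). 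Applying the induction hypothesis to the proper subformula $\psi$ at each such $\rho'$ replaces $(\cM,\rho'\models\psi)$ by $(\cM^\#,e^\#(\rho')\models\psi)$; re-indexing the sum through the class bijection and substituting $\be_A(\rho')=\be_A^\#(e^\#(\rho'))$ turns it into precisely the local expression defining $\cM^\#,e^\#(\rho)\models\B_A^{\bowtie q}\psi$. Comparing both values with $q$ yields the equivalence. Since $\depth(\B_A^{\bowtie q}\psi)=\depth(\psi)$, no path length is consumed and the induction hypothesis is available at the same horizon.

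The trust operator $\phi=\DT_{A,B}^{\bowtie q}\psi$ is handled analogously, and this is the step I expect to be the main obstacle, because the inner $V$-function optimises over $B$'s intention changes. On the $\cM$ side, $V_{\DT,\cM,B,\psi}^{\bowtie}(\rho')$ is an infimum (a supremum when $\bowtie\in\{<,\leq\}$) of $\probm{\cM,B.i(\rho',x)}(\psi)$ over $x\in\supp(\intn{B}(\rho'))$, whereas on the $\cM^\#$ side the optimisation ranges over the $B.i$-typed successors $s_2^\#$ of $s_1^\#$ lying on some path of $\cM^\#$. The delicate matching step is to verify that, once pro-attitude synthesis has aligned $\supp(\intn{B})$ with the enabled $B.i$-transitions, the extended path $B.i(\rho',x)$ corresponds under $e^\#$ exactly to such a successor $s_2^\#=e^\#(B.i(\rho',x))$, so that the two optimisation ranges are in bijection; the induction hypothesis then applies to $\psi$ at $B.i(\rho',x)$, giving $\probm{\cM,B.i(\rho',x)}(\psi)=(\cM^\#,s_2^\#\models\psi)$. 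Here the extension by one cognitive transition is exactly the unit of horizon accounted for by $\depth(\DT_{A,B}^{\bowtie q}\psi)=\depth(\psi)+1$, keeping all paths inside the window stored by $\cM^\#$. Taking the infimum/supremum and then the belief-weighted sum over the class, as in the belief case, identifies the two defining expressions; the competence operator $\CT_{A,B}^{\bowtie q}\psi$ is treated identically with $\lintn{B}$ in place of $\supp(\intn{B})$. The two points requiring the most care are thus the precise correspondence of these optimisation ranges and the consistency of a single path type across the whole class, the latter guaranteed by Assumption~\ref{assump:typeObs}, which ensures that all the quantities involved live in one probability space.
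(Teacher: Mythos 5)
Your proposal is correct and is essentially the argument the paper leaves implicit: the paper states Theorem~\ref{thm:expand} without an explicit proof, relying on exactly the ingredients you assemble --- the 1--1 path correspondence proposition, the identity $\probm{A}(\cylinder_\rho)=\rP_A(e^\#(\rho))$ and the derived equality $\be_A(\rho)=\be_A^\#(e^\#(\rho))$, followed by a structural induction that matches the local clauses for atomic, Boolean, belief and trust formulas, with the class bijection induced by $\obs{A}^\#$ and type-consistency from Assumption~\ref{assump:typeObs} doing the work in the belief-weighted sums. Your treatment of the trust case (matching $\supp(\intn{B}(\rho'))$ against the $B.i$-successors of $e^\#(\rho')$, with the extra unit of $\depth$ absorbing the cognitive transition) is at the same level of rigour as the paper's own construction, so there is nothing further to add.
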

%The rationality mechanism $\Rationality$ of $\cM$ is concretised into $\Rationality^\#$ by interpreting 

Now we can interpret those belief formulas properly with the above theorem, i.e., for each $x\subseteq \Goal{A}$ and $1\leq k\leq \depth(\phi)$, we compute the set
$$
\synth{}^\#(x,k)=\{s^\#\in S^\#~|~\cM^\#,s^\#\models \goalGd{A}(x), \clk(s^\#) \leq k\}
$$
of states in the expanded system $\cM^\#$. It is noted that 
$\synth{}(x,k)=\{\rho~|~|\rho| =k, \exists s^\#\in \synth{}^\#(x,k): s^\#=e^\#(\rho)\}$. But we do not need to compute $\synth{}(x,k)$ because the following procedure will base on $\cM^\#$ instead of $\cM$.

%\paragraph{Update of Preference Functions} follows the approach introduced before.

\subsubsection{Preference Function Update} This can be done by following the expressions in Section~\ref{sec:proupdate}.  By the system $\cM^\#$ and the functions $\synth{}^\#(x,k)$, we can have the updated preference functions $\pref{A}$ for all $A\in\Ags$. Then we can update the transition functions $T_{k,A}^\#(s_1^\#,s_2^\#)$ into $T_{k,A}^{\#,*}(s_1^\#,s_2^\#)$ by substituting $\pref{A}$ with the new one. Based on these, we can update the reachability function $\rP_A$ into $\rP_A^*$ and the belief function $\be_A^\#$ into $\be_A^{\#,*}$.

\subsubsection{Model Checking Specification Formula}

The model checking algorithm proceeds by labelling  subformulas of $\phi$ on the states in $S^\#$. We use the notation $sat(s^\#,\psi)$ to represent that a formula $\psi$ is labeled on the state $s^\#$. The labelling can be done recursively as follows.
\begin{itemize}
\item $sat(s^\#,p)$ for $p\in\ap$, if $p\in L^\#(s^\#)$,
\item $sat(s^\#,\psi_1\lor\psi_2)$ if $sat(s^\#,\psi_1)$ or $sat(s^\#,\psi_2)$,
\item $sat(s^\#,\neg \psi)$ if not $sat(s^\#,\psi)$,
\item $sat(s^\#,\forall \next \psi)$ if $sat(t^\#,\psi)$ for all $t^\#$ such that $T_k^\#(s^\#,a,t^\#)>0$,
\item $sat(s^\#,\prob{\bowtie q}{\next\psi})$ if $prob(s^\#,\psi)\bowtie q$,
where the function  $prob(s^\#,\psi)$ is obtained as follows.
\begin{itemize}
%\item if $\clk(s^\#)=\depth(\phi)+1$ then $prob(s^\#,\psi)=sat(s^\#,\psi)$,
\item
%if $\clk(s^\#)<\depth(\phi)+1$ then
$prob(s^\#,\psi)=\sum_{t^\#\in S^\#}T_k^\#(s^\#,a,t^\#)\times sat(t^\#,\psi)$.
\end{itemize}
Intuitively, $prob(s^\#,\psi)$ is the one-step reachability probability of those states satisfying $\psi$ from a given state $s^\#$.
%The procedure can be completed within $\depth(\phi)$ steps for all states in the system $\cM^\#$.
Recall that in the expression $T_k^\#(s^\#,a,t^\#)$, the action $a$ is determined by state $s^\#$.

\item $sat(s^\#,\B_A^{\bowtie q}\psi)$ if $(\sum_{ \obs{A}^\#(t^\#) = \obs{A}^\#(s^\#)}sat(t^\#,\psi)\times \be_A^{\#,*}(t^\#))\bowtie q$

\item $sat(s^\#,\DT_{A,B}^{\bowtie q}\psi)$ if $(\sum_{\obs{A}^\#(t^\#) = \obs{A}^\#(s^\#)}\be_A^{\#,*}(t^\#) \times \bigotimes_{u^\#\in \intn{B}(t^\#)} (T_{k,A}^{\#,*}(t^\#,u^\#)\times sat(u^\#,\psi))\bowtie q$ where $\otimes\equiv \inf$ if $\bowtie\in \{\geq, >\}$, $\otimes\equiv\sup$ if $\bowtie\in \{\leq, <\}$, and the set $ \intn{B}(t^\#)$ contains states $u^\#$ such that
\begin{itemize}
\item
%$ L_{\orig}(u^\#)= L_{\orig}(t^\#)[x/\is_B]$
$L_{\orig}(t^\#) \ctrans{B.i,x} L_{\orig}(u^\#)$
for some $x\in \Intn{B}$ such that $u^\#\in \synth{}^\#(x,\clk(t^\#))$,
%for all paths $\rho$, we have that $e^\#(\rho)=s^\#$ if and only if $\rho\in \synth{}(x,\depth(\phi))$,
\item $ L^{j}(u^\#)= L^j(t^\#)$ for all $0\leq j\leq k$,
\item $ L^{k+1}(t^\#)=...= L^{\depth(\phi)-1}(t^\#)= L^{k+2}(u^\#)=...= L^{\depth(\phi)-1}(u^\#)=\bot$, and
\item $ L^{k+1}(u^\#)=\obs{}( L_{\orig}(u^\#))$.
\end{itemize}

\item the case of $sat(s^\#,\CT_{A,B}^{\bowtie q}\psi)$ can be done similarly as that of  $sat(s^\#,\DT_{A,B}^{\bowtie q}\psi)$.

\item $sat(s^\#,\G_A\psi)$ with $\clk(s^\#)=k\leq \depth(\phi)$ if $sat(t^\#,\psi)$ for all $t^\#$ such that
\begin{itemize}
\item
%$ L_{\orig}(t^\#)= L_{\orig}(s^\#)[x/\gs_j]$
$L_{\orig}(s^\#) \ctrans{B.g.x} L_{\orig}(t^\#)$
for some $x\subseteq \Goal{B}$ such that $t^\#\in \synth{}^\#(x,\clk(s^\#))$,
%for all paths $\rho$, we have that $e^\#(\rho)=s^\#$ if and only if $\rho\in \synth{}(x,\depth(\phi))$,
\item $ L^{j}(t^\#)= L^j(s^\#)$ for all $0\leq j\leq k$,
\item $ L^{k+1}(s^\#)=...= L^{\depth(\phi)-1}(s^\#)= L^{k+2}(t^\#)=...= L^{\depth(\phi)-1}(t^\#)=\bot$, and
\item $ L^{k+1}(t^\#)=\obs{}( L_{\orig}(t^\#))$.
\end{itemize}
\item $sat(s^\#,\I_A\psi)$ with $\clk(s^\#)=k\leq \depth(\phi)$ follows the similar pattern as that of $sat(s^\#,\G_A\psi)$.
\item $sat(s^\#,\C_A\psi)$ with $\clk(s^\#)=k\leq \depth(\phi)$ if $sat(t^\#,\psi)$ for all $t^\#$ such that
\begin{itemize}
\item $ L_{\orig}(s^\#) \ctrans{B.i.x} L_{\orig}(t^\#)$
%$ L_{\orig}(t^\#)= L_{\orig}(s^\#)[x/\is_B]$
such that $x\in \Intn{A}$,
\item $ L^{j}(t^\#)= L^j(s^\#)$ for all $0\leq j\leq k$,
\item $ L^{k+1}(s^\#)=...= L^{\depth(\phi)-1}(s^\#)= L^{k+2}(t^\#)=...= L^{\depth(\phi)-1}(t^\#)=\bot$, and
\item $ L^{k+1}(t^\#)=\obs{}( L_{\orig}(t^\#))$.
\end{itemize}
\end{itemize}

It is noted that the above labelling procedure is done locally for all the belief and trust formulas. We have the following theorem to reduce model checking problem to the above labelling algorithm.
\begin{theorem}
$\cM\models \phi$ if and only if $sat(s^\#,\phi)$ for all $s^\#$ such that $\initdist(s^\#)>0$.
\end{theorem}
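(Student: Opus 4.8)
The plan is to reduce the theorem to a stronger, path-indexed statement that is amenable to structural induction, and then specialise it to initial states. Concretely, I would prove the lemma: for every subformula $\psi$ of $\phi$ and every finite path $\rho$ of $\cM$ with $|\rho|+\depth(\psi)\leq\depth(\phi)+1$, we have $\cM,\rho\models\psi$ if and only if $sat(e^\#(\rho),\psi)$, where all belief functions are taken with respect to the post-synthesis preferences $\pref{A}^\history$. The theorem then follows by instantiating $\rho=s$ for each initial state: the length-one path $s$ satisfies $e^\#(s)=(s,\obs{}(s),\bot,\dots,\bot)$, which is exactly a state with $\initdist^\#(e^\#(s))=\initdist(s)>0$, and conversely every $s^\#$ with $\initdist^\#(s^\#)>0$ equals $e^\#(L_{\orig}(s^\#))$ for an initial state $L_{\orig}(s^\#)\in\sinit$. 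Hence the universal quantification over initial states of $\cM$ matches that over initial states of $\cM^\#$, and $\cM,s\models\phi\iff sat(e^\#(s),\phi)$ for all such $s$ yields the claim.

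The induction rests on three ingredients already available: the proposition giving a bijective correspondence between bounded paths of $\cM$ and paths of $\cM^\#$ through $L_{\orig}$ and $e^\#$; the proposition reducing the cylinder probability $\probm{A}(\cylinder_\rho)$ to the expanded reachability probability $\rP_A(e^\#(\rho))$, from which $\be_A(\rho)=\be_A^\#(e^\#(\rho))$ (and, after the preference update, $\be_A(\rho)=\be_A^{\#,*}(e^\#(\rho))$) was derived; and Theorem~\ref{thm:expand}, whose argument I would reuse for the atemporal subformulas occurring inside belief and trust operators. The base case $p\in\ap$ is immediate from $L^\#(s^\#)=L(L_{\orig}(s^\#))$, and the Boolean cases commute with both $sat$ and $\models$ by the inductive hypothesis. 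For the bounded operators $\forall\next\psi$ and $\prob{\bowtie q}{\next\psi}$ I would invoke the path bijection: each temporal successor $t^\#$ of $s^\#=e^\#(\rho)$ under $T_k^\#$ is the image $e^\#(\rho\,t)$ of a one-step temporal extension with matching probability $T_k^\#(s^\#,a,t^\#)=T(L_{\orig}(s^\#),a,L_{\orig}(t^\#))$, so $prob(s^\#,\psi)$ computes exactly $\probf{\cM,\rho}{}(\next\psi)$. Acyclicity of $\cM^\#$ and the clock $\clk$ guarantee that the depth budget is never exceeded and all required successors exist. The cognitive operators $\G_A,\I_A,\C_A$ are handled identically, replacing temporal successors with $B.g$/$B.i$-successors and restricting to the synthesised sets $\synth{}^\#(x,\clk(s^\#))$ for \possible\ (resp. \legal) changes, exactly as the $sat$ clauses read.

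The crux, and the expected main obstacle, is the belief and trust operators. For $\B_A^{\bowtie q}\psi$ the original semantics evaluates the belief-weighted expectation $\Exp_{\be_A}[V]=\sum_{\rho'}\be_A(\rho')\,V(\rho')$ over an a~priori infinite set of paths, with $\be_A$ supported on the $\obseq{A}$-class of $\rho$. I would show this collapses to the finite local sum $\sum_{\obs{A}^\#(t^\#)=\obs{A}^\#(s^\#)} sat(t^\#,\psi)\cdot\be_A^{\#,*}(t^\#)$ by three observations: (i) $\be_A(\rho')=\be_A^{\#,*}(e^\#(\rho'))$ after the preference update, so the beliefs used for model checking coincide with those carried on $\cM^\#$ via $\rP_A^*$; (ii) the condition $\obs{A}^\#(t^\#)=\obs{A}^\#(s^\#)$ enumerates precisely the expanded images of the paths indistinguishable from $\rho$ to $A$, since the observation-history encoding together with the Transition Type Distinguishability Assumption collapses each equivalence class to a single observation tuple, legitimising the conditioning inside one probability space; and (iii) the inductive hypothesis on the atemporal $\psi$ lets us replace $V(\rho')=(\cM,\rho'\models\psi)$ by $sat(e^\#(\rho'),\psi)$. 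The trust operators add an inner optimisation over intention changes of $B$: here I would match the successor set $\intn{B}(t^\#)$ in $\cM^\#$ (restricted via $\synth{}^\#$ in the \possible\ case for $\DT$, and unrestricted via $\lintn{B}$ for $\CT$) against the $\inf$/$\sup$ over $x\in\supp(\intn{B}(\cdot))$ (resp.\ $x\in\lintn{B}(\cdot)$) in the corresponding $V$-function, weighting by $T_{k,A}^{\#,*}$.

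The delicate points I expect to spend the most care on are all questions of probability-space consistency rather than new ideas. First, one must check that every path contributing to a given belief expectation shares the same transition type, so that conditioning on the union of cylinders over an equivalence class is well defined; this is exactly what the type-based parameterisation of the probability spaces and the $\obs{A}^\#$ definition are designed to guarantee. Second, one must verify that the updated preference functions $\pref{A}^\history$, embodied in $T_{k,A}^{\#,*}$ and $\be_A^{\#,*}$, are threaded uniformly through the synthesis and model-checking phases, so that the $\cM,\rho\models\psi$ side of the lemma is interpreted with the same (post-update) beliefs as the $sat$ labelling. Once these coherence conditions are discharged, each operator clause of $sat$ is seen to compute, on the state $e^\#(\rho)$, exactly the quantity defining $\cM,\rho\models\psi$, completing the induction and hence the theorem.
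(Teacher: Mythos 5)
Your proposal is correct and follows essentially the same route as the paper, which leaves this theorem's proof implicit in its constructions: your path-indexed induction formalizes precisely the intended argument, resting on the same three ingredients (the bijection between bounded initialised paths of $\cM$ and paths of $\cM^\#$, the belief correspondence $\be_A(\rho)=\be_A^{\#,*}(e^\#(\rho))$ obtained from $\probm{A}(\cylinder_\rho)=\rP_A(e^\#(\rho))$, and Theorem~\ref{thm:expand} for formulas in $\Lang_A(\PRTLSf)$), and then matches each $sat$ clause against the corresponding semantic clause within the depth budget $\depth(\phi)$. The coherence points you single out (type-consistency of the observation equivalence classes, and the uniform threading of the updated preferences $\pref{A}^\history$ through the synthesis and model-checking phases) are exactly what the paper's assumptions and two-phase verification procedure are designed to guarantee, so there is no genuine divergence from the paper's approach.
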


\subsubsection{Analysis of Complexity}

For the measurement of the complexity, we use the number of states $|S|$ as the size of the system $\cM$, and the depth $\depth(\phi)$ as the size of the specification formula $\phi$. Moreover, we assume that the sets $\Goal{A}$ and $\Intn{A}$, for all $A\in\Ags$, are polynomial with respect to the size of $\cM$, and the size of the formulas in the guarding mechanism, measured with the number of operators, are also polynomial with respect to the size of $\cM$. As usual, the set of agents $\Ags$ is fixed. Also, because the observation function $\obs{}$ is defined on states, the size of $\Obs{1}\times ...\times \Obs{n}$ is no more than the size of $\states$.

For the complexity of the fragment, we notice that the size of the expanded system $\cM^\#$ is polynomial with respect to the size of $\cM$ but  exponential with respect to the size of the specification formula $\phi$.
\begin{itemize}
\item For the pro-attitude synthesis, for every $x\in \bigcup_{A\in\Ags}\powerset{\Goal{A}}\cup \Intn{A}$, the determination of whether $s^\#\in\synth{}^\#(x,\depth(\phi))$ can be done in EXPTIME.
\item The update of preference functions $\{\pref{A}\}_{A\in\Ags}$ and the model $\cM^\#$ can be done in EXPTIME.
\item For the model checking of the specification formula, the labelling procedure can be done in EXPTIME.
\end{itemize}
Putting them together, we have that the computational complexity of the bounded fragment of the problem is in EXPTIME. It is noted that, the exponential is with respect to the size of the specification formula. If the formula is fixed, it is in PTIME.

\commentout{
%xiaowei: not correct for nested belief.
For the implicit construction of the system $\cM^\#$, we notice that, the algorithm can proceed by maintaining a pair $(s,\mu)$ of (state,distribution) such that  each state $s^\#$ of $\cM^\#$ corresponds with such a pair
\begin{itemize}
\item $s= L_{\orig}(s^\#)$, and
\item $\probm{}(s)=\sum_{t^\# \in S^\#}(\rP(t^\#)\times (\obs{A}^\#(s^\#)=\obs{A}^\#(t^\#)))$
\end{itemize}
Moreover, we define
\begin{itemize}
\item $T((s,\mu),(t,\nu))=T(s,a,t)$ if $\displaystyle \nu(t')=\frac{\sum_{s'\in S}\probm{}(s')\times T(s',a,t')}{\sum_{t'\in S}\sum_{s'\in S}\probm{}(s')\times T(s',a,t')}$,
\item $T_{i,\goal{B}}((s,\mu),(t,\nu))=\gpref{A}{B}^\current(s,t)$ if $\displaystyle \nu(t')=\frac{\sum_{s'\in S}\sum_{x\in\Goal{B}(s)(s)}\probm{}(s')\times \gpref{A}{B}(s',t')\times (t'=s'[x/\gs_j])\times sat^\#(s,\mu,gc(x))}{\sum_{t'\in S}\sum_{s'\in S}\sum_{x\in\Goal{B}(s)(s)}\probm{}(s')\times \gpref{A}{B}(s',t')\times (t'=s'[x/\gs_j])\times sat^\#(s,\mu,gc(x)}$,
\item $T_{i,\intn{B}}((s,\mu),(t,\nu))=\ipref{A}{B}^\current(s,t)$ if $\displaystyle \nu(t')=\frac{\sum_{s'\in S}\sum_{x\in \Intn{B}(s)}\probm{}(s')\times \ipref{A}{B}(s',t')\times (t'=s'[x/\is_B])\times sat^\#(s,\mu,ic(x))}{\sum_{t'\in S}\sum_{s'\in S}\sum_{x\in \Intn{B}(s)}\probm{}(s')\times \ipref{A}{B}(s',t')\times (t'=s'[x/\is_B])\times sat^\#(s,\mu,ic(x)}$

\end{itemize}

Then we define $sat^\#(s,\mu,\phi)$ as follows.
\begin{itemize}
\item $sat^\#(s,\mu,p)$ for $p\in\ap$, if $p\in L(s)$,
\item $sat^\#(s,\mu,\psi_1\lor\psi_2)$ if $sat^\#(s,\mu,\psi_1)$ or $sat^\#(s,\mu,\psi_2)$,
\item $sat^\#(s,\mu,\neg \psi)$ if not $sat^\#(s,\mu,\psi)$,
\item $sat^\#(s,\mu,AX\psi)$ if $sat(t,\nu,\psi)$ for all $(t,\nu)$ such that $T((s,\mu),(t,\nu))>0$,

\item $sat^\#(s,\mu,\prob{\bowtie q}{\next\psi})$ if $(\sum_{(t,\nu)}T((s,\mu),(t,\nu))\times sat^\#(t,\nu,\psi))\bowtie q$,

\item $sat^\#(s,\mu,\B_A^{\bowtie q}\psi)$ if $(\sum_{t\in S}sat(t,\mu,\psi)\times \probm{}(t))\bowtie q$
\item $sat^\#(s,\mu,\G_A\psi)$ if $sat^\#(t,\nu,\psi)$ for all $(t,\nu)$ such that $T_{i,\goal}$
\item $sat^\#(s,\mu,\I_A\psi)$ with $\clk(s^\#)=k\leq \depth(\phi)$ follows the similar pattern as that of $sat(s^\#,\G_A\psi)$.
\item $sat^\#(s,\mu,\C_A\psi)$ with $\clk(s^\#)=k\leq \depth(\phi)$ if $sat(t^\#,\psi)$ for all $t^\#$ such that
\begin{itemize}
\item $ L_{\orig}(t^\#)= L_{\orig}(s^\#)[x/\is_B]$ such that $x\in \Intn{A}$,
\item $ L^{j}(t^\#)= L^j(s^\#)$ for all $0\leq j\leq k$,
\item $ L^{k+1}(s^\#)=...= L^{\depth(\phi)-1}(s^\#)= L^{k+2}(t^\#)=...= L^{\depth(\phi)-1}(t^\#)=\bot$, and
\item $ L^{k+1}(t^\#)=\Obs{}( L_{\orig}(t^\#))$.
\end{itemize}
\end{itemize}
}

\subsection{Hardness Problem}

The  lower bound can be obtained by a reduction from the satisfiability problem of the quantified Boolean formula (QBF), which is to determine if a QBF formula is satisfiable. A canonical form of a QBF is $$\phi=Q_1x_1...Q_nx_n:(l_{11}\lor l_{12}\lor l_{13})\land ...\land (l_{m1}\lor l_{m2}\lor l_{m3})$$
where $Q_i=\exists$ if $i$ is an odd number and $Q_i=\forall$ otherwise, and $l_{kj}\in \{x_1,...,x_n,\sep \neg x_1,...,\neg x_n\}$ for all $1\leq k\leq m$ and $1\leq j\leq 3$. Without loss of generality, we assume that $n$ is an even number.  Every QBF formula can be transformed into this format by inserting dummy variables. For a literal $l_{kj}$, we write $\mathit{var}(l_{kj})$ for its variable and $\mathit{sign}(l_{kj})$ for its sign. For example, if $l_{kj}=\neg x_1$ then $\mathit{var}(l_{kj})=x_1$ and $\mathit{sign}(l_{kj})=\mathit{neg}$, and  if $l_{kj}=x_n$ then $\mathit{var}(l_{kj})=x_n$ and $\mathit{sign}(l_{kj})=\mathit{pos}$.

The problem can be simulated by a game between two agents $\forall$ and $\exists$. The game consists of two phases: in the first phase, agents decide the values of variables $\{x_1,...,x_n\}$, and  in the second phase, agents evaluate the boolean formula  $(l_{11}\lor l_{12}\lor l_{13})\land ...\land (l_{m1}\lor l_{m2}\lor l_{m3})$. We write $\mathit{odd}(k)$ ($\mathit{even}(k)$) to denote that the number $k$ is odd (even). Figure~\ref{fig:QBF} gives a diagram for the construction.  The system has a state space $S=\{x_{kb}, x_{kb}^{ha}~|~1\leq h\leq n, 1\leq k\leq n+1, a,b\in \{0,1\}\}\cup \{c_k, c_k^{ha}~|~1\leq k\leq m, 1\leq h\leq n, a\in \{0,1\}\}\cup \{l_{kj}, l_{kj}^{ha}~|~1\leq k\leq m, j\in\{1,2,3\}, 1\leq h\leq n, a\in \{0,1\}\}\cup \{v_{ha}~|~1\leq h\leq n, a\in \{0,1\}\}$. 

Before proceeding to the reduction, we recall notations $\is_\exists(s)$ and $\is_\forall(s)$ which return the intention at state $s$ for agents $\exists$ and $\forall$, respectively. Assuming that agents' intentions are encoded in the states make it easier to understand the reduction. Similar for the notations $\gs_\exists(s)$ and $\gs_\forall(s)$. 

In the first phase, the agents $\exists$ and $\forall$ make their decisions by changing their intentions. The states involved in the first phase are $\{x_{kb}, x_{kb}^{ha}~|~1\leq h\leq n, 1\leq k\leq n+1, a,b\in \{0,1\}\}$. Intuitively, the state $x_{kb}$ represents that the value of variable $x_k$ was set to $b$, and the superscript $ha$ in the state $x_{kb}^{ha}$ represents that the state ``remembers'' the value $a$ of variable $x_h$.  We let $\Intn{\exists} = \{0,1,2,3\}$, where only $0$ and $1$ will be used in the first phase, and let $\Intn{\forall}=\{0,1\}$. On a state $x_{kb}$ or $x_{kb}^{ha}$, we define $\is_\exists(x_{kb})=\is_\exists(x_{kb}^{ha})= b$ for the case of $\mathit{odd}(k)$, and $\is_\forall(x_{kb})=\is_\forall(x_{kb}^{ha})= b$ for the case of $\mathit{even}(k)$. We do not define $\is_\exists(x_{kb})$ and $\is_\exists(x_{kb}^{ha})$ for the cases of $\mathit{even}(k)$ and $\is_\forall(x_{kb})$ and $\is_\forall(x_{kb}^{ha})$ for the case of $\mathit{odd}(k)$, since they will not be used in this proof. Intuitively, the intentional attitude reflects agent's choice. Because the value of variable $x_k$ for $\mathit{odd}(k)$ will be decided by agent $\exists$, we define $\Intn{\exists}(x_{kb})=\Intn{\exists}(x_{kb}^{ha})=\{0,1\}$, so that agent $\exists$ can choose the value by changing its intention. Similar for $\Intn{\forall}(x_{kb})=\Intn{\forall}(x_{kb}^{ha})=\{0,1\}$ when $\mathit{even}(k)$.

Since we will test the beliefs of agent $\exists$ in the specification formula, only agent $\exists$'s observations are relevant. We let $\obs{\exists}(x_{kb})=\obs{\exists}(x_{kb}^{ha})=b$ for any $1\leq k\leq n+1$, i.e., it can see both agents' intentions. However, it is not able to distinguish observations between $x_{kb}$ and $x_{kb}^{ha}$.

The idea of the first phase is that the states of the form $x_{kb}$ contribute to the main computation in which agents use their intentions to decide the values of the variables under their controls. Every time a decision is made, the execution moves to both the next variable $x_{(k+1)b}$ and a helper computation in which all states are labelled with the fact about the last decision. E.g., in Figure~\ref{fig:QBF}, for the first variable $x_1$, the agent $\exists$ can choose between two states $x_{11}$ and $x_{10}$, at which the variable $x_1$ will take different values $1$ and $0$, respectively. Without loss of generality, if agent $\exists$ chooses $x_{11}$, then the execution will move into both $x_{21}$ and $x_{21}^{11}$, where the state $x_{21}^{11}$ remembers that the value of variable $x_1$ is $1$. Therefore, from now on, the execution will be possible on both the main computation from $x_{21}$ and the helper computation from $x_{21}^{11}$. But the agent $\exists$ can not distinguish these two executions by its observations.
Generalising the above, from variable $x_1$ to $x_n$, we will need to explore the main computation and the other $n$ helper computations, such that one helper computation is added for each variable.

\begin{figure}
\centering
\includegraphics[width=14cm,height=9cm]{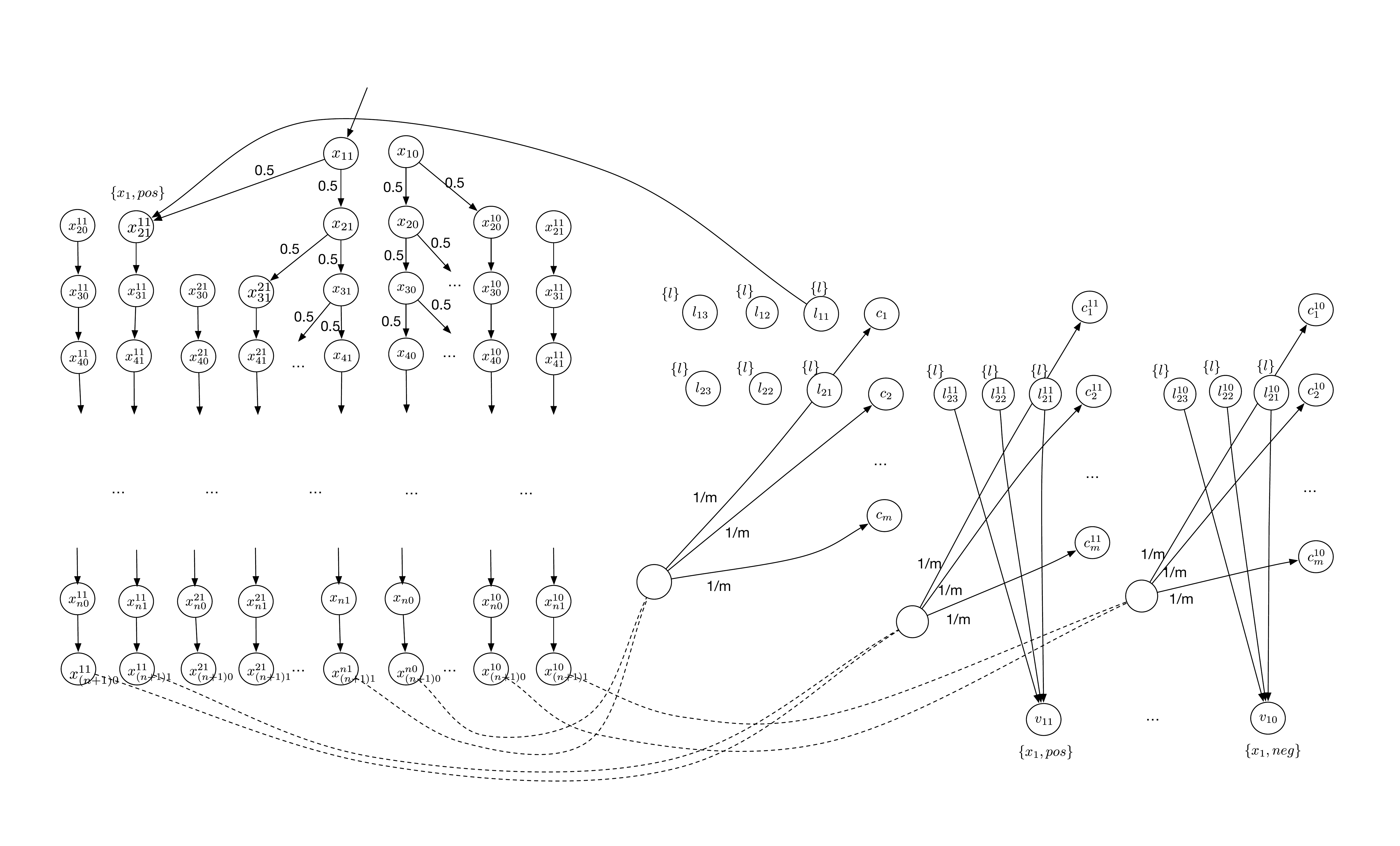}
%\Description{A state transition system consisting of states of the form $x_{ij}^{(kl)}$ and states $l_{ij}^{(kl)}$}
\caption{Diagram for the reduction from QBF problem}
\label{fig:QBF}
\end{figure}

The execution of the first phase can move into those states $x_{(n+1)b}^{ha}$, where the main computation moves into $x_{(n+1)1}^{n1}$ or $x_{(n+1)0}^{n0}$. By the existence of executions on these states, the system remembers the choices made by the two agents. Then in the second phase, the agent $\exists$ needs to justify that its  choices make the formula $(l_{11}\lor l_{12}\lor l_{13})\land ...\land (l_{m1}\lor l_{m2}\lor l_{m3})$ true whenever the QBF formula is satisfiable. To do this, we let the executions of the first phase move  into those states $c_k$ and $c_k^{ha}$ with a uniform distribution. Each of these states represents intuitively a clause of the QBF formula. Then the agent $\exists$ will need to pick a literal to see if it can make the clause satisfiable.

Similar to the first phase, the choice of literals is done by agent $\exists$ changing its intention. For the cognitive mechanism, we have that $\is_\exists(c_k)=\is_\exists(c_k^{ha})=0$ and $\Intn{\exists}(c_k)=\Intn{\exists}(c_k^{ha})=\{0,1,2,3\}$. That is, agent $\exists$ chooses the literal $l_{kx}$ if its intention is changed into $x$. Note that, we assume that agent $\exists$ can distinguish the clauses, i.e., $\obs{\exists}(c_k)=\obs{\exists}(c_k^{ha})=c_k$. And it is able to know which clause the current literal belongs to, i.e., $\obs{\exists}(l_{kj})=\obs{\exists}(l_{kj}^{ha})=c_k$.

After a literal $l_{kj}$ is chosen, the execution will need to move to obtain its value. This is done by a transition into one of the states in the first phase. Let $\mathit{sign}(l_{kj})$ be the sign of the literal and $\mathit{var}(l_{kj})$ be the value of the literal. Without loss of generality, we let $\mathit{var}(l_{kj})=x_h$ and $\mathit{sign}(l_{kj})=a$, then the transition from state $l_{kj}$ will move into $x_{(h+1)a}^{ha}$, at which the atomic propositions $\{x_{h}, a\}$ are labelled. For the state $l_{kj}^{ha}$, it can only move into the state $v_{ha}$, at which the atomic propositions $\{x_{h}, a\}$ are labelled.

Formally, let $\ap=\{x_i~|~i\in \{1,...,n\}\}\cup \{\literal{l}, \literal{pos}, \literal{neg}\}$ be a set of atomic propositions, we have $\cM(\phi)=(\Ags, S,\sinit,\{\Act{A}\}_{A\in\Ags},
%\{\LAct{A}\}_{A\in\Ags},
\sep T, \sep  L,\sep  \{\Obs{A}\}_{A\in\Ags}, \sep \{\obs{A}\}_{A\in\Ags}, \sep \{\Cognition_A\}_{A\in\Ags}, \sep \{\cstrat{A}\}_{A\in\Ags}, \sep \{\pref{A}\}_{A\in\Ags})$ such that
\begin{itemize}
\item $\Ags=\{\exists,\forall\}$,
\item $S=\{x_{kb}, x_{kb}^{ha}~|~1\leq h\leq n, 1\leq k\leq n+1, a,b\in \{0,1\}\}\cup \{c_k, c_k^{ha}~|~1\leq k\leq m, 1\leq h\leq n, a\in \{0,1\}\}\cup \{l_{kj}, l_{kj}^{ha}~|~1\leq k\leq m, j\in\{1,2,3\}, 1\leq h\leq n, a\in \{0,1\}\}\cup \{v_{ha}~|~1\leq h\leq n, a\in \{0,1\}\}$,
\item $\sinit=\{x_{11}\}$ with $\initdist(x_{11})=1$,
\item $\Act{A}=\{\epsilon \}$ for $A\in \{\exists,\forall\}$,
%\item $\LAct{A}(s)=\{\epsilon \}$ for $s\in S$ and $A\in \{\exists,\forall\}$,
\item the transition function is as follows.
\begin{itemize}
\item $T(x_{kb}, x_{(k+1)b})=T(x_{kb}, x_{(k+1)b}^{kb})=0.5$ for $1\leq k\leq n-1$, $T(x_{nb}, x_{(n+1)b})=1$,
\item $T(x_{kb}^{ha}, x_{(k+1)b}^{ha})=1$, for $1\leq k\leq n$,
\item $T(x_{(n+1)b}^{ha},c_k^{ha})=T(x_{(n+1)b},c_k)=1/m$,
\item $T(l_{kj},x_{(h+1)1}^{h1})=1$ for $\mathit{sign}(l_{kj})=\mathit{pos}$ and $\mathit{var}(l_{kj})=v_h$,
\item $T(l_{kj},x_{(h+1)0}^{h0})=1$ for $\mathit{sign}(l_{kj})=\mathit{neg}$ and $\mathit{var}(l_{kj})=v_h$,
\item $T(l_{kj}^{ha},v_{ha})=1$
\end{itemize}
\item the labelling function is as follows.
\begin{itemize}
\item $ L(x_{kj}^{h1})=\{x_h,pos\}$, $ L(x_{kj}^{h0})=\{x_h,neg\}$,
\item $ L(v_{h1})=\{x_h,pos\}$, $ L(v_{h0})=\{x_h,neg\}$,
\item $ L(l_{kj})=l$
\end{itemize}
\item the set of possible observations is as follows. 
\begin{itemize}
\item $\Obs{\exists}=\{0,1\}\cup \{c_k~|~1\leq k\leq m\}\cup \{\emptyset\}$ for player $\exists$, and 
\item we ignore the definition for player $\forall$ as it is not used in the reduction. 
\end{itemize}
\item the partial observation functions are defined as follows for the player $\exists$. 
\begin{itemize}
\item $\obs{\exists}(x_{kb})=\obs{\exists}(x_{kb}^{ha})=b$ for $1\leq k\leq n+1$ and $b\in \{0,1\}$, 
\item $\obs{\exists}(c_k)=\obs{\exists}(c_k^{ha})=c_k$ for $1\leq k\leq m$, 
\item $\obs{\exists}(l_{kj})=\obs{\exists}(l_{kj}^{ha})=c_k$ for $1\leq k\leq m$ and $j \in \{1,2,3\}$, and 
\item $\obs{\exists}(v_{ha})=\emptyset$. 
\end{itemize}
\item the set of intentions is as follows. 
\begin{itemize}
\item $\Intn{\exists}=\{0,1,2,3\}$ and $\Intn{\forall}=\{0,1\}$, and 
\item we ignore the definition for $\Goal{\exists}$ and $\Goal{\forall}$ as they are not used in the reduction. 
\end{itemize}
\item the guarding mechanism is trivial. 
\item  the preference functions are defined as follows.
\begin{itemize}
\item $\ipref{\exists}{\exists}(x_{kb})(0)=\ipref{\exists}{\exists}(x_{kb})(1)=0.5$ for $\mathit{odd}(k)$,
\item $\ipref{\exists}{\forall}(x_{kb})(0)=\ipref{\exists}{\forall}(x_{kb})(1)=0.5$ for $\mathit{even}(k)$,
\item $\ipref{\exists}{\exists}t(c_k)(0)=\ipref{\exists}{\exists}(c_k)(1)=\ipref{\exists}{\exists}(c_k)(2)=\ipref{\exists}{\exists}(c_k)(3)=0.25$ for $1\leq k\leq m$.
\end{itemize}
\end{itemize}

In such a system, we have that $\phi$ is satisfiable if and only if
$$
\cM(\phi)\models \overline{\I_\exists} AX ~\I_\forall AX ... \I_\forall AX ~\overline{\I_\exists} (l\land \bigwedge_{i=1}^n ((\B_\exists^{\geq 1} AX (x_i\rimp pos)) \lor (\B_\exists^{\geq 1} AX (x_i\rimp neg)))).
\commentout{\footnote{In this formula, we write $AX$ for the operator $\forall \next$, to avoid the confusion with the agents $\forall$ and $\exists$. }}
$$
Note that in above formula and below, we use $AX$ instead of $\forall \next$ to avoid confusion with the agents $\forall$ and $\exists$. Intuitively, the prefixes $\overline{\I_\exists} AX$ and $\I_\forall AX$ represent the choosing of variables' values by the agents $\exists$ and $\forall$, respectively. After the last $AX$, the execution moves to the states $c_k$ and $c_k^{ha}$. Then, the last $\overline{\I_\exists}$ is for the agent $\exists$ to choose a specific literal, and then justify that it is satisfiable. The former is represented by the presence of a literal $l$ in the formula. The latter is guaranteed by $\bigwedge_{i=1}^n ((\B_\exists^{\geq 1} (x_i\rimp pos)) \lor (\B_\exists^{\geq 1} (x_i\rimp neg)))$. Without loss of generality, we let $l_{ha}$ be the chosen literal and assume $\mathit{var}(l_{ha})= x_i$. 
%If $\mathit{var}(l_{ha})\neq x_i$ then both $\B_\exists^{\geq 1} (x_i\rimp pos)$ and $\B_\exists^{\geq 1} (x_i\rimp neg)$ are true. 
If $\mathit{sign}(x_i) = \mathit{neg}$ then the main computation will reach the state $x_{(h+1)0}^{h0}$ which is labelled with $x_h$ and $\mathit{neg}$. By construction, we know that if the variable $x_i$ is assigned the value $\mathit{neg}$ by one of the agents then the helper computation with superscript $i0$ is explored, and the helper computation will lead to the state $v_{i0}$ at which the propositions $x_i$ and $\mathit{neg}$ are labelled. On the other hand, the helper computation with superscript $i1$ will not be explored, which means that the state $v_{i1}$ labelled with $x_i$ and $\mathit{pos}$ is not considered possible by the agent $\exists$. Therefore, we have that $\B_\exists^{\geq 1} (x_i\rimp neg)$. Similar argument can be taken to show that $\B_\exists^{\geq 1} (x_i\rimp pos)$ when $\mathit{sign}(x_i) = \mathit{pos}$.

% Moreover, by the construction, all the helper computations that are active

\section{A Decidable Fragment with $\until$ and $\always$ temporal operators}\label{sec:mc}

We present a fragment of the problem that is PSPACE-complete. Unlike the bounded fragment in Section~\ref{sec:bounded}, this fragment works with the $\until$ and $\always$ temporal operators in specification formula $\phi$. With $\until$ and $\always$ operators, the specification formula can express long-term properties about agent's mental attitudes in the system, and therefore this fragment complements the bounded fragment. However, the fragment is subject to other restrictions, including the following:

\begin{itemize}
\item The guarding mechanism is trivial. That is, the algorithm works with model checking without conducting pro-attitude synthesis first.
%The synthesis of belief formulas is strictly harder when considering the $\until$ and \always\ temporal operators, and we will in Section~\ref{sec:ordereffects} discuss an approximation approach based on the order effects.
%In this section, we assume that the mental functions for agents' pro-attitudes are defined with the Markovian functions $\Goal{A}$ and $\Intn{A}$, for all agents $i\in\Ags$.
%, i.e., the definitions $\goal{A}^{\history,k}$ and $\intn{A}^{\history,j}$ depending on agents' beliefs are not allowed.

\item The specification formula $\phi$ has the restrictions that
\begin{itemize}
\item it works with a single agent's beliefs and trusts,
\item there is no nested beliefs or trusts,
\item beliefs and trusts cannot be in the scope of a probabilistic operator $\prob{}{}$, and
\item there is a constant number of belief or trust operators.
\end{itemize}
We make a remark on these constraints. For the first constraint, that we choose to work with a single agent's beliefs and trusts is because in Section~\ref{sec:general} we have shown that the single-agent fragment is undecidable. The second constraint is imposed to keep the algorithm simple. We conjecture that it can be relaxed. The third and fourth constraints are to keep the complexity in PSPACE. For the probabilistic operator, it requires to compute a set of paths concurrently. We conjecture that the complexity can be in EXPTIME instead of in PSPACE if relaxing this constraint. To complement this, in Section~\ref{sec:ptime}, we study a fragment in which the probabilistic operator works with qualitative beliefs and trusts. The constraint on the number of belief or trust operators is to keep the expanded system, to be given below, in the size of a polynomial with respect to the formula.

\item There is a restriction on the system about agent $A$'s beliefs along the transitions. The restriction, whose details will be given later, is that along any path of an expanded system, the evaluations of belief or trust subformulas
% a probabilistic distribution over the set of states
are the same on any two expanded states if the support of a probabilistic distribution over the set of states is the same.
\end{itemize}
Without loss of generality, we assume that it is the agent $A$'s beliefs that the specification formula $\phi$ is to describe. Note that the fragment has the full expressiveness of LTL.

\subsection{Upper Bound}

Because the guarding mechanism is trivial, we need only handle the third step of the verification procedure. 

\paragraph{Preprocessing Sub-Formulas in the Scope of Belief Operators}

The first step of the algorithm is to pre-process subformulas in the scope of belief or trust operators. Because there is no nested beliefs, these subformulas do not contain any belief or trust operator. Therefore, they can be handled by the usual probabilistic temporal model checking techniques, with some slight adaptations for the formulas $\G_A\psi$, $\I_A\psi$ and $\C_A\psi$. For every such subformula $\psi$, we introduce a new atomic proposition $p_\psi$, which intuitively represents the probability of $\psi$ being satisfied. Let $\ap'=\ap \bigcup \ap_1$ such that $\ap_1$ is the set of new atomic propositions.
We upgrade the labelling function $ L$ into $ L': S\times \ap' \rightarrow [0,1]$, which allows us to extract the probability value ''stored`` in $p_\psi$. 
\begin{itemize}
\item If $p\in\ap$ then $ L'(s,p)=1$, if $p\in  L(s)$, and $ L'(s,p)=0$, otherwise.
\item If $p_\psi\in\ap_1$ then $ L'(s,p_\psi)=Prob(\cM,s,\psi)$, where $Prob(\cM,s,\psi)$ is the probability of satisfying formula $\psi$ among all the temporal paths starting from $s$. We note that, because of Assumption~\ref{assump:detbeh}, there exists a unique value for $Prob(\cM,s,\psi)$.  
\end{itemize}
%The following proposition says that the computation depends only on the current state.
%\begin{proposition}\label{propn:nonbeliefs}
%For a formula $\psi$ without any belief or trust %operator, we have that %$Prob(\cM,\rho,\psi)=Prob(\cM,\rho',\psi)$ whenever %$last(\rho)= last(\rho')$.
%\end{proposition}

We assume that $L'$ is computed from $L$ as part of the algorithm and therefore no change to our previous definitions (such as Definition~\ref{def:smg}) is necessary.
Note that, after this pre-processing, all belief or trust formulas are of the form $\B_A^{\bowtie q}p$ or $\DT_{A,B}^{\bowtie q}p$ such that $p\in\ap$ is an atomic proposition. This pre-processing can also be applied to subformulas containing no belief or trust formulas. Therefore, in the following, we assume that in $\phi$, all subformulas contain at least one belief or trust formulas.

We consider the negated specification formula $\neg\phi$, and verify the existence of a counterexample. We assume that the formula $\neg\phi$ is in negation normal form, i.e., negation operators only appear  in  front of atomic propositions.
%Without loss of generality, we assume that the branching time operator appearing in front of a temporal operator is $E$.

\paragraph{Construction of Expanded System}

\newcommand{\labeling}{{\tt la}}
\newcommand{\trend}{{\tt e}}

Let $\phi$ be the specification formula and $b_\phi=\{\psi_1,...,\psi_k\}$ be the set of belief or trust subformulas of the form $\B_A^{\bowtie q}p$ or $\DT_{A,B}^{\bowtie q}p$ such that $p_1,...,p_k\in\ap$ are their atomic propositions. We need some notations.
Given a set $Q\subseteq S$ of states, a probability space  on $Q$ is a triple $\alpha_Q=(S,\powerset{\powerset{S}},\probm{Q})$ such that $\probm{Q}(\{s\})>0$ if and only if $s\in Q$.  Given a probability space $\alpha_Q$, we can evaluate belief and trust formulas as follows. 
\begin{itemize}
\item $\probm{Q}(\B_A^{\bowtie q}p) =\sum_{s\in Q}\probm{Q}(s)\times  L'(s,p)$.
\item $\probm{Q}(\DT_{A,B}^{\bowtie q}p) =\sum_{s\in Q}\probm{Q}(s)\times \bigotimes_{x\in \Intn{B}(s)} \exists s': L'(s',p)\times (s\ctrans{B.i.x} s')$, where $\otimes\equiv \inf$ if $\bowtie\in \{\geq, >\}$, $\otimes\equiv\sup$ if $\bowtie\in \{\leq, <\}$.
\item the case of $\CT_{A,B}^{\bowtie q}p$ is analogous to $\DT_{A,B}^{\bowtie q}p$.
\end{itemize}
as the probability of those states satisfying formula $\psi$. In the following, we use the probability measure $\probm{Q}$ to represent the probability space $\alpha_Q$.
Given a probability distribution $\probm{P}$ and a set of states $Q$, we define (for a set of states $P$):
\[
\probm{P,\Act{}, Q}(t)=\frac{ \sum_{s\in P}(\probm{P}(s)\times T(s,a,t))}{\sum_{t\in Q}\sum_{s\in P}(\probm{P}(s)\times T(s,a,t))},
\]
as the probability measure over $Q$, by a temporal transition from $P$ in one step. Moreover, on the pro-attitude dimensions, we define:
$$\probm{P,\Goal{B}(s), Q}(t)=\frac{ \sum_{s\in P}\sum_{x\in\Goal{B}(s)(s)}(\probm{P}(s)\times \gpref{A}{B}(s,x)\times (s\ctrans{B.g.x} t))}{\sum_{t\in Q} \sum_{s\in P}\sum_{x\in\Goal{B}(s)(s)}(\probm{P}(s)\times \gpref{A}{B}(s,x)\times (s\ctrans{B.g.x} t))}$$
and
$$\probm{P, \Intn{B}, Q}(t)=\frac{ \sum_{s\in P}\sum_{x\in \Intn{B}(s)}(\probm{P}(s)\times \ipref{A}{B}(s,x)\times (s\ctrans{B.i.x} t))}{\sum_{t\in Q} \sum_{s\in P}\sum_{x\in \Intn{B}(s)}(\probm{P}(s)\times \ipref{A}{B}(s,x)\times (s\ctrans{B.i.x} t))}$$
for $B\in\Ags$. Intuitively, they are the probability measures over $Q$ by a goal or intentional transition of agent $B$ from the states in $P$. Moreover, the transition is parameterised over the agent $A$'s preference functions.
Now we define the initial belief for agent $A$. Given a state $s$ such that $\initdist(s)>0$, we define $\obs{A}(\initdist,s)=\{t~|~\obs{A}(t)=\obs{A}(s), \initdist(t)>0\}$ to be the set of possible initial states whose observation is the same as $s$. Then, we have
$$
\probm{\obs{A}(\initdist,s)}(t)=\frac{\initdist(t)}{\sum_{t\in \obs{A}(\initdist,s)}\initdist(t)}
$$
as the initial belief.

Based on the above notations, we can construct an expanded system which maintains a probability value for each subformula in $b_\phi$. The system is $\cM_i(\phi)=(\Ags, S^\dagger,\sinit^\dagger,\{\Act{A}\}_{A\in\Ags},
%\{\LAct{A}^\dagger\}_{A\in\Ags},
T^\dagger, L^\dagger)$ such that
\begin{itemize}
\item $S^\dagger=S\times \dist{S}\times R \times ... \times R$, where $R$ is the domain of real numbers. Note that, although $R$ is infinite, we need only a finite number of real numbers. We assume that all numbers are of fixed precision and can be encoded with complexity $O(1)$.
\item $\sinit^\dagger=\{(s,\probm{\obs{A}(\initdist,s)},\probm{\obs{A}(\initdist,s)}(p_1),...,\probm{\obs{A}(\initdist,s)}(p_k))~|~\initdist(s) > 0, \ap_1 = \{p_1,...,p_k\} \}$,
%\item $\LAct{A}^\dagger((s,\probm{Q},v_1,...,v_k))=\LAct{A}(s)$,
\item the transition relation is defined as follows.
\begin{itemize}
\item $T^\dagger((s,\probm{Q},v_1,...,v_k),(t,\probm{Q,\Act{}, Q'},\probm{Q,\Act{}, Q'}(\psi_1),...,\probm{Q,\Act{}, Q'}(\psi_k)))=T(s,a,t)$ if \\ $\type(s,t)=\Act{}$ and $Q'=\{t'~|~\probm{Q}(s')>0, s'\trans{a'}t', \obs{A}(t')=\obs{A}(t)\}$.
\item $T^\dagger((s,\probm{Q},v_1,...,v_k),(t,\probm{Q,\lgoal{B}(s),Q'},\probm{Q,\lgoal{B}(s),Q'}(\psi_1),...,\probm{Q,\lgoal{B}(s),Q'}(\psi_k)))=\\ \gpref{A}{B}(s,t)$ if and only if $\type(s,t)=B.g$ and  $Q'=\{t'~|~s'\in Q,~ y \in \lgoal{B}(s)(s), s' \ctrans{B.g.y} t', \\\obs{A}(t')=\obs{A}(t) \}$.
\item $T^\dagger((s,\probm{Q},v_1,...,v_k),(t,\probm{Q, \lintn{B},Q'},\probm{Q, \lintn{B},Q'}(\psi_1),...,\probm{Q, \lintn{B},Q'}(\psi_k)))=\ipref{A}{B}(s,t)$ if and only if $\type(s,t)=B.i$ and  $Q'=\{t'~|~s'\in Q,~ y \in \lintn{B}(s), s'\ctrans{B.i.y} t', \obs{A}(t')=\obs{A}(t) \}$.

\end{itemize}
\item $ L^\dagger((s,\probm{Q},v_1,...,v_k))= L(s)$.
\end{itemize}

\paragraph{Constraints on the Expanded System}

The constraints include,
\begin{itemize}
\item there do not exist two states $(s,\probm{Q},v_1,...,v_k)$ and $(s,\probm{Q'},v_1',...,v_k')$ in the expanded system $\cM^\dagger$ such that they are reachable from one to the other, $supp(\probm{Q})=supp(\probm{Q'})$, and
%$\probm{Q}\neq \probm{Q'}$.
%In the following, we provide an intuition about this constraint.
$v_j\neq v_j'$ for some $1\leq j\leq k$, and
\item  for any two reachable states $(s,\probm{Q},v_1,...,v_k)$ and $(s,\probm{Q'},v_1',...,v_k')$ such that $supp(\probm{Q})=supp(\probm{Q'})$, only one of them needs to be explored with $T^\dagger$.
\end{itemize}
With the above constraints, we can see that the size of the expanded system is $O(2^{|\cM|})$. However, we do not need to explicitly construct it. The algorithm explores it on-the-fly.

%\begin{proposition}
%The above constraint is equivalent to the constraint that the states in
%\end{proposition}

\paragraph{Model Checking Algorithm}

\newcommand{\automaton}{{\cal A}}

For the algorithm, we use function $sat(s,\probm{Q},v_1,...,v_k,\phi)$ to denote the satisfiability of the formula $\phi$ on the state $(s,\probm{Q},v_1,...,v_k)$. For the base cases of non-temporal operators, we have the following inductive rules:

\begin{algorithm}\label{alg:alg1}
Given a state $(s,\probm{Q},v_1,...,v_k)\in S^\dagger$ and a formula $\phi$, we define the satisfiability relation $sat(s,\probm{Q},v_1,...,v_k,\phi)$ as follows:
\begin{itemize}
\item $sat(s,\probm{Q},v_1,...,v_k,p)$ for $p\in\ap$  if $p\in L(s)$;
\item $sat(s,\probm{Q},v_1,...,v_k,\neg\phi)$ if not $sat(s,\probm{Q},v_1,...,v_k,\phi)$;
\item $sat(s,\probm{Q},v_1,...,v_k,\phi_1\lor \phi_2)$ if  $sat(s,\probm{Q},\sep v_1,...,v_k,\phi_1)$ or $sat(s,\probm{Q},\sep v_1,...,\sep v_k,\sep \phi_2)$;
\item $sat(s,\probm{Q},v_1,...,v_k,\G_j\phi)$ if $sat(t,\probm{Q'},v_1',...,v_k',\phi)$ for all states $(t,\probm{Q'},\sep v_1',...,\sep v_k')$ such that $T^\dagger((s,\probm{Q},v_1,...,v_k),(t,\probm{Q'},v_1',...,v_k')) > 0 $ and $\type(s,t)=B.g$.

\item $sat(s,\probm{Q},v_1,...,v_k,\I_j\phi)$ if $sat(t,\probm{Q'},v_1',...,v_k',\phi)$ for all states $(t,\probm{Q'},\sep v_1',...,\sep v_k')$ such that $T^\dagger((s,\probm{Q},v_1,...,v_k),(t,\probm{Q'},v_1',...,v_k')) > 0 $ and $\type(s,t)=B.i$.

\item $sat(s,\probm{Q},v_1,...,v_k,\C_j\phi)$ if $sat(t,\probm{Q'},v_1',...,v_k',\phi)$ for all states $(t,\probm{Q'},\sep 0,...,0)$ such that $s\ctrans{B.i.x} t$ for some $x\in \lintn{B}(s)$, and $\probm{Q'}$ is any distribution such that $Q'=\{t'~|~ s'\in Q, s'\ctrans{B.i.x} t', x\in \lintn{B}(s), \obs{A}(t')=\obs{A}(t) \}$. Note that, as we stated earlier, no belief formula will appear in the scope of a \legal\ intention operator. So the probability values are not useful.

\item For the computation of $sat(s,\probm{Q},v_1,...,v_k,\prob{\bowtie q}{\next \phi})$ and $sat(s,\probm{Q},v_1,...,v_k,\sep\prob{\bowtie q}{\phi \until \phi })$, we note that, because the constraint that no belief formula can be in the scope of a probability operator, the components $\probm{Q}$ and $v_1,...,v_k$ in the state are irrelevant. Therefore, we can compute $sat(s,\prob{\bowtie q}{\next \phi})$ and $sat(s,\prob{\bowtie q}{\phi \until \phi })$ as the usual PCTL model checking.

\item $sat(s,\probm{Q},v_1,...,v_k,\B_A^{\bowtie q}p)$ if $v_j \bowtie q$, for $\B_A^{\bowtie q}p$ being the $j$th formula in the set $b_\phi$.

\item $sat(s,\probm{Q},v_1,...,v_k,\DT_{A,B}^{\bowtie q}p)$ if $v_u \bowtie q$, for $\DT_{A,B}^{\bowtie q}p$ being the $u$th formula in the set $b_\phi$.

\end{itemize}
\end{algorithm}
Note that the above cases have the same results for those distributions $\probm{Q}$ whose supports are the same.
Now we discuss the case of $\exists\psi$. With the above Algorithm~\ref{alg:alg1}, we note that other formulas can be interpreted directly on the states. Therefore, we can introduce atomic propositions $p_\psi$ for those $\psi$, and update the labelling function into $p_\psi\in  L^\dagger((s,\probm{Q},v_1,...,v_k))$ if and only if $sat(s,\probm{Q},v_1,...,v_k,\psi)$. With this processing, the formula $\psi$ is turned into an LTL formula $\psi'$.
Then as the usual LTL model checking, we turn the formula $\psi$ into a Buchi automaton $\automaton_\psi$. Let $\automaton_\psi=(Q,\powerset{\ap}, \delta, B, F)$ be the automaton such that $Q$ is a set of states, $\delta: Q\times \powerset{\ap} \rightarrow \powerset{Q}$ is a transition relation, $B\subseteq Q$ is a set of initial states, and $F\subseteq Q$ is a set of sets of acceptance states. Then we construct their product system $\cM^\dagger\times \automaton_{\psi'}=(S^\ddagger, \sinit^\ddagger, T^\ddagger,  L^\ddagger)$ as follows.
\begin{itemize}
\item $S^\ddagger=S\times \dist{S}\times R \times ... \times R \times Q$
\item $(s,\probm{Q}, v_1,...,v_k, q)\in \sinit^\ddagger$ if $(s,\probm{Q}, v_1,...,v_k)\in \sinit^\dagger$ and $(q_0, L^\dagger((s,\probm{Q}, \sep v_1,...,v_k)), \sep q)\in \delta$ such that $q_0\in B$.
\item $((s,\probm{Q}, v_1,...,v_k,q),(t,\probm{Q'}, v_1',...,v_k',q'))\in T^\ddagger$ if we have that \\ $T^\dagger((s,\probm{Q}, v_1,...,v_k),(t,\probm{Q'}, v_1',...,v_k'))>0$, $\type(s,t)=\Act{}$, and \\ $(q, L^\dagger((t,\probm{Q'}, v_1',...,v_k')),q')\in\delta$.
\item $ L^\ddagger(s,\probm{Q}, v_1,...,v_k,q)= L^\dagger((s,\probm{Q}, v_1,...,v_k))$.
\end{itemize}

\begin{algorithm}\label{alg:alg2}
For the case where the formula is $\exists\psi$, we have that
\begin{itemize}
\item $sat(s,\probm{Q},v_1,...,v_k,\exists\psi)$ if $\cM^\dagger[\{s,\probm{Q},v_1,...,v_k\}/\sinit^\dagger] \times \automaton_{\psi}$ is empty;
\end{itemize}
where $\cM^\dagger[\{s,\probm{Q},v_1,...,v_k\}/\sinit^\dagger]$ is the system of $\cM^\dagger$ by substituting the initial distribution $\sinit^\dagger$ into $\sinit^\dagger(\{s,\probm{Q},v_1,...,v_k\})=1$.
\end{algorithm}

Then for a formula with nested branching time operators, we can use the approach of CTL$^*$ model checking.
Finally, we have the following theorem to state the correctness of the above algorithm.
\begin{theorem}
$\cM\models \phi$ iff, in $\cM^\dagger$, $sat(s,\probm{Q}, v_1,...,v_k,\phi)$ holds for all $(s,\probm{Q}, \sep v_1,...,\sep v_k)\in \sinit^\dagger$.
\end{theorem}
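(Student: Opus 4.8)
The plan is to prove the equivalence by exhibiting a step-by-step correspondence between finite paths of $\cM$ and finite paths of $\cM^\dagger$, establishing a \emph{tracking invariant} that relates the extra components $\probm{Q}$ and $v_1,\dots,v_k$ of an expanded state to the belief function and to the values of the belief and trust subformulas of $\phi$, and then inducting over the structure of $\phi$. First I would show that projecting an expanded path onto its first component yields a path of $\cM$, and that, conversely, every finite path $\rho$ of $\cM$ lifts to a unique expanded path. Along corresponding paths I would prove, by induction on $|\rho|$, that if $\rho$ lifts to $(s,\probm{Q},v_1,\dots,v_k)$ then $s=\last(\rho)$, the distribution $\probm{Q}$ equals the push-forward of the belief $\be_A$ onto current states (equivalently, the belief state $\last(\dmap{A}(\rho))$), and each $v_j$ equals the value of the $j$-th subformula of $b_\phi$ evaluated at $\rho$. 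The claim about $\probm{Q}$ follows from the recursive characterisation of $\be_A$ proved earlier together with the Bayesian update $\probm{P,\Act{},Q}$ and its goal and intention analogues built into the three cases of $T^\dagger$; Assumption~\ref{assump:typeObs} and the uniformity assumptions are needed here to guarantee that all paths in one observation class share a transition type and hence lie in a common probability space, so that each update is well defined.

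Granting the invariant, the inductive cases for non-temporal operators are direct. Atomic propositions and Boolean connectives are immediate from $L^\dagger((s,\probm{Q},\dots))=L(s)$. For $\G_A$, $\I_A$ and $\C_A$ the $sat$-clauses quantify over precisely those expanded transitions whose type is $B.g$, $B.i$, or a legal intention change, which under the path correspondence are exactly the cognitive successors of Definition~\ref{def:semantics3}. The belief and trust operators are the point of the construction: by the third part of the invariant, $sat$ for $\B_A^{\bowtie q}p$ and $\DT_{A,B}^{\bowtie q}p$ reduces to the test $v_j\bowtie q$, and the stored value $v_j$ equals $\sum_{s\in Q}\probm{Q}(s)\cdot L'(s,p)$ (respectively its $\inf$/$\sup$ variant for trust), which, since the pre-processing sets $L'(s,p_\psi)=Prob(\cM,s,\psi)$, is exactly the belief-weighted expectation of Definition~\ref{def:semantics5}. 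The probabilistic operator is handled by ordinary PCTL model checking: because beliefs and trust may not occur in the scope of $\prob{}{}$, the components $\probm{Q}$ and $v_1,\dots,v_k$ are irrelevant to its evaluation, so its value is fixed by $s$ alone under Assumption~\ref{assump:detbeh}.

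The delicate case, and the one I expect to be the main obstacle, is the path quantifier handled through the B\"uchi product (Algorithm~\ref{alg:alg2}). After pre-processing, every belief/trust-free and every belief/trust subformula is abstracted to a fresh atomic proposition labelling the expanded states, turning $\psi$ into an LTL formula $\psi'$; I would then relate $\cM,\rho\models\exists\psi$ to the existence of an accepting run of $\cM^\dagger\times\automaton_{\psi'}$ by the standard automata-theoretic argument, with $\forall\psi\equiv\neg\exists\neg\psi$ treated dually. The obstruction is that $\cM^\dagger$ has an infinite state space because of the real-valued components, so neither the emptiness check nor the consistency of the proposition labelling along accepting cycles is automatic. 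Here I would invoke the \emph{expanded-system restriction}: it forbids two mutually reachable expanded states that share $s$ and $\supp(\probm{Q})$ from disagreeing on any $v_j$, so the reachable fragment collapses, modulo equality of the pair $(s,\supp(\probm{Q}))$, to an effectively finite graph on which the product is finite and non-emptiness decidable, and on which the labelling $L^\dagger$ (hence the truth of every $\B$ or $\DT$ atom) is stable along every loop witnessing acceptance. I would then propagate the equivalence through nested branching-time operators by a \CTLS-style bottom-up relabelling, and finally specialise to the initial configuration, verifying that the initial belief $\probm{\obs{A}(\initdist,s)}$ and the initial values agree with $\be_A$ and the subformula values at the initial states of $\cM$, which yields the stated biconditional.
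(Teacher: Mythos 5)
Your proposal is correct and takes essentially the same route as the paper: the paper states this theorem as the correctness claim of its construction without spelling out a proof, and your tracking invariant --- $\probm{Q}$ as the push-forward of $\be_A$ onto current states and each $v_j$ as the stored value of the corresponding belief/trust subformula --- together with structural induction and the B\"uchi-product treatment of path quantifiers is precisely the argument the construction (the preprocessing via $L'$, the three-case Bayesian update in $T^\dagger$, Algorithms~\ref{alg:alg1} and~\ref{alg:alg2}, and the expanded-system restriction) is designed to support. Your handling of the delicate point matches the paper's as well: finiteness and label-stability of the reachable fragment come from the restriction on mutually reachable states sharing $(s,\supp(\probm{Q}))$ together with the paper's convention that only one such state is explored.
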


\paragraph{Analysis of Complexity}

For the complexity, we need to remember the tuple $(s,\probm{Q}, v_1,...,v_k)$ for Algorithm~\ref{alg:alg1} or the tuple $(s,\probm{Q}, v_1,...,v_k,q)$ for Algorithm~\ref{alg:alg2}. For Algorithm~\ref{alg:alg1}, we may need a polynomial number of alternations to handle the negation operators or the pro-attitudes. Therefore, by Theorem 4.2 of~\cite{CKS1980}, it is in NPSPACE=PSPACE, the same complexity as that of LTL model checking. On the other hand, unlike the LTL model checking, the program complexity, i.e., the complexity measured with respect to the size of the system by assuming that the formula is fixed, is also in PSPACE.

\subsection{Hardness Problem}

The PSPACE-hardness of the problem  can be obtained by a reduction from the problem of deciding if, for a given nondeterministic finite state automaton $A$ over an alphabet $\Sigma$, the language $\Lang(A)$ is  equivalent to the universal language $\Sigma^*$.
Let $A=(Q,q_0,\delta,F)$ be an NFA such that $Q$ is a set of states, $q_0\in Q$ is an initial state, $\delta: Q\times \Sigma\rightarrow \powerset{Q}$ is a transition function, and $F\subseteq Q$ is a set of final states.

We construct a system $\cM(A)$ whose states, except for the initial state $s_0$, can be classified into two subsystems $\cM_1(A)$ and $\cM_2(A)$. Intuitively, the subsystem $\cM_2(A)$ simulates the execution of $A$ and upon reading each input symbol, probabilistically checks whether the current state of the automaton is a final state, while the subsystem $\cM_1(A)$ simulates the universal language $\Sigma^*$, at the same time synchronising with $\cM_2(A)$ on words which produce a valid run of $A$.

In particular, each state of $\cM_2(A)$ is a pair of a symbol of the alphabet and an annotated automaton state, where the latter can be of the form $q$, $q^u$, $q^n$, $q^t$, or $q^{tf}$ for $q \in Q$. The state $(l,q)$ is a regular state corresponding to those of automaton $A$, representing that the state $q$ is reached from another state by reading a symbol $l$. However, from a state $(l_1,q_1)$ to another state $(l_x,q_x)$ such that $q_x\in \delta(q_1,l_x)$, we need to go through two levels of intermediate states.  In the first level, the  state $(l_x,q_x^u)$ represents a decision point from which the execution may or may not conduct a test on the current state. The purpose of the test is to see if the state $q_x$ is a final state.
With a chance of 0.5, the path will proceed to $(l_x,q_x^n)$ which denotes that no test will be conducted.
With another chance of 0.5, the path will conduct the test. If $q_x$ is not a final state (e.g., the state $q_2$ of Figure~\ref{fig:nfa}), then the execution proceeds to the state $(l_x,q_x^t)$, at which the atomic proposition $\literal{test}$ is labelled. If $q_x$ is a final state (e.g., the state $q_3$ of Figure~\ref{fig:nfa}), then the execution proceeds with 0.25 chance to the state $(l_x,q_x^t)$ and another 0.25 chance to the state $(l_x,q_x^{tf})$, at which both the propositions $\literal{test}$ and $\literal{finished}$ are labelled.
From both the states $(l_x,q_x^{t})$ and $(l_x,q_x^{tf})$, the execution can only move to a self-loop.

The subsystem $\cM_1(A)$ consists of states which are pairs of a symbol of the alphabet and one of the elements in $\{1,u,n,t\}$, and its structure mirrors that of $\cM_2(A)$, with the exception of final states (since $\cM_1(A)$ does not know anything about states of automaton $A$). The purpose of $\cM_1(A)$ is to make sure that words which do not produce a valid run of $A$ have a corresponding path in $\cM(A)$.
%The purpose of $\cM_1(A)$ is to simulate the universal language $\Sigma^*$, \newtext{at the same time synchronising with $\cM_2(A)$ (which is what the second component of the state is for)}. The intuition of the construction can be seen from the diagram in Figure~\ref{fig:nfa}, \newtext{while the intuition for the labels $\{1,u,n,t\}$ is given below}.
%The structure of $\cM_1(A)$ is similar to $\cM_2(A)$ for the case where no final state is reached. Therefore, it only includes states $(l,1)$, $(l,u)$, $(l,n)$ and $(l,t)$, but no $(l,tf)$.
The intuition of the construction can be seen from the diagram in Figure~\ref{fig:nfa}
The agent $A$ is given distinct observations $t$ on those states $(l,q^t)$ and $(l,q^{tf})$, that is, agent $A$ can distinguish whether a test occurs at a moment. %Once a path moves into $(l_x,q_x^{tf})$, it will move immediately to another state $(l_y,q_y^f)$, and from then on, it can only stay with states of the form $(l_z,q_z^f)$. The transition function is defined in a way to allow it simulate any other paths in the system.
Formally,  let $\ap=\{\literal{test},\literal{finished}\}$ be a set of atomic propositions, the system is $\cM(A)=(\Ags, S,\sinit,\{\Act{A}\}_{A\in\Ags},
%\{\LAct{A}\}_{A\in\Ags},
\sep T, \sep  L, \sep \{\Obs{A}\}_{A\in\Ags}, \sep \{\obs{A}\}_{A\in\Ags}, \sep \{\Cognition_A\}_{A\in\Ags}, \sep \{\cstrat{A}\}_{A\in\Ags}, \sep \{\pref{A}\}_{A\in\Ags})$ such that
\begin{itemize}
\item $\Ags=\{B\}$, i.e., there is only a single agent $B$,
\item $S=\{s_0\}\cup (\Sigma\times (Q\cup Q^t\cup Q^{tf}\cup Q^u\cup Q^n\cup \{1,u,n,t\}))$, where $Q^t=\{q^t~|~q\in Q\}$, $Q^{tf}=\{q^{tf}~|~q\in Q\}$, $Q^u=\{q^u~|~q\in Q\}$ and $Q^n=\{q^n~|~q\in Q\}$,
\item $\sinit = \{s_0\}$ with $\initdist(s_0)=1$,
\item $\Act{B}=\{a_\epsilon\}$, i.e., agent $B$ only has a trivial action $a_\epsilon$
%\item $\LAct{A}(s)=\{a_\epsilon\}$ for all $s\in S$,
\item the transition function is defined as follows.
\begin{itemize}
\item $T(s_0,a_\epsilon,(l_0,q_0))=T(s_0,a_\epsilon,(l_0,1))=0.5$ for some $l_0\in \Sigma$,
\item $T((l,q),a_\epsilon, (l_2,q_2^u))=1/ k(q)$ if $q_2\in\delta (q,l_2)$, where $k(q)=|\{(l_2,q_2)~|~q_2\in \delta(q,l_2)\}|$,
\item $T((l,q^u),a_\epsilon,(l,q^n))=0.5$,
\item $T((l,q^u),a_\epsilon,(l,q^t))=0.5$, if $q\notin F$,
\item $T((l,q^u),a_\epsilon,(l,q^t))=T((l,q^u),a_\epsilon,(l,q^{tf}))=0.25$, if $q\in F$,
\item $T((l,q^n),a_\epsilon,(l,q))=T((l,q^t),a_\epsilon,(l,q^t))=T((l,q^{tf}),a_\epsilon,(l,q^{tf}))=1$,
\item $T((l,1),a_\epsilon,(l_2,u))=1/|\Sigma|$,
\item $T((l,u),a_\epsilon,(l,n))=T((l,u),a_\epsilon,(l,t))=0.5$, and
\item $T((l,n),a_\epsilon,(l,1))=T((l,t),a_\epsilon,(l,t))=1$.
\end{itemize}
\item $\literal{test} \in  L((l,q^t)) $, $\literal{test} \in  L((l,q^{tf})) $, $\literal{finished} \in  L((l,q^{tf})) $, $\literal{test} \in  L((l,t)) $.
\item 
The observation of the agent $B$ is defined as follows.
\begin{itemize}
\item $\obs{B}((l,q))=\obs{B}((l,q^u))=\obs{B}((l,q^n))=\obs{B}((l,1))=\obs{B}((l,u))=\obs{B}((l,n))=l$, and
\item $\obs{B}((l,q^t))=\obs{B}((l,q^{tf}))=\obs{B}((l,t))=l \land t$.
\end{itemize}
\item other components are omitted as they are not used in this reduction.  
\end{itemize}

\begin{figure}
\centering
\includegraphics[width=10cm,height=7cm]{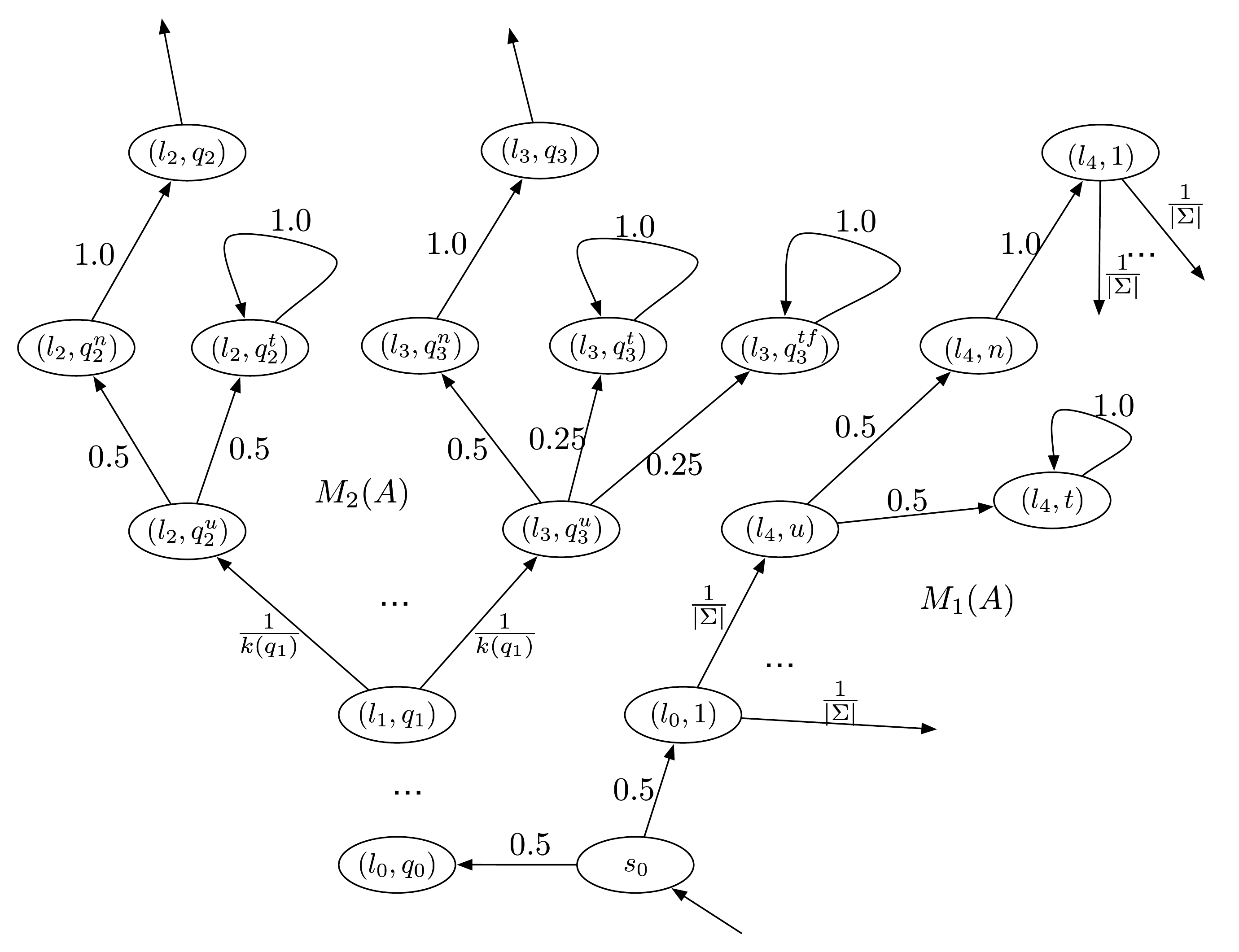}
%\Description{An automaton split into $\cM_1(A)$ and $\cM_2(A)$ with initial state $s_0$}
\caption{Diagram for the reduction from universality problem of NFA}
\label{fig:nfa}
\end{figure}

In such a system, we have that $\Lang(A)=\Sigma^*$ if and only if
$$
\cM(A)\models  \literal{test} \release (\neg \literal{test} \lor \B_A^{<1} (\neg \literal{finished}))
$$
where $\release$ is the release operator of LTL.
 To see this, we show that the language $\Lang(A)$ is not universal if and only if there exists an initialised infinite path $\rho\in \infpath{T}^{\cM(A)(s_0)}$ such that $\cM(A),\rho\models (\neg \literal{test})\until (\literal{test} \land \B_B^{\leq 0} (\literal{finished}))$.

($\Longrightarrow$) If the language is not universal, then there exists a word $w = l_1...l_k$ such that no execution of $A$ on $w$ finishes in a final state. From $l_1...l_k$, we can construct a finite path $\rho_1=s_0(l_0,1)\sep(l_1,u)\sep(l_1,n)\sep(l_1,1)...(l_{k-1},u)\sep(l_{k-1},n)\sep(l_{k-1},1)\sep(l_k,u)(l_k,t)$ in $\cM(A)$, but not able to find a path $\rho_2=(l,q_0)(l_1,q_1^u)\sep(l_1,q_1^n)\sep(l_1,q_1)... (l_{k-1},q_{k-1}^u)\sep(l_{k-1},q_{k-1}^n)\sep(l_{k-1},q_{k-1})\sep(l_k,q_k^u)(l_k,q_k^{tf})$.
To see this, we note that each block $(l_x,q_x^u)(l_x,q_x^n)(l_x,q_x)$ for $1\leq x\leq k-1$ represents a piece of execution that moves from a state $(l_{x-1},q_{x-1})$ to another state $(l_{x},q_{x})$, such that $q_x\in\delta(q_{x-1},l_x)$, without conducting a test. At the state $(l_k,q_k^u)$, a decision is made to conduct  a test. However, because the word $w$ is not a word of $A$, the test will not move into the state $(l_k,q_k^{tf})$. Otherwise, it contradicts with the construction of $\cM(A)$ that only a final state can move into a state of the form  $(l,q^{tf})$.

Because there does not exist such a path $\rho_2$, we have that $\cM(A),\rho_1\models \B_B^{\leq 0}\literal{finished}$. Moreover, for all the states $s$ before $(l_k,q_k^t)$ on the paths $\rho_1$, we have that $\cM(A),s\models \neg \literal{test}$, and for the state $(l_k,q_k^t)$, we have that $\cM,(l_k,q_k^t)\models \literal{test}$. Further, because agent $B$ can observe the test action, the check of its beliefs will only concern those paths $\rho_1'$ on which the test action is only taken in the last state. Therefore, the probability, under the condition of agent $B$'s observations, will not leak into those states of the form $(l,q^f)$, and we have $\cM(A),\rho_1 \models \neg \literal{test}\until (\literal{test} \land \B_B^{\leq 0} \literal{finished})$. Because the until operator assumes the finiteness of the path, the latter means that $\cM(A),\rho_1\delta \models \neg \literal{test}\until (\literal{test} \land \B_B^{\leq 0} \literal{finished})$ for any infinite paths $\delta$ that is consistent with the path $\rho_1$.

($\Longleftarrow$) If $\cM(A),\rho\models (\neg \literal{test})\until (\literal{test} \land \B_B^{\leq 0} (\literal{finished}))$ for some infinite path $\rho$, then there must exist a finite path $\rho_1$ such that $\cM(A),\rho_1\models \B_B^{\leq 0} (\literal{finished})$ and there is a first test occurs on the state $last(\rho_1)$. Now we define function $w(\rho)$ as that $w(\rho_1\rho_2)=w(\rho_1)w(\rho_2)$ and $w((l,q))=l$. Intuitively, it abstracts the alphabets from the paths. Therefore, we have that $w(\rho_1)$ is a word of $A$  that cannot reach a final state.

Finally, we also need to show that the system $\cM(A)$ satisfies the constraints. Note that, the expanded system has two kinds of SCCs. The first kind of SCCs include transitions on which no test action is conducted, and the second kind of SCCs include a self loop on those states with a test action. For both the kinds of SCCs, agent $B$'s belief about the atomic proposition $\literal{finished}$ is kept the same: for the first kind of SCCs, it is always the case that $ \B_B^{\leq 0} (\literal{finished})$ while for the second kind of SCCs, it can be $ \B_B^{\leq 0} (\literal{finished})$ or $ \B_B^{> 0} (\literal{finished})$, depending on whether the current path can be abstracted into a word of $A$, and the belief values will not change.

Not that, because the formula $ \literal{test} \release (\neg \literal{test} \lor \B_B^{<1} (\neg \literal{finished}))$ is constant with respect to different instances of automata $A$, the PSPACE lower bound is also a lower bound of program complexity.
Moreover, the formula complexity, which measures the complexity of problem by the size of the formula by assuming that the system is constant, is also PSPACE-complete: the lower bound comes from the LTL model checking and its upper bound can be  derived directly from the combined complexity.
Putting the above together, the problem is PSPACE-complete for combined complexity, program complexity, and formula complexity.

%This can be seen from the above explanation that for any infinite execution of the product of $\cM(A)$ and the formula $ \always((\neg \literal{test})\until (\literal{test} \land \B_A^{\leq 0} (\literal{finished})))$, between every two continuous  tests on agent $A$'s beliefs over the atomic proposition $\literal{finished}$, the beliefs can only increase or stay the same. If it stays the same, then the test action does not move into any state of the form $(l,q^{tf})$, and therefore the probability value will not leak into those states of the form $(l,q^f)$.

%\newpage

\section{A Polynomial Time Fragment}\label{sec:ptime}

A stochastic automaton is a tuple $SA=(Q,A,\alpha,PI)$ where $Q$ is a set of states, $A$ is a set of symbols, $\alpha:Q\times A\times Q\rightarrow [0,1]$ is a transition function such that for every $q\in Q$ we have $\sum_{a\in A}\sum_{q'\in Q}\alpha(q,a,q')=1$, and $PI:Q\rightarrow [0,1]$ is an initial distribution over $Q$ such that $\sum_{q\in Q}PI(q)=1$. Note that, the stochastic automata defined here are different with the probabilistic automata in Section~\ref{sec:general}: a probabilistic automaton requires that for every $q\in Q$ and $a\in A$ we have $\sum_{q'\in Q}\cM_a(q,q')=1$.
Given a sequence $\rho$ of symbols, the acceptance probability of $\rho$ in a stochastic automaton $SA$ is
$$
\sum_{q\in Q}(PI(q)\times \sum_{q'\in Q} \alpha(q,\rho,q'))
$$
where $\alpha(q,\rho,q')$ is defined as follows: $\alpha(q,\rho a,q') = \sum_{q''\in Q} \alpha(q,\rho,q'')\times \alpha(q'',a,q')$ and  $\alpha(q,\epsilon,q) = 1$ for $\epsilon$ an empty sequence of symbols.
Two stochastic automata are equivalent if for each string $\rho\in A^*$, they accept $\rho$ with equal probability \cite{Tzeng1992}.

Given a stochastic automaton $SA=(Q,A,\alpha,PI)$, a pair $D=(Q_D,\alpha_D)$ is a component of $SA$ if $Q_D\subseteq Q$ and for all $q,q'\in Q_D$ and $a\in A$, we have $\alpha_D(q,a,q')=\alpha(q,a,q')$. We say that $D$ is a strongly connected component (SCC) if for all $q,q'\in Q_D$, there exists a sequence $\rho$ of symbols such that $\alpha_D(q,\rho,q')>0$, where the computation of $\alpha_D(q,\rho,q')$ can be done recursively as follows: $\alpha_D(q,\rho a,q') = \sum_{q''\in Q_D} \alpha_D(q,\rho,q'')\times \alpha_D(q'',a,q')$ and  $\alpha_D(q,\epsilon,q) = 1$ for $\epsilon$ an empty sequence of symbols. An SCC is closed if for all $q\in Q_D$ we have $\sum_{a\in A}\sum_{q\in Q_D} \alpha_D(q,a,q') = 1$.

\paragraph{Constraints}

Assume that we are given a system $\cM$ and a formula $\always(\psi\Rightarrow \prob{\bowtie q}{\eventually\B_A^{\geq 1}\psi})$ or $\always(\psi\Rightarrow\prob{\bowtie q}{\eventually\DT_{A,B}^{\geq 1}\psi})$ such that
\begin{itemize}
\item $\psi$ does not contain any belief or trust formulas,
\item $\psi$ is of CTL-style, i.e., there is no direct nesting of temporal operators, and
\item the system $\cM$ satisfies that $\cM\models \always (\psi\Rightarrow \always \psi)$, i.e., once formula $\psi$ holds, it will hold since then.
\end{itemize}
We consider their negated specification formulas $\eventually(\psi\land\prob{\widehat{\bowtie} 1-d}{\always\B_A^{> 0}\neg \psi})$ or $\eventually(\psi\land\prob{\widehat{\bowtie} 1-d}{\always\DT_{A,B}^{> 0}\neg \psi})$, and the algorithm, to be given below, is to determine the existence of a witness for them. We make a remark here that, the polynomial time complexity of this fragment relies on the Assumption~\ref{assump:detbeh} that the transition function $T$ of the system $\cM$ is deterministic, i.e., for every state $s\in S$ there is a unique $a\in \Act{}$ such that $\sum_{s'\in S}T(s,a,s')=1$. If this restriction is relaxed by allowing the possibility of more than one actions $a\in\Act{}$ with $\sum_{s'\in S}T(s,a,s')=1$, it is expected that the complexity could be higher.

\paragraph{Algorithm}

Fist of all, we need to compute $\probs{\cM,s}{}(\neg \psi)$ or  $\inf_{x\in \lintn{B}(s)} \probs{\cM,B.i(s,x)}{}(\neg \psi)$ for every state $s$ and formula $\neg \psi$, depending on which operator, $\B_A^{> 0}$ or $\DT_{A,B}^{> 0}$, the formula $\neg \psi$ is in the scope of. Recall that $\probs{\cM,s}{}(\psi)$ is the probability of  satisfying $\psi$ among all the temporal paths starting from state $s$. Because of the restrictions on $\psi$, these can be computed in polynomial time, using a standard \PCTL\ model checking procedure on DTMCs.
%the following proposition which extends Proposition~\ref{propn:nonbeliefs}.
We can define $\probs{\cM,\rho}{}(\psi)$ for $\rho$ a path as $\probs{\cM,last(\rho)}{}(\psi)$, and have the following proposition. 
\begin{proposition}
For a formula $\psi$ without any belief or trust operator, we have that $\probs{\cM,\rho}{}(\psi)=\probs{\cM,\rho'}{}(\psi)$ and  $\inf_{x\in \lintn{B}(last(\rho))}\probs{\cM,\rho}{}(\psi)=\sep\inf_{x\in \lintn{B}(last(\rho'))}\probs{\cM,\rho'}{}(\psi)$ whenever $last(\rho)= last(\rho')$.
\end{proposition}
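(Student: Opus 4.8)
The plan is to reduce both equalities to a single \emph{history-independence} fact: under Assumption~\ref{assump:detbeh}, the quantity $\probs{\cM,\rho}{}(\psi)$, which by Proposition~\ref{propn:simplified} equals $\probm{\cM,\indStrat,\rho}(\psi)$ (the probability of $\psi$ over the purely temporal futures of $\rho$ under the induced strategy $\indStrat$), depends on $\rho$ only through its last state. Granting this, the first equality is immediate, since $\last(\rho)=\last(\rho')$. For the infimum version (read, as in the algorithm, with integrand $\probs{\cM,B.i(\rho,x)}{}(\psi)$), observe that $B.i(\rho,x)$ appends to $\rho$ a single cognitive transition $\last(\rho)\ctrans{B.i.x} s_x$, where $s_x$ is obtained from $\last(\rho)$ by replacing $B$'s intention with $x$; hence $\last(B.i(\rho,x))=s_x$ is a function of $\last(\rho)$ and $x$ alone. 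Since $\last(\rho)=\last(\rho')$, the index set $\lintn{B}(\last(\rho))=\lintn{B}(\last(\rho'))$ and $\last(B.i(\rho,x))=\last(B.i(\rho',x))$ for every legal $x$; applying the first part to these extended paths and taking the infimum over the common index set closes the second equality.

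The history-independence of $\probm{\cM,\indStrat,\rho}(\psi)$ I would establish from two ingredients. The \emph{measure} ingredient: by Assumption~\ref{assump:detbeh} the induced strategy $\indStrat$ is pure and determined by the agents' cognitive states, which are encoded in the global state, so from any state $s$ the joint action is fixed and the future evolution is a Markov chain whose law depends only on $s$. Because $\indStrat$ is state-determined it returns the same Dirac action regardless of the history fed to it, so the measure $\probs{\indStrat}{\cM}$ on the cylinders of $\infpath{}^\cM(\last(\rho))$ is precisely the Markov-chain measure started at $\last(\rho)$, a function of $\last(\rho)$ alone. The \emph{satisfaction} ingredient: the event $\{\delta\in\infpath{}^\cM(\last(\rho)) : \cM,\rho,\delta\models\psi\}$ depends on $\rho$ only through $\last(\rho)$. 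Combining the two yields the claim.

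For the satisfaction ingredient I would run a simultaneous structural induction proving (I) for every state subformula $\phi$ of $\psi$ and finite paths $\eta,\eta'$ with $\last(\eta)=\last(\eta')$, $\cM,\eta\models\phi$ iff $\cM,\eta'\models\phi$, and (II) for every path subformula $\chi$, such $\eta,\eta'$, and any $\delta$ with $\delta(0)=\last(\eta)$, $\cM,\eta,\delta\models\chi$ iff $\cM,\eta',\delta\models\chi$. The base case of (I) is atomic, $\cM,\eta\models p$ iff $p\in L(\last(\eta))$, using only $\last(\eta)$; the Boolean cases are trivial; $\forall\chi$ and $\prob{\bowtie q}{\chi}$ follow from (II) together with the measure ingredient, as the relevant probability is taken over futures of $\last(\eta)$ and the event is, by (II), determined by $\last(\eta)$; and the capability operator $\C_A\phi$ is handled because its semantics quantifies over $\lintn{A}(\last(\eta))$, a function of the current state, and over $\phi$, covered by the inductive hypothesis. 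The path cases in (II) reduce to (I) via $\cM,\eta,\delta\models\phi$ iff $\cM,\eta\models\phi$, while for $\next$ and $\until$ the anchor is extended by a suffix of $\delta$ (e.g.\ to $\eta\cdot\delta[0..n]$ and $\eta'\cdot\delta[0..n]$, which share the last state $\delta(n)$), so the inductive hypothesis applies.

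The main obstacle is exactly the tension between the \PCTLS\ semantics of Definition~\ref{def:semantics1}, which carries the full history $\rho$ into the evaluation of embedded state subformulas, and the goal of last-state dependence: the only operators whose semantics reference genuinely \emph{history-dependent} data are the cognitive $\G_A$ and $\I_A$, which consult $\goal{A}(\rho s)$ and $\intn{A}(\rho s)$ rather than a state-local quantity, and for these the induction would fail in general. I would therefore make the fragment's scope explicit at the outset: since the $\psi$ to which this proposition is applied is of CTL style with no belief or trust operators, its embedded state subformulas are Boolean combinations of atomic propositions, path-quantified and probabilistic temporal formulas, and the state-determined capability operator, so $\G_A$ and $\I_A$ do not occur and the problematic case never arises. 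After fixing this scope, the measure and satisfaction ingredients above complete the argument.
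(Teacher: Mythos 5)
Your proof is correct, but it takes a genuinely different (and more laborious) route than the paper, which in fact offers no proof at all: immediately before this proposition the paper \emph{defines} $\probs{\cM,\rho}{}(\psi)$ to be $\probs{\cM,last(\rho)}{}(\psi)$, so both equalities hold essentially by construction --- the first trivially, and the second because $last(B.i(\rho,x))$ is determined by $last(\rho)$ and $x$ alone, exactly as you observe. What you prove instead is the semantic invariance that this definition silently presupposes: that the original, history-carrying semantics of a belief/trust-free formula, evaluated under the induced strategy of Assumption~\ref{assump:detbeh}, depends on $\rho$ only through $last(\rho)$. Your two-ingredient decomposition (the state-determined induced strategy yields a Markov-chain measure anchored at $last(\rho)$; a structural induction shows the satisfaction event is last-state-determined) is sound, and your reading of the second equality with integrand $\probs{\cM,B.i(\rho,x)}{}(\psi)$ correctly repairs what is evidently a typo in the statement. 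Your scoping discussion is also a genuine contribution rather than pedantry: read literally, ``without any belief or trust operator'' would still admit $\G_A$ and $\I_A$, whose semantics consult the history-dependent cognitive strategies $\goal{A}$ and $\intn{A}$, and then the claim is false for paths with equal last states but different histories; the paper's intent (it computes these quantities by ``a standard \PCTL\ model checking procedure on DTMCs'') matches your restriction of $\psi$ to atomic propositions, Boolean connectives, path-quantified and probabilistic temporal operators, and the state-determined $\C_A$. In short, the paper buys the proposition by definition and leaves its well-posedness implicit; your route makes that well-posedness explicit and identifies exactly the fragment restriction it requires.
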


Because belief formulas and trust formulas can be handled in a similar way, in the following we only work with belief formulas. The  technique can be adapted to work with trust formulas.
From the system $\cM$, we can construct a system $\cM^\# =(S\times S,T^\#, \sinit^\#, L^\#)$ such that
\begin{itemize}
\item $\sinit^\#=\{(s_1,s_2)~|~s_1,s_2\in \sinit, \obs{A}(s_1)=\obs{A}(s_2)\}$,
% such that $\initdist((s_1,s_2))=(\initdist(s_1),\initdist(s_2))$,
\item $T^\#((s_1,s_2),(a_1,a_2),(s_1',s_2'))=(T(s_1,a_1,s_1'),T(s_2,a_2,s_2'))$ if $\obs{A}(s_1')=\sep\obs{A}(s_2')$, $\probs{\cM,s_2'}{}(\psi)=0$, $T(s_1,a_1,s_1')>0$, and $T(s_2,a_2,s_2')>0$, and
\item $ L^\#((s_1,s_2))=( L(s_1), L(s_2))$.
\end{itemize}
Intuitively, $\cM^\#$ is the production of two copies of the system $\cM$ such that the two copies preserve the same observations along the transitions  (by letting $\obs{A}(s_1)=\obs{A}(s_2)$ in $\sinit^\#$ and $\obs{A}(s_1')=\obs{A}(s_2')$ in $T^\#$) and the second copy always selects those states which do not satisfy the formula $\psi$ (by letting $\probs{\cM,s_2'}{}(\psi)=0$).
On such a system $\cM^\#$, we define operator $ L_k$ for $k\in \{1,2\}$ such that $ L_k(\cM^\#)=(S\times S,T^\#_k,S_{\mathrm{init},k}^\#, L^\#_k)$ where
\begin{itemize}
\item $s_k \in S_{\mathrm{init},k}^\#$ if $(s_1,s_2)\in \sinit^\#$ and $k\in \{1,2\}$. 
%$I^\#_k((s_1,s_2))$ is the $k$th component of $I^\#((s_1,s_2))$, i.e., $I^\#_1((s_1,s_2))=s_1$ and $I^\#_2((s_1,s_2))=s_2$,
\item $T^\#_k((s_1,s_2),(a_1,a_2),(s_1',s_2'))$ is the $k$th component of $T^\#((s_1,s_2),\sep (a_1,a_2),\sep (s_1',s_2'))$, and
\item $ L^\#_k((s_1,s_2))$ is the $k$th component of $ L^\#((s_1,s_2))$.
\end{itemize}
Intuitively, in $ L_k(\cM^\#)$, we only consider those initial and transition probabilities of the $k$th copy of the system $\cM$.

Let $Q^\#_D\subseteq \states\times \states$ be a set of states. A component of $\cM^\#$ is a pair $D^\#=(Q^\#_D,T^\#_D)$. We define $T^\#_{k,D}$ as the $k$th component of $T^\#_D$ for $k\in \{1,2\}$.

\begin{definition}
We say that the component $D^\#$ is an SCC if either $(Q^\#_D,T^\#_{1,D})$ or $(Q^\#_D,T^\#_{2,D})$ is an SCC, and is a double-closed SCC if both $(Q^\#_D,T^\#_{1,D})$ and $(Q^\#_D,T^\#_{2,D})$ are closed SCCs. 
\end{definition}
By the construction of $\cM^\#$, we have
\begin{proposition}
$(Q^\#_D,T^\#_{1,D})$ is an SCC if and only if $(Q^\#_D,T^\#_{2,D})$ is an SCC.
\end{proposition}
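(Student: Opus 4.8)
The plan is to reduce the proposition to a purely graph-theoretic statement: that the directed graphs underlying $(Q^\#_D,T^\#_{1,D})$ and $(Q^\#_D,T^\#_{2,D})$ have exactly the same edge set. Once this is established, equivalence of their strong connectivity is immediate, since being an SCC is a reachability property that depends only on the support (which transitions have positive weight) and not on the actual probability values.

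First I would unfold the definition of the product transition function $T^\#$. By construction, $T^\#((s_1,s_2),(a_1,a_2),(s_1',s_2'))$ is assigned the value $(T(s_1,a_1,s_1'),T(s_2,a_2,s_2'))$ precisely when the four side-conditions hold, and in particular only when \emph{both} $T(s_1,a_1,s_1')>0$ and $T(s_2,a_2,s_2')>0$; otherwise it is $0$. Since $T^\#_k$ is by definition the $k$th projection of $T^\#$, and restricting to the component $D^\#$ leaves $T^\#_k$ unchanged on $Q^\#_D\times\Act{}\times Q^\#_D$, I obtain the key equivalence: for all $q,q'\in Q^\#_D$ and every joint action $a$,
\[
T^\#_{1,D}(q,a,q')>0 \iff T^\#_{2,D}(q,a,q')>0,
\]
because each side holds exactly when the product transition $q\to q'$ under $a$ is defined (whenever it is, the defining clause forces both projections to be strictly positive). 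Hence the support relations of the two weighted graphs coincide.

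Next I would connect strong connectivity to reachability in these supports. Recall that $(Q^\#_D,T^\#_{k,D})$ is an SCC iff for every pair $q,q'\in Q^\#_D$ there is a symbol sequence $\rho$ with positive reachability weight computed through states of $Q^\#_D$. As all entries of $T^\#_{k,D}$ are nonnegative, this reachability weight is a sum of products of transition weights along paths staying inside $Q^\#_D$, so it is positive iff there exists at least one such path all of whose edges have positive weight. Thus positivity depends only on the support relation established above. Since the two supports are identical, $q'$ is reachable from $q$ within $Q^\#_D$ in the first copy iff it is reachable in the second, and the quantifications over all pairs succeed or fail together, yielding the biconditional.

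I do not expect a serious obstacle, as the statement is essentially bookkeeping around the definition of $T^\#$. The one point requiring care, which I would isolate as the crucial lemma, is that the product transition is defined only when both projections are simultaneously positive, so neither component can make a phantom move while the other remains idle. I would also remark that the asymmetric side-condition $\probs{\cM,s_2'}{}(\psi)=0$ and the symmetric constraint $\obs{A}(s_1')=\obs{A}(s_2')$ do not break the equivalence: they merely gate whether the shared product edge exists at all, and so they prune the same edges from both projections rather than introducing any discrepancy between them.
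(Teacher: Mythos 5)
Your proposal is correct and matches the paper's reasoning: the paper states this proposition as an immediate consequence of the construction of $\cM^\#$ (offering no separate proof), and your argument simply makes that bookkeeping explicit --- since $T^\#$ is nonzero only when both $T(s_1,a_1,s_1')>0$ and $T(s_2,a_2,s_2')>0$, the two projections $T^\#_{1,D}$ and $T^\#_{2,D}$ have identical supports, and strong connectivity depends only on the support. Your isolation of the ``no phantom moves'' point and the remark that the asymmetric condition $\probs{\cM,s_2'}{}(\psi)=0$ prunes the same edges from both projections is exactly the right justification.
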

Moreover, given $(Q^\#_D,T^\#_{k,D})$, we can compute a stationary distribution for each of $(Q^\#_D,T^\#_{1,D})$ and $(Q^\#_D,T^\#_{2,D})$.  Such a stationary distribution always exists  and uniquely exists for a finite $(Q^\#_D,T^\#_{k,D})$, which is a DTMC. Let  $\mu^\#_{k,D}$ be the stationary distribution of $(Q^\#_D,T^\#_{k,D})$.
Then, from a component $D^\#=(Q^\#_D,T^\#_D)$, we can construct two stochastic automata $SA_1(D^\#)=(Q^\#_D,A,T^\#_{1,D},\mu^\#_{1,D})$ and $SA_2(D^\#)=(Q^\#_D,A,T^\#_{2,D},\mu^\#_{2,D})$ where $\mu^\#_{1,D}$ and $\mu^\#_{2,D}$ are their initial distributions. 
\begin{definition}
We say that $D^\#$ is internal equivalent when two automata $SA_1(D^\#)$ and $SA_2(D^\#)$ are equivalent. A component $D^\#=(Q^\#_D,T^\#_D)$ is formula-specific with respect to the formula $\psi$ if $\probs{\cM,s}{}(\psi)>0$ for some $(s,t)\in Q^\#_D$ and $t\in Q$. 
\end{definition}
Recall that, two stochastic automata are equivalent if they have the same acceptance probability for all strings.
Note that, in the construction of $\cM^\#$, we require that the second copy always selects those states which do not satisfy the formula $\psi$. Therefore,
a formula-specific component suggests that
%the satisfiability of $Prob(\cM,s,\psi)>0$ represents that
the component contains inconsistency between two copies of the system $\cM$ with respect to the formula $\psi$.

Let $\apath{}(D^\#)$ be the set of paths in $D^\#$, and $\infpath{}(D^\#)$ and $\fpath{}(D^\#)$ be the set of infinite and finite paths in $D^\#$, respectively. We also extend the transition relations $T^\#_{1,D}$ and $T^\#_{2,D}$ to work with paths in $\apath{}(D^\#)$, similar as that of $\alpha$ and $\alpha_D$ in stochastic automata. We do not define $\obs{A}^\#$ because the two copies of the system have the same observations.

%For any path $\rho\in Path(D^\#)$, because the two copies of the system have the same observations, we write $\obs{A}(\rho)$ for $\obs{A}( L_k(\rho))$ and $k\in \{1,2\}$.

Let $\rho$ be a finite path such that $last(\rho)$ is in an SCC $D^\#$. Given a path $\rho_1$ in the SCC $D^\#$, we let $\obs{A}(\rho_1)=\{\rho\in \apath{}(D^\#)~|~\obs{A}(\rho)=\obs{A}(\rho_1)\}$ be the set of paths that agent $A$ cannot distinguish. Then we write
$$
g(\rho,D^\#,k,\epsilon)\equiv T^\#_{1,D}(\{\rho_1\in \fpath{}(D^\#)~|~\frac{T^\#_{2,D}(\obs{A}(\rho\rho_1))}{T^\#_{1,D}(\obs{A}(\rho\rho_1))}> \epsilon, |\rho_1| = k\})
$$
for the probability of  infinite continuations of $\rho$ whose probability ratio between the two copies of the system is maintained (i.e., greater than $\epsilon$). Moreover, we let $g(\rho,D^\#)=1 $, if for all number $k\in \nat$, there exists $\epsilon>0$  such that $g(\rho,D^\#,k_1,\epsilon)=1$, and $g(\rho,D^\#)=0$, otherwise. We have the following propositions.
\begin{proposition}\label{propn:internal}
If $D^\#$ is not internal equivalent then $g(\rho,D^\#)=0$ for all finite paths $\rho$ whose last state is on $D^\#$.
\end{proposition}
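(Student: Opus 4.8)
The plan is to prove the contrapositive: if $g(\rho, D^\#) = 1$ for some finite path $\rho$ ending in $D^\#$, then $D^\#$ must be internal equivalent. Recall that $g(\rho, D^\#) = 1$ means that for every $k \in \nat$ there exists $\epsilon > 0$ such that the mass (under $T^\#_{1,D}$) of length-$k$ continuations $\rho_1$ with $T^\#_{2,D}(\obs{A}(\rho\rho_1)) / T^\#_{1,D}(\obs{A}(\rho\rho_1)) > \epsilon$ is exactly $1$. Intuitively, this says that along the first copy the probability ratio between the two copies never collapses to $0$: the second copy keeps pace with the first copy on every observation class, for arbitrarily long prefixes. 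My claim is that this uniform non-collapse of the ratio forces the two stochastic automata $SA_1(D^\#)$ and $SA_2(D^\#)$ to accept every word with equal probability, which is exactly internal equivalence.

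First I would unpack the definitions into a statement about word acceptance. Since the two copies of $\cM$ share the same observation function (by construction of $\cM^\#$, we force $\obs{A}(s_1') = \obs{A}(s_2')$ along every transition), a finite path $\rho_1$ in $D^\#$ determines a single observation sequence, i.e., a word $w = \obs{A}(\rho_1)$ over the alphabet of observations. The quantities $T^\#_{1,D}(\obs{A}(\rho\rho_1))$ and $T^\#_{2,D}(\obs{A}(\rho\rho_1))$ are precisely the acceptance probabilities of the word $\obs{A}(\rho)\,w$ in the automata built from the two copies, starting from the stationary distributions $\mu^\#_{1,D}$ and $\mu^\#_{2,D}$. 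Thus $g(\rho, D^\#) = 1$ translates into: for each length $k$ there is $\epsilon_k > 0$ such that, with full $SA_1$-mass, the ratio of $SA_2$-acceptance to $SA_1$-acceptance stays above $\epsilon_k$. The goal is then to deduce that the two acceptance functions coincide on all words.

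The key step is the following dichotomy argument. Suppose, for contradiction, that $D^\#$ is \emph{not} internal equivalent, so by the definition of equivalence there exists a word $v$ with $SA_1(D^\#)(v) \neq SA_2(D^\#)(v)$; by~\cite{Tzeng1992} inequivalence is witnessed by some word. I would argue that inside a \emph{double-closed} SCC the per-step acceptance ratio behaves multiplicatively along the first copy, so that a single-step discrepancy compounds: if the second copy assigns strictly less mass than the first on some reachable observation, then iterating the witnessing word (possible since $D^\#$ is strongly connected, allowing us to return and repeat) drives the ratio $T^\#_{2,D}(\obs{A}(\rho\rho_1))/T^\#_{1,D}(\obs{A}(\rho\rho_1))$ geometrically toward $0$ on a set of first-copy paths of positive, indeed full, measure as $k$ grows. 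This contradicts the existence of a uniform $\epsilon_k > 0$ bounding the ratio from below with $SA_1$-probability $1$. Conversely, if the second copy ever assigned strictly \emph{more} mass than the first on some observation class, the stochasticity (rows summing to one, guaranteed by closedness of both SCCs) forces a compensating deficit on a complementary class, again producing a vanishing ratio on a positive-measure set. Either way the uniform lower bound $\epsilon_k$ fails for large $k$, contradicting $g(\rho,D^\#)=1$.

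The main obstacle I anticipate is making the ``compounding'' argument rigorous in the presence of the conditioning on observation classes $\obs{A}(\rho\rho_1)$, since the ratio is defined over observation-indistinguishable path \emph{bundles} rather than individual paths, and because the stationary distributions $\mu^\#_{1,D}, \mu^\#_{2,D}$ need not coincide. I would handle this by reducing to acceptance probabilities of whole words (collapsing the observation bundles), then invoking the linear-algebraic characterisation behind Tzeng's algorithm: the span of reachable ``state-weighting vectors'' for $SA_1$ and $SA_2$ either coincides (equivalence) or separates, and in the separating case one can extract an exponentially growing gap using the closedness (hence irreducibility and aperiodicity, or at least recurrence) of the SCC. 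Care is needed to ensure the $SA_1$-measure of the ``good-ratio'' set is genuinely forced below $1$, rather than merely having the ratio small on a null set; here I would exploit that in a closed recurrent component almost every first-copy trajectory revisits the witnessing configuration infinitely often, so the bad event has full rather than partial measure in the limit. Assembling these pieces yields $g(\rho,D^\#)=0$ whenever $D^\#$ is not internal equivalent, which is the contrapositive of the proposition.
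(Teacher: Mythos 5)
Your proposal is correct and takes essentially the same route as the paper's own proof: find a word on which the two copies' acceptance probabilities differ, observe that the corresponding observation bundle has ratio $T^\#_{2,D}/T^\#_{1,D}$ strictly below $1$, and exploit the strong connectivity of $D^\#$ to repeat that word so the ratio falls below every $\epsilon>0$, defeating the uniform lower bound needed for $g(\rho,D^\#)=1$. The paper avoids your sign dichotomy and the Tzeng-style linear algebra by reading the bound off the construction of $\cM^\#$ --- since the second copy keeps only states with $\probs{\cM,s_2}{}(\psi)=0$, its bundle mass never exceeds the first copy's, so the shortest witnessing word automatically has ratio $<1$ --- and it simply asserts the compounding step that you identify as the main technical obstacle.
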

\begin{proof}
Let $\rho_1$ be the shortest path on which the two copies of the system have different acceptance probabilities. By the construction of $\cM^\#$, some transitions are removed due to the restriction of $\probs{\cM,s_2}{}(\psi)=0$. Therefore, the probability of the second copy is less than the one of the first copy, and we have $\displaystyle\frac{T^\#_{2,D}(\obs{A}(\rho\rho_1))}{T^\#_{1,D}(\obs{A}(\rho\rho_1))} < 1$.
Then we can construct a path $\rho_2$ which contains a large number of pieces of observations that are equivalent to $\rho_1$.
With this path, we have that, for all $\epsilon>0$, $\displaystyle\frac{T^\#_{2,D}(\obs{A}(\rho\rho_2))}{T^\#_{1,D}(\obs{A}(\rho\rho_2))} < \epsilon$.
Therefore, we have that $g(\rho,D^\#)=0$.
\end{proof}

\begin{proposition}\label{propn:twoconditions}
$g(\rho,D^\#)=1$ if and only if  $D^\#$ is double-closed and internal equivalent.
\end{proposition}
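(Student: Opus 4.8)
The plan is to prove both implications, handling the forward direction by contraposition and reducing one of its cases directly to the already-established Proposition~\ref{propn:internal}. Recall that, since $D^\#$ is an SCC, the preceding proposition guarantees that $(Q^\#_D,T^\#_{1,D})$ is an SCC if and only if $(Q^\#_D,T^\#_{2,D})$ is, so in all cases both copies are SCCs on $Q^\#_D$ and the only way to fail double-closedness is for at least one copy to leak probability out of $Q^\#_D$.

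For the forward direction ($g(\rho,D^\#)=1 \Rightarrow D^\#$ double-closed and internal equivalent) I would argue the contrapositive and split into two cases. If $D^\#$ is not internal equivalent, then Proposition~\ref{propn:internal} immediately gives $g(\rho,D^\#)=0$, so nothing remains to be done. The remaining case is $D^\#$ internal equivalent but not double-closed, i.e.\ at least one copy is not closed. If the first copy leaks, then the total $T^\#_{1,D}$-mass of length-$k$ paths that stay inside $D^\#$ decays geometrically in $k$; since $g(\rho,D^\#,k,\epsilon)$ is bounded above by this mass, it falls below $1$ for large $k$ and every $\epsilon>0$, whence $g(\rho,D^\#)=0$. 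If instead only the second copy leaks, I would show that the numerator $T^\#_{2,D}(\obs{A}(\rho\rho_1))$ loses mass relative to the closed first copy along longer and longer observation-equivalent continuations, driving the ratio below any fixed $\epsilon$ with full $T^\#_{1,D}$-probability; this reuses the estimate from the proof of Proposition~\ref{propn:internal}, substituting a leakage witness for the inequivalence witness.

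For the converse ($D^\#$ double-closed and internal equivalent $\Rightarrow g(\rho,D^\#)=1$) I would use that double-closedness makes both $(Q^\#_D,T^\#_{1,D})$ and $(Q^\#_D,T^\#_{2,D})$ closed irreducible chains on $Q^\#_D$, so each copy stays inside $D^\#$ with probability $1$ and the length-$k$ paths carry the full copy-1 measure for every $k$. Internal equivalence, namely equality of the acceptance probabilities of $SA_1(D^\#)$ and $SA_2(D^\#)$ on all strings in the sense of~\cite{Tzeng1992}, would then force, for each observation string, the copy-2 measure of the consistent paths to agree with the copy-1 measure up to a factor bounded away from $0$ uniformly in the length. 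The ratio $T^\#_{2,D}(\obs{A}(\rho\rho_1))/T^\#_{1,D}(\obs{A}(\rho\rho_1))$ therefore stays above some fixed $\epsilon\in(0,1)$, and since copy 1 never leaves $D^\#$ we get $g(\rho,D^\#,k,\epsilon)=1$ for all $k\in\nat$, i.e.\ $g(\rho,D^\#)=1$.

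The hard part will be the converse direction, specifically transferring the equivalence of the two stochastic automata, which is stated with respect to their stationary initial distributions $\mu^\#_{1,D}$ and $\mu^\#_{2,D}$, into a uniform lower bound on the observation-class ratio after an arbitrary finite prefix $\rho$, whose induced conditional (belief) distribution need not be stationary. I would handle this by exploiting irreducibility: inside a closed SCC the chain mixes, so the conditional distribution after a long continuation converges to the stationary one and equivalence of acceptance probabilities pins the limiting ratio; combined with positivity of the stationary measure on $Q^\#_D$, the transient prefix can only contribute a bounded multiplicative distortion that cannot push the ratio to $0$. Making this bound uniform in $k$, so that a single $\epsilon$ works simultaneously for all path lengths, is the delicate quantitative point, and it is here that Assumption~\ref{assump:detbeh}, which ensures that observations alone drive the transitions of each copy, is essential.
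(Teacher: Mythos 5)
Your proposal is correct and follows the same skeleton as the paper's proof, so the comparison is worth making precise. The forward direction is essentially identical: contraposition split into the non-equivalent case (dispatched by Proposition~\ref{propn:internal}), the copy-1 leakage case, and the copy-2 leakage case; if anything your version is slightly more careful, since the paper asserts that copy-1 leakage makes $g(\rho,D^\#,k,\epsilon)=0$ for large $k$, whereas your observation that the surviving mass merely drops below $1$ (which already kills $g=1$) is the accurate statement. The real divergence is in the converse. The paper disposes of it in one step: internal equivalence means $SA_1(D^\#)$ and $SA_2(D^\#)$ accept every word with equal probability, words correspond to observation sequences of paths inside $D^\#$, so $T^\#_{2,D}(\obs{A}(\rho\rho_1))=T^\#_{1,D}(\obs{A}(\rho\rho_1))>0$ exactly; the ratio is identically $1$, any fixed $\epsilon<1$ works for every $k$, and double-closedness keeps the copy-1 mass of length-$k$ continuations equal to $1$. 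Your mixing-and-stationarity argument attacks the same step with heavier machinery; what it buys is an explicit treatment of the mismatch you correctly spotted, namely that automata equivalence is stated with respect to the stationary initial distributions $\mu^\#_{1,D},\mu^\#_{2,D}$ while $g$ conditions on an arbitrary prefix $\rho$ -- a gap the paper silently elides. Two remarks temper the extra work, though: first, under the paper's literal definition of $g$ (``for all $k\in\nat$ there exists $\epsilon>0$''), the $\epsilon$ may depend on $k$, so the uniformity you single out as the delicate quantitative point is not actually demanded by the definition; second, once one accepts the paper's identification of words with observation classes, the ratio is exactly $1$ rather than merely bounded below, and the convergence-to-stationarity argument becomes unnecessary. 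So your route is sound and more honest about a real imprecision, but it is strictly more elaborate than the argument the paper intends.
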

\begin{proof}
($\Leftarrow$) Assume that $D^\#$ is double-closed and internal equivalent. By the definition of internal equivalence, for all finite paths $\rho_1\in \fpath{}(D^\#)$, we have that $T_{2,D}^\#(\{\rho_2\in \fpath{}(D^\#)~|~\obs{A}(\rho_2)=\obs{A}(\rho_1)\})=T_{1,D}^\#(\{\rho_2\in \fpath{}(D^\#)~|~\obs{A}(\rho_2)=\obs{A}(\rho_1)\})$, which means that $T^\#_{2,D}(\obs{A}(\rho\rho_1))=T^\#_{1,D}(\obs{A}(\rho\rho_1))>0$. Because
$D^\#$ is double-closed, we have that $g(\rho,D^\#)=1$.

($\Rightarrow$) Assume that $g(\rho,D^\#)=1$. We show that both conditions are indispensable. If $D^\#$ is not double-closed and it is the first copy that is not closed, then the probability of $T_{1,D}^\#$ will flow out of the SCC. Therefore, $g(\rho,D^\#,k,\epsilon)=0$ for a sufficiently large number $k$. If $D^\#$ is not double-closed and it is the second copy that is not closed, then $\displaystyle\frac{T^\#_{2,D}(\obs{A}(\rho\rho_1))}{T^\#_{1,D}(\obs{A}(\rho\rho_1))}=0$ for a sufficiently large $|\rho_1|$, which means that $g(\rho,D^\#)=0$. Moreover, the case in which $D^\#$ is double-closed but not internal equivalent is guaranteed by Proposition~\ref{propn:internal}.
\end{proof}

%The  above proofs also shows that for an SCC $D^\#$, we have either $g(\rho,D^\#)=1$ or $g(\rho,D^\#)=0$.
\begin{proposition}
Let $\rho$ be a finite path such that $last(\rho)$ is on an SCC $D^\#$, and $D^\#$ is double-closed, internal equivalent, and formula-specific with respect to $\psi$. Then $g(\rho,D^\#)=1$ if and only if for all $\rho_1\in \apath{}(D^\#)$, $(\sum_{\obs{A}(\rho') = \obs{A}(\rho\rho_1)}\be_A(\rho')\times \probs{\cM,\rho'}{}(\neg\psi)) > 0 $.
\end{proposition}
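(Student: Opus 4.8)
The plan is to reduce both sides of the asserted equivalence to a single scalar — the first-to-second copy probability ratio on each observation class — and then invoke the finiteness of $D^\#$ together with Proposition~\ref{propn:twoconditions}. Throughout, $\rho_1$ is read as a finite continuation of $\rho$ inside $D^\#$, since the belief-weighted sum is evaluated at the finite history $\obs{A}(\rho\rho_1)$.

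The first and central step is to establish, for every finite $\rho_1\in\fpath{}(D^\#)$, the identity
\[
\sum_{\obs{A}(\rho') = \obs{A}(\rho\rho_1)}\be_A(\rho')\cdot \probs{\cM,\rho'}{}(\neg\psi) \;=\; \frac{T^\#_{2,D}(\obs{A}(\rho\rho_1))}{T^\#_{1,D}(\obs{A}(\rho\rho_1))}.
\]
Because $\psi$ is CTL-style, $\probs{\cM,\rho'}{}(\neg\psi)\in\{0,1\}$, equalling $1$ exactly when $last(\rho')\models\neg\psi$. The system constraint $\cM\models\always(\psi\Rightarrow\always\psi)$ makes $\neg\psi$ backward-stable: if the last state of $\rho'$ satisfies $\neg\psi$ then every state of $\rho'$ does, which is precisely the defining condition of a second-copy path in $\cM^\#$. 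Expanding $\be_A(\rho')$ as the conditional probability $\probs{A}{\cM}(\cylinder_{\rho'}\mid\bigcup_{\rho''}\cylinder_{\rho''})$ over the observation class, the numerator collapses onto the total first-copy mass of $\neg\psi$-paths, and the two copies use the same transition probabilities, so this mass is exactly $T^\#_{2,D}(\obs{A}(\rho\rho_1))$; the denominator is $T^\#_{1,D}(\obs{A}(\rho\rho_1))$. Since $last(\rho)\in D^\#$ and $D^\#$ is double-closed, the first copy never escapes $D^\#$, so the sums may be taken within $D^\#$; the formula-specific hypothesis ensures the first copy genuinely carries $\psi$-states, so this ratio is the nontrivial quantity that $\B_A^{>0}\neg\psi$ probes.

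Given the identity, the right-hand condition is simply that the ratio is strictly positive for every $\rho_1$. For the forward direction I would assume $g(\rho,D^\#)=1$: by definition, for each length $k\in\nat$ there is $\epsilon>0$ with $g(\rho,D^\#,k,\epsilon)=1$, so the set of length-$k$ paths in $D^\#$ with ratio $\leq\epsilon$ has first-copy mass $0$. Every $\rho_1\in\fpath{}(D^\#)$ has strictly positive first-copy probability, all transitions inside an SCC being positive, so it cannot lie in that null set; hence its ratio exceeds $\epsilon>0$ and the belief sum is positive, and as $k$ is arbitrary this covers all $\rho_1$. For the converse, assume the ratio is positive for every $\rho_1$. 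For fixed $k$ there are only finitely many observation classes of length-$k$ paths inside the finite SCC $D^\#$, each with positive ratio; taking $\epsilon$ to be half their minimum gives $\epsilon>0$ with every length-$k$ path having ratio above $\epsilon$. Double-closedness forces the first copy to remain in $D^\#$ with probability $1$, so the total first-copy mass of length-$k$ paths in $D^\#$ is $1$, whence $g(\rho,D^\#,k,\epsilon)=1$; as $k$ was arbitrary, $g(\rho,D^\#)=1$. This is consistent with Proposition~\ref{propn:twoconditions}, under whose hypotheses $g(\rho,D^\#)=1$ already holds, so the equivalence ultimately certifies that double-closed, internal-equivalent, formula-specific components are exactly the witnesses for $\always\B_A^{>0}\neg\psi$.

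The hard part will be the identity of the first step: making rigorous the passage between the belief measure $\be_A$ defined on $\cM$ and the two marginal measures $T^\#_{1,D}$ and $T^\#_{2,D}$ on $\cM^\#$, and in particular arguing — via backward-stability of $\neg\psi$ and the $0/1$-valuedness of $\probs{\cM,\cdot}{}(\neg\psi)$ — that the numerator of the belief-weighted sum coincides with, rather than merely bounds, the second-copy probability. Everything afterward reduces to positivity of finitely many ratios and the closedness already guaranteed by the hypotheses.
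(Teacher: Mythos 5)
Your proof is correct and follows essentially the same route as the paper's: both arguments hinge on the identity
$\sum_{\obs{A}(\rho') = \obs{A}(\rho\rho_1)}\be_A(\rho')\cdot \probs{\cM,\rho'}{}(\neg\psi) = T^\#_{2,D}(\obs{A}(\rho\rho_1))/T^\#_{1,D}(\obs{A}(\rho\rho_1))$,
and then obtain the equivalence by relating positivity of these ratios to the definition of $g(\rho,D^\#)$ under double-closedness. The details you supply --- the $0/1$-valuedness of $\probs{\cM,\cdot}{}(\neg\psi)$ for CTL-style $\psi$, the backward-stability of $\neg\psi$ coming from $\cM\models\always(\psi\Rightarrow\always\psi)$, and the finiteness argument for choosing $\epsilon$ in the converse direction --- are exactly what the paper compresses into ``by the construction of $\cM^\#$ and the three conditions of $D^\#$'' and ``can be seen by their structures.''
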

\begin{proof}
This is equivalent to proving that 
%$\displaystyle (\sum_{\obs{A}(\rho') = \obs{A}(\rho\rho_1)}\be_A(\rho')\times Prob(\cM,\rho',\neg\psi))>0$ if and only if $\displaystyle\frac{T^\#_{2,D}(\obs{A}(\rho\rho_1))}{T^\#_{1,D}(\obs{A}(\rho\rho_1))}>0$  
\[
\sum_{\obs{A}(\rho') = \obs{A}(\rho\rho_1)}\be_A(\rho')\times \probs{\cM,\rho'}{}(\neg\psi)>0 \text{ iff } \frac{T^\#_{2,D}(\obs{A}(\rho\rho_1))}{T^\#_{1,D}(\obs{A}(\rho\rho_1))}>0,
\]
for $\rho_1\in \apath{}(D^\#)$.
%, for $\rho_1\in Path(D^\#)$ such that $|\rho_1|$ is a large number.

Now, by the construction of $\cM^\#$ (in which the second copy satisfies $\probs{\cM,s_2}{}(\psi)=0$) and the three conditions of $D^\#$, we have that 
\[
\sum_{\obs{A}(\rho') = \obs{A}(\rho\rho_1)}\be_A(\rho')\times (\probs{\cM,\rho'}{}(\neg\psi)>0)=\frac{T^\#_{2,D}(\obs{A}(\rho\rho_1))}{T^\#_{1,D}(\obs{A}(\rho\rho_1)}.
\]

The equivalence of 
\[
\sum_{\obs{A}(\rho') = \obs{A}(\rho\rho_1)}\be_A(\rho')\times (\probs{\cM,\rho'}{}(\neg\psi)>0)>0
\]
and 
\[
\sum_{\obs{A}(\rho') = \obs{A}(\rho\rho_1)}\be_A(\rho')\times \probs{\cM,\rho'}{}(\neg\psi)>0
\]
can be seen by their structures.
\end{proof}

With these propositions, we have the following algorithm.

\begin{algorithm}
After the computation of  $\probs{\cM,s}{}(\neg \psi)$ or $\mathrm{inf}_{x\in \lintn{B}(s)} \probs{\cM,B.i(s,x)}{}(\neg \psi)$ for every state $s$ and formula $\neg \psi$,
the algorithm proceeds  by the following sequential steps:
\begin{enumerate}
\item compute the set of SCCs of $\cM^\#$ such that $D^\#$ is formula-specific with respect to $\psi$, double-closed, and internal equivalent,
\item  for every state $s$ satisfying $\psi$, we do the following:
\begin{enumerate}
\item compute the reachability probability to those SCCs from the first step; let the probability value be $p$, and
\item check whether $p \widehat{\bowtie} 1- q$.
\end{enumerate}
If there exists a state $s$ that can satisfy the above computation, then the specification formula does not hold, otherwise holds.
\end{enumerate}
\end{algorithm}

%The correctness of the algorithm can be seen from the propositions we gave in the section.
\paragraph{Analysis of Complexity} The complexity of the algorithm is in polynomial time. First of all, the computation of  $\probs{\cM,s}{}(\neg \psi)$ or $inf_{x\in \lintn{B}(s)} \probs{\cM, B.i(s, x)}{}(\neg \psi)$ can be done in polynomial time because $\psi$ is of CTL-style and contains no belief or trust formula. Second, for the first step of the algorithm, the computation of all SCCs of $\cM^\#$, whose size is quadratic with respect to $\cM$, can be done in polynomial time by Tarjan's algorithm, and the checking of the three conditions (i.e., formula-specific with respect to $\psi$, double-closed, and internal equivalent) can be done in polynomial time. In particular, the checking of internal-equivalent of an SCC can be done in polynomial time is a result of the existence of a polynomial time algorithm for the equivalence of stochastic automata~\cite{Tzeng1992}. Third, for the second step of the algorithm, the computation of reachability probability on $\cM^\#$ can be done in polynomial time, and the comparison of values can be done in constant time.

%\newpage
\commentout{
\section{Diagrams for the Iterated Stag Hunt Example}\label{sec:ISHG}

Note: this section only contains diagrams for the running example of the paper. The description of the example can be seen from the main body of the paper.

\begin{figure}
%\centering
\parbox{15cm}{
\includegraphics[width=15cm,height=4.5cm]{ipd11.pdf}
\caption{Agents take strategies $\tft$ and $\tftco$, respectively}
\label{fig:ipd11}
}
\parbox{15cm}{
\centering
\includegraphics[width=15cm,height=4.5cm]{ipd12.pdf}
\caption{Agents take strategies $\tft$ and $\acop$, respectively}
\label{fig:ipd12}
}
\parbox{15cm}{
\centering
\includegraphics[width=15cm,height=4.5cm]{ipd21.pdf}
\caption{Agents take strategies $\adep$ and $\tftco$, respectively}
\label{fig:ipd21}
}
\parbox{15cm}{
\centering
\includegraphics[width=15cm,height=4.5cm]{ipd22.pdf}
\caption{Agents take strategies $\adep$ and $\acop$, respectively}
\label{fig:ipd22}
}
\end{figure}

\begin{figure}
%\centering
\parbox{15cm}{
\includegraphics[width=15cm,height=4.5cm]{ipd31.pdf}
\caption{Agents take strategies $\tft$ and $\tftco$, respectively}
\label{fig:ipd31}
}
\parbox{15cm}{
\centering
\includegraphics[width=15cm,height=4.5cm]{ipd32.pdf}
\caption{Agents take strategies $\tft$ and $\acop$, respectively}
\label{fig:ipd32}
}
\parbox{15cm}{
\centering
\includegraphics[width=15cm,height=4.5cm]{ipd41.pdf}
\caption{Agents take strategies $\aco$ and $\tftco$, respectively}
\label{fig:ipd41}
}
\parbox{15cm}{
\centering
\includegraphics[width=15cm,height=4.5cm]{ipd42.pdf}
\caption{Agents take strategies $\aco$ and $\acop$, respectively}
\label{fig:ipd42}
}
\end{figure}
}

%start of commentout example
\commentout{

The stag hunt (SH) is based on a discussion from the philosopher Jean-Jacques Rousseau. It is a game about social cooperation in which trust between agents arises naturally.
Imagine two agents $1$ and $2$ are out hunting. Each of the agents has the choice of pursuing a rabbit or a stag.
The stag is the bigger prize and tastier, and can be caught for sure if both agents choose to pursue it. The rabbit is smaller but still satisfying, and can be captured for sure regardless of what the other agent does. The agents cannot be sure about the other agent's behaviour. Table~\ref{tab:payoff} gives a payoff for such a game.
An iterated stag hunt (ISH) is a repeated SH, such that two agents play SH more than once in succession and they can see previous actions of their opponent and may change their actions accordingly.

\begin{table}
\begin{center}
\begin{tabular}{|c|c|c|}
\hline
$(\con, \adv)$ & \co & \de \\
\hline
\co & (5,5) & (0,3)\\
\hline
\de & (3,0) & (1,1) \\
\hline
\end{tabular}
\caption{Payoff of Stag Hunt}
\label{tab:payoff}
\end{center}
\end{table}

\begin{figure}
%\centering
\parbox{18cm}{
\includegraphics[width=18cm,height=5cm]{ipd11.pdf}
\caption{Agents take strategies $\tft$ and $\tftco$, respectively}
\label{fig:ipd11}
}
\parbox{18cm}{
\centering
\includegraphics[width=18cm,height=5cm]{ipd12.pdf}
\caption{Agents take strategies $\tft$ and $\acop$, respectively}
\label{fig:ipd12}
}
\parbox{18cm}{
\centering
\includegraphics[width=18cm,height=5cm]{ipd21.pdf}
\caption{Agents take strategies $\adep$ and $\tftco$, respectively}
\label{fig:ipd21}
}
\parbox{18cm}{
\centering
\includegraphics[width=18cm,height=5cm]{ipd22.pdf}
\caption{Agents take strategies $\adep$ and $\acop$, respectively}
\label{fig:ipd22}
}
\end{figure}

\begin{figure}
%\centering
\parbox{18cm}{
\includegraphics[width=18cm,height=5cm]{ipd31.pdf}
\caption{Agents take strategies $\tft$ and $\tftco$, respectively}
\label{fig:ipd31}
}
\parbox{18cm}{
\centering
\includegraphics[width=18cm,height=5cm]{ipd32.pdf}
\caption{Agents take strategies $\tft$ and $\acop$, respectively}
\label{fig:ipd32}
}
\parbox{18cm}{
\centering
\includegraphics[width=18cm,height=5cm]{ipd41.pdf}
\caption{Agents take strategies $\aco$ and $\tftco$, respectively}
\label{fig:ipd41}
}
\parbox{18cm}{
\centering
\includegraphics[width=18cm,height=5cm]{ipd42.pdf}
\caption{Agents take strategies $\aco$ and $\acop$, respectively}
\label{fig:ipd42}
}
\end{figure}

In our modelling of the ISH game, both agents are given a set of strategies, from which they choose one to follow. Agents are not informed about which strategies their opponents are following.
That is, the partial observation exists in that agents are not sure about their opponents' strategy.
%The perfect recall is useful for $\con$ as he may use it to reason, from the past observations,  about the strategy $\adv$ is following.
%
%The goal of the agent $\con$ is to maximise the benefits (expressed in terms of logic formulas) under the incomplete information. The game can be seen as a single-agent game of the agent $\con$, because the other agent $\adv$ will follow a specific strategy that is chosen initially.
%
Let $\Phi_x$ be a set of possible strategies for agent $x\in\{\con,\adv\}$. Both agents have a single action, i.e., $Act_\con=Act_\adv = \{\top\}$. Agents follow deterministic or randomised strategies on playing either $\co$ or $\de$, which are also called actions.

For agent $\con$, we let $\Goal{\con}=\{pp,ps\}$ be a set of goal variables. Intuitively, $pp$ represents the goal of pursuing profit and $ps$ represents the goal of pursuing selflessness. For agent $\adv$, we let $\Goal{\adv}=\{pps\}$. Intuitively, $pps$ represents the goal of pursuing selflessness with probabilistic strategies. For both agents, we let $\Intn{x}=\Phi_x$ be a set of intentions, i.e., each strategy corresponds to an intention.
%Agent $\adv$'s goals are not used in the following discussion and therefore omitted.
%
We construct a multi-agent system $\cM$ whose states are tuples of the form
$$(\theta_\con,\theta_\adv, u_1, u_2, a_\con, a_\adv,\gs_\con,\gs_\adv,\is_\con,\is_\adv) $$
where for agent $x\in \{1,2\}$, $\theta_x\in\Phi_x$ represents its current strategy, $u_x$ represents its accumulated rewards, $a_x\in \{\co,\de\}$ represents its last action, $\gs_x\subseteq \Goal{x}$ represents its current goal state, and $\is_x\in \Intn{x}$ represents its current intentional state.
%
%Given a state $s$ and an agent $x\in\{\con,\adv\}$, we write $\theta_x(s)$ for the strategy that agent $x$ is currently following, $u_x(s)$ for agent $x$'s current accumulated rewards, $a_x(s)$ for the action that was taken by agent $x$ in the last round, $\gs_x(s)$ for the current goal attitude of agent $x$, and $\is_x(s)$ for the current intentional attitude of agent $x$.
%
Moreover, we have that $\is_x(s)=\theta_x(s)$ and $N_x(s)=\{\top\}$ for all states $s$ and  agents $x$.

%
%We can generalise the latter by defining that $a_x(s_1...s_k)=a_x(s_1)...a_x(s_k)$.
\begin{table}
\begin{center}
\begin{tabular}{|c|c|c|c|c|c|c|}
\hline
 & \de & \co & \de$\Rightarrow $\de & \de$\Rightarrow $\co & \co$\Rightarrow $\de & \co$\Rightarrow $\co  \\
\hline
$\tftde$ & 0.9 & 0.1 & 0.9 & 0.1 & 0.1 & 0.9 \\
\hline
$\adep$ & 0.8 & 0.2 & 0.8 & 0.2 & 0.8 & 0.2 \\
\hline
$\tft$ & 0.0 & 1.0 & 1.0 & 0.0 & 0.0 & 1.0 \\
\hline
$\aco$ & 0.0 & 1.0 & 0.0 & 1.0 & 0.0 & 1.0 \\
\hline
$\tftco$ & 0.1 & 0.9 & 0.9 & 0.1 & 0.1 & 0.9 \\
\hline
$\acop$ & 0.2 & 0.8 & 0.2 & 0.8 & 0.2 & 0.8 \\
\hline
\end{tabular}
\caption{Agents' Strategies}
\label{tab:strategies}
\end{center}
\end{table}

The strategies of the agents are given in Table~\ref{tab:strategies} such that $\Phi_1=\{\tftde,\adep,\tft,\aco\}$ and $\Phi_2=\{\tftco,\acop\}$. Each strategy is defined as follows: for initial states, it is defined as a distribution over the actions, and for non-initial states, depending on the last action of the opponent, it is defined as a distribution over the actions. In Table~\ref{tab:strategies}, for each strategy, the corresponding values in the columns $a$ for $a\in \{\de,\co\}$ represent the probability of taking action $a$ on initial states and the values in the columns $a\Rightarrow b$ for $a,b\in \{\de,\co\}$ represent the probability of taking action $b$ under the condition that the last action of the opponent is $a$. We provide intuitions for these strategies. $\tft$ is the ``tit-for-tat'' strategy that the agent plays $\co$ first and then plays exactly the last action of the opponent, i.e., the probability value is 1 for both columns $\de\Rightarrow \de$ and $\co\Rightarrow \co$. We have two probabilistic variants of $\tft$, i.e., $\tftde$ and $\tftco$. For both of them, instead of having a deterministic (i.e., probability value is 1) strategy, we have a randomised strategy by assuming that agents play according to the ``tit-for-tat'' strategy with  high probability 0.9. $\tftde$ and $\tftco$ differs on their behaviour on initial states: $\tftde$ tends to take action $\de$ with high probability while $\tftco$ tends to take action $\co$ with high probability. Moreover,  $\aco$ is the ``always cooperate'' strategy that agent always takes $\co$ action in  a game, $\acop$ is a probabilistic variant of the ``always cooperate'' strategy that agent always takes $\co$ action with higher probability (0.8 vs. 0.2) than $\de$ action, and $\adep$ is a probabilistic variant of the ``always defect'' strategy that agent always takes $\de$ action with higher probability (0.8 vs. 0.2) than $\co$ action.

%Let $ind(\tftco)=1$ and $ind(\acop)=2$. We define $\gs_\con(s)=\{ind(\theta_\adv(s))\}$  all states $s\in S$. We remark that, as will be defined in the observation function, the agent $\con$ cannot directly observe its goal state (otherwise, it can directly observe the agent $\adv$'s strategy). The purpose of the goal state is to relate those states by substituting goal state with those goals states specified in the goal attitude relation function.
%
The initial distribution $PI$ is a uniform distribution over $\{(\theta_\con, \theta_\adv, 0,0,\co,\co,\gs_1,\{pps\},\theta_\con,\theta_\adv)~|~\theta_\con\in \Phi_\con,\theta_\adv\in \Phi_\adv,\gs_1\in \{\{pp\},\{ps\}\}\}$. Agents' observations are defined as follows: $O_x((\theta_\con,\theta_\adv, u_1, u_2, a_\con, a_\adv,\gs_\con,\gs_\adv,\is_\con,\is_\adv))=(\theta_x, a_\con, a_\adv)$ for $x\in\{1,2\}$, denoting that they cannot observe the opponent's strategy but can observe its last action. The transition relation can be obtained easily from the above explanations.

Each goal $y\in\{\{pp\},\{ps\}\}$ of agent $1$ is associated with a subset of strategies in $\Phi_\con^{y}\subseteq \Phi_1$. We let $\Phi_\con^{\{pp\}}=\{\tftde,\adep\}$ and $\Phi_{1}^{\{ps\}}=\{\tft,\aco\}$. We let $\Goal{\con}(s)=\{\{pp\},\{ps\}\}$ for all states $s$, i.e., agent $\con$ can always change its goal, and $\Intn{\con}(s)=\Phi_\con^{y}$ if $\gs_\con(s)=y$, i.e., agent $\con$'s intentions are those strategies associated to the current goal. For agent $\adv$, we have that $\Intn{\adv}(s)=\{\{pps\}\}$ and $\Intn{\adv}(s)=\Phi_\adv$ for all states $s$.

We assume that the SH game is repeated for a fixed number $k$ of rounds. Figure~\ref{fig:ipd11}-\ref{fig:ipd42} give the unfolding of transition relation in temporal dimension for up to 2 rounds. Each diagram corresponds with a combination of agents' strategies. For example, Figure~\ref{fig:ipd11} is for the temporal transitions when agent $\con$ follows strategy $\tft$ and agent $\adv$ follows strategy $\tftco$, respectively. The states are of the form $s_k^{x,y}$, representing that it is state $s_k$ in the case where agent $\con$ plays strategy $\theta_\con^x$ and agent $\adv$ plays strategy $\theta_\adv^y$. The states are labeled with $(a_1,a_2)$, representing the actions that are taken by the agents in the last round, respectively.
The arrows between states are labelled with probabilities of taking actions. In Figure~\ref{fig:ipd31}-\ref{fig:ipd42}, there are transitions of probability 0 and transitions without associated probability values. These are not transitions that can occur. They are given in the diagrams to ease the comparison of combined strategies.

\subsection{Interaction of Beliefs and Pro-attitudes}

We consider the interaction of agent $\adv$'s beliefs with its intentional attitudes. Intuitively, we require that whenever agent $\adv$ believes with a certain probability, e.g., greater than $0.7$, that its opponent is pursuing profits, i.e., with goal $\{pp\}$, then it will remove the intention $\acop$, i.e., it will also try to pursuit profits. Moreover, if agent $\adv$ believes with a certain probability that its opponent is pursuing selflessness, i.e., with goal $\{ps\}$, then it will remove the intention $\tftco$, i.e., it will also try to be selfless.

Formally, the history-based intentional function $\intn{\adv}^\history$ is defined as follows:
\begin{itemize}
\item $\intn{\adv}^\history(\acop)=\B_{\adv}^{\leq 0.7}(\theta_\con \in \Phi_\con^{\{pp\}})$, and
\item $\intn{\adv}^\history(\tftco)=\B_{\adv}^{\leq 0.7}(\theta_\con \in \Phi_\con^{\{ps\}})$.
\end{itemize}

%Then when synthesising, agent $\adv$'s intentional change into $x$ is forbidden if agent $\adv$ believes with probability greater than $0.6$ that agent $\con$'s current strategy is in $\Psi_\con^1$. After synthesis, if we continue checking those formulas $\phi\in \{\phi_1,\phi_2,\phi_3\}$, it is based on the assumption that the system $\cM$ has been updated into $\synth{}(\cM)$.

Consider the path $\rho_1=s_1^{1,2}s_2^{1,2}s_6^{1,2}\in \fpath{T}^\cM$ with $\obs{\adv}(\rho_1)=(\acop,\co,\co)(\acop,\co,\co)(\acop,\co,\co)$. There are three other paths with the same observations: $\rho_2=s_1^{2,2}s_2^{2,2}s_6^{2,2}$, $\rho_3=s_1^{3,2}s_2^{3,2}s_6^{3,2}$, and $\rho_4=s_1^{4,2}s_2^{4,2}s_6^{4,2}$. We have that  $\be_\adv(\rho_1)=9/213, \be_\adv(\rho_2)=4/213, \be_\adv(\rho_3)=\be_\adv(\rho_4)=100/213$. Also, $Prob(\cM,\rho_1,\theta_\con \in \Phi_\con^{\{pp\}})=Prob(\cM,\rho_2,\theta_\con \in \Phi_\con^{\{pp\}})=1.0$ and  $Prob(\cM,\rho_3,\theta_\con \in \Phi_\con^{\{pp\}})=Prob(\cM,\rho_4,\theta_\con \in \Phi_\con^{\{pp\}})=0.0$. Therefore, $\sum_{\obs{A}(\rho') = \obs{A}(\rho_1)}(\be_A(\rho')\times Prob(\cM,\rho',\theta_\con \in \Phi_\con^{\{pp\}}))=13/213 < 0.7$, which means that the possibility of  changing intention into $\acop$ on the path $\rho_1$ remains. On the other hand, we have that $Prob(\cM,\rho_1,\theta_\con \in \Phi_\con^{\{ps\}})=Prob(\cM,\rho_2,\theta_\con \in \Phi_\con^{\{ps\}})=0.0$ and  $Prob(\cM,\rho_3,\theta_\con \in \Phi_\con^{\{ps\}})=Prob(\cM,\rho_4,\theta_\con \in \Phi_\con^{\{ps\}})=1.0$. Therefore, $\sum_{\obs{A}(\rho') = \obs{A}(\rho_1)}(\be_A(\rho')\times Prob(\cM,\rho',\theta_\con \in \Phi_\con^{\{ps\}}))=200/213 > 0.7$, which means that agent $\adv$'s intentional change into $\tftco$ on the path $\rho_1$ is disabled. The intuition for this result can be seen from the second observation $(\acop,\co,\co)$, which suggests that agent $\con$ takes the action $\co$ in the first round. However, we note that those strategies in $ \Phi_\con^{\{pp\}}$ takes action $\de$ with a high probability (0.9 and 0.8, respectively) and strategies in $ \Phi_\con^{\{ps\}}$ takes action $\co$. That is, it is more likely the strategy $\theta_\con$ is in $ \Phi_\con^{\{ps\}}$.

%Consider the path $\rho_5=s_1^{1,1}s_4^{1,1}s_{15}^{1,1}\in \fpath{T}^\cM$ with $\obs{\adv}(\rho_5)=(\tftco,\co,\co)(\tftco,\de,\co)(\tftco,\co,\de)$. There is another path with the same observations: $\rho_6=s_1^{2,1}s_4^{2,1}s_{15}^{2,1}$. We have that  $\be_\adv(\rho_5)=81/145, \be_\adv(\rho_6)=64/145$. Also, $Prob(\cM,\rho_5,\theta_1 \in \Phi_\con^{\{pp\}})=Prob(\cM,\rho_6,\theta_1 \in \Phi_\con^{\{pp\}})=1.0$. Therefore, $\sum_{\obs{A}(\rho') = \obs{A}(\rho_5)}(\be_A(\rho')\times Prob(\cM,\rho',\theta_1 \in \Phi_\con^{\{pp\}}))=145/145 > 0.7 $, which means that the intention $\acop$ is removed on the path $\rho_5$. The intuition for this case can be seen from the second observation $(\acop,\de,\co)$ which suggests that agent $\con$ takes the action $\de$ in the first round. None of the strategies in $\Phi_\con^{\{ps\}}$ can behave in such a way. Therefore, the agent $\adv$ believes that its opponent is playing a strategy in $\Phi_\con^{\{pp\}}$. On the other hand, we will have $\sum_{\obs{A}(\rho') = \obs{A}(\rho_5)}(\be_A(\rho')\times Prob(\cM,\rho',\theta_1 \in \Phi_\con^{\{ps\}}))=0 < 0.7 $ and therefore the intention $\tftco$ remains on the path $\rho_5$.

\subsection{Trust Value Computation}

Assume that agent $\con$ is concerned with whether it can trust its opponent with a certain probability, e.g., greater than 0.8, on taking the action $\co$ in the next round. This can be expressed as a formula $\phi\equiv\T_{\con,\adv}^{> 0.8}\next(a_\adv = \co)$.

Consider the path $\rho_5=s_1^{1,1}s_3^{1,1}$ with $\obs{\con}(\rho_5)=(\tft,\co,\co)(\tft,\co,\de)$. There exists another path with the same observations: $\rho_6=s_1^{1,2}s_3^{1,2}$. We have that  $\be_\adv(\rho_5)=1/3$ and $\be_\adv(\rho_6)=2/3$. It can be computed that the intentional change into $\tftco$ is disabled on both paths after the synthesis of function $\be_\adv^\history$. With this, we have that $\min_{x\in\intn{\adv}(\rho_5)} Prob(\cM,\rho'',\next(a_\adv = \co))=\min_{x\in\intn{\adv}(\rho_6)} Prob(\cM,\rho'',\next(a_\adv = \co))=0.8$ by letting $\rho''\equiv \rho_5last(\rho_5)[x/\is_\adv]$ and $\rho''\equiv \rho_6last(\rho_6)[x/\is_\adv]$ for the two expressions, respectively. Therefore, we have
$\sum_{\obs{A}(\rho') = \obs{\con}(\rho_5)} (\be_\con(\rho') \times \min_{x\in\intn{\adv}(\rho')} Prob(\cM,\rho'',\next(a_\adv = \co)))=0.8 \not >  0.8$.

On the other hand, consider the path $\rho_7=s_1^{3,1}s_3^{3,1}$ with with $\obs{\con}(\rho_7)=(\tft,\co,\co)(\tft,\co,\de)$. There exists another path with the same observations: $\rho_{8}=s_1^{3,2}s_3^{3,2}$. We have that $\be_\adv(\rho_7)=1/3$ and $\be_\adv(\rho_{8})=2/3$. It can be computed that the intentional change into $\acop$ is disabled on both paths after the synthesis of function $\be_\adv^\history$. With this, we have that $\min_{x\in\intn{\adv}(\rho_7)} Prob(\cM,\rho'',\next(a_\adv = \co))=\min_{x\in\intn{\adv}(\rho_{8})} Prob(\cM,\rho'',\next(a_\adv = \co))=0.9$. Therefore, we have
$\sum_{\obs{A}(\rho') = \obs{\con}(\rho_7)} (\be_\con(\rho') \times \min_{x\in\intn{\adv}(\rho')} Prob(\cM,\rho'',\next(a_\adv = \co)))=0.9  >  0.8$.

Therefore, we have that $\cM\not\models \always\phi$, i.e., agent $\con$ cannot always trust its opponent on taking action $\co$ with probability greater than 0.8. However, we have that $\cM\models E\eventually \phi$, which expresses that such a trust is possible.

\subsection{Preference Functions}

We consider the definition of preference functions and their impact on the evaluation of beliefs and trusts. We assume that $\pref_{\con,\intn{\adv}}^\current(s)(\tftco)=0.8$ and $\pref_{\con,\intn{\adv}}^\current(s)(\acop)=0.2$. This reflects the situation that agent $\con$ has some prior knowledge that its opponent is more probable, or is more rational, to pursue profit instead of selflessness. Consider the formula $\I_\adv \phi$ on the path $\rho_7$. As discussed, the only intentional change is $\tftco$. Therefore, after synthesis, we have $\pref_{\con,\intn{\adv}}(s)(\tftco)=1.0$ and $\pref_{\con,\intn{\adv}}(s)(\acop)=0.0$.
Then the checking of $\I_\adv\phi$ on $\rho_7$ is equivalent to the checking of $\phi$ on the path $\rho_{9}=s_1^{3,1}s_3^{3,1}s_3^{3,1}$, which has another path $\rho_{10}=s_1^{3,2}s_3^{3,2}s_3^{3,1}$ of the same observations. Similarly, we have $\be_\adv(\rho_{9})=1/3$ and $\be_\adv(\rho_{10})=2/3$, and eventually the trust value is $0.9 > 0.8$. Therefore, $\cM,\rho_7\models \I_\adv \phi$.

%On the other hand, if we assume that there is no history-based intentional function $\intn{\adv}^\history$, then both intentional changes into $\tftco$ and $\acop$ are possible on the path $\rho_7$. We have $\pref_{\con,\intn{\adv}}(s)(\tftco)=\pref_{\con,\intn{\adv}}^\current(s)(\tftco)=0.8$ and $\pref_{\con,\intn{\adv}}(s)(\acop)=\pref_{\con,\intn{\adv}}^\current(s)(\acop)=0.2$. The checking of $\I_\adv\phi$ on $\rho_7$ is equivalent to the checking of $\phi$ on the paths $\rho_{9}=s_1^{3,1}s_3^{3,1}s_3^{3,1}$ and $\rho_{11}=s_1^{3,1}s_3^{3,1}s_3^{3,2}$. Both of them have the same observations with other two paths $\rho_{10}$ and $\rho_{12}=s_1^{3,2}s_3^{3,2}s_3^{3,2}$. It can be computed that $\be_\con(\rho_9)= 4/15$, $\be_\con(\rho_{10})= 8/15$, $\be_\con(\rho_{11})= 1/15$, and $\be_\con(\rho_{12})= 2/15$. Moreover, $\bigotimes_{x\in\intn{\adv}(\rho)} Prob(\cM,\rho'',\next(a_\adv = \co))=0.8$ for all $\rho\in \{\rho_9,\rho_{10},\rho_{11},\rho_{12}\}$ such that $\rho''\equiv \rho last(\rho)[x/\is_\adv]$.

\subsection{Other Interesting Specification Formulas}

Beyond those simple formulas that can be handled by pencil and paper, we may be interested in other formulas. The formula
$$
\phi_1\equiv \always\eventually\G_\con \overline{\I_\con} E\eventually (u_\con+u_\adv > 4k)
$$
 says that it can be infinitely often that, for all possible goals, agent $\con$ can find a \legal\ intention, which is within its intentions, such that it is possible to eventually reach states where the combined reward is more than $4k$.
%
%Assume that $\intn{\adv}(\rho)=\Psi_\adv$ for all paths $\rho$, i.e., all the strategies are within agent $\adv$'s intentions. We can be interested in
The formula
$$
\phi_2\equiv A\always\T_{\con,\adv}^{\geq 0.9} \next (a_2 = \co)
$$
 states that in every moment, the agent $\con$ can trust agent $\adv$ with probability no less than $0.9$ that it is playing action $\co$ in the next round. The formula
%in a way to ensure that it is possible to eventually reach states with the combined reward more than $k$, or
$$
\phi_3\equiv \prob{\geq 0.9}\eventually{\T_{\con,\adv}^{\geq 1.0} \always (u_\con \geq u_\adv)}
$$
 states that with probability more than $0.9$ that in the future agent $\con$ can almost surely, i.e., with probability 1, trust agent $\adv$ that it is playing in a way that can ensure that agent $\con$'s reward is always higher than $\adv$'s.

 }
%end of commentout example

\commentout{
\paragraph{Partial Decidability}

From the rules, we can see that the computation needs to start from a prior probability distribution $\probm{s,Q}$.
we have the following partial results
\begin{itemize}
\item $\labeling(s,Q,EG\B_A^{\geq q}\psi)$ if $d\leq v$
\item $\labeling(s,Q,AF\B_A^{\geq q}\psi)$ if $d<qu_s^{\psi,\infty}$
\end{itemize}

For every belief formula $\B_A^{\bowtie q}\psi$ and every state $s$, we use the following values
\begin{itemize}
\item $pu_s^\psi$ and $pv_s^\psi$
\item  $qu_s^\psi$ and $qv_s^\psi$
\end{itemize}
We have the following expressions:
$$
pu_s^{\psi,0}=
\left \{
\begin{array}{ll}
1 & \text{ if } PI(s) = 0 \\
PI(\{t\in S^\psi ~|~ PI(t)>0, \obs{A}(s)=\obs{A}(t), t\models \psi\}) & \text{ if } PI(s) > 0
\end{array}
\right.
$$
$$
pu_s^{\psi,k} = \min \{pu_s^{\psi,k-1}\}\cup \{pu_t^{\psi,k-1}\times T_A(t,s)~|~t\in S^\psi \} \text{ for } k \geq 1
$$
For the other
$$
pv_s^{\psi,0}=
\left \{
\begin{array}{ll}
0 & \text{ if } PI(s) = 0 \\
PI(\{t\in S^\psi ~|~ PI(t)>0, \obs{A}(s)=\obs{A}(t), t\models \psi\}) & \text{ if } PI(s) > 0
\end{array}
\right.
$$
$$
pv_s^{\psi,k} = \max \{pv_s^{\psi,k-1}\}\cup \{pv_t^{\psi,k-1}\times T_A(t,s)~|~t\in S^\psi \} \text{ for } k \geq 1
$$

}

%\commentout{

%start of an old algorithm
\commentout{

\section{Another Decidable Model Checking Fragment}\label{sec:mc}

We present a fragment of the problem that is PSPACE-complete. Unlike the bounded fragment in Section~\ref{sec:bounded}, this fragment works with the $\until$ and \always\ temporal operators in specification formula $\phi$. With $\until$ and \always\ operators, the specification formula can express long-term properties about agent's beliefs in the system, and therefore this fragment complements the bounded fragment. However, the fragment is subject to other restrictions, including the following:

\begin{itemize}
\item It does not work with the synthesis of history-based pro-attitudes functions.
%The synthesis of belief formulas is strictly harder when considering the $\until$ and \always\ temporal operators, and we will in Section~\ref{sec:ordereffects} discuss an approximation approach based on the order effects.
In this section, we assume that the mental functions for agents' pro-attitudes are defined with the Markovian functions $\Goal{A}$ and $\Intn{A}$.
%, i.e., the definitions $\goal{A}^{\history,k}$ and $\intn{A}^{\history,j}$ depending on agents' beliefs are not allowed.

\item The specification formula $\phi$ contains belief formulas $\B_A^{\bowtie q}\psi$ for a single agent $A$. However, it can express other agents' goals and intentions.

\item A belief subformula cannot be in the scope of any other branching time operators.

The syntax of the fragment of the language is as follows.
$$
\begin{array}{lcl}
\phi &::=& ~ \neg \phi~|~\phi \lor \phi ~|~ \phi\land \phi~|~\G_A\phi~|~\I_A\phi~|~\C_A\phi~| \\
&&~AG(\psi\land \B_A^{\bowtie q}\psi) ~|~ EF(\psi\land \B_A^{\bowtie q}\psi)~|~\psi  \\
\psi & ::= & ~p ~|~ \neg \psi~|~\psi \lor \psi ~|~ \psi\land \psi~|~\G_A\psi~|~\I_A\psi~|~\C_A\psi~|~\prob{\bowtie q}{(\psi \until \psi)}
\end{array}
$$
Intuitively,
$\psi$ represents properties that can be directly interpreted over the states, and
formula $AG\B_A^{\bowtie q}\psi$ expresses that agent $A$'s beliefs about $\psi$ with probability in a relation $\bowtie$ with $q$ holds on all the system states and $EF\B_A^{\bowtie q}\psi$ expresses that agent $A$'s beliefs about $\psi$  with probability in a relation $\bowtie$ with $q$ is possible in the future.
The syntactic restriction says that, these two formulas can only work with a known prior probability for agent $A$'s beliefs. For example, the formulas $AFAG\B_A^{\bowtie q}\psi$ and $\prob{\bowtie q}{(\psi_1\until ~AG\B_A^{\bowtie q}\psi)}$ are not valid because there may exist an infinite number of  prior probabilities for agent $A$'s beliefs when working with the formula $AG\B_A^{\bowtie q}\psi$, after applying the temporal operators $\until$ and $AF$.

\item A restriction on the system to guarantee the monotonicity of the probability about agent $A$'s beliefs along the transitions. The restriction is to be given later.
\end{itemize}

At the end of this section, we will also discuss several relaxations on the restrictions over the syntax of the specification formula. In the following, without loss of generality, we assume that it is the agent $A$'s beliefs that the specification formula $\phi$ is to describe.

\paragraph{Construction of Expanded System for Base Cases}

\newcommand{\labeling}{{\tt la}}
\newcommand{\trend}{{\tt e}}

First of all, for the agent $A$, we construct an expanded system  $\cM_i=(\Ags, \ap, S',PI',\{\Act{A}\}_{A\in\Ags},\{\LAct{A}'\}_{A\in\Ags},T',\Rationality, L')$ such that
\begin{itemize}
\item $S'=S\times \powerset{S}$,
\item $I'=\{(s,\{t~|~\obs{A}(t)=\obs{A}(s), PI(t) >0\})~|~s\in I\}$,
\item $\LAct{A}'((s,P))=\LAct{A}(s)$,
\item $T'((s,P),(t,Q))=T(s,a,t)$ if $Q=\{t'~|~s'\in P, T(s',a',t') >0, \obs{A}(t')=\obs{A}(t)\}$, and $T'((s,P),(t,Q))=0$, otherwise.
\item $ L'((s,P))= L(s)$.
\end{itemize}
Intuitively, in a state $(s,Q)$, the set $Q$ includes those states agent $A$ can not distinguish from the current state $s$.

The algorithm proceeds by labelling new atomic propositions (to be given later) and subformulas of $\phi$ on the states in $S'$. We use $\labeling(s,Q,\psi)$ to denote the labelling of $\psi$ on the state $(s,Q)$. First of all, we have the following rules:
\begin{proposition}\label{propn:labelling1}
For those cases that can be directly labelled on the state, we have following recursive rules:
\begin{itemize}
\item $\labeling(s,Q,p)$ for $p\in\ap$  if $p\in L(s)$;
\item $\labeling(s,Q,\neg\psi)$ if not $\labeling(s,Q,\psi)$;
\item $\labeling(s,Q,\psi_1\lor \psi_2)$ if  $\labeling(s,Q,\psi_1)$ or $\labeling(s,Q,\psi_2)$;
\item $\labeling(s,Q,\G_j\psi)$ if for all $x\in \Goal{B}(s)(s)$ we have that
$$(s[x/\gs_j],~\{t~|~s'\in Q,~ y \in \Goal{B}(s)(s), t=s'[y/\gs_j], \obs{A}(t)=\obs{A}(s[x/\gs_j]) \},~\psi);$$
Intuitively, the second component includes those states that take the goal transition relation of agent $B$ and are indistinguishable to the state in the first component. This implies that agent $A$ can observe the type (i.e., $\goal{B}$) of the transition.

\item $\labeling(s,Q,\I_j\psi)$ if for all $x\in \Intn{B}(s)$ we have that
$$(s[x/\is_B],~\{t~|~s'\in Q,~ y \in \Intn{B}(s), t=s'[y/\is_B], \obs{A}(t)=\obs{A}(s[x/\is_B]) \},~\psi);$$

\item $\labeling(s,Q,\C_j\psi)$ if there exists  $x\in \Intn{B}$ such that
$$(s[x/\is_B],~\{t~|~s'\in Q,~ y \in \Intn{B}, t=s'[y/\is_B], \obs{A}(t)=\obs{A}(s[x/\is_B]) \},~\psi);$$

\item $\labeling(s,Q,\prob{\bowtie q}{(\psi_1\until \psi_2)})$ follows the similar approach as that of PCTL model checking.
\end{itemize}
\end{proposition}

In the following, we will consider the cases of combining temporal operator with belief operator, e.g., $\labeling(s,Q,AG\B_A^{\bowtie q}\psi)$ and $\labeling(s,Q,EF\B_A^{\bowtie q}\psi)$.

\paragraph{Further Expansion of the System for the Constraints}

Given a set $Q\subseteq S$ of states, a probability space  on $Q$ is a triple $\alpha_Q=(S,\powerset{S},\probm{Q})$ such that $\probm{Q}(\{s\})>0$ if and only if $s\in Q$.  Given a probability space $\alpha_Q$, we define
$$\probm{Q}(\psi) =\probm{Q}(\{s\in Q~|~\labeling(s,Q,\psi)\})$$
as the probability of those states labelled with formula $\psi$.
Given a probability distribution $\probm{P}$, a set of states $Q$, and a formula $\psi$, we define
$$f_\psi(\probm{P},Q)=\frac{\sum_{t\in Q}(\labeling(t,Q,\psi) \times \sum_{s\in P}(\probm{P}(s)\times T_A(s,t)))}{\sum_{t\in Q}\sum_{s\in P}(\probm{P}(s)\times T_A(s,t))}$$
as the probability of those states in $Q$ labelled with formula $\psi$ and reachable from states in $P$ in one step. In the expression, we implicitly assume that  all transitions $T_A(s,t)$ are of the same type ($\Act{}$, $\goal{B}$, or $\intn{B}$). This will be evidenced by the assumption that the type of a transition is observable by the agents and the fact that in the following construction of the expanded system, the function $f_\psi(\probm{P},Q)$ is used when $P$ and $Q$ are sets of states with the same observations.
Then we define $trend(P,Q,\psi)\in \{-2,-1,0,1\}$  as follows.
\begin{itemize}
\item $trend(P,Q,\psi) = 1$ if $f_\psi(\probm{P},Q)>\probm{P}(\psi)$ for all $\probm{P}$, and for any two distributions $\probm{P}^1$ and $\probm{P}^2$, if $\probm{P}^1(\psi)\geq \probm{P}^2(\psi)$ then $f_\psi(\probm{P}^1,Q)\geq f_\psi(\probm{P}^2,Q)$.
\item $trend(P,Q,\psi) = -1$ if $f_\psi(\probm{P},Q)<\probm{P}(\psi)$ for all $\probm{P}$, and for any two distributions $\probm{P}^1$ and $\probm{P}^2$, if $\probm{P}^1(\psi)\leq  \probm{P}^2(\psi)$ then $f_\psi(\probm{P}^1,Q)\leq  f_\psi(\probm{P}^2,Q)$.
\item $trend(P,Q,\psi) = 0$, if $f_\psi(\probm{P},Q) = \probm{P}(\psi)$ for all $\probm{P}$,
\item $trend(P,Q,\psi) = -2$, otherwise.
\end{itemize}
Intuitively, $trend(P,Q,\psi) = 1$ ($-1$, resp.) means that whatever the current probability distribution $\probm{P}$ is for the set $P$ of states, the probability of satisfying $\psi$ will increase (decrease) if the set $Q$ of states are the next set of states from $P$, and the ordering between distributions will be kept along the transition, and $trend(P,Q,\psi) = 0$ means that the probability of satisfying $\psi$ will keep the same along the transition.

Based on the above definition about the trends of probability of satisfying $\psi$, we introduce four new atomic propositions $\trend_\psi^{-2},\trend_\psi^{-1},\trend_\psi^0,\trend_\psi^1$, and construct another system  $\cM^\psi=(\Ags, \ap^\psi, S^\psi,PI^\psi,\{\Act{A}\}_{A\in\Ags},\{\LAct{A}^\psi\}_{A\in\Ags},T^\psi,\Rationality, L^\psi)$ such that
\begin{itemize}
\item $\ap^\psi=\ap\cup \{\trend_\psi^{-2},\trend_\psi^{-1},\trend_\psi^0,\trend_\psi^1\}$,
\item $S^\psi=S\times \powerset{S}\times \{\trend_\psi^{-2},\trend_\psi^{-1},\trend_\psi^0,\trend_\psi^1\}$,
\item $I^\psi=\{(s,Q, 0)~|~(s,Q)\in I'\}$,
\item $\LAct{A}^\psi((s,Q,\trend))=\LAct{A}'((s,Q))$,
\item $T^\psi((s,P,\trend),(t,Q,\trend'))=T'((s,P),(t,Q))$ if  $\trend'=\trend_\psi^x$ for  $x=trend(P,Q,\psi)$, and $T^\psi((s,P,\trend),(t,Q,\trend'))=0$, otherwise.
\item $ L^\psi((s,Q,\trend))= L'((s,Q))$.
\end{itemize}
Intuitively, in a state $(s,P,\trend)$, the set $P$ of states represents the (nondeterminisitc) belief state, and the trend $\trend$ represents the probabilistic trend of the belief state, i.e., the probability of $\psi$ is increased (i.e., $\trend_\psi^{1}$), the same (i.e., $\trend_\psi^0$), decreased (i.e., $\trend_\psi^{-1}$), or uncertain (i.e., $\trend_\psi^{-2}$), from the last belief state.

\paragraph{Constraints on the Expanded System}

An investigation into the general model checking problem shows that the following factor contributes to its undecidability: from a given state $(s,P,e)$, the probability $\probm{P}(\psi)$ can either increase or decrease when transiting into another state $(t,Q,e')$, and the probabilistic trend may even depend on the measure $\probm{P}$.
This observation naturally leads to the following constraints on the problem:
\begin{itemize}
\item all states $(s,P,e)$ in $S^\psi$ satisfy that $e\neq -2$, and
\item on every infinite execution, there do not exist two states $(s,P,e)$ and $(t,Q,e')$ such that $e=1$ and $e'=-1$.
\end{itemize}
Intuitively, the trend of the probability of the formula $\psi$ needs to be independent on the measure $\probm{P}$ and needs to be persistent along every execution.

The following proposition states that the detection of the constraints can be done in polynomial time over the size of the expanded system $\cM^\psi$.
\begin{proposition}
The detection of whether a system satisfies the above constraint can be done by a usual CTL model checking as follows.
$$
\cM^\psi\models AG(e\neq -2) \land AG(e=1\rimp AG(e \in \{0,1\})) \land AG(e=-1\rimp AG(e\in \{-1,0\}))
$$
\end{proposition}

Note that, the expansion from $\cM'$ into $\cM^\psi$ simply for the detection of the monotonicity constraints. After confirming that the system satisfies the constraints, the labelling of belief formulas can work directly on $\cM'$.

\paragraph{Range of Belief Probability}

For the formulas $AG\B_A^{\bowtie q}\psi$ and $EF\B_A^{\bowtie q}\psi$, we need to compute the range  of belief probability of $\psi$, starting from a state $(s,Q)\in S'$ and a current belief distribution $\probm{s,Q}$.
We need some preliminaries about strongly connected components and end components. A strongly connected component (SCC) $c$ of the expanded system $\cM'$ is a pair $(S_c,T_c)$ such that
\begin{itemize}
\item $S_c\subseteq S'$ is a subset of states,
\item $T_c:S_c\times S'\rightarrow [0,1]$ is a transition function such that   $T_c(s,s')=T'(s,s')$ for any $s'\in S'$, and
\item the underlying graph is strongly connected.
\end{itemize}
Further, an SCC $(S_c,T_c)$ is an end component (EC) if the transition $T_c$ is such that $T_c(s,s')=0$ for $s'\notin S_c$. An SCC or EC is maximal if there does not exist another SCC or EC $c'=(S_c',T_c')$ such that $S_c\subset S_c'$.
There are existing algorithms that can compute all SCCs and ECs in polynomial time, see e.g., \cite{dA1997}. Moreover, we use $rscc(s,Q)$  to denote those states in an SCC reachable from the state $(s,Q)$, and use $rec(s,Q)$ to denote those states in an EC reachable from the state $(s,Q)$.

We say that an SCC or EC $(S_c,T_c)$ is decreasing (increasing, respectively) with respect to a formula $\psi$ if there exist two states $(s_1,Q_1),(s_2,Q_2)\in S_c$ such that $trend(Q_1,Q_2,\psi)=-1$ ($trend(Q_1,Q_2,\psi)=1$). Moreover, an SCC or EC is stable with respect to $\psi$ if it is neither increasing nor decreasing with respect to $\psi$.
\begin{proposition}
Given a formula $\psi$, an SCC or EC in the system $\cM'$ can be increasing, decreasing, or stable with respect to $\psi$, but cannot be in two states at the same time.
\end{proposition}
Therefore, we can use $rscc^{=x}(s,Q)$ and $rec^{=x}(s,Q)$ for $x\in \{-1,0,1\}$ to denote the partitions of the sets $rscc(s,Q)$ and $rec(s,Q)$ according to their trending states.

Let $\probm{s,Q}^{\psi,inf}$ ($\probm{s,Q}^{\psi,sup}$, respectively) be the probability distribution with infimum (supremum) probability of $\psi$, for all the probability distributions reachable from the state $(s,Q)$ in $\cM'$.
The following proposition states that, the qualitative properties (i.e., $\probm{s,Q}^{\psi,inf}=0$ and $\probm{s,Q}^{\psi,sup}=1$) can be determined if reachable from decreasing or increasing SCCs. Note that, when $0< \probm{s,Q}(\psi) < 1$, it is always the case that $\probm{s,Q}^{\psi,sup}\geq \probm{s,Q}(\psi) >0$ and $\probm{s,Q}^{\psi,inf}\leq \probm{s,Q}(\psi) <1$.
\begin{proposition}\label{propn:qualitative}
Given a state $(s,Q)\in S'$, a formula $\psi$, and a probability distribution $\probm{s,Q}$ such that $0< \probm{s,Q}(\psi) < 1$, we have that
\begin{itemize}
\item $\probm{s,Q}^{\psi,inf}=0$ if $rscc^{-1}(s,Q)\neq \emptyset$, and
\item $\probm{s,Q}^{\psi,sup}=1$ if $rscc^{1}(s,Q)\neq \emptyset$.
\end{itemize}
\end{proposition}
Intuitively, the infimum value can be 0 if it can reach an SCC with decreasing trend, and on the other hand, the supremum value can be 1 if it can reach an SCC with increasing trend.
Note that, the above conditions for qualitative properties are sufficient, but no necessary. There may exist cases in which the transition function $T'$ can directly turn $\probm{s,Q}^{\psi,inf}$ into 0 or $\probm{s,Q}^{\psi,sup}$ into 1.

Let $\probm{s,Q}^{\psi,inf}(t,Q')$ and $\probm{s,Q}^{\psi,sup}(t,Q')$ be the infimum and supremum probability of reaching from $(s,Q)$ to $(t,Q')$, respectively. Then we have the following proposition:
\begin{proposition}
Given a state $(s,Q)\in S'$, a formula $\psi$, and a probability distribution $\probm{s,Q}$, the values $\probm{s,Q}^{\psi,inf}(t,Q')$ and $\probm{s,Q}^{\psi,sup}(t,Q')$ for all states $(t,Q')$ satisfy the following equations:
\begin{itemize}
\item $\probm{s,Q}^{\psi,inf}(s,Q)=\probm{s,Q}$;
\item $\probm{s,Q}^{\psi,inf}(t,Q')=\min \{\probm{s,Q}^{\psi,inf}(t',Q'')\times T'((t',Q''),(t,Q'))~|~(t',Q'')\in S',~T'((t',Q''),(t,Q')) > 0 \}$, for all $(t,Q')\in S'$ such that $(t,Q')\neq (s,Q)$;
\item $\probm{s,Q}^{\psi,sup}(s,Q)=\probm{s,Q}$;
\item $\probm{s,Q}^{\psi,sup}(t,Q')=\max \{\probm{s,Q}^{\psi,sup}(t',Q'')\times T'((t',Q''),(t,Q'))~|~(t',Q'')\in S' \}$, for all $(t,Q')\in S'$  such that $(t,Q')\neq (s,Q)$;
\end{itemize}
\end{proposition}
Note that, we do not need to compute for all states. The computation is held on those states reachable from $(s,Q)$.
Based on the above propositions, we have the following proposition to decide the infimum and supremum values.
\begin{proposition}\label{propn:quantitative}
Given a state $(s,Q)\in S'$, a formula $\psi$, and a probability distribution $\probm{s,Q}$, if the values can not be determined by Proposition~\ref{propn:qualitative}, then we have the following results:
\begin{itemize}
\item $\probm{s,Q}^{\psi,inf}=\min \{\probm{s,Q}^{\psi,inf}(t,Q')~|~(t,Q')\in rec(s,Q)\}$
\item $\probm{s,Q}^{\psi,sup}=\max \{\probm{s,Q}^{\psi,sup}(t,Q')~|~(t,Q') \in rec(s,Q) \}$
\end{itemize}
\end{proposition}
Intuitively, we evaluate on the values of those states in reachable end components. Note that, in Proposition~\ref{propn:quantitative}, the values $\probm{s,Q}^{\psi,inf}$ and $\probm{s,Q}^{\psi,sup}$ can be reached, instead of asymptotic as those in Proposition~\ref{propn:qualitative}. Therefore, they can be written as $\probm{s,Q}^{\psi,min}$ and $\probm{s,Q}^{\psi,max}$.
The following proposition gives the labelling rules for several belief formulas.
\begin{proposition}\label{propn:labelling2}
We can label the formulas $AG\B_A^{\bowtie q}\psi$ and $EF\B_A^{\bowtie q}\psi$ on states with the following rules:
\begin{itemize}
\item if the values are decided by Proposition~\ref{propn:qualitative} then
\begin{itemize}
%\item $\labeling(s,EF\B_A^{> q}\psi)$ if and only if $d <  \probm{s,Q}^{\psi,sup}$
%\item $\labeling(s,EF\B_A^{< q}\psi)$ if and only if $d>   \probm{s,Q}^{\psi,inf}$
\item $\labeling(s,EF\B_A^{\geq q}\psi)$ if and only if $d <  \probm{s,Q}^{\psi,sup}$
\item $\labeling(s,EF\B_A^{\leq q}\psi)$ if and only if $d>   \probm{s,Q}^{\psi,inf}$
\item $\labeling(s,AG\B_A^{\geq q}\psi)$ if and only if $d \leq  \probm{s,Q}^{\psi,inf}$
\item $\labeling(s,AG\B_A^{\leq q}\psi)$ if and only if $d \geq \probm{s,Q}^{\psi,sup}$
\end{itemize}
\item if the values are decided by Proposition~\ref{propn:quantitative} then
\begin{itemize}
\item $\labeling(s,EF\B_A^{> q}\psi)$ if and only if $d <  \probm{s,Q}^{\psi,max}$
\item $\labeling(s,EF\B_A^{< q}\psi)$ if and only if $d>   \probm{s,Q}^{\psi,min}$
\item $\labeling(s,EF\B_A^{\geq q}\psi)$ if and only if $d \leq \probm{s,Q}^{\psi,max}$
\item $\labeling(s,EF\B_A^{\leq q}\psi)$ if and only if  $d \geq   \probm{s,Q}^{\psi,max}$
\item $\labeling(s,AG\B_A^{> q}\psi)$ if and only if $d <  \probm{s,Q}^{\psi,min}$
\item $\labeling(s,AG\B_A^{< q}\psi)$ if and only if $d > \probm{s,Q}^{\psi,max}$

\item $\labeling(s,AG\B_A^{\geq q}\psi)$ if and only if $d \leq  \probm{s,Q}^{\psi,min}$
\item $\labeling(s,AG\B_A^{\leq q}\psi)$ if and only if $d \geq \probm{s,Q}^{\psi,max}$
\end{itemize}
\end{itemize}
\end{proposition}
With Propsition~\ref{propn:labelling1} and Propsition~\ref{propn:labelling2}, we have the following theorem to reduce the model checking problem to the above labelling rules.
\begin{theorem}
$\cM\models \phi$ if and only if we have that $\labeling(s,Q,\phi)$ for all $(s,Q)\in I'$.
\end{theorem}

\commentout{
We use $Path((s,Q)(t,Q'))$ be the set of finite paths from state $(s,Q)$ to $(t,Q')$, and $Path((t,Q'))$ be the set of infinite paths from $(t,Q')$. Then we define the following notations
\begin{itemize}
\item $\probm{s,Q}^{\psi,F}(t,Q')= \probm{Q'}$ such that $\probm{Q'}(\psi) = min_{\rho\in Path((s,Q)(t,Q'))}\{P_\psi(\mu,\rho)\}$
\item $\probm{s,Q}^{\psi,I}(t,Q')= \probm{Q'}$ such that $\probm{Q'}(\psi) = min_{\rho\in Path((s,Q)(t,Q'))}\{P_\psi(\mu,\rho)\}$
\end{itemize}

Let $minv_{s,Q}(t,Q')$ be the minimal value for the $minval(s,Q,\psi)$ on state $(t,Q')$, We have the following proposition to state the relation of these values between states.
\begin{proposition}
We have the following equations:
\begin{itemize}
\item
$minval(s,Q,\psi)= \min \{ T'((t,Q'),(s,Q))\times minval(t,Q',\psi)~|~T'((t,Q'),(s,Q)) > 0 \}$
\item
$maxval(s,Q,\psi)= \max \{ T'((t,Q'),(s,Q))\times maxval(t,Q',\psi)~|~T'((t,Q'),(s,Q)) > 0 \}$
\end{itemize}
\end{proposition}

Let $\probm{s,\psi}^1$ be a probability distribution on which the probability of $\psi$ is 1. Then
we can recursively compute the minimum value and the maximum value with the following rules:
\begin{itemize}
\item $minval^0(s,\psi)= v_s$, and $minval^0(t,\psi)= \probm{t,\psi}^1$ for all $t\neq s$,
\item $minval^k(s,\psi) = \min  \{minval^{k-1}(t,\psi)\times T'(t,s)~|~t\in S' \} \text{ for } s\in S' \text{ and } k \geq 1$
\item $maxval^0(s,\psi)=v_s$, and $minval^0(t,\psi)= 0$ for all $t\neq s$,
\item $maxval^k(s,\psi) = \max \{maxval^{k-1}(s,\psi)\}\cup \{maxval^{k-1}(t,\psi)\times T'(t,s)~|~t\in S' \} \text{ for } s\in S' \text{ and } k \geq 1$
\end{itemize}
}

\paragraph{Analysis of Complexity}

It can be seen that both the system $\cM'$ and the system $\cM^\psi$ are exponential over the system $\cM$. The detection of whether $\cM^\psi$ satisfies the constraints can be done in polynomial time with CTL model checking. The labelling with the rules in Proposition~\ref{propn:labelling1} can be done in polynomial time. For the labelling with the rules in Proposition~\ref{propn:labelling2}, we notice that, the computation of infimum and supremum values in Proposition~\ref{propn:qualitative} can be done in polynomial time, while the computation of maximal and minimum values in Proposition~\ref{propn:quantitative} can also be computed in a polynomial number of steps, each step can be done in polynomial time. So the complexity is in EXPTIME.

\paragraph{Relaxation of the Syntax}

We can consider the formulas $AG(\psi_1\land \B_A^{\bowtie q}\psi_2)$ and $EF(\psi_1\land \B_A^{\bowtie q}\psi_2)$. The case of $AG(\psi_1\land \B_A^{\bowtie q}\psi_2)$ is simple, because it can be reduced to $AG\psi_1 \land AG \B_A^{\bowtie q}\psi_2$. For the case of $EF(\psi_1\land \B_A^{\bowtie q}\psi_2)$, the bounds can appear in the following two cases:
\begin{itemize}
\item as before, but require the reachability of $\psi_1$ from the previous states
\item the states moving to the end components or scc, but from which no states of $\psi_1$ is reachable.
\end{itemize}

}
%end of an old algorithm

%start of autonomous car example
\commentout{

\newpage

\section{Autonomous Car Example}

\newcommand{\agenta}{{\tt a}}
\newcommand{\agency}{{\tt g}}
\newcommand{\agentb}{{\tt b}}
\newcommand{\agentc}{{\tt c}}

\newcommand{\besttime}{{\tt t}}
\newcommand{\bestfuel}{{\tt f}}

Contract net
%\cite{Smith1980}
specifies the interaction between agents for fully automated competitive negotiation through the use of contracts.
%Under a contract net system, a user could specify the good he wanted as well as a price maximum price he was willing to pay. The agent program would then find other user(s) willing to sell the good within the desired price range. The user with the lowest price would then be selected to fulfil the contract.
In real scenario, the bidding, allocation, and subcontracting can be very complex. In this section, we model a simple contract net in which a customer decides whether to delegate a task to an agency by evaluating its trust value. The example can be easily generalised to delegate the task by considering several agencies. The selection between agencies will rely on the trust values.

\begin{example}\label{example:contract}
A customer $\agenta$ wants to delegate a task (e.g., send a child to the school) to an agency $\agency$, who may in turn subcontract the task to either of two companies, $\agentb$ or $\agentc$. According to the past experience, customer $\agenta$ believes that the task will be  subcontracted to  $\agentb$ with probability $0.8$ and $\agentc$ with probability $0.2$. For $\agentb$, it has two different modes $\besttime_\agentb$ and $\bestfuel_\agentb$ of conducting the task (e.g., imaging that $\agentb$ is an autonomous car with the modes of either pursuing the best time consumption $\besttime_\agentb$ or pursuing the best fuel consumption $\bestfuel_\agentb$). The same for $\agentc$ with modes $\besttime_\agentc$ and $\bestfuel_\agentc$. Agent $\agenta$ considers  equal probability of the two companies taking actions.

It is known that by taking the mode $\besttime_\agentb$, agent $\agentb$ has a 70\%\footnote{The probability values given here are simply for the convenience of the calculation of  trust values. The actual scenario, e.g., asking for an autonomous car to send a child to the school, tends to have different probability values. } of chance of successfully completing the task, and by taking the mode $\bestfuel_\agentb$, it has a 90\% of success chance. For the agent $\agentc$, it has a 80\% of success chance by taking mode $\besttime_\agentc$ and a 60\% of success chance by taking mode $\bestfuel_\agentc$.

We construct
%multi-agent
 $\cM=(\Ags, \ap, S,PI,\{\Act{A}\}_{A\in\Ags},\{\LAct{A}\}_{A\in\Ags},T,\Rationality, L)$ such that
\begin{itemize}
\item $\Ags=\{\agenta,\agency\}$,
%\item
$\ap=\{ps\}$,
\item $S=\{s_0,s_1,s_2,s_3,s_4,s_5\}$,
\item $PI(s_0)=0.4, PI(s_1)=0.4, PI(s_2)=0.1, PI(s_3)=0.1$,
\item $\Act{\agenta}=\{\bot\}$, $\Act{\agency}=\{at,af\}$,
%\item %the \legal\ actions are as follows.
%\begin{itemize}
\item $N_\agenta(s)=\{\bot\}$ and $N_\agency(s)=\{at,af\}$ for all $s\in S$.
%\end{itemize}
\item the transition function is defined as
\begin{itemize}
\item $T(s_0,at,s_4)=0.7$, $T(s_0,at,s_5)=0.3$,
\item $T(s_1,af,s_4)=0.9$, $T(s_1,af,s_5)=0.1$,
\item $T(s_2,at,s_4)=0.8$, $T(s_2,at,s_5)=0.2$,
\item $T(s_3,af,s_4)=0.6$, $T(s_3,af,s_5)=0.4$,
\end{itemize}
where we use $a$ to denote the joint action in which agent $\agency$ takes action $a$ while  agent $\agenta$ takes action $\bot$, and
\item $ L(s_4)=\{ps\}$.
\end{itemize}

%In this system,
We define the following mental variables for agents
\begin{itemize}
\item $\Goal_\agenta=\{gs\}$, $\Intn_\agenta=\{\top\}$
\item $\Goal_\agency=\emptyset$, $\Intn_\agency=\{f,t\}$.
\end{itemize}
Intuitively, agent $\agency$ has two \legal\ intentions, to run the fuel-saving mode $f$ or the time-saving mode $t$.

For the rationality mechanism $\Rationality$, the mental attitudes
% for each state
are
\begin{itemize}
\item $\gs_\agenta(s_0)=\{gs\}$, $\is_\agenta(s_0)=\top$, $\gs_\agenta(s_0)=\emptyset$,  $\is_\agency(s_0)=t$,
\item $\gs_\agenta(s_1)=\{gs\}$, $\is_\agenta(s_1)=\top$, $\gs_\agenta(s_1)=\emptyset$,  $\is_\agency(s_1)=f$,
\item $\gs_\agenta(s_2)=\{gs\}$, $\is_\agenta(s_2)=\top$, $\gs_\agenta(s_2)=\emptyset$,  $\is_\agency(s_2)=t$,
\item $\gs_\agenta(s_3)=\{gs\}$, $\is_\agenta(s_3)=\top$, $\gs_\agenta(s_3)=\emptyset$,  $\is_\agency(s_3)=f$.
\end{itemize}
Intuitively, agent $\agenta$ has the goal $\{gs\}$ on all initial states. For state $s_0$ and $s_2$, agent $\agency$'s current \legal\ intention is to run the time-saving mode (i.e., $t$), and for state $s_1$ and $s_3$, its current \legal\ intention is to run the fuel-saving mode (i.e., $f$).

The \legal\ intentions $\Intn{\agenta}(s)=\{\top\}$ and $\Intn{\agency}(s)=\{f,t\}$, for all $s\in S$.
The rationality mechanism $\Rationality$ is defined with the functions:
\begin{itemize}
\item $\be_\agenta(s)(t)=PI(t)$ for all $s,t\in \{s_0,s_1,s_2,s_3\}$,
\item $\goal_\agenta(s)=\{\{gs\}\}$ for $s\in \{s_0,s_1,s_2,s_3\}$, and
\item $\intn_\agency(s_0)=\intn_\agency(s_1)=\{f\}$, and $\intn_\agency(s_2)=\intn_\agency(s_3)=\{t\}$.
\end{itemize}
Note that we ignore those definitions that are irrelevant. Intuitively, we assume that, agent $\agency$ intends to ask for the agent $\agentb$ to choose fuel-saving model (i.e., $f$) and ask for the agent $\agentc$ to choose time-saving model (i.e., $t$).

\end{example}

For such a system, we would like to study the {\it trust} of the customer $\agenta$ over the agency $\agency$. Concretely, agent $A$'s trust over agent $B$ on a task $\psi$ is defined with two  concepts: competence and predictability. The competence means that agent $A$ believes that $B$ can produce the expected result, while the predictability means that agent $A$ believes that agent $B$ will actually do what $i$ needs.

\begin{definition}\label{def:trust}
Let $\bowtie_1,\bowtie_2\in \{\leq, <, >, \geq\}$ and $d_1,d_2\in [0,1]$. The formula $\prob{\bowtie_1 d_1}{\phi}$  denotes the completion of a task $\phi$ with probability in a relation $\bowtie_1$ with $d_1$.
The following expression states that agent $A$'s trust over agent $B$ on achieving $\prob{\bowtie_1 d_1}{\phi}$ is a probability value in a relation $\bowtie_2$ to $d_2$. That is, we define $trust_{i,j}^{\bowtie_2 d_2}(\prob{\bowtie_1 d_1}{\phi}) $ as
$$  \G_A(\C_{\Ags}\prob{\bowtie_1 d_1}{\phi}) \land \B_A^{\geq q_2} ((\C_j \overline{\C_{\Ags\setminus \{j\}}}\prob{\bowtie_1 d_1}{ \phi}) \land (\I_j \overline{ \C_{\Ags\setminus \{j\}}} \prob{\bowtie_1 d_1}{ \phi}))$$
Intuitively, the expression $\C_{\Ags}\prob{\bowtie_1 d_1}{\phi}$ states that there exist \legal\ intentions for the agents to achieve $\prob{\bowtie_1 d_1}{\phi}$,  $\G_A(\C_{\Ags}\prob{\bowtie_1 d_1}{\phi})$ states that agent $A$ has such a goal, $\C_j \overline{\C_{\Ags\setminus \{j\}}}\prob{\bowtie_1 d_1}{ \phi}$ states that there exists an agent $B$'s \legal\ intention to achieve $\prob{\bowtie_1 d_1}{\phi}$ regardless of the other agents' \legal\ intentions, and $\I_j \overline{ \C_{\Ags\setminus \{j\}}} \prob{\bowtie_1 d_1}{ \phi}$ states that agent $A$ intends to take up such \legal\ intentions. Therefore, the trust value is defined as agent $A$'s subjective belief over the agent $B$'s competence (i.e., $\C_j \overline{\C_{\Ags\setminus \{j\}}}\prob{\bowtie_1 d_1}{ \phi}$) and predicability (i.e., $\I_j \overline{ \C_{\Ags\setminus \{j\}}} \prob{\bowtie_1 d_1}{ \phi}$).
\end{definition}

\begin{example}
(Continue with Example~\ref{example:contract}) Now we check on the constructed system whether $trust_{\agenta,\agency}^{ \geq 1.0}(\prob{\geq 0.8}{\next ps})$, that is, whether agent $\agenta$ can almost-surely trust agent $\agency$ on the completion of $\next ps$ with probability more than $0.8$. First, consider the state $s_0$, we have that the statement $\cM,s_0\models \G_\agenta(\C_{\Ags}\prob{\geq 0.8}{\next ps})$ holds because
$$
\begin{array}{cll}
& \cM,s_0\models \G_\agenta(\C_{\Ags}\prob{\geq 0.8}{\next ps}) & \\
\text{iff } & \cM,s_0s_0\models \C_{\Ags}\prob{\geq 0.8}{\next ps} & \text{by
%the definitions of
$\gs_\agenta(s_0)$ and $\goal_\agenta(s_0)$}\\
\text{if } & \cM,s_0s_0s_1\models \prob{\geq 0.8}{\next ps} & \text{by changing  $\agency$'s \legal\ intention}\\
\end{array}
$$
Second, %to verify $\cM,s_0\models \B_\agenta^{\geq 1.0} ((\C_\agency \overline{\C_{\Ags\setminus \{\agency\}}}\prob{\geq 0.8}{ \next ps}) \land (\I_\agency \overline{ \C_{\Ags\setminus \{\agency\}}} \prob{\geq 0.8}{ \next ps}))$,
we
%need to
verify
%both
$\C_\agency \overline{\C_{\Ags\setminus \{\agency\}}}\prob{\geq 0.8}{ \next ps}$ and $\I_\agency \overline{ \C_{\Ags\setminus \{\agency\}}} \prob{\geq 0.8}{ \next ps}$ for all states in $supp(\be_\agenta(s_0))$, i.e., $\{s_0,s_1,s_2,s_3\}$. For example, for state $s_0$, we have
$$
\begin{array}{cll}
& \cM,s_0\models \C_\agency \overline{\C_{\Ags\setminus \{\agency\}}}\prob{\geq 0.8}{ \next ps} & \\
\text{iff} & \cM,s_0s_1\models \overline{\C_{\Ags\setminus \{\agency\}}}\prob{\geq 0.8}{ \next ps} & \text{by changing  $\agency$'s \legal\ intention} \\
\text{if} & \cM,s_0s_1s_1\models \prob{\geq 0.8}{ \next ps} & \text{ by }\Intn{\agenta}(s)=\{\top\}\text{ for all s} \\
\end{array}
$$
and
$$
\begin{array}{cll}
& \cM,s_0\models \I_\agency \overline{ \C_{\Ags\setminus \{\agency\}}} \prob{\geq 0.8}{ \next ps} & \\
\text{iff} & \cM,s_0s_1\models   \overline{ \C_{\Ags\setminus \{\agency\}}} \prob{\geq 0.8}{ \next ps} & \text{by changing  $\agency$'s intention}\\
\text{if} & \cM,s_0s_1s_1\models \prob{\geq 0.8}{ \next ps} &  \text{ by }\Intn{\agenta}(s)=\{\top\}\text{ for all s} \\
\end{array}
$$
We can also verify them for states $s_1,s_2,s_3$. Therefore, we have that
$$\cM,s_0\models \B_\agenta^{\geq 1.0} ((\C_\agency \overline{\C_{\Ags\setminus \{\agency\}}}\prob{\geq 0.8}{ \next ps}) \land (\I_\agency \overline{ \C_{\Ags\setminus \{\agency\}}} \prob{\geq 0.8}{ \next ps}))$$

Further, we can do the above steps for other initial states $s_1,s_2,s_3$. Put them together, we have that
$$\cM\models trust_{\agenta,\agency}^{ \geq 1.0}(\prob{\geq 0.8}{\next ps})$$

\end{example}

\commentout{

\section{A Trust Operator}

\newcommand{\T}{{\mathbb T}}

The trust notion
%$trust_{i,j}^{\bowtie_2 p_2}(\prob{\geq p_1}{\phi})$
in Definition~\ref{def:trust} treats the completion of a task in a probabilistic way (i.e., $\prob{\geq p_1}{\phi}$) and   the trust value is  the belief value (i.e., $\B_A^{\geq p_2}(...)$) over the other agent's competence and predicability of conducting such a probabilistic task. By the semantics, $\prob{\geq p_1}{\phi}$ concerns the probability of those {\it future} paths on which the formula $\phi$ is satisfied. On the other hand, agent's belief $\B_A^{\geq p_2}(...)$ concerns the execution history up to the current time, because the belief depends on the information the agent has received. These two probability values do not have  interrelation in Definition~\ref{def:trust}.

In the following, we define a trust operator $\T_A$ which considers the interrelation of the completion of a task $\phi$ and agent's belief $\B_A$. Simply speaking, the formula $\DT_{A,B}^{\bowtie q}\psi$ expresses that the trust value of agent $A$ over agent $B$ on the task $\psi$ is with the relation $\bowtie$ to $q$.
Before the definition, we need some notations.
Given a path $\rho s$, we write
\begin{itemize}
\item $\goal{A}(\rho s)$ for the set of paths $\{\rho s s[x/\gs_i]~|~x \in \goal{A}(\rho s)\}$,
\item $\intn{A}(\rho s)$ for the set of paths $\{\rho s s[x/\is_i]~|~x \in \intn{A}(\rho s)\}$,
\item $\Intn{A}(\rho s)$ for the set of paths $\{\rho s s[x/\is_i]~|~x\in \Intn{A}(s)\}$.
\end{itemize}
These notations can be extended to work with a set of agents, i.e., $\goal_A(\rho s)$, $\intn_A(\rho s)$, and $\Intn{A}(\rho s)$, for $A\subseteq \Ags$.

The extended language \PRTLS\  with the trust operator is as follows.
$$
\begin{array}{ccl}
\phi & ::= & p ~|~\neg \phi~|~\phi\lor \phi~|~\prob{\bowtie q}{\psi}~|~\B_A^{\bowtie q}\psi~
%|\B_A^{\bowtie q}(\phi~|~\phi)~
|~\G_A\psi~|~\I_A\psi~|~\C_A\psi ~|~\T_{A,B}^{\bowtie q}\psi\\
\psi & ::= & \phi~|~\neg\psi~|~\psi \lor \psi~|~\next \psi ~|~ \psi\until \psi~|~\always \psi
\end{array}
$$
where $p\in Prop$, $A\in\Ags$, $\bowtie\in \{<,\leq,>,\geq\}$, and $q\in [0,1]$.
The semantics of the formula $\T_{A,B}^{\bowtie q}\psi$ is defined as follows.
\begin{itemize}
\item $\cM,\rho\models \DT_{A,B}^{\geq (>) d}\phi$ if
$$
\sum_{\rho'\in supp(\be_A(\rho))} (\be_A(\rho)(\rho') \times \min_{\rho''\in\intn{B}(\rho')} \min_{\rho'''\in \Intn{\Ags\setminus \{j\}}(\rho'')}Prob(\cM,\rho''',\psi))\geq (>) d
$$
\item $\cM,\rho\models \DT_{A,B}^{\leq (<) d}\phi$ if
$$
\sum_{\rho'\in supp(\be_A(\rho))} (\be_A(\rho)(\rho') \times \max_{\rho''\in\intn{B}(\rho')} \max_{\rho'''\in \Intn{\Ags\setminus \{j\}}(\rho'')}Prob(\cM,\rho''',\psi))\leq (<) d
$$
\end{itemize}
Therefore, we define another trust notion as follows:
$$trust_{i,j}^{\bowtie q}(\phi)\equiv \G_A(\C_{\Ags}\prob{>0}{\phi}) \land \DT_{A,B}^{\bowtie q}\phi$$
Intuitively, $\G_A(\C_{\Ags}\prob{>0}{\phi})$ expresses that the task $\phi$ is a valid goal for agent $A$ and $\DT_{A,B}^{\bowtie q}\phi$ states that the trust value of agent $A$ over agent $B$ on the task $\phi$ is a probability in a relation $\bowtie$ with $q$.

\begin{example}
(Continue with Example~\ref{example:contract}) Now we check whether $\G_\agenta(\C_{\Ags}\prob{>0}{\next ps}) \land \T_{\agenta,\agency}^{\geq 0.8}\next ps$ is satisfiable on the system $\cM$. Note that
$$
\begin{array}{cll}
& \sum_{\rho'\in supp(\be_A(s_0))} (\be_A(s_0)(\rho') \times  \min_{\rho''\intn{B}(\rho')}\min_{\rho'''\in \Intn{\Ags\setminus \{j\}}(\rho'')}Prob(\cM,\rho''',\next ps)) & \\
= & 0.4*0.9 + 0.4*0.9+0.1*0.8+0.1*0.8 & \\
= & 0.88 \geq 0.8 & \\
\end{array}
$$
Therefore, $\cM,s_0\models \T_{\agenta,\agency}^{\geq 0.8}\next ps$. Moreover, we have
$$
\begin{array}{cll}
& \cM,s_0\models \G_\agenta(\C_{\Ags}\prob{>0}{\next ps}) & \\
\text{iff } & \cM,s_0s_0\models \C_{\Ags}\prob{>0}{\next ps} & \text{by the definitions of $\gs_\agenta(s_0)$ and $\goal_\agenta(s_0)$}\\
\text{iff } & \cM,s_0s_0s_1\models \prob{>0}{\next ps} & \text{by changing agent $\agency$'s \legal\ intention}\\
\end{array}
$$
We can verify the same for states $s_1,s_2,s_3$. In total, we have
$$\cM\models trust_{\agenta,\agency}^{\geq 0.8}(\next ps)$$
\end{example}

}

Assume that we are designing an app on a GPS device that can continuously monitor (i.e., compute and update) a trust value of any given task. The reason of associating this app to the GPS device is because the latter has information to estimate the completion time of a task. The app, and the GPS device, does not have complete information about the car, for obvious reason. To simplify the situation, we assume that the app does not know which driving mode, time-saving ($t$) or fuel-saving ($f$), the car is currently applying. The driving mode contributes as a high-level abstraction (one-bit) on the information that cannot be observed by the app. In real scenario, there are usually much more such unobservable information.

Consider a task $\phi$ of reaching a goal position $p_g$ within a specific time with a high probability. See Figure~\ref{fig:car} for a diagram of the road map that we are working with. At each position, the app has a set of goals, represented as $\{\phi,act\}$, and the car has an intention $act$, such that $act\in \{l,m,r\}$. Intuitively, $l$ represents the action of turning left, $m$ represents the action of going straight, and $r$ represents the action of turning right. Therefore, the app has both a long-term goal of completing the task and a short-term goal of taking a specific action. The car has an intention $act$ which serves as the current \legal\ intention. Therefore, each state of the system can be represented as a tuple
$$
(position,driving\_mode,\gs_{app},\is_{car})
$$
where $\gs_{app}$ represents the $app$'s current goals and $\is_{car}$ represents the car's current intention.
Note that, we omit other mental attitudes which are not used in the following computation.
Assume that starting from the initial position $p_0$, the car is now at the position $p_1$ which is an intersection. The current goals for the app are $\{\phi,m\}$ and the current intention of the car is $m$. Therefore there are two possible states for their  different driving modes:
\begin{itemize}
\item $s_{1,t}\equiv (q_1,t,\{\phi,m\},m)$,
\item $s_{1,f}\equiv (q_1,f,\{\phi,m\},m)$.
\end{itemize}
The app thinks that both of them are possible. Assume that its current belief is $\be_{app}(s_{1,t})=0.4$ and $\be_{app}(s_{1,f})=0.6$. The numbers $0.4$ and $0.6$ represent that the app has learned, on the way from the initial position $p_0$ to the current position $p_1$, that it is more probable that the current driving mode is fuel-saving.

\begin{figure}
\centering
\includegraphics[width=8cm,height=7cm]{car.pdf}
\caption{A diagram of the road map for the autonomous car example}
\label{fig:car}
\end{figure}

Because the current position is an intersection, we assume that, on both states $s_{1,t}$ and $s_{1,f}$, the possible goals for the app are $\goal{app}^\current(s_{1,t})=\goal{app}^\current(s_{1,f})=\{\{\phi,l\},\{\phi,m\},\{\phi,r\}\}$ and the possible intentions for the car are $\intn{car}^\current(s_{1,t})=\intn{car}^\current(s_{1,f})=\{m,r\}$. They represent the nondeterminism in the system. Intuitively, for example, the app may change its goals into $\{\phi,l\}$, and the car may change its intention into $r$. Note that, in both modes, the car does not intend to turn left.

The following evaluations on the successfulness of completing the task are obtained by the app according to its predefined calculation on the map.
\begin{itemize}
\item $\cM,s[l/\is_{car}]\not\models \phi$, $\cM,s[m/\is_{car}]\models \phi$, and $\cM,s[r/\is_{car}]\models \phi$, for $s\in \{s_{1,t},s_{1,f}\}$.
\end{itemize}
Intuitively, if the car turns left (by changing its intention into $l$) then it will not be able to complete the task. On the other hand, the task remains possible if the car go straight or turn right.

\subsection{Goal Change}

%We assume that, in the system, the goal state of the app and the intention state of the car is consistent, i.e., $\gs_{app}(s)=\{\phi,act\}$ whenever $\is_{car}(s)=act$, for all $act\in \{l,m,r\}$.
We define the following satisfiability relations:
\begin{itemize}
\item $\cM,s[\{\phi,l\}/\gs_{app}]\not\models \phi$, $\cM,s[\{\phi,m\}/\gs_{app}]\models \phi$, and $\cM,s[\{\phi,r\}/\gs_{app}]\models \phi$, for $s\in \{s_{1,t},s_{1,f}\}$.
\end{itemize}
With them, the formula $\G_{app}\phi$ is not satisfiable on the states $s_{1,t}$ and $s_{1,f}$, because
$$
\cM,s_{1,t}\not\models \G_{app}\phi \text{ if } \cM,s_{1,t}[\{\phi,l\}/\gs_{app}]\not\models \phi
$$
Therefore, although $\phi$ is a task that the app aims to complete, it is not its goal. This is reasonable, because a task should not be a goal if it is possible that it cannot be satisfied. On the other hand, we assume the following goal conditions.
\begin{itemize}
\item $\goal{app}^\history(\{\phi,l\})=\goal{app}^\history(\{\phi,r\})=\B_{app}^{> 0.5}t$
\end{itemize}
which says that the goal relations of $\{\phi,l\}$ and $\{\phi,r\}$ are enabled only when the app believes that it is more probable that the current driving mode is time-saving. This is a justifiable constraint that, the fuel-saving mode usually tends to drive straight and therefore if the app believes that it is more probable to be fuel-saving then it may only consider go straight as its goals. With this constraint, we have that on states $s_{1,t}$ and $s_{1,f}$, the possible goals are $\goal{app}(s_{1,t})=\goal{app}(s_{1,f})=\{\{\phi,m\}\}$. Therefore, we have that $\cM,s\models \G_{app}\phi$ for $s\in \{s_{1,t},s_{1,f}\}$.

\subsection{Trust Evaluation}

In the following three subsections, we focus on the computation of trust values.
With these inputs (particularly, $\intn{car}^\current(s_{1,t})=\intn{car}^\current(s_{1,f})=\{m,r\}$, $\be_{app}(s_{1,t})=0.4$, and $\be_{app}(s_{1,f})=0.6$), the app can compute the current trust value as 1.0, because
$$
\begin{array}{lcl}
\cM,s_{1,t}\models \B_{app}^{\geq 1.0} (\I_{car} \overline{C_{app}} \phi) & \text{iff} & \cM,s[act/\is_{car}]\models \phi \text{ for all } s\in \{s_{1,t},s_{1,f}\} \text{ and }act \in \intn{car}(s_{1,t})\\
\end{array}
$$
On the other hand, if we let $\intn{car}(s_{1,t})=\{m,r\}$ and $\intn{car}(s_{1,f})=\{l,m,r\}$, then the current trust value is 0.4, because
$$
\cM,s_{1,t}\models \I_{car} \overline{C_{app}} \phi \text{ but } \cM,s_{1,f}\not\models \I_{car} \overline{C_{app}}\phi
$$

\subsection{Belief Change}

Assume that  $\intn{car}^\current(s_{1,t})=\intn{car}^\current(s_{1,f})=\{m,r\}$. According to the app's estimation, it has a set of preference functions while at the position $p_1$ as follows:
\begin{itemize}
\item $\pref_{app,\intn{car}}(s_{1,t})(m)=0.6$, $\pref_{app,\intn{car}}(s_{1,t})(r)=0.4$, and
\item $\pref_{app,\intn{car}}(s_{1,f})(m)=0.8$, $\pref_{app,\intn{car}}(s_{1,f})(r)=0.2$.
\end{itemize}
The intuition is that, because the goal is to the right of the map, when the time-saving mode is applied, the car is more probable to turn right (i.e., 0.4) than when the fuel-saving mode is applied at the similar situation (i.e., 0.2). On the other hand, if the fuel-saving mode is applied, the car is more probable to go straight (i.e., 0.8) than when the time-saving mode is applied at the similar situation (i.e., 0.6).
Moreover,
 the following transition relation reflects the actual situation that a car moves according to the driving mode:
\begin{itemize}
\item $T(s_{1,t},l,s_{2,t})=0.2$, $T(s_{1,t},m,s_{3,t})=0.4$, $T(s_{1,t},r,s_{4,t})=0.2$, and
\item $T(s_{1,f},l,s_{2,f})=0.1$, $T(s_{1,f},m,s_{3,f})=0.6$, $T(s_{1,f},r,s_{4,f})=0.3$.
\end{itemize}
where $s_{2,x}=(p_2,x,\{\phi,l\},l)$, $s_{3,x}=(p_3,x,\{\phi,m\},m)$, $s_{4,x}=(p_4,x,\{\phi,r\},r)$, for $x\in \{t,f\}$.

Assume that the car moves straight into the position $p_3$. Then the current belief is changed into $\be_{app}(s_{3,t})=0.25$ and $\be_{app}(s_{3,f})=0.75$, i.e., the $app$ strengthens its belief that it is more probable that the driving mode is fuel-saving. On the other hand, if the car turns right into the position $p_4$ then the belief is changed into $\be_{app}(s_{4,t})=0.47$ and $\be_{app}(s_{4,f})=0.53$, i.e., the $app$ weakens its belief that it is more probable that the driving mode is fuel-saving.

\subsection{Trust Re-Evaluation}

Now we can re-evaluate the trust value on position $p_3$. On $p_3$, the car can only turn right or go straight.
Assume that $\G_{app}\phi$ is satisfiable on both $s_{3,t}$ and $s_{3,f}$.
%Therefore, we assume that the possible goals for the app are $\goal{app}^\current(s_{3,t})=\goal{app}^\current(s_{3,f})=\{\{\phi,m\},\{\phi,r\}\}$ and the possible intentions for the car are $\intn{car}^\current(s_{3,t})=\{r\}$ and $\intn{car}^\current(s_{3,f})=\{m\}$.
%
The following evaluations on the successfulness of completing the task are obtained by the app according to its predefined calculation on the map.
\begin{itemize}
\item $\cM,s_{3,t}[m/\is_{car}]\not\models \phi$, and $\cM,s_{3,t}[r/\is_{car}]\models \phi$.
\item $\cM,s_{3,f}[m/\is_{car}]\models \phi$, and $\cM,s_{3,f}[r/\is_{car}]\models \phi$.
\end{itemize}
Intuitively, the car cannot implement the goal on position $p_3$ when the time-saving mode is applied.
If the intentions for the car are $\intn{car}^\current(s_{3,t})=\intn{car}^\current(s_{3,f})=\{r,m\}$ then the trust value is 0.75, because $\cM,s_{3,t}\not\models \I_{car}\phi$ and $\cM,s_{3,f}\models \I_{car}\phi$.
On the other hand, if  the intentions for the car are $\intn{car}^\current(s_{3,t})=\{r\}$ and $\intn{car}^\current(s_{3,f})=\{m\}$ then the trust value is 1.0.

}
%end of autonomous car example

\commentout{
\newpage

\section{Decidability of Synthesis by Order Effects}\label{sec:ordereffects}

Human beliefs are considered to be subject to the order effects, which states that the final belief is significantly affected by the temporal order of information presentation~\cite{WJZ2006}. See e.g., \cite{Asch1946,Schauble1990} for some arguments from psychology. Such effects have been investigated in the study of trust with the term of decay factor~\cite{IJ2002,EKS2009,SD2007}. However, these work focuses on  either showing empirical evidences of the existence of the order effects or proposing various approaches on computing an aggregated value by decaying the integrity of the trust with ad hoc functions. For instance, in~\cite{IJ2002}, a beta distribution $B(\alpha,\beta)$ is used to approximate the trust value of a truster over the next action of a trustee; each time when a new event occurs, an exponential decay constant $0< \lambda < 1$ is applied to either $\alpha$ or $\beta$, to obtain a new beta distribution $B(\alpha',\beta')$ such that $\alpha'=\lambda \alpha$ and $\beta'=\beta$, or $\alpha'=\alpha$ and $\beta'=\lambda \beta$.
%The approximation can be reasonable for
 To our knowledge, there exists no formal treatment of the order effects when reasoning about dynamic systems  within a logic framework.
%[FIXME: need more citations to justify these claims]

In this section, we formalise a notion of order effects for the belief change defined in the previous sections. The factor we define for the order effects is shown to have the following property: the contribution of a past observation to the final belief decreases with the time passed.
While the model checking problem is undecidable in general, this property can be utilised to achieve a computational advantage:
the application of the factor leads to a computationally-feasible model checking problem.

\paragraph{Problem Statement}

We require a restriction that every formula contains belief subformulas of a single agent. In other word, nested beliefs of different agents are not allowed.
%It is not yet clear whether the proposed approach can be easily generalised to handle nested beliefs, and this will be investigated in our future work.

%We consider the synthesis of a formula $\phi=gc(x)$ for some $x\subseteq \Goal{A}$.  [FIXME: how to work with model checking problem]

\paragraph{An algorithmic framework}

Given a finite path $\rho\in \fpath{}^\cM$ and an agent $A\in\Ags$, we have a belief distribution $\be_A$ over those paths that $A$ is not able to distinguish. Let $H_A=\{\obs{A}(\rho)~|~\rho\in \fpath{}^\cM\}$ be the set of finite observations of agent $A$. We construct a system $\cM_A^\aleph=(S\times H_A,PI^\aleph, T_A^\aleph,  L^\aleph, \be_A^\aleph)$ such that
\begin{itemize}
\item $PI^\aleph((s,\bot)) = PI(s)$,
\item $T_A^\aleph((s_1,h_1),(s_2,h_2)) = T_A(s_1,s_2)$ with $h_2=h_1\cdot \obs{A}(s_2)$,
\item $ L^\aleph((s,h))= L(s)$, and
\item $\be_A^\aleph((s,h))=\sum_{\rho\in \fpath{}^\cM}(\obs{A}(\rho)=h) \times (last(\rho) = s)\times \be_A(\rho)$.
\end{itemize}
It is noted that, the system $\cM_i^\aleph$ is infinite because the history of observations can be infinite. Later, memory decay will be employed to ensure that only a finite history of observations are needed.

We can define a model checking procedure on the expanded system $\cM_i^\aleph$ as follows.
\begin{itemize}
\item $\cM_i^\aleph,\rho^\aleph\models  p$ if $p\in L^\aleph(last(\rho^\aleph))$.

\item $\cM^\aleph,\rho^\aleph\models \neg \phi$ if not $\cM^\aleph,\rho^\aleph\models  \phi$.

\item $\cM^\aleph,\rho^\aleph\models \phi_1\lor \phi_2$ if  $\cM^\aleph,\rho^\aleph\models \phi_1$ or $\cM^\aleph,\rho^\aleph\models \phi_2$.

%\item $\cM,\rho\models \phi_1\land \phi_2$ if  $\cM,\rho\models \phi_1$ and $\cM,\rho\models \phi_2$.

\item $\cM^\aleph,\rho^\aleph\models A\psi$ if $\cM^\aleph,\rho^\aleph,\delta^\aleph\models \psi$ for all $\delta^\aleph\in Path_T^I(\cM_i^\aleph,last(\rho^\aleph))$ such that $\rho^\aleph\cdot \delta^\aleph\in Path^I(\cM_i^\aleph)$.

\item $\cM_i^\aleph,\rho^\aleph\models \prob{\bowtie q}{\psi}$ if
%$
$Prob(\cM_i^\aleph,\rho^\aleph,\psi)\bowtie q$
%$
where for $(s,h)=last(\rho^\aleph)$,
%we define
$$
Prob(\cM_i^\aleph,\rho^\aleph,\psi)
 \equiv  \probm{(s,h)}\{\delta^\aleph\in Path_T^I(\cM_i^\aleph,(s,h))~|~\cM_i^\aleph,\rho^\aleph,\delta^\aleph\models \psi\}.
$$

%Also, is this well defined, ie what about nondeterminism in mental attitudes of other agents?
\item $\cM_i^\aleph,\rho^\aleph\models \B_A^{\bowtie q}\psi$ if for $h=snd(last(\rho^\aleph))$, we have
$$(\sum_{s\in S} (\obs{A}(s) = last(h) \times \be_A^\aleph(last(\rho^\aleph)) \times Prob(\cM_i^\aleph,(s,h),\psi)))\bowtie q$$
where we write $snd((s,h))=h$ for the second component of a pair. Note that,
the expression $snd(last(\rho^\aleph))$ represents the current observation history, $\obs{A}(s) = last(h)$ states that $s$ is a possible state of the last observation, $\be_A^\aleph(last(\rho^\aleph))$ expresses the aggregated beliefs of those paths whose observations are $h$ and last state is $s$.

\item $\cM_i^\aleph,\rho^\aleph\models \DT_{A,B}^{\bowtie q}\phi$ if for $h=snd(last(\rho^\aleph))$, we have
$$
(\sum_{s\in S} (\obs{A}(s) = last(h) \times \be_A^\aleph(last(\rho^\aleph)) \times v_Y^{\bowtie}(\cM_i^\aleph,(s,h),j,\psi)))\bowtie q
$$
\end{itemize}

The following proposition states that the model checking problem on $\cM$ can be reduced to the model checking problem on $\cM_i^\aleph$.
\begin{proposition}
Given a model $\cM$ and a formula $\phi$ with a single agent $A$'s belief and trust subformula, we have that $\cM,s\models \phi$ is equivalent to $\cM_i^\aleph,(s,\bot) \models \phi$.
\end{proposition}

\paragraph{Finite-state Memory by Memory Decay}

Now we provide an approach to achieve the finite-state memory by decaying the information obtained in the previous rounds. The information is encoded in the belief distribution $\be_A$, which depends on the sequence of history observations. We show that, after applying memory decay $\lambda$ and assuming a certain precision $\epsilon$, the distribution $\be_A$ will depend on a bounded number of observations that are closest to the current moment. We can then construct $H_i$ over such bounded number of observations.

First of all, we let $|\rho|$ be the number of states on a path $\rho$ and define
$$\R(t,o_k...o_0,s)=\{\rho~|~\rho(l_\rho-k-1)=t, \obs{A}(\rho[l_\rho-k-1...l_\rho-1])=o_k...o_0, last(\rho)=s\}$$
 to be the set of paths whose last state is $s$, last $k+1$ observations are $o_k...o_0$, and the last $(l_\rho-k-1)$-nd state is $t$. Let $P(t,o_k...o_0,s)$ be the probability satisfying the following recursive rules.
\begin{itemize}
\item $P(t,o_1o_0,s)=\obs{A}(t)=o_1\times \obs{A}(s)=o_0\times T_A(t,s) $
\item
$P(t,o_k...o_0,s) = \sum_{s'\in S}\obs{A}(s')=o_k\times T_A(t,s')\times P(s',o_{k-1}...o_0,s)$
\end{itemize}
Then we have the following proposition.
\begin{proposition}
If $|\rho|=k$ and $PI(\rho(0)) > 0$ then $P(t,o_k...o_0,s)=\probm{A}(\R(t,o_k...o_0,s))$.
\end{proposition}
[FIXME: here we need a proposition to show that, the model checking problem  can be reduced to a computation base on $\probm{A}(\R(t,o_k...o_0,s))$.]

Given an observation $o$ and a state $s$, we have a set
$$inf(o,s)=\{T_A(t,s)~|~\obs{A}(t) = o, T_A(t,s) > 0\}$$
which contains all the non-zero transition probabilities from states of observation $o$ to the state $s$. Intuitively, it represents the uncertainty  a past observation $o$ introduces to its next state $s$\footnote{An information-theoretical notion, e.g., Shannon Entropy, may be adapted to represent such uncertainty and used in achieving another way of memory decay. Here we take a simpler approach.}. A decay factor is to reduce the uncertainty presented in the set $inf(o,s)$ by decreasing the distance between elements in $inf(o,s)$. We use $X$ to range over the sets of probability values with $X(t)$ being the probability value corresponding to the state $t$, and let $f_\lambda:\powerset{[0,1]}\rightarrow \powerset{[0,1]}$ for $0<  \lambda < 1$ be a mapping between sets of probability values such that
\begin{itemize}
\item $X(s) > X(t) \Rightarrow f_\lambda(X)(s) > f_\lambda(X)(t)$,
\item $f_\lambda(X)(s) < \lambda \cdot X(s)  $ for all $s\in X$.
\end{itemize}
Intuitively, the first condition says that the ordering between elements is preserved after the decaying and the second condition says that every element is decayed at least with a factor $\lambda$.
Then we define a decayed transition function $g(T_A,f_\lambda)$ such that $$g(T_A,f_\lambda)(s_1,s_2) = f_\lambda(inf(\obs{A}(s_1),s_2))(s_1).$$
Intuitively, $inf(\obs{A}(s_1),s_2)$ is the set of non-zero transition probabilities from observation $\obs{A}(s_1)$ to $s_2$. These probabilities are  decayed by $f_\lambda$ into $f_\lambda(inf(\obs{A}(s_1),s_2))$. The expression $f_\lambda(inf(\obs{A}(s_1),s_2))(s_1)$ retrieves the corresponding decayed probability for the state $s_1$.

With the decayed transition function $g(T_A,f_\lambda)$, we define $P'(t,o_k...o_0,s)$ as follows.
\begin{equation}\label{equ:finitebelief}
\begin{array}{lcl}
P'(t,o_1o_0,s) & = & \obs{A}(t)=o_1\times \obs{A}(s)=o_0\times g(T_A,f_\lambda)(t,s)\\
P'(t,o_k...o_0,s) & = & \sum_{s'\in S}\obs{A}(s')=o_k\times g(T_A,f_\lambda)(t,s')\times P'(s',o_{k-1}...o_0,s) \\
\end{array}
\end{equation}
The following proposition shows that the past information has been decayed away.
\begin{proposition}
Let $s,t_1,t_2\in S$ be states of the system, $o_k...o_0$ be a sequence of $k$ observations. Then we have
$$|P'(t_1,o_k...o_0,s)-P'(t_2,o_k...o_0,s)| < \lambda^k |P(t_1,o_k...o_0,s)-P(t_2,o_k...o_0,s)|$$
\end{proposition}

Therefore, we have the following conclusion, which suggests that we can restrict our focus to the last $k$ observations that are closest to the current moment.
\begin{proposition}\label{propn:kvalue}
Given any number $\epsilon > 0$, we can find a minimum number $k$ such that
$$\max_{t_1,t_2,s\in S, o_1,...,o_k} |1- \frac{P'(t_1,o_k...o_0,s)}{P'(t_2,o_k...o_0,s)}| < \epsilon$$
\end{proposition}

The above inequality suggests that, $P'(t_1,o_k...o_0,s)$ and $P'(t_2,o_k...o_0,s)$ are equivalent, up to the bound $\epsilon$. Therefore, we can consider them as the same and apply a uniform distribution on those states in $inf(o_k,s)$ for any $s$ with $\obs{A}(s)=o_{k-1}$.
Then, by applying the d-separation of Bayesian network~\cite{}, we have that, due to the introduced uniform distribution, those history observations of more than $k$  steps away from the current moment will have no impact on the beliefs.
\begin{theorem}
With approximation $\epsilon$, we need only consider the last $k$ observations, where $k$ can be computed from Proposition~\ref{propn:kvalue}.
\end{theorem}

Finally, we can construct a finite-memory automaton to represent the memory changes by letting $H_i=\{\obs{A}(\rho)~|~\rho\in \fpath{}^\cM, |\rho| = k \}$ and replacing $T_A^\aleph$ with the following definition.
\begin{itemize}
\item $T_A^\aleph((s,h),(s',h'))=T_A(s,s')$ with $h'=h[1..k-1]\cdot \obs{A}(s')$.
\item $\be_A^\aleph((s,h))=P'(h,s)=\sum_{t\in S}U(t)\times P'(t,h,s)$ where $U$ is the normalised uniform distribution over the states whose observation is $o_{k-1}$ of $h=o_{k-1}...o_0$.
\end{itemize}
We note that, the computation of $P'(h,s)$ is done by the recursive rules defined in Equation~\ref{equ:finitebelief}.

\section{Cooperation: Nash Equilibrium}

Assume that each agent $A$ is assigned with a set of goals, represented as temporal formula $\{\phi_{i,1},...,\phi_{i,m_i}\}$. Intuitively, this represents that agent $A$ prefers a goal $\phi_{i,x}$ than $\phi_{i,y}$ if $x<y$.
Moreover, each strategy profile $\astrat=(\theta_A)_{A\in\Ags}$ and a state $s$ determine a Markov chain $(S,s,T_\astrat)$ where $T_\astrat(s,t)=\sum_{a\in\Act{}}\astrat(s,a)\times T(s,a,t)$. Given a Markov chain $(S,s,T_\astrat)$ and a temporal formula $\phi$, we can compute a probabilistic value $P(s,\astrat,\phi)$ such that
$$P(s,\astrat,\phi)=\probm{s}\{\delta\in \infpath{T}^{\cM[\astrat{}]}(s)~|~\cM,\rho,\delta\models \phi\}$$
Let $h_i\in [0,1]$ be a bound of agent $A$. A goal $\phi_{i,k}$ is satisfied on $s$ and $\astrat{}$ if $P(s,\astrat,\phi_{i,k})\geq h_i$. We write
$\kappa_i(s,\astrat)$
for the greatest number $k$ such that for all $k'\leq k$ we have $\phi_{i,k'}$ is satisfied on $s$ and $\astrat{}$.

\begin{definition}
A strategy profile $\astrat{}$ is a Nash Equilibrium if for all agents $i$ and all strategy $\astrat{A}'$ of agent $A$, we have
$$
\kappa_i(s,\astrat) \geq \kappa_i(s,\astrat[\astrat{A}'/\astrat{A}])
$$
where $\astrat[\astrat{A}'/\astrat{A}]$ is the same as $\astrat{}$ except that agent $A$ follows strategy $\astrat{A}'$.
\end{definition}

%[FIXME: The problem of handling Nash Equilibrium is that, I am not aware of any dynamic programming approach to compute it. ]
}

\commentout{

\paragraph{Reduction of  Synthesis to  Computation of Probability Values}

Before proceeding, we need a conclusion which states that the satisfiability of a formula depends entirely on the observation history and the last state.
\begin{proposition}~\label{propn:samehistory}
For any paths $\rho,\rho'\in \fpath{}^\cM$ and any agent $A\in\Ags$, if $\obs{A}(\rho)=\obs{A}(\rho')$ and $last(\rho)=last(\rho')$ then we have that $\cM,\rho\models \phi$ if and only if $\cM,\rho'\models \phi$, for $\phi$ an atemporal formula containing the belief operator of a single agent $A$.
\end{proposition}

As stated in the previous section, for every formula $\phi$, its synthesis is equivalent to find a set $synth(\phi)=\{\rho~|~\cM,\rho\models \phi\}$ of paths which satisfy $\phi$. With Proposition~\ref{propn:samehistory}, it can be reduced to find the set of pairs of observation histories and the last state, i.e., $$synth(\phi)=\{(o_0...o_k,s)~|~\exists \rho\in \fpath{}^\cM, ~\obs{A}(\rho)=o_0...o_k, ~last(\rho) =s, ~ \cM,\rho\models \phi\}$$
where $o_m$ for $0\leq m\leq k$ is the agent $A$'s observation at time $m$.

In the following, we present an approach of deciding whether a given pair of  observation history $o_k...o_0$ and state $s$ belongs to the set $synth(\phi)$. Note that, to ease the later explanations of order effects, we write the observation history in reversed order as $o_k...o_0$, where $o_0$ is the current observation and $o_m$ is the $m$-th recent observation.
%.
We use
$$\R(o_k...o_0,s)=\{\rho~|~PI(\rho(0))>0, \obs{A}(\rho)=o_k...o_0, last(\rho)=s\}$$
to denote the set of initialised finite paths whose last state is $s$ and the $k+1$ observations are $o_k...o_0$.
We need several other notations as follows:
\begin{itemize}

%\item $T(o_k...o_0,s)=\R(o_k...o_0,s)\neq \emptyset$ denotes that there exists a path in $\R(o_k...o_0,s)$.

\item $T_\phi(o_k...o_0,s) = \{\rho\in \R(o_k...o_0,s) ~|~\cM,\rho\models \phi\}\neq \emptyset$ denotes that there exists at least one path in  $\R(o_k...o_0,s)$ such that the formula $\phi$ is satisfiable.

\item $P(o_k...o_0,s)=\probm{A}(\R(o_k...o_0,s))$ is the probability of the set $\R(o_k...o_0,s)$ of paths.

\item $P_\phi(o_k...o_0,s) = \probm{A}(\{\rho\in \R(o_k...o_0,s) ~|~\cM,\rho\models \phi\})$ is the probability of the se of paths in $\R(o_k...o_0,s)$ such that the formula $\phi$ is satisfiable.

\end{itemize}
In the following, the boolean value $True$ ($False$, respectively) will be written as 1 (0). With these notations, that deciding whether a given pair of  observation history $o_k...o_0$ and state $s$ belongs to the set $synth(\phi)$ is equivalent to decide whether $T_\phi(o_k...o_0,s)=1$.
The following proposition provides a set of rules for the above notations.
\begin{proposition}\label{propn:rules}
The computation of the above notations can be done inductively with the following rules:
\begin{itemize}
\item $T_p(o_k...o_0,s)= p\in  L(s)$
\item $T_{\neg \phi}(o_k...o_0,s) = \neg  T_{ \phi}(o_k...o_0,s)$
\item $T_{\phi_1\lor \phi_2}(o_k...o_0,s) = T_{\phi_1}(o_k...o_0,s)\lor T_{\phi_2}(o_k...o_0,s)$
\item $\displaystyle T_{\B_A^{\bowtie q}\phi}(o_k...o_0,s) = \frac{\sum_t P_\phi (o_k...o_0,t)}{\sum_t P(o_k...o_0,t)}\bowtie q$
\item $P_\phi (o_k...o_0,s)= T_\phi (o_k...o_0,s)\times P(o_k...o_0,s)$
\end{itemize}
\end{proposition}

With the rules in Proposition~\ref{propn:rules}, the computation of whether $T_\phi(o_k...o_0,s)=1$ can then be reduced to the computation of the notation  $P (o_k...o_0,s)$.

\paragraph{Inductive Computation of Probability Value}

Let $I_t$ be a distribution such that $I_t(s)=1$, if $s=t$, and $I_t(s)=0$, otherwise. We write $\cM[I_t]$ for the system in which the initial distribution is replaced with $I_t$. Then from Proposition~\ref{propn:probabilityspace}, we have that $Prior(\cM[I_t],\pref{A},t)=(W,F,\probm{A}')$ is a probability space.
To compute $P (o_k...o_0,s)$, we let $l_\rho$ be the number of states on a path $\rho$ and define
\begin{itemize}
\item $\R(t,o_k...o_0,s)=\{\rho~|~\rho(l_\rho-k-2)=t, \obs{A}(\rho[l_\rho-k-1...l_\rho-1])=o_k...o_0, last(\rho)=s\}$ to be the set of paths whose last state is $s$, last $k+1$ observations are $o_k...o_0$, and the last $(l_\rho-k-2)$-nd state is $t$, and
\item $P(t,o_k...o_0,s) =\probm{A}'(\R(t,o_k...o_0,s))$ to be the probability of the set $\R(t,o_k...o_0,s)$ of paths.
\end{itemize}
Then we can reduce the computation of $P (o_k...o_0,s)$ in probability space $Prior(\cM,\pref{A},t)$ into the computation of $P(t,o_k...o_0,s)$ in probability space $Prior(\cM[I_t],\pref{A},t)$ with the following conclusion.
\begin{proposition}
The computation of $P (o_k...o_0,s)$ can be done with the following expression
$$
P (o_k...o_0,s) = \sum_{t\in S}\sum_{t'\in S}(PI(t) \times (\obs{A}(t')=o_k) \times T_A(t,t') \times  P(t',o_{k+1}...o_0,s))
$$
\end{proposition}

The following proposition gives an approach of computing the value $P(t,o_k...o_0,s)$ in an inductive way by gradually increasing the length of the history $o_k...o_0$.
\begin{proposition}
The computation of $P (t,o_k...o_0,s)$ can be done inductively as follows.
\begin{itemize}
%\item $T(s',o_0,t)=T(s',t)\land \obs{A}(t)=o_0$
\item $P(t,o_0,s)=T(t,s)\times (\obs{A}(s)=o_0)$
%\item $T(s',o_{k-1}o_k...o_0,t) = \exists t': T(s',t') \land (\obs{A}(t')=o_{k-1}) \land  P(t',o_k...o_0,t)$
\item $P(t,o_{k-1}o_k...o_0,s) = \sum_{t'} T(t,t') \times (\obs{A}(t')=o_{k-1}) \times P(t',o_k...o_0,s)$
\end{itemize}
\end{proposition}

\paragraph{A Factor for the Order Effects}

The undecidability of the synthesis problem can be seen with the following intuition: when $\phi$ is a belief formula $\B_A^{\bowtie q}\psi$ for agent $A$, its satisfiability on a path is decided by 1) not only the current path, but also all the paths that share the same observations with the current path, and 2) not only those paths of the same observations, but also the comparing probability values of these paths. The different rates of paths on manipulating the probability values lead to a reduction to the post correspondence problem. This observation leads to a design of order effects by constraining the probability difference between observation histories.

We need the following notations:
\begin{itemize}
\item $\theta(o,s) = \sum_{t\in S}  (\obs{A}(t)=o) \times T_A(t,s)$ is the probability of reaching the a state $s$ in one step from those states with the observation $o$, and
\item $\theta(s)=\sum_{o\in \Obs{A}}\theta(o,s)$ is the probability of reaching the state $s$.
\end{itemize}
We have the following reduction:
$$
\begin{array}{lcl}
P(o_{k-1}o_k...o_0,s)&=&\sum_{s_1}(\obs{A}(s_1)=o_{k})\times \sum_{s_2}(\obs{A}(s_2)=o_{k-1})\times T_A(s_2,s_1)\times P(s_1,o_k...o_0,s)\\
&=&\sum_{s_1}(\obs{A}(s_1)=o_{k})\times \theta(o_{k-1},s_1)\times P(s_1,o_k...o_0,s)
\end{array}
$$
Our designing for the order effects is to reduce the distance between $P(o_{k-1}o_k...o_0,s)$ and $P(o_{k-1}'o_k...o_0,s)$ for any two possible observations $o_{k_1}$ and $o_{k_1}'$.
For this, we define a factor $\alpha_k(o,t)$ for an observation $o\in\Obs{A}$ and a state $t$ as follows:
$$
\alpha_k(o,t)=\frac{\sum_{s}(\obs{A}(s)=o)\times \theta^k(s)}{\theta^k(t)\times |\{s~|~\obs{A}(s)=o\}|}
$$
where $\theta^k(t)=\theta(t)\times \theta^{k-1}(t)$ and $\theta^0(t)=1$. Note that, $\alpha_0(o,t)=1$.
Then we define the following notations:
\begin{itemize}
\item
$\theta_m(o,s) = \sum_{t\in S}  (\obs{A}(t)=o) \times \alpha_m(o,t)\times  T_A(t,s)$, and
\item
$P_m(o_{k-1}o_k...o_0,s)=\sum_{s_1}(\obs{A}(s_1)=o_{k})\times \theta_m(o_{k-1},s_1)\times P(s_1,o_k...o_0,s)$
\end{itemize}
Note that, we have $\theta_0(o,s)=\theta(o,s)$ and $P_0(o_{k-1}o_k...o_0,s)=P(o_{k-1}o_k...o_0,s)$.

The following properties show that the application of the factor $\alpha_k(o,t)$ reduces the distance between $P(o_{k-1}o_k...o_0,s)$ and $P(o_{k-1}'o_k...o_0,s)$.
\begin{proposition}
For any two possible observations $o_{k-1},o_{k-1}'\in\Obs{A}$, we have that the following in-equations for all $m\geq 1$,
\begin{itemize}
\item $|P_m(o_{k-1}o_k...o_0,s)-P_m(o_{k-1}'o_k...o_0,s)| < |P(o_{k-1}o_k...o_0,s)-P(o_{k-1}'o_k...o_0,s)|$, and
\item $|P_m(o_{k-1}o_k...o_0,s)-P_m(o_{k-1}'o_k...o_0,s)| < |P_{m-1}(o_{k-1}o_k...o_0,s)-P_{m-1}(o_{k-1}'o_k...o_0,s)|$
\end{itemize}
\end{proposition}
We remark that, the rationale behind the approach is to manipulate the probability distance between $\theta(o,s)$ and $\theta(o',s)$. Intuitively, $\theta(o,s)$ represents the contribution of the new observation $o$ to the existing belief. Therefore, the decreasing on the probability distance between $\theta(o,s)$ and $\theta(o',s)$ represents that the contributions of the new observations on the existing belief is lowered with the increase of the number $k$ and therefore, by the algorithm, with the increase of the time passed.

\commentout{
%\paragraph{Lemmas}

Before proceeding to the algorithms, we need several lemmas.

\begin{lemma}
For $\phi$ being a state formula, we have that $Prob(\cM,\rho,\phi)$ has the value of 0 or 1, and $Prob(\cM,\rho,\phi)=1$ if and only if $ \cM,\rho\models \phi$.
\end{lemma}
\begin{proof}
For $s=last(\rho)$, we have that
$$
\begin{array}{cl}
& Prob(\cM,\rho,\phi) \\
\text{=} & \probm{s}\{\delta\in Path_T^I(\cM,s)~|~\rho\delta\in \infpath{}^\cM, ~\cM,\rho,\delta\models \phi\} \\
\text{=} & \probm{s}\{\delta\in Path_T^I(\cM,s)~|~\rho\delta\in \infpath{}^\cM, ~\cM,\rho\models \phi\} \\
\text{=} & \cM,\rho\models \phi \\
\end{array}
$$
\end{proof}

The first step of the algorithm is to synthesize the functions $\goal{A}^{\history,k}$ and $\intn{A}^{\history,j}$ for all $A\in\Ags$, $1\leq k\leq e_i$, and $1\leq j\leq f_A$. Specifically, we need to find equivalent propositional formulas for those formulas concerning agent's beliefs. Let $\Phi$ be the set of formulas that are used in defining the goals and intentions functions. For every $\phi\in \Phi$, we write $\alpha(\phi)$ be its equivalent propositional formula. Let $\cM[\alpha]$ be the system in which all formulas $\phi\in\Phi$ are substituted into $\alpha(\phi)$. Then we say that $\alpha$ is an implementation if and only if for all $x\subseteq  \Goal{A}$ with $gc(x)\in \Phi$ and all paths $\rho s\in \fpath{}^\cM$, we have that $\cM[\alpha],\rho\models gc(x)$ if and only if the path $\rho s[x/\gs_i]\in Path^F(\cM[\alpha])$.

In the following, we only discuss the functions $\goal{A}^{\history,k}$. The functions $\intn{A}^{\history,j}$ can be handled in a similar way. The synthesis of $\goal{A}^{\history,k}$ functions is to find for every goals $x\in\powerset{\Goal{A}}$, a set $R(x)$ of paths such that $\rho\in R(x)$ if and only if $x\in \Goal{A}(last(\rho))$ and $\cM,\rho\models gc(x)$. Let
$$
R(x,m) = \{\rho~|~|\rho|=m, x\in \Goal{A}(last(\rho)), \cM,\rho\models gc(x) \}
$$
be the subset of $R(x)$ in which all paths are of length $m$. It is not hard to see that the set $R(x,m)$ depends on those sets $R(x,m')$ where $m' < m$. Therefore, we compute the set $R(x,m)$ by induction on the length $m$. For $m=1$, we have  that
$$
\begin{array}{lcl}
R(x,1) &= &\{s\in S~|~PI(s) >0,  x\in \Goal{A}(s), \cM,s\models gc(x) \} \\
&= &\{s\in S~|~PI(s) >0,  x\in \Goal{A}(s)\} \cap \{s\in S~|~\cM,s\models gc(x) \} \\
\end{array}
$$
Note that because of uniformity assumption, we have
$$
\be_A^\history(s)(s')  = \probm{A}(F_{s'}~|~\bigcup_{s''\in \obs{A}(s') F_{s''}}) =  \displaystyle \frac{\probm{A}(F_{s'})}{\probm{A}(\bigcup_{s''\in \obs{A}(s')} F_{s''})}  =  \displaystyle \frac{PI(s')}{\sum_{s''\in \obs{A}(s')} PI(s'')}
$$
if $\obs{A}(s')=\obs{A}(s)$, and $\be_A^\history(s)(s')=0$, otherwise.
Therefore, we have
$$
\begin{array}{cl}
& \cM,s\models \B_A^{\bowtie q}\psi \\
\text{iff} & \sum_{s'\in supp(\be_A(s))} (Prob(\cM,s',\psi) \times \be_A^\history(s)(s')) \bowtie q\\
\text{iff} & \sum_{s'\in \obs{A}(s)} ((\cM,s'\models \psi) \times \frac{PI(s')}{\sum_{s''\in \obs{A}(s')} PI(s'')} ) \bowtie q\\
\text{iff} & (\sum_{s'\in \obs{A}(s)} ((\cM,s'\models \psi) \times PI(s')))  \bowtie (d\times \sum_{s'\in \obs{A}(s)} PI(s'))
\end{array}
$$
Now we compute $R(x,m)$ by assuming that all $R(x,l)$ for $1\leq l< m$ have been computed. We define
$$
\R(o_0...o_m,s) = \{\rho~|~\obs{A}(\rho) = o_0...o_m, last(\rho)=s\}
$$
to be the set of paths which have the observations $o_0...o_m$ and the last state $s$, and
$$
\R_\phi(o_0...o_m,s) = \{\rho~|~\rho\in \R(o_0...o_m,s), \cM,\rho\models \phi\}
$$
to be the subset of paths in $\R(o_0...o_m,s)$ that satisfy formula $\phi$. By these, we have that
$$
R(x,m) = \{\R_{gc(x)} (o_0...o_m,s)~|~\forall 0\leq k\leq m: o_k\in \Obs{A}, s\in S\}
$$
For the formulas that are boolean combination of atomic propositions and belief formulas, we have
\begin{itemize}
\item $\R_p(o_0...o_m,s) = \{\rho~|~\rho\in \R(o_0...o_m,s), p\in  L(s)\}$
\item $\R_{\neg \phi}(o_0...o_m,s) = \R(o_0...o_m,s)\setminus \R_\phi(o_0...o_m,s)$
\item $\R_{\phi_1\lor \phi_2}(o_0...o_m,s) = \R_{\phi_1}(o_0...o_m,s)\cup \R_{\phi_2}(o_0...o_m,s)$
\end{itemize}
For belief formula $\B_A^{\bowtie q}\phi$, we have that
$$
\begin{array}{lcl}
\R_{\B_A^{\bowtie q}\phi}(o_0...o_m,s) & = & \{\rho\in \R(o_0...o_m,s)~|~\sum_{\obs{A}(\rho') = \obs{A}(\rho)}Prob(\cM,\rho',\phi)\times \be_A^\history(\rho')\bowtie q\}\\
& = & \{\rho\in \R(o_0...o_m,s)~|~\sum_{\rho'\in supp(\be_A(\rho))}(\cM,\rho'\models \phi)\times \be_A^\history(\rho')\bowtie q\} \\
& = & (\sum_{s\in S} \probm{A}(\R_{\phi}(o_0...o_m,s))) \bowtie q
\end{array}
$$
}

}

%% .. or use the thebibliography environment explicitely

\end{document}